\documentclass{article}
\usepackage[utf8]{inputenc}
\usepackage{pgfplots}
\pgfplotsset{compat=newest}
\usepgfplotslibrary{fillbetween}
\usetikzlibrary{arrows.meta}
\usepackage[english]{babel}
\usepackage{amsmath,amsthm}
\usepackage{graphicx}
\usepackage{amssymb}
\usepackage{enumerate}
\usepackage{mathtools}
\usepackage{natbib}
\usepackage[T1]{fontenc}
\usepackage[multiple]{footmisc}
\usepackage{lmodern}
\usepackage[autostyle]{csquotes}
\usepackage{setspace}
\usepackage{appendix}
\usepackage{tikz-cd}
\usepackage{url}
\usepackage{hyperref}
\usepackage[normalem]{ulem}
\theoremstyle{definition}
\newtheorem{definition}{Definition}
\theoremstyle{plain}
\newtheorem{theorem}{Theorem}
\newtheorem{lemma}{Lemma}
\newtheorem{corollary}{Corollary}

\newtheorem{remark}{Remark}

\newtheorem{proposition}{Proposition}

\newtheorem{result}{Result}

\newtheorem{principle}{Principle}

\newtheorem{construction}{Construction}
\DeclareMathAlphabet{\mathpzc}{OT1}{pzc}{m}{it}

\begin{document}
\title{Statistical and Numerical Convergence in Stochastic Equilibrium}

\author{David Staines}
\date{\today}
\maketitle

\begin{abstract}
This paper sets out the most general 
computational and econometric implications of the rigorous stochastic equilibrium theory from SELCKE (\cite{staines2024stochastic}) \url{https://arxiv.org/abs/2312.16214}. The analytical backbone is the discovery that the system converges geometrically to \\ long-run equilibrium, 
at a rate given by the greater of the eigenvalue or inverse eigenvalue (from outside) closest to the unit circle and the maximum shock persistence. 
High-order shocks converge faster. I develop a simulation procedure to test, with asymptotic power, whether stochastic equilibrium exists for a particular model.
\par The fundamental approximation result asserts that, whatever the \\ order of expansion or loss function, the stochastic steady state delivers the most accurate perturbation solution. I also show that super-consistent parameter 
estimators $O(1/T)$  arise whenever second-order terms vanish.
\par Besides Calvo, I study stochastic equilibrium in two alternative pricing models. Dynamics simplify considerably. I bound the time 
the impulse response peaks, by the maximum lag in the errors. This lends empirical support to Taylor contracts, 
although there are issues surrounding unit roots and the strong cost-channel. \par For menu costs, I demonstrate that the initial price distribution decays away super-exponentially, producing a system equivalent to Calvo with an endogenous reset probability. The impact of idiosyncratic disturbances appears as an additional wedge between actual and efficient output.
\\ Blow-up of the objective function at the boundary is proven, with the help of new distributional arguments, so the model meets existing eigenvalue existence conditions for the recursive equilibrium. Along the way, new light is shone on existing theoretical models and statistical procedures.
\end{abstract}
Arxiv: (Primary) Econometrics \newline (Secondary) 
Theoretical Economics, Probability, \newline Classical Analysis and ODE, Numerical Analysis.
\newline 
Keywords: Stochastic Equilibrium, Optimal Approximation, \\ Statistical Estimation, Super-Consistency, Impulse Response Function, \\ Menu Costs, Taylor Contracts.
\par JEL Classifications: C32, C62, C63, E12, E31, E32.
\par AMS Classifications: (Primary) 91B51,
(Secondary) 34A27, 39A12, \\ 60B10, 60B12, 60F05, 
60F25, 60K99, 62L20, 62P20, 62P99, 65Q10, \\ 65Q20, 65Z99, 91A15, 91A16, 91A50, 91B64, 91B84. 
\tableofcontents
\section{Introduction}
Dynamic Stochastic General Equilibrium (DSGE) are the class of models used to aggregate between individual incentives and macroeconomic outcomes. They play a prominent role in policy analysis at central banks and international \\ institutions, where they are used to simulate policy counterfactuals, forecast and qualitatively explain macroeconomic phenomenon. Previously criticized on the grounds of empirical fit and technical construction, there has been a renewal in theoretical and empirical interest following SELCKE. This paper refines the technical arsenal of stochastic equilibrium theory with a view to econometric and computational application.\footnote{Stochastic equilibrium theorem is a continuation of non-rigorous work referring to the "risky steady state" \cite{coeurdacier2011risky}.}
\par The paper focuses on three themes. 
The first concerns the numerical \\ analysis of stochastic equilibrium. The second derives restrictions on the return to stochastic equilibrium in terms of the primitives of the recursive equilibrium. The third relates to the asymptotic distribution of key test statistics. 
\par Stochastic equilibrium theory implies two fundamental approximation \\ results. The first is a test for the existence, using only the short term response to a small white noise shock, after a long enough burn in. 
The second is a guarantee, that for any order of approximation, the stochastic equilibrium is the neighborhood from which the most accurate approximation comes. This implies a simulation strategy from which the stochastic equilibrium expansion can be obtained through basic regression analysis.
\par These results address two erstwhile separate strands of literature. The first entails numerical algorithms developed for DSGE and other dynamic nonlinear macroeconomic models, without watertight mathematical justification. This includes the variety of solution methods detailed in \cite{maliar2011solving}, \cite{judd2014smolyak}, \cite{maliar2021deep}, \cite{bilal2023solving}, \cite{fernandez2023dynamic} and \cite{mennuni2024dynamic}.
The second relates to mathematical proofs concerning approximations to mean field games, such as \cite{Achdou2020mean}, \cite{carmona2021convergence}, \cite{carmona2022convergence}, \cite{gianatti2023approximation} and \cite{bertucci2024mean}. However, the focus here has typically been on discretizing continuous processes and dealing with the limit as the number of players becomes infinite. Finally, \cite{brumm2017recursive} provide an efficient algorithm but it pertains only to a simple production economy, where existence is guaranteed.
\par The shape of impulse response functions is the second theme. 
A large body of empirical evidence indicates that output and in particular inflation respond slowly to monetary shocks (see \cite{christiano1999monetary}, \cite{piazzesi2002fed}, \cite{romer2004new}, \cite{uhlig2005effects}, \cite{barnichon2018functional}, \cite{jorda2020effects}, \cite{palma2022real} and \cite{pruser2024large}.\footnote{Comparable dynamics have been observed in response to fiscal policy shocks (\cite{auerbach2013output}, \cite{cloyne2013discretionary}, \cite{guajardo2014expansionary}, \cite{acconcia2014mafia} and \cite{ramey2019ten}).} \footnote{The results have broader implications. There are similar patterns in exchange rate time series and the technical arguments would extend to flexible price models.} 
Similar results have been uncovered for a variety of shocks, including oil prices (\cite{hamilton1983oil}, \\ \cite{kilian2009not} and \cite{baumeister2019structural}),  volatility (\cite{bloom2009impact} and \cite{baker2024using}) and technology (\cite{christiano2004response}, \cite{feve2009response} and\cite{alexopoulos2011read})
, where the focus has typically been on capital and labor adjustment. 
Indeed, indications of hump-shaped response patterns can be detected through pure time series methods (\cite{perron1993hump}).
\par Note that this discussion is in some sense non-rigorous because it is not clear all the structural models I present or could be derived hereafter obey all possible identifying restrictions. For this reason, I favor estimates like \cite{jorda2020effects} and \cite{palma2022real} derived from natural experiments, which suggest a longer time to peak response (for both output and inflation). \cite{jorda2020effects} finds typically around four to eight years compared to nearer one year in \cite{christiano1999monetary}. There is considerable imprecision, nevertheless the majority comfortably exceed the two quarter upper bound, derived here for the benchmark Calvo model. 
\par My main result here bounds the peak time of the impulse response by the maximum lag present in the model. This favors \cite{taylor1979staggered} 
pricing and its generalizations, which unlike their counterparts allow for two lags or more. This chimes with a literature suggesting that longer contracts are need to fit the response of output, employment and inflation to monetary and other prominent macroeconomic disturbances (see \cite{taylor1980aggregate}, \cite{chari2000sticky}, \cite{dixon2010can}, \cite{dixon2011contract} and \cite{dixon2012generalised}).
\par In addition, the new limiting stochastic equilibrium theory yields advances in precision. I prove that the vanishing of second order effects gives rise to consistency of order $1/T$ rather than the standard $1/\sqrt{T}$. This should yield substantive gains in empirical analysis of macroeconomic data, where degrees of freedom are short and power weak. This is a macroeconometric breakthrough, hitherto super-consistency results have been confined to time series or panel with time trends (\cite{beenstock2019econometric}).\footnote{There are examples of super-consistency in microeconometrics, for instances, relating to the order statistics governing optimal bidding in auctions (\cite{donald2002superconsistent}). } The drawback is that the test is only valid against small noise limit alternatives.\footnote{This limitation is well-known in statistics, where it has been shown that super-consistency is an appropriate sense a measure zero phenomenon (\cite{le1960some} and a recent extension \cite{bredahl2024superconsistency}).}  
\par I touch on other areas.
I am able to demonstrate that the menu costs model of nominal rigidity, pioneered by \cite{sheshinski1977inflation} and \cite{mankiw1985small} has common dynamics with Calvo, as it can be viewed as Calvo with an optimal choice of reset parameter.\footnote{Some technical conditions are required to ensure the economy-wide price level does not become flexible.} Alongside the limiting equivalence between Calvo and Taylor pricing discussed in SELCKE, 
this creates a universality class of New Keynesian models. 
\par Moreover, I am able to evince unambiguous improvements in the clarity of dynamic predictions. Unlike with just the recursive form in stochastic \\ equilibrium, I am able to remove dependence upon reset pricing variables that may be more difficult for economists to track or households to appreciate. \\ Alongside, come quantitative reductions in the dimensionality of the model, large with Taylor contracts, this could prove significant in, for example, \\ forecasting exercises with limited observations.  
\par In fact, this account directly addresses several theoretical debates. It \\ undercuts the \cite{moll2025trouble} criticism that rational expectations requires tracking the entire distribution of idiosyncratic shocks. Nonetheless, his focus on rational learning and behavioral frictions both in that lecture and the subsequent paper \cite{moll2026mean} would be a good incorporation into my framework. The analysis of menu costs frustrates recent attempts to define small noise limit approximations, to at least this particular heterogeneous agent model. 
\par Finally, renewed justification is provided for the indirect inference approach to estimation. This is a popular means to estimate or test complicated non-linear models (\cite{gallant2010simulated}, \cite{schmitt2012s}, \cite{guvenen2014inferring} and \cite{berger2015consumption}). It involves \\ simulating the model and selecting parameters by how well they fit an \\ auxiliary data description like a Vector Autoregression (VAR). Traditionally \\ defended with reference to small sample properties (see \cite{gourieroux1993indirect} and \cite{meenagh2024indirect}), I give asymptotic support in two ways. Firstly, I demonstrate that the minimum distance estimator, common in applied work on indirect inference (\cite{le2016testing}), constitutes the efficient estimator in the limit where second order terms vanish. Secondly, I argue that the simulation \\ procedure for estimating approximate dynamics in stochastic equilibrium amounts to undertaking indirect inference with the appropriate auxiliary model, although this is not yet a common feature of the literature.
\section{Technical Overview}
This second introductory section is split into two units. The main part outlines the main mathematical results, explains the proof strategies and relates these to the mathematical literature. A short subsection explains the organization of the paper.
\subsection{Mathematical Arguments}
Stochastic equilibrium is a primitive whenever anyone runs a regression with the aim of deploying standard econometric tests, a ubiquitous practice in applied work. The vocation of the paper is to garner all information relating to statistical and numerical approximation true throughout the widest class of DSGE model. The cornerstone is Theorem 1 a version of the fixed-point Theorem 3 (SELCKE) reworked to shed dependence on the patient limit $(\beta \rightarrow 1$ and specific behavior at the boundary of the state space.
\par The six overriding determinations of this paper are given below, with \\ reference to their formal statement and proof. There are two groups of three. The first relate directly to long-time return to the equilibrium distribution after a shock and how to observe this numerically. 
\begin{result} (\emph{[Theorem 3]} and \emph{[Theorem 9]})
A DSGE model, in stochastic \\ equilibrium, converges (almost surely) to its long-run stochastic steady state, at an exponential rate equal to the maximum of internal and external persistence, regardless of the size initial shock.
\end{result}
\begin{result}(\emph{[Theorem 8]})
Every convergent metric between actual outcomes and those predicted by a polynomial of degree $n$ is minimized (asymptotically) by taking the $n^{th}$-order Taylor expansion around the stochastic steady state.
\end{result}
\begin{result}(\emph{[Theorem 9]} and \emph{[Theorem 10]})
Every standard DSGE in stochastic equilibrium can be approximated with exponentially improving accuracy, which is amenable to super-consistent testing. 
\end{result}
The second set have a broader lens featuring long-run equilibrium and \\ short-run deviations.
\begin{result}(\emph{[Theorem 5]} and \emph{[Theorem 6]})For an economy facing small shocks. The time at which the impulse response function peaks is bounded by the oldest shock that enters its recursive equilibrium, so hump-shaped responses peaking after more than two periods are only possible for Taylor contracts. 
\end{result}
\begin{result}(\emph{[Theorem 2]} and \emph{[Theorem 4]})
The menu cost model follows \\ standard eigenvalue existence results because the initial state of the distribution of prices dies away super-exponentially. 
\end{result}
\begin{result}(\emph{[Theorem 7]})
In the limit where the shock size $\vert \varepsilon \vert \rightarrow 0$, parameter estimators governed by a standard central limit converge at rate $O(1/T)$, rather than $O(1/\sqrt{T})$. 
\end{result}
\par The paper contributes to two domains of pure mathematics. The first is \\ statistical theory whilst the other spans a broader field of analysis. The \\ geometric convergence with known errors contrasts with the $O(1/\sqrt{T})$ pace offered by a standard functional central limit theorem, which would still apply if they were estimated. Consult \cite{billingsley1995probability} Theorem 37.8 for the basic theorem and \cite{bischoff1998functional} for the regression theory.
\par The second is mean field game theory, a joint enterprise between \\ mathematics and allied disciplines instigated by \cite{lasry2007mean} and \cite{caines2006large}. Thus far, the literature has centered on the case of \\ Brownian motion (in continuous time) with bounded objective functions. 
\\ \cite{alvarez2023price} studies impulse response functions in continuous time, where the are normally called Green functions (\cite{teschl2012ordinary}). His results \\ appear similar to discrete time findings that strong externalities are needed to generate any hump shape. They use a different demand system with \\ complementarities absent here.\footnote{The demand curves are displayed as (10) for Calvo, Taylor and Rotemberg and (51) for menu costs, which is derived in detail in Appendix A.1.}
It is well-known that monotonicity \\ assumptions common in the mean field game literature rule out these dynamics, \cite{mou2022mean} offer a synopsis. This and my previous treatise represent the dawn of a new probabilistic approach to mean field games, characterized by universal \emph{a priori} estimates and the potential for deep quantitative existence result with scientific implications. 
\par There are three progressive aspects to the proofs. Theorem 3 engenders statistical deductions from the abstract existence result Theorem 3 (SELCKE). Theorem 2 involves breaking the population into measurable chunks, which allows me to extend previous existence results to the case of idiosyncratic noise, in a fashion that should prove generic. Finally, the limiting distribution behind Theorem 7 may be surprising to some.
\par In the course of my mathematical labors, there are several 
references to \\ results from advanced textbooks, such as \cite{billingsley1995probability} (probability),\\ \cite{hatcher2002algebraic} (algebraic topology), 
\cite{aliprantis2007infinite} and \cite{villani2021topics} (both functional analysis), with the aim of clarifying technical points or validating limiting logic. Nevertheless, the broad thrust of the proof rely on a \\ familiar triumvirate of stochastic process theory, 
refined fixed-point Theorem 1 and classical analysis. Thus, the paper is designed to be mathematically accessible.  
\subsection{Road Map}
Section 3 lays out the benchmark Calvo model. Section 4 is devoted to menu costs. Section 5 concerns the theory of Taylor pricing, Section 6 looks towards applications. Section 7 focuses on foundational convergence behavior. 
Section 8 harnesses these results for numerical analysis and testing. 
Section 9 concludes. 
\par There are four Supplementary Appendices dedicated to lengthy and \\ predictable workings. Section A details further DSGE derivations. Section B is given over to the solution of the Taylor Phillips, without restriction on the degree of nominal rigidity. Section C contains a multitude of 
special cases. Section D focuses on characteristic equation calculations supporting empirical implementation. 
\section{Benchmark Calvo Model}
This section is divided in two. The first 
communicates the underlying structure of the basic \cite{calvo1983staggered} model. The second brings about the solution.  
The switch to persistent errors engenders novelty in the dynamic form, whilst an assumption is removed from the fixed-point theorem in SELCKE.  
\subsection{Framework}
This exposition closely tracks SELCKE, apart from a reduction in detail and an empirically motivated change to the assumptions on the error term. There are two components, the first deals with households and preferences. The second treats pricing and general equilibrium. 
\subsubsection{Households and Preferences}
There is a single representative household that chooses consumption $C$ and labor supply
$L$, so as to maximize the following objective function
\begin{equation}\max_{C_{t},\,  L_{t}} \: U_{t}= {\mathbb{E}_{t}\sum_{T=t}^{\infty}\beta^{T-t}\,\bigg[u(C_{T})-
\nu(L_{T})\bigg ]\,\psi_{T}}\end{equation}
subject to the budget constraint 
\begin{equation} P_{t}\, C_{t}+B_{t+1}=(1+i_{t-1})B_{t}+P_{t}\, (1-\tau_{t})\, W_{t}\, L_{t}+ \int_{0}^{1}{\Pi_{t}(i)}\,{d}i +T_{t} \end{equation}
$\beta \in (0,\, 1)$ is the discount factor, $u$ the utility function and $\nu$ the cost of work. $\psi_{T}$ is the demand shock. It is a preference shock, such that a higher value induces the household to demand more consumption today and less tomorrow. It can be taken to encompass financial shocks. \,$B$ refers to the holding of one period risk free nominal bonds.\,$i_{T}$ is the risk-free nominal interest rate paid at the end of period $T$ on the bond.\,$P$ is the price level - bonds are the numeraire here.\,$W$ is the real wage. $T_{t}$ is a lump sum tax that can be used to fund a lump sum subsidy on wages $\tau_{t}$. This is a welfare feature irrelevant to dynamic analysis. 
\par There is a unit continuum of firms. $\Pi(i)$ is profit from an individual firm $i$, given by
\begin{equation}
\Pi_{t}(i)=p_{t}(i)\,y_{t}(i)-W_{t}\, l_{t}(i)
\end{equation}
The budget constraint states that the uses for nominal income (consumption and saving) must be equal to the sources of income (wealth, labor and dividend income). \par Finally, there are two constraints
\begin{equation}B_{T}=0
\end{equation}
\begin{equation}\lim_{T \rightarrow \infty}  \beta^{T-t} \, \mathbb{E}_{t}\, \psi_{T}\, u'(C_{T})\geq 0\end{equation}
These serve to enforce the dynamic budget constraint and ignore fiscal policy. Preferences are constructed so as to ensure interior solutions. Thus $u'>0$, so agents always wish to consume more and the transversality condition will bind with equality. $u''<0$ to incentivize consumption smoothing. It is costly for agents to work $\nu'>0$,  whilst $\nu''>0$ to encourage the agent to balance work and leisure.
\par Additional conditions are required to rule out boundary solutions. The \\ standard Inada condition for consumption is
\begin{equation}\lim_{C \rightarrow 0}u'(C)= \infty \end{equation} along with zero net wealth. This ensures the representative household will always work. 
To force them to take leisure 
\begin{equation}\lim_{L \rightarrow \bar{L}} \nu( L_{t}) \rightarrow \infty \end{equation}
where $\bar{L}$ is the maximum possible labor supply. I use the functional forms below\footnote{The inconsistency between Inada condition (7) and functional form (9) are discussed alongside remedies in SELCKE Footnote 23. Logarithmic utility is a requirement for the balanced growth in this model, consistent with standard de-trending procedure. It is only required for the policy rule.}
\begin{equation} u(C)=
\log(C)
\end{equation}
\begin{equation}
\nu(L)=\frac{L^{1+\eta}}{1+\eta}
\end{equation}
Each firm produces an individual variety for which demand is given by 
\begin{equation}y_{t}(i)=\bigg(\frac{p_{t}(i)}{P_{t}}\bigg)^{-\theta} Y_{t}\end{equation}
$\theta$ is the elasticity of demand.
There are three partial equilibrium relations stemming from the household's optimization.  \begin{equation}u'(C_{t})=\beta\, (1+i_{t}) \, \mathbb{E}_{t}\, u'(C_{t+1})\frac{\psi_{t+1}}{\psi_{t}}\frac{P_{t}}{P_{t+1}}\end{equation} 
\begin{equation}
u'(C_{t})\, W_{t}= \nu'(L_{t})\end{equation}
The consumption Euler is the household's intertemporal optimization condition balancing the marginal utility return to consumption today with that of the next period. Second is the intra-temporal optimal labor supply constraint equalizing the value of extra consumption with the marginal cost of working. Interest rates and wage rates are equilibriating mechanisms. The goods market clearing condition is simply
\begin{equation}C_{t}=Y_{t}\end{equation}
\subsubsection{Pricing and General Equilibrium}
\cite{calvo1983staggered} pricing is the most popular approach to inject nominal rigidity into a DSGE model. Reoptimization is governed by a stochastic process common across firms. With probability $1-\alpha$ each firm is free to reset its price (at no cost), whilst with probability $\alpha$ it keeps its price fixed and meets demand at its existing price.
Firms reset their prices to maximize the expected present value of profits through the lifetime of the price as follows: 
\begin{equation}\max_{p_{t}^{*}(i)}\mathbb{E}_{t}\sum_{T=t}^{\infty}\alpha^{T-t}\, Q_{t,\, T}\bigg[\frac{p_{t}(i)}{P_{T}}y_{T}(i)-C (y_{T})(i)\bigg] \end{equation}
subject to the individual demand (10). Here  \begin{equation}Q_{t,\, t+k}=\beta ^k \frac{\psi_{t+k}\, u'(C_{t+k})}{\psi_{t}\, u'(C_{t})}\Pi_{t,\, t+k}^{\theta}
\end{equation}
represents the real stochastic discount factor (SDF). It is the risk-adjusted present value of future consumption $k$ periods ahead which depends on the gross rate of inflation
\begin{equation}\Pi_{t,\, t+k}=\frac{P_{t+k}}{P_{t}}=(1+\pi_{t+1}) \cdots (1+\pi_{t+k})\end{equation} between today time $t$ and a future time $T > t$. The first order condition is
\begin{equation}
\mathbb{E}_{t}\sum_{T=t}^{\infty} (1-\alpha)^{T-t}\, Q_{t,\, T}\bigg(\frac{p_{t}^{*}}{P_{T}}\bigg)^{-\theta}Y_{T}\bigg[\frac{p_{t}^{*}}{P_{T}}-\frac{\theta}{\theta-1}MC_{T}(y_{T}(i))\bigg]=0
\end{equation}
The price level evolves as follows: 
\begin{equation}P_{t}^{1-\theta}=\alpha\,  P_{t-1}^{1-\theta}+(1-\alpha)\, (p_{t}^{*})^{1-\theta}
\end{equation}
The reset price can be expressed as
\begin{equation}\frac{p^{*}_{t}}{P_{t}}=\frac{\theta}{\theta -1} \frac{\aleph_{t}}{\beth_{t}}\end{equation}
where 
\begin{equation}\aleph_{t}=\mathbb{E}_{t}\sum_{T=t}^{\infty}(\alpha \, \beta)^{T-t}\, \Pi^{\theta}_{t,\, T}\, \psi_{T}\, u'(C_{T})\, Y_{T}\, MC_{T} \end{equation}
\begin{equation}
\beth_{t}=\mathbb{E}_{t}\sum_{T=t}^{\infty}(\alpha \, \beta)^{T-t}\, \Pi^{\theta-1}_{t,\, T}\, \psi_{T}\, u'(C_{T})\, Y_{T}
\end{equation}
both numerator and denominator have recursive forms 
\begin{equation}
\aleph_{t}=\psi_{t}\, u'(C_{t})\, Y_{t}\, MC_{t}+\alpha \, \beta \, \mathbb{E}_{t}\, (1+\pi_{t+1})^{\theta}\, \aleph_{t+1}
\end{equation}
\begin{equation}
\beth_{t}=\psi_{t}\, u'(C_{t})\, Y_{t}+\alpha \, \beta \, \mathbb{E}_{t}\, (1+\pi_{t+1})^{\theta-1}\, \beth_{t+1}
\end{equation}
Intuitively, $\aleph_{t}$ is a scale-weighted measure of marginal costs with a \\ discounting scheme that reflects the expected age of the price, whilst $\beth_{t}$ is a similarly weighted measure of the scale of future demand.
Nominal rigidity generates real distortions through the dispersion term.  
\begin{equation}\Delta= \int_{i}{\bigg(\frac{p(i)}{P}\bigg)}^{-\theta}\; \mathrm{d}\mu(i) \geq 1\end{equation}
The intuition is that consumers prefer variety and it is therefore costly to \\ substitute between high and low price goods. Therefore they cannot achieve the same utility when prices are dispersed which will always arise when prices are rigid and inflation variable. Here with Calvo pricing, $\Delta$ evolves according to the following relationship: 
\begin{equation} \Delta_{t}=\frac{({1-\alpha\, (1+\pi_{t})^{\theta-1})}^{\theta/(\theta-1)}}{(1-\alpha)^{1/(\theta-1)}} + 
\alpha\, (1+\pi_{t})^{\theta}\, \Delta_{t-1} \end{equation}
derived from repeated use of (18). Labor market clearing stipulates that 
\begin{equation} \Delta_{t}\, C_{t}=L_{t}\end{equation}
where I have normalized productivity to unity.
\par The final piece of the jigsaw is the monetary policy rule 
\begin{equation}i_{t}=i^{*}_{t} + a_{\pi}\, \hat{\pi}_{t}+ a_{y}\, \hat{y}_{t} 
\end{equation}
This so-called Taylor rule is an ad hoc stabilization condition motivated by \\ \cite{taylor1993discretion}.\footnote{His actual proposal contained an inertial element because inflation was measured relative to four quarters back. In the same conference (\cite{henderson1993comparison}) came up with a very similar formulation.} The reaction coefficients $(a_{\pi}, \, a_{y})\geq \bf{0}$, reflecting stabilization \\ motives. This setup is not fit for purpose. Nevertheless, an improvement lies beyond the scope of the paper and all results are robust to a nearby alternative. 
\par Finally, there is a change in the primitive properties of the errors relative to SELCKE. There I made the strict assumption that demand shocks, the only error present in the benchmark model, were white noise. This was to better delineate my discovery of endogenous persistence. Here, I adopt the more \\ conventional position that shocks are autoregressive with one lag, driven by white noise.
\begin{equation} \hat{\psi}_{t}=\rho \, \hat{\psi}_{t-1} +\hat{\xi}_{t}\end{equation}
\par The key point here is that the exogenous shocks peak on impact, so that they cannot cause hump shape dynamics on their own. If errors are regarded as shocks to the financial system, it may be quite natural to regard them as persistent (see for example \cite{guerrieri2017credit}). 
The $n^{th}$ moment of the disturbance process is denoted $\xi^{n}$. In general, the requirements for \\ existence will match the order of the perturbation analyzed. 
Distribution \\ functions will always be continuous.
A fuller understanding of the structure of DSGE disturbances remains an outstanding priority. 
\subsection{Benchmark Solution}
This subsection has two halves. The first crystalizes the critical equilibrium concept of the paper. The second unveils the central Keynesian inflation \\ relationship.
\subsubsection{Stochastic Equilibrium}
The starting point is the stochastic equilibrium of the model, given by the triplet
\begin{equation} \psi \, Y^{-1}= \beta \, (1+i) \, \mathbb{E}\bigg[\frac{\psi\,  Y^{-1} }{(1+\pi)}\bigg]\end{equation}
\begin{multline} \bigg(\frac{1-\alpha}{1-\alpha\, (1+\pi)^{\theta-1}}\bigg)^{1/(\theta-1)}= \\ \frac{\theta}{\theta-1}\bigg(\psi \, \nu'(\Delta \, Y) + \frac{\alpha \, \beta \,\mathbb{E} \, (1+\pi)^{\theta}\, \psi \, \nu'(\Delta \, Y)}{1-\alpha \, \beta \, \mathbb{E}\, (1+\pi)^{\theta}}\bigg) \Bigg/\bigg(\psi + \frac{\alpha \, \beta\, \mathbb{E}\, (1+\pi)^{\theta-1}\, \psi}{1-\alpha \,\beta \, \mathbb{E}\, (1+\pi)^{\theta-1}}\bigg) \end{multline}
\begin{equation}
\Delta=\mathbb{E}\, \Delta=\frac{1}{(1-\alpha)^{1/(\theta-1)}}\frac{\mathbb{E}\, (1-\alpha\, (1+\pi)^{\theta-1})^{\theta/(\theta-1)}}{(1-\alpha \, \mathbb{E} \, (1+\pi)^{\theta})}
\end{equation}
Now we arrive at the linchpin of the paper 
\begin{theorem} Fix a DSGE model with recursive equilibrium $$\mathbb{E}_{t}\, \mathbf{X}_{t+1}= f(\mathbf{X}_{t}, \, \boldsymbol{\gamma}, \, \mathbf{e}_{t}) \: \: \: \mu \: a.s. $$
where  $f \in C^{1}$. Decompose the endogenous variables into jumps (solved forward) and states (solved backwards), such that $\mathbf{X}_{t}=(\mathbf{X}^{J}_{t}, \, \mathbf{X}^{S}_{t})'$. Posit the following mutual dependence pattern  
$$\mathbb{E}_{t}\, \mathbf{X}_{t+1}(i)=f_{i}(\mathbf{X}_{t}(-i), \, \cdot)$$
$$\mathbb{E}_{t}\, \mathbf{X}^{J}_{t+1}= f_{J}(\mathbf{X}^{S}_{t}, \, \cdot)$$
$$\mathbb{E}_{t}\, \mathbf{X}^{S}_{t+1}= f_{S}(\mathbf{X}^{J}_{t}, \,\cdot)$$
which are non-constant $\mu$ a.s.
There exists a stochastic equilibrium if and only if the eigenvalues conform to the Blanchard-Kahn conditions (matching numbers of eigenvalues inside the unit circle with the total of jump variables) for a unique solution around the candidate stochastic steady state. 
\end{theorem}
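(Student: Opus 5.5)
The plan is to reduce Theorem 1 to the fixed-point Theorem 3 of SELCKE. The new ingredient is the mutual dependence pattern, which takes over the role previously played by the patient limit $\beta \rightarrow 1$ and by the boundary behaviour. I would first define the candidate stochastic steady state $\bar{\mathbf{X}}$ as a fixed point of the expectation map. Concretely, $\bar{\mathbf{X}}=\mathbb{E}\, f(\bar{\mathbf{X}}, \boldsymbol{\gamma}, \mathbf{e})$, where the expectation is taken with respect to the invariant law $\mu$ of the shocks; for the Calvo benchmark this is the triplet (28)--(30). I would then linearise the recursion around $\bar{\mathbf{X}}$ using $f\in C^{1}$, obtaining the Jacobian $A=D_{\mathbf{X}}f(\bar{\mathbf{X}},\boldsymbol{\gamma},\cdot)$ averaged over $\mu$. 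The dependence pattern makes $A$ block off-diagonal,
$$A=\begin{pmatrix} 0 & A_{JS}\\ A_{SJ} & 0\end{pmatrix},$$
because the jumps depend only on states, the states only on jumps, and no variable depends on itself. Since $f_{J}$ and $f_{S}$ are non-constant $\mu$ a.s., both blocks are generically nonzero. The eigenvalues of $A$ are therefore $\pm\sqrt{\lambda}$, with $\lambda$ ranging over the eigenvalues of $A_{JS}A_{SJ}$. This gives a clean count of the roots inside and outside the unit circle.

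\textbf{Sufficiency} (``if''). Assume the Blanchard--Kahn count holds at $\bar{\mathbf{X}}$. Then the linearised system has a unique saddle-path solution: the unstable manifold is pinned down by the forward-solved jumps, and the stable manifold governs the states. I would write the nonlinear recursion as a contraction on the space of bounded stationary sequences in a neighbourhood of $\bar{\mathbf{X}}$, solving the stable directions backward and the unstable directions forward. Continuous differentiability makes the remainder $o(\|\mathbf{X}-\bar{\mathbf{X}}\|)$, so for a small enough neighbourhood the solution operator is a contraction, and a stationary solution exists by Banach's fixed-point theorem. Its invariant distribution is the stochastic equilibrium. Here I would invoke Theorem 3 (SELCKE) directly, after checking its hypotheses. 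The one it needed $\beta\rightarrow1$ for was to prevent self-feedback of a variable on its own lag, which could push it toward the boundary. The exclusion $f_{i}(\mathbf{X}_{t}(-i),\cdot)$ supplies exactly this.

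\textbf{Necessity} (``only if''). Suppose the count fails.
\begin{enumerate}[(i)]
\item If there are too many unstable roots, then generic initial states produce explosive paths. These leave every compact set with positive probability, contradicting stationarity of the equilibrium law.
\item If there are too few unstable roots, the solution is indeterminate: a continuum of sunspot-augmented stationary laws exists. This contradicts uniqueness, which forms part of the definition of stochastic equilibrium in SELCKE.
\end{enumerate}
Both arguments are local. They pass to the nonlinear system through the Hartman--Grobman theorem, which requires hyperbolicity, so I would treat roots on the unit circle separately. Those roots should be excluded by the non-constancy hypothesis and a genericity argument.

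The step I expect to be the main obstacle is showing that the candidate $\bar{\mathbf{X}}$ exists at all, and that the linearisation taken in expectation governs the true stochastic dynamics for non-small shocks. For the first point, I would try a Brouwer or Schauder argument on the expectation map, using the compactness supplied by the Inada and boundary conditions (6)--(7) to rule out escape to the boundary. For the second point, I would try a stochastic Lyapunov argument: bound $\mathbb{E}\|\mathbf{X}_{t+1}-\bar{\mathbf{X}}\|$ by $r\,\|\mathbf{X}_{t}-\bar{\mathbf{X}}\|$ plus a term of order $\|\mathbf{e}\|$, where $r<1$ is the spectral radius on the stable subspace.
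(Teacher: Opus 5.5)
Citing Theorem~3 of SELCKE is exactly the paper's route. However, your explanation of why it applies without $\beta\rightarrow 1$ is wrong, and that is the one step the reduction actually needs.

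The paper's proof is that Theorem~1 \emph{is} SELCKE's Theorem~3 with one conclusion deleted. The patient limit was used there only to show that no \emph{other} recursive equilibria exist. Theorem~1 makes no such claim, so nothing has to take the place of $\beta\rightarrow 1$.

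In particular, the mutual dependence pattern is neither a substitute for the patient limit nor an exclusion restriction. It is a coupling condition: the content is that $f_{i}$, $f_{J}$, $f_{S}$ are non-constant in the displayed arguments, with $\cdot$ leaving the rest free. Remark~1 gives its role, which is to allow eigenvalues to be permuted across variables so that only the aggregate Blanchard--Kahn count matters.

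Your reading is that no variable loads on itself, jumps depend only on states and states only on jumps, so $A$ is block off-diagonal with spectrum $\{\pm\sqrt{\lambda}\}$. This contradicts Remark~1's assertion that the condition is trivial in forward-looking optimisation models with state variables. In the Calvo model, $\mathbb{E}_{t}\hat{\pi}_{t+1}$ loads on $\hat{\pi}_{t}$ through the Phillips curve, and $\mathbb{E}_{t}\hat{y}_{t+1}$ loads on $\hat{y}_{t}$ through aggregate demand. Your reading is also refuted by the paper's own spectra. The Taylor(2) roots $-0.435$, $0.478$, $1.048$, $-2.869$ are not closed under $\lambda\mapsto-\lambda$. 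Your structure would also force the number of unstable roots to be even, which is no part of the theorem.

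The self-contained arguments you build around the citation do not close either.

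(i) Your candidate $\bar{\mathbf{X}}=\mathbb{E}\,f(\bar{\mathbf{X}},\boldsymbol{\gamma},\mathbf{e})$ freezes the endogenous variables and averages only over shocks. The Calvo stochastic-equilibrium conditions instead contain terms like $\mathbb{E}\,(1+\pi)^{\theta}$ and $\mathbb{E}\,[\psi/((1+\pi)Y)]$, taken under the ergodic invariant measure of the endogenous process. The steady state and the invariant law therefore form a joint fixed point. SELCKE's fixed-point theorem supplies this; a Brouwer or Schauder argument on a point map does not. The Inada conditions you would use for compactness belong to the Calvo household, not to a theorem stated for arbitrary $f\in C^{1}$.

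(ii) Hartman--Grobman needs hyperbolicity, and non-constancy does not exclude roots on the unit circle. The paper meets one at exactly $\lambda=1$ in the Taylor(4) characteristic polynomial, which is why Construction~1 is introduced. A genericity argument is unavailable for a statement about a fixed model, so your necessity direction breaks down exactly in the borderline cases.

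(iii) As you suspected, the local contraction only yields a stationary solution for shocks small enough to stay in the neighbourhood. The Lyapunov inequality with the stable spectral radius is a property of the linearisation, not of the nonlinear system with forward-solved jumps and large shocks.

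None of this machinery is needed. Cite SELCKE's Theorem~3 and observe, as the paper does, that the patient limit served only the omitted claim that no other recursive equilibria exist.
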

\begin{proof} Follows directly from Theorem 3 in SELCKE. This result is in fact that theorem minus the demonstration that when the patient limit ($\beta \rightarrow 1$) \\ dominates there can be no other recursive equilibria.\end{proof}
\begin{remark} The mutual dependence condition is trivial for forward-looking \\ optimization models with state variables. It is essential to permit permutation of the eigenvalues and achieve the aggregate condition.\end{remark}
\begin{remark} The smoothness impositions of the theorems require one to accept the premise that central banks can mimic negative nominal interest rates, through mechanisms like quantitative easing. The empirical credence of this postulate is discussed at length in SELCKE. The culprit is aggregate regime-switching; mechanical challenges are covered in \cite{chavez2026structural}.\end{remark}
\subsubsection{Benchmark Phillips Curve}
Stochastic equilibrium approximations will in general contain expectation terms, evaluated at the ergodic invariant measure. For example, here is an expression for the aggregate demand equation 
\begin{multline} \bigg(1+ \beta\,  a_{y}\, \bigg(\frac{\mathbb{E}\, \psi/(1+\pi)\, Y}{\mathbb{E}\, \psi/Y}\bigg)\bigg)\hat{y}_{t}=\bigg(1-\rho \, \psi \, \bigg( \frac{\mathbb{E}\, 1/(1+\pi)\, Y}{\mathbb{E}\, \psi/Y}\bigg)\bigg)\hat{\psi}_{t} - \\ \beta a_{\pi}\bigg(\frac{\mathbb{E}\, \psi/(1+\pi)\, Y}{\mathbb{E}\, \psi/Y}\bigg)\hat{\pi}_{t} + Y\bigg(\frac{\mathbb{E}\, \psi/(1+\pi)\, Y^{2}}{\mathbb{E}\, \psi /(1+\pi)\, Y}\bigg)\mathbb{E}_{t}\, \hat{y}_{t+1} + \\ \bigg(\frac{\mathbb{E}\, \psi/(1+\pi)^{2} \, Y}{\mathbb{E}\, \psi /(1+\pi)\, Y}\bigg)\mathbb{E}_{t}\, \hat{\pi}_{t+1}
\end{multline}
This case which admits high order approximations will be fleshed out in the appendix.
\par Attention now turns to the $\vert{\varepsilon}\vert$, where the model approaches linearity. The labor market clearing condition (combining (12) and (26)) in linear form is 
\begin{equation} \hat{mc}_{t}=(1+\eta)\, \hat{y}
_{t} + \eta \, \hat{\Delta}_{t}\end{equation}
Price dispersion obeys 
\begin{equation} \hat{\Delta}_{t}=\alpha
\, \hat{\Delta}_{t-1}\end{equation}
Linearizing the two components (22) and (23) of the Phillips curve at 
ZINSS reveals a common root $\mathbb{L}=1/\alpha \, \beta$ in both lag polynomials. 
\begin{equation}\hat{\aleph}_{t}=(1-\alpha \, \beta)\bigg\{(1+\eta)\, \hat{y}^{e}_{t}+\eta \, \hat{\Delta}_{t} +  \hat{\psi}_{t}\bigg\} +\alpha \, \beta \, \mathbb{E}_{t}\, \hat{\aleph}_{t+1}\end{equation}
\begin{equation}\hat{\beth}_{t}=(1-\alpha \, \beta) \hat{\psi}_{t} +\alpha \, \beta \, \mathbb{E}_{t}\, \hat{\beth}_{t+1}\end{equation}
This gives rise to a singular surface. The induced cancellation of the error terms makes it a wall-crossing, in a particular a three-dimensional hole. Local \\ dynamics around the steady state are instead described by the limiting \\ stochastic equilibrium, which approaches but does not undertake the spurious cancellation step and agrees with the missing zero limit of the trend inflation Phillips curve, first noted by \cite{ascari2004staggered}.
\par Several simple substitution steps yield the accelerationist form 
\begin{multline} \pi_{t}=\frac{1}{\beta\, (1+\alpha)}\pi_{t-1} + \frac{(1-\alpha)(1-\alpha \, \beta)\, (1+\eta)}{(1+\alpha)}\hat{y}
_{t} + \frac{(1-\alpha)\, (1-\alpha \, \beta)\, \eta}{(1+\alpha)}\hat{\Delta}
_{t}- \\ \frac{(1-\alpha)(1-\alpha \, \beta)\, \eta}{\beta \, (1+\alpha)}\hat{\Delta}
_{t-1} -\frac{(1-\alpha)\, (1-\alpha \, \beta)\, (1+\eta)}{\beta\, (1+\alpha)}\hat{y}^{e}_{t-1} + \frac{\alpha \, \beta^{2}}{(1+\alpha)}\mathbb{E}_{t}\, \pi_{t+1}\end{multline}
The derivation is completed by substituting in the lagged version of aggregate demand curve (32), which at ZINSS is given by 
\begin{equation} (1+ \beta \, a_{y})\,\hat{y}
_{t}=(1-\rho)\, \hat{\psi}_{t} -  \beta \, a_{\pi}\, \pi_{t} + \mathbb{E}_{t}\, \hat{y}_{t+1} + \mathbb{E}_{t}\, \hat{\pi}_{t+1} 
\end{equation}
The symmetry in the error structure emerges from the cohomology induced by the efficiency of ZINSS. The reduction in the error coefficient by a factor of $\rho$ reflects the counteracting effect of the expected future preference shock, inside the Euler equation, absent under the extreme white noise pursued in SELCKE. Here is the form of the equation
\begin{equation}
\hat{\pi}_{t}=b_{0}\, \hat{\pi}_{t-1} +b_{1}\, \hat{y}_{t}+ b_{2}\, \hat{\Delta}_{t} + b_{3}\, \mathbb{E}_{t}\, \hat{\pi}_{t+1}+
b_{4}\, (\hat{\psi}_{t}-\hat{\psi}_{t-1}) \end{equation}
Each coefficient comprises a numerator indicated by a tilde superscript and an equation specific denominator so $b_{0}=\tilde{b}_{0}/b$ and so forth.
\begin{equation} b= 1+ \alpha + \frac{(1-\alpha)^{2}}{\alpha}\frac{(1+\eta)}{1+a_{y}}\end{equation}
\begin{equation} \tilde{b}_{0}= 1+ a_{\pi}\, \frac{(1-\alpha)^{2}}{\alpha}\frac{(1+\eta)}{1+a_{y}}\end{equation}
\begin{equation} \tilde{b}_{1}=(1-\alpha)^{2}\, (1+\eta)\, \bigg\{\frac{\alpha \, (1+a_{y})-1}{\alpha\, (1+a_{y})}\bigg\} \end{equation}
\begin{equation} \tilde{b}_{2}=-\eta\, (1-\alpha)^{3}\frac{(1+\alpha)}{\alpha^{2}}
\end{equation}
\begin{equation} \tilde{b}_{3}=\alpha \end{equation}
\begin{equation} \tilde{b}_{4}(\rho)= (1-\rho)\, \frac{(1-\alpha)^{2}}{\alpha}\frac{(1+\eta)}{1+a_{y}}\end{equation}
where I have sent $\beta \rightarrow 1$. It is the natural setting and it helps to compact the expressions. At the standard calibration $\alpha =2/3$ and $\eta=4$ we find that 
\begin{equation} \pi_{t}=0.575 \, \pi_{t-1}  + 0.25\, \hat{\Delta}_{t} + 0.3\, \mathbb{E}_{t}\, \pi_{t+1} +0.25\, (\hat{\psi}_{t}-\hat{\psi}_{t-1})\end{equation}
There is polydromy so a second "smaller" $\sqrt{\varepsilon}$ limit exists, where price dispersion vanishes. In this case, output developments can be summarized by the efficient output gap $\hat{y}^{e}_{t}$, which moves one-for-one with actual output. In both  instances the policy rule stipulates $a_{\pi} < 1$.\footnote{The coefficient expressions are quite non-linear in inflation, for example
\begin{multline} b = \bigg(\alpha\, (1+\pi)^{\theta-1}\, (2+\pi) + (1-\alpha)^{1/(\theta-1)}\frac{(1-\alpha\, (1+\pi)^{\theta-1})^{(\theta-2)/(\theta-1)}}{(1+\pi)^{\theta-2}} + \\ \frac{(1-\alpha\, (1+\pi)^{\theta-1})}{\alpha\, (1+\pi)^{\theta-2}}(1- (1+\pi)^{\theta})\frac{(1 + \eta)}{1 +  a_{y}}\bigg)- 
\frac{\eta \, \theta \, \pi \, (1-\alpha)^{1/(\theta-1)}} {(1-\alpha\, (1+\pi)^{\theta-1})^{2/(\theta-1)}}
\end{multline}
\begin{equation} \tilde{b}_{0}= 1 + a_{\pi}\, \frac{(1-\alpha\, (1+\pi)^{\theta-1})}{\alpha\, 
(1+\pi)^{\theta-2}}(1-\alpha \, (1+\pi)^{\theta})\frac{(1+ \eta)}{1+ a_{y}}\end{equation}
\begin{equation} \tilde{b}_{3}=\alpha \bigg( \alpha \, (1+\pi)^{2 \theta-1} + \frac{(1-\alpha\, (1+\pi)^{\theta-1})^{\theta/(\theta-1)}}{(1-\alpha)^{1/(\theta-1)}(1+\pi)^{\theta-2}}\bigg)\end{equation}
Thus, there is no guarantee that the qualitative features of the policy rule extend to typical positive rates of trend inflation.} The slope of the Phillips curve is zero \\ because the intertemporal distortions from excess discounting $\alpha < 1$  and the policy induced cost channel $a_{y}>0$ precisely cancels out at the chosen \\ values. This phenomenon is called output neutrality. More reactive policy would guarantee a positive slope.
\section{Menu Costs}
Menu costs provide a crucial bridge between classical economics based on \\ individual optimization and traditional Keynesian understanding that prices and wages respond slowly to shocks. Some notion of costs to making price adjustments is essential to rationalize price stickiness. All the alternatives \\ presented are in some sense a reduced form description of a suitable menu cost framework. An authoritative estimate of these costs by \cite{zbaracki2004managerial} reaches $1\%$ of output.\footnote{Negotiation and communication costs dominate physical expenses. There is empirical evidence that repricing becomes more frequent when inflation increases (\cite{alvarez2019hyperinflation} and \cite{blanco2022evolution}) consistent with the basic predictions of this approach.} Apart from simple cases, only simulation methods have been employed (see \cite{berger2019shocks}). There are 
four parts the first describes the unique features of the model. The second 
contains the main steps towards the stochastic equilibrium solution. The final subsection offers proofs of important properties.  
\subsection{Setup}
\par The setup of the model is that firms face a fixed cost $c$ whenever they change their price. This changes the resource constraint (13) to 
\begin{equation}C_{t}=Y_{t}-\alpha_{t}\, c\end{equation}
where $\alpha_{t}$ is the time varying frequency of price adjustment.
A second change is required. So as to prevent prices becoming flexible, it is necessary to introduce idiosyncratic shocks $b_{t}(i)$. These are independently and identically distributed across firms with mean $b$.\footnote{Technically, support and measurability assumptions are required. It is adequate to have $b(i)$ continuously distributed on $\mathbb{R}_{++}$.} This modifies the individual demand system \\ corresponding to (10) to \begin{equation}c_{t}(i)= \bigg(\frac{b_{t}(i)}{b}\bigg)^{\theta}\bigg(\frac{p_{t}(i)}{P_{t}}\bigg)^{-\theta} (Y_{t}-\alpha_{t}\, c)\end{equation}
These shocks are also sufficient to ensure standard blowup at the boundary \\ conditions.\footnote{Recall Proposition 16 (ii) undergirding Theorem 3 and its application to the Calvo model in Theorem 9 in SELCKE.} 
The consumer demand schedule is fully derived in 
the \\ Supplementary Materials Section A.1.
\par The value of the firm is 
\begin{multline} V_{t}(i)=\max_{p^{*}_{t}(i)}\Bigg\{\psi_{t}\, u'(Y_{t}-\alpha_{t}\, c)\bigg[\bigg(\frac{b_{t}(i)}{b}\bigg)^{\theta}\bigg(\frac{p^{*}_{t}(i)}{P_{t}}\bigg)-MC_{t}\bigg]\bigg(\frac{p^{*}_{t}(i)}{P_{t}}\bigg)^{\theta}(Y_{t}-\alpha_{t}c) \\ -c 
+
\beta \, \mathbb{E}_{t}\, V_{t+1}(p^{*}_{t}(i)), \, \\ \psi_{t}\, u'(Y_{t}-\alpha_{t}\, c)\bigg[\bigg(\frac{p_{t-1}(i)}{P_{t}}\bigg)^{1-\theta}-MC_{t}\bigg](Y_{t}-\alpha_{t}\, c) +  \beta \, \mathbb{E}_{t}\, V_{t+1}(p_{t-1}(i))\Bigg\}\end{multline}
Price dispersion previously (24) is now 
\begin{equation} \Delta=\int_{0}^{1}\bigg(\frac{b(i)}{b}\bigg)^{-\theta}\bigg(\frac{p(i)}{P}\bigg)^{-\theta}\, \mathrm{d}i\end{equation}
let \begin{equation} \Delta_{P}=\int_{0}^{1}\bigg(\frac{p(i)}{P}\bigg)^{-\theta}\, \mathrm{d}i\end{equation} be its pure price component. It follows that 
\begin{proposition} Idiosyncratic shocks increase price dispersion in particular $\Delta \geq \Delta_{P}$ with equality, if and only if $b(i)=b$, $\mu$ a.s. in $i$. Thus $\Delta >\Delta_{P}$, $\mu$ a.s.\end{proposition}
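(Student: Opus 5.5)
The idea is to exploit independence of the idiosyncratic shocks from the pricing state, factor the integral, and apply Jensen's inequality to the convex map $x\mapsto x^{-\theta}$.

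\textbf{Step 1: factor the integral.} The shocks $b_t(i)$ are i.i.d. across firms with mean $b$. The price $p(i)$ relevant for dispersion is the price carried into the period, set on the basis of information predating the current idiosyncratic draw. So $b(i)$ and $p(i)/P$ are independent under $\mu$. Treating the unit continuum as a probability space, with a law of large numbers across firms justified by the measurability assumptions in the footnote, we get
\[
\Delta=\int_{0}^{1}\bigg(\frac{b(i)}{b}\bigg)^{-\theta}\mathrm{d}i\;\cdot\;\int_{0}^{1}\bigg(\frac{p(i)}{P}\bigg)^{-\theta}\mathrm{d}i=\mathbb{E}\bigg[\bigg(\frac{b(i)}{b}\bigg)^{-\theta}\bigg]\,\Delta_{P}.
\]
If independence is only available conditionally, I would instead condition on the price distribution and apply the next step inside the conditional expectation.

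\textbf{Step 2: Jensen.} Since $\theta>1$, the map $x\mapsto x^{-\theta}$ is strictly convex on $\mathbb{R}_{++}$. Because $\mathbb{E}\,b(i)/b=1$, Jensen's inequality gives
\[
\mathbb{E}\big[(b(i)/b)^{-\theta}\big]\ge 1 .
\]
Multiplying by $\Delta_P>0$ yields $\Delta\ge\Delta_P$. By strict convexity, equality holds if and only if $b(i)/b$ is a.s. constant. Given the mean, that constant is $b$, so equality holds exactly when $b(i)=b$ $\mu$ a.s.

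\textbf{Step 3: strict inequality.} The shocks are assumed continuously distributed on $\mathbb{R}_{++}$, so they are non-degenerate. The equality case therefore has probability zero, and $\Delta>\Delta_P$ $\mu$ a.s.

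\textbf{Main obstacle.} The hard part is Step 1, justifying the independence and factorisation. Current reset prices respond to the current shock $b_t(i)$, since the optimisation in the value function conditions on it, so $p(i)$ and $b(i)$ are correlated for resetting firms.

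If that correlation breaks the factorisation, I would fall back on a Jensen argument that avoids it. Condition on the price $p(i)$ and apply Jensen to $b(i)$ conditionally. This needs $\mathbb{E}[b(i)\mid p(i)]=b$, which holds for non-resetters, whose prices predate the shock.

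For resetters, I would use the reset rule directly. Writing $p^*\propto b(i)^{\kappa}$ for some $\kappa$, the integrand becomes a power of $b(i)$, to which convexity again applies. The exponent has to be checked to remain outside $[0,1]$ so that the inequality keeps the right direction.
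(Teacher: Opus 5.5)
Your Steps 1--3 follow the paper's route. The paper's proof is simply Jensen's inequality plus ``the shock properties''. The factorization you use in Step 1 is what the paper records separately as Proposition 3, $\Delta=\Delta_b\,\Delta_P$, which it justifies by ergodicity and by integrating over the idiosyncratic distribution. Once that factorization is granted, Jensen on $\Delta_b$ and the non-degeneracy of a continuous shock distribution finish the argument exactly as you say.

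The genuine gap is the one you flag yourself. Everything rests on $b(i)$ being independent of $p(i)$ under $\mu$, and your fallback does not replace that assumption.

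\emph{Non-resetters.} Here $\mathbb{E}[b(i)\mid p(i)]=b$ need not hold. The decision to keep $p_{t-1}(i)$ is made after observing $b_t(i)$ (the paper's inaction set $\bar b$). Conditioning on a kept price therefore truncates the shock distribution, and the truncated mean is generally not $b$. Persistence of $b_t(i)$ over time would break the condition as well.

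\emph{Resetters.} Jensen on a single power of $b(i)$ only bounds one moment below by $1$, which is the wrong benchmark. What you actually need is a comparison between two different moments. Write $p^*(i)/P=c\,(b(i)/b)^{\kappa}$ and $Z=b(i)/b$. Resetters then contribute $c^{-\theta}\mathbb{E}[Z^{-\theta(1+\kappa)}]$ to $\Delta$ but $c^{-\theta}\mathbb{E}[Z^{-\theta\kappa}]$ to $\Delta_P$. Jensen on one power cannot settle that comparison.

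In fact no argument that avoids independence can succeed. In general
$\Delta-\Delta_P=(\Delta_b-1)\,\Delta_P+\mathrm{Cov}_\mu((b(i)/b)^{-\theta},\,(p(i)/P)^{-\theta})$.
The covariance is negative whenever $p(i)/P$ is a decreasing function of $b(i)$. That is precisely the direction the paper's own remark asserts near ZINSS. Take $\kappa=-1$: resetters contribute $c^{-\theta}$ to $\Delta$ and $c^{-\theta}\mathbb{E}[Z^{\theta}]>c^{-\theta}$ to $\Delta_P$. So if every firm reset, you would get $\Delta<\Delta_P$ strictly.

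The right repair is to state independence of $b(i)$ from $p(i)$ under $\mu$ as an explicit hypothesis. Equivalently, assume the factorization of Proposition 3, which the paper does implicitly. Then Steps 2 and 3 go through, and you should drop the fallback rather than rely on it.
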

\begin{proof} A straightforward consequence of Jensen's inequality and the process of the shock properties, mimicking the steps in Proposition 2 (SELCKE). \end{proof}
This result readily applies to other pricing models. 
Note that the resource cost \begin{equation} \Delta_{b}=\int_{0}^{1}\bigg(\frac{b_{t}(i)}{b}\bigg)^{-\theta}\, \mathrm{d}\mu(i) \geq 1\end{equation} is not a figment of price rigidity but is intrinsic to the love of variety and monopolistic competition. More discussion arrives with Proposition 3.
\subsection{Equilibrium Construction}
The following characterization is the main contribution 
Although rigorous, the proof is circuitous, with space left for other pertinent observations and results. The recursive equilibrium form left implicit here, is deferred to Section 6.2, to aid comparison with other pricing schemes. 
\begin{proposition} In Stochastic Equilibrium, the menu cost model is equivalent to a Calvo with endogenous reset probabilities.\end{proposition}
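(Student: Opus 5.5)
The plan is to show that, once the economy sits in stochastic equilibrium, the menu cost firm's decision reduces to a threshold rule. Its probability of resetting, aggregated across firms, is then a number $1-\alpha_{t}$ that depends only on aggregate states. The price level, the reset price and dispersion then obey exactly the Calvo laws (17)--(23), with $\alpha$ replaced by $\alpha_{t}$.

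\textbf{Step 1: threshold rule.} Start from the Bellman equation (53). Write the gain from adjusting for firm $i$ as the difference of its two branches:
$$G_{t}(i)=\max_{p^{*}}\{\cdots\}-c-\{\text{non-adjust branch}\}.$$
Show that $G_{t}(i)$ depends on $i$ only through the pair $(p_{t-1}(i)/P_{t},\, b_{t}(i))$. The firm resets if and only if $G_{t}(i)\geq 0$.

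\textbf{Step 2: reset price is common.} The maximand in the adjust branch does not involve $p_{t-1}(i)$. Its dependence on $b_{t}(i)$ enters only multiplicatively through $(b_{t}(i)/b)^{\theta}$. Check from the first order condition, the analogue of (16), that the optimal relative price can be written as
$$\frac{p^{*}_{t}(i)}{P_{t}}=\frac{b_{t}(i)}{b}\,\frac{\theta}{\theta-1}\,\frac{\aleph_{t}}{\beth_{t}}.$$
Here $\aleph_{t}$ and $\beth_{t}$ now discount by the survival probabilities $\prod_{s}\alpha_{s}$ in place of $\alpha^{T-t}$. They therefore satisfy the recursions (20)--(21) with $\alpha\beta$ replaced by $\alpha_{t+1}\beta$.

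\textbf{Step 3: adjustment is independent of the old price.} The $b_{t}(i)$ are i.i.d., continuously distributed on $\mathbb{R}_{++}$, and independent of $p_{t-1}(i)$. So conditional on the aggregate state, the adjustment event is a measurable function of an independent draw. The mass of adjusters is
$$1-\alpha_{t}=\mu\{i: G_{t}(i)\geq 0\},$$
and by a law of large numbers across the continuum it is deterministic given aggregates. The key claim needed here is that, in stochastic equilibrium, the distribution of $p_{t-1}(i)/P_{t}$ among non-adjusters coincides with its unconditional distribution, up to terms that vanish. I would obtain this from the super-exponential decay of the initial price distribution: the share of firms still holding a price set $k$ periods ago is $\prod_{s}\alpha_{s}$, and the selection bias is bounded by Proposition 1 type Jensen bounds. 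With this in hand, aggregating (48) over the continuum gives
$$P_{t}^{1-\theta}=\alpha_{t}P_{t-1}^{1-\theta}+(1-\alpha_{t})(p^{*}_{t})^{1-\theta},$$
after absorbing the $b$ factors into $\Delta_{b}$. Dispersion then follows the analogue of (25), times $\Delta_{b}$.

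\textbf{Step 4: endogenous probability.} $\alpha_{t}$ is a $C^{1}$ function of the aggregate state $(\pi_{t},Y_{t},\aleph_{t},\beth_{t},\psi_{t})$. Hence the system is a recursive equilibrium of the Calvo form plus one extra static equation. Theorem 1 then applies, and this yields the claimed equivalence.

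The main obstacle is Step 3: the selection effect in which adjusters are precisely the firms with large price gaps, which breaks the memoryless Calvo aggregation. I would first attempt to show that, in the small-noise stochastic equilibrium, the adjustment hazard's dependence on the old gap is dominated by the idiosyncratic shock, so that selection enters only at second order. If that fails, I would instead redefine the reset probability to absorb the selection term.
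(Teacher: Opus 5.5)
The gap is in Step 3, and the route you propose does not close it. The super-exponential decay the paper invokes is about how the influence of the \emph{initial} price distribution dies out. Selection, by contrast, is a property of the invariant measure itself. By your own Step 1, in every period the non-adjusters are exactly the firms whose pair $(p_{t-1}(i)/P_t,\, b_t(i))$ lies in the inaction region. The adjust branch does not depend on $p_{t-1}(i)$, while the keep branch falls as the gap grows. So both the gaps and the weights $(b_t(i)/b)^{\theta}$ of non-adjusters come from a truncated distribution, not the unconditional one, however long the burn-in. Proposition 1 gives only the inequality $\Delta\geq\Delta_P$, not a bound that makes this truncation vanish. Your small-noise fallback is an unproved claim that runs against the threshold structure.

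Two further steps already assume the conclusion. Treating $\prod_s\alpha_s$ as the survival share of a price of age $k$ (Step 3), and discounting $\aleph_t,\beth_t$ by $\alpha_{t+1}\beta$ (Step 2), both presuppose an age- and gap-independent hazard. In the paper the hazards $\alpha_i$ in (62) depend on age and are flat only at zero trend inflation.

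Your Step 2 formula is also wrong. Under (51), $(b_t(i)/b)^{\theta}$ scales period profit and leaves the static markup unchanged, so $p^*_t(i)/P_t$ cannot be proportional to $b_t(i)/b$. With $b$ redrawn each period, it only reweights the current period against the continuation, as in (60). Reset prices are therefore heterogeneous, and the $b$-factors of adjusters and non-adjusters cannot be pulled out into an unconditional average such as $\Delta_b$.

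Your Step 1 matches the paper's starting point: the s-S rule and the split of the price index (56)--(57). But the paper does not derive no-selection from transient decay. It asserts that there is no tendency for selection at the ergodic invariant measure and encodes this as (58): the $(b(i)/b)^{\theta}$-weighted contribution of non-adjusters equals $\alpha$. This immediately gives the Calvo reset relation (59), with an integral over adjusters in place of a single $p^*_t$. That is your second fallback, and you should start there.

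Once $\alpha$ is defined by (58), however, it is no longer your headcount $\mu\{i: G_t(i)\geq 0\}$, so Step 4 needs a different closing equation. The paper pins $\alpha$ down through the optimal-reset condition (67), which sets to zero the combined marginal effect of $\alpha^{*}$ on the values of sticky and reset prices (63)--(66); at zero trend inflation this becomes (70). It then closes dispersion with $\Delta=\Delta_b\Delta_P$ (Proposition 3) and (71). The resulting recursive form has both $\alpha_t$ and $\alpha_{t-1}$ among the states, rather than one extra static equation. Theorem 1 plays no role in the equivalence itself; it concerns existence only once that recursive form is in hand.
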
 
\begin{proof} 
At first some of the equations get more complicated, eventually they will simplify. It is fundamental to appreciate at the outset that the firms decision rule takes a so called s-S form, where the firm raise its price to the desired price $p_{t}^{*}(i)$ when it exceeds a threshold $\bar{S}_{t}(i)$ and cuts it likewise when $p_{t}^{*}(i)<\underbar{s}_{t}(i)$. The price level evolution (18) becomes 
\begin{multline}
P_{t}^{1-\theta}=
\int_{\underbar{s}_{t}(i)}^{\bar{S}_{t}(i)}\bigg(\frac{b_{t}(i)}{b}\bigg)^{\theta}(p_{t-1}(i))^{1-\theta}\, \mathrm{d}p_{t-1}(i)
+  \\ \bigg(\int_{\bar{S}_{t}(i)}^{\infty}+ \int_{0}^{\underbar{s}_{t}(i)}\bigg)\, \bigg(\frac{b_{t}(i)}{b}\bigg)^{\theta}(p_{t}^{*}(i))^{1-\theta}\, \mathrm{d}p_{t-1}(i)\end{multline}
where integrals are taken over the prevailing price distribution $\mathrm{d}p_{t-1}(i)$. In stochastic equilibrium, the relative price distribution is fixed modulo trend inflation. Hence, 
\begin{multline}1=
(1+\pi)^{\theta-1}\int_{\underbar{s}(i)}^{\bar{S}(i)}\bigg(\frac{b(i)}{b}\bigg)^{\theta}
\bigg(\frac{p(i)}{P}\bigg)^{1-\theta}\, \mathrm{d}p(i)
+  \\ \bigg(\int_{\bar{S}(i)}^{\infty}+ \int_{0}^{\underbar{s}(i)}\bigg)\, \bigg(\frac{b(i)}{b}\bigg)^{\theta}\bigg(\frac{p^{*}(i)}{P}\bigg)^{1-\theta}\, \mathrm{d}p(i)\end{multline}
There is no tendency for selection at the ergodic invariant measure, thus
\begin{equation} \int_{\underbar{s}(i)}^{\bar{S}(i)}\bigg(\frac{b(i)}{b}\bigg)^{\theta}
\bigg(\frac{p(i)}{P}\bigg)^{1-\theta}\, \mathrm{d}p(i)=\alpha
\end{equation}
Therefore, the reset price equation connecting the reset price distribution to inflation agrees with Calvo pricing modulo an allowance for heterogeniety\footnote{This equation corresponds to (32) in the more detailed exposition in SELCKE.}
\begin{equation}1-\alpha\, 
(1+\pi)^{\theta-1}=\bigg(\int_{\bar{S}(i)}^{\infty}+ \int_{0}^{\underbar{s}(i)}\bigg)\, \bigg(\frac{b(i)}{b}\bigg)^{\theta}\bigg(\frac{p^{*}(i)}{P}\bigg)^{1-\theta}\, \mathrm{d}p(i)\end{equation}
The stochastic equilibrium Phillips curve recursion takes the form 
 \begin{multline}  
 \frac{p^{*}_{t}(i)}{P_{t}}= \frac{\theta}{\theta-1}\bigg( \frac{\psi \, Y}{Y-\alpha\,  c}\bigg(\frac{b(i)}{b}\bigg)^{\theta}\nu'(\Delta \, Y) + \\ \beta \, \bigg\{\mathbb{E}\, \sum_{T=t+1}^{\infty}\alpha_{t, \, T}\int_{\bar{b}}\bigg(\frac{b(i)}{b}\bigg)^{\theta}
 f(i)\, \mathrm{d}b(i)\bigg\}
 \mathbb{E}\bigg[\frac{\psi \, Y}{Y-\alpha \, c}\frac{(1+\pi)^{\theta}\, \nu'(\Delta \, Y)}{\big\{1-\alpha \, \beta \, \mathbb{E}\, (1+\pi)^{\theta}\big\}}\bigg]\bigg) 
 \\ \Bigg/\bigg(\frac{\psi \, Y}{Y-\alpha \, c}\bigg(\frac{b(i)}{b}\bigg)^{\theta} + \\ \beta\bigg\{\mathbb{E}\sum_{T=t+1}^{\infty}\alpha_{t, \, T}\int_{\bar{b}}\bigg(\frac{b(i)}{b}\bigg)^{\theta}
 f(i)\, \mathrm{d}b(i)\bigg\} 
 \mathbb{E}\bigg[\frac{\psi \, Y}{Y-\alpha \, c}\frac{(1+\pi)^{\theta-1}}{\big\{1-\alpha \, \beta \, \mathbb{E}\, (1+\pi)^{\theta-1}\big\}}\bigg]\bigg)
 \end{multline} 
 here $\bar{b}$ represents the values of $b(i)$ consistent with the individual price staying fixed at its invariant distribution $f(i)$. 
 Note the following dichotomy
 \begin{remark} Consider the second term of the numerator and denominator. Down the stochastic equilibrium path, the first parenthesis represents terms \\ determined by the parameters and realizations of the idiosyncratic shock process and any trend inflation, whilst the subsequent product is impacted by aggregate shocks, entirely in the case of the denominator. \end{remark}
 Using (58), the stochastic equilibrium Phillips curve recursion takes the form 
\begin{multline}
 \frac{(1-\alpha)^{1/(\theta-1)}}{(1-\alpha \, (1+\pi_{t})^{\theta-1})^{1/(\theta-1)}}= \frac{\theta}{\theta-1}\bigg(\frac{\psi \, Y\, \nu'(\Delta \, Y)}{Y-\alpha \, c}\bigg\{\int_{{b}^{*}}\bigg(\frac{b(i)}{b}\bigg)^{\theta}
 f(i)\, \mathrm{d}b(i)\bigg\}  \\ + \beta \bigg\{\mathbb{E}\sum_{T=t+1}^{\infty}\alpha_{t, \, T}\int_{\bar{b}}\bigg(\frac{b(i)}{b}\bigg)^{\theta}
 f(i)\, \mathrm{d}b(i)\bigg\}
 \mathbb{E}\, \bigg[\frac{\psi \, Y}{Y-\alpha \, c}\frac{(1+\pi)^{\theta}\, \nu'(\Delta \, Y)}{\big\{1-\alpha \, \beta \, \mathbb{E}\,(1+\pi)^{\theta}\big\}}\bigg]\bigg)  \Bigg/
 \\ \bigg(\frac{\psi Y}{Y-\alpha c}\bigg\{\int_{{b}^{*}}\bigg(\frac{b(i)}{b}\bigg)^{\theta}
 f(i)\, \mathrm{d}b(i)\bigg\} +  \\ \beta\bigg\{\mathbb{E}\sum_{T=t+1}^{\infty}\alpha_{t, \, T}\int_{\bar{b}}\bigg(\frac{b(i)}{b}\bigg)^{\theta}
 f(i)\, \mathrm{d}b(i)\bigg\}\mathbb{E}\, \bigg[\frac{\psi \, Y}{Y-\alpha \, c}\frac{(1+\pi)^{\theta-1}}{\big\{1-\alpha \, \beta \, \mathbb{E}\, (1+\pi)^{\theta-1}\big\}}\bigg]\bigg)
 \end{multline} 
It is now time to further address the integrals over the individual error \\ distribution
\begin{remark}At the Stochastic Equilibrium individual relative prices are monotonic in $b_{t}(i)$. Near to ZINSS, they are in fact decreasing with the shock, the firm trades off a lower margin for greater scale. This need not be true if there were deflation or persistence. When marginal costs are elevated the relationship is always increasing. In any case, there are different aggregators allowing for alternative properties, such as \cite{kimball1995quantitative}. There is a singularity at ZINSS, reflecting the collapse of the reset probability. 
\end{remark}
Finishing off the derivation, there is a mechanical recursion connecting the different hazard rates 
\begin{equation} \alpha = (1-\alpha)(\alpha_{1} + \alpha_{1}\, \alpha_{2} + \alpha_{1}\, \alpha_{2}\, \alpha_{3} + \cdots +)\end{equation}
where $\alpha_{i}$ is the hazard rate of a price at age $i$. 
\begin{remark} The hazard profile is flat when there is no trend inflation as in Calvo. This demonstrates that trend inflation is required to fit evidence of hazard rates increasing with age (\cite{nakamura2013price}). \end{remark}
Optimal reset is determined by comparison between profits when the price is adjusted and the cost paid versus the case where the initial price stays fixed. $\alpha$ is determined by an optimal comparison of expected profit, when the price is kept fixed compared to when it varies. Copious use is made of no selection principles implied by (57) and (58). 
 \par The expected value of sticky prices is 
 \begin{multline} \bar{V}_{t}(i)=\psi_{t}\, u'(C_{t})\, Y_{t} \int_{\bar{b} _{t}(\alpha^{*} _{t})}\bigg(\frac{b(i)}{b}\bigg)^{\theta}
 f(i)\, \mathrm{d}b(i) \times \\ \Bigg[\bigg(\frac{p_{t-1}(i)}{P_{t}}\bigg)^{1-\theta}-   \frac{\theta}{\theta-1}\bigg(\frac{p_{t-1}(i)}{P_{t}}\bigg)^{-\theta} \nu'(\Delta_{t} \, Y_{t})\Bigg]
 + \\
\mathbb{E}_{t}\sum_{T=t+1}^{\infty}\alpha^{*}_{t, \, T}\, \beta^{T-t}\psi_{T}u'(C_{T})Y_{T}\int_{\bar{b} _{t}(\alpha^{*} _{t})}\bigg(\frac{b(i)}{b}\bigg)^{\theta}f(i)\, \mathrm{d}b(i) \times \\ \Bigg[\bigg(\frac{p_{t-1}(i)}{P_{T}}\bigg)^{1-\theta}-  \frac{\theta}{\theta-1}\bigg(\frac{p_{t-1}(i)}{P_{T}}\bigg)^{-\theta}\nu'(\Delta_{T} \, Y_{T})\Bigg]
\end{multline}
In stochastic equilibrium, this simplifies to 
\begin{multline} \mathbb{E}\bar{V}(i)=\alpha^{*}\Bigg\{ \psi \, u'(C)\, Y \, (1+\pi)^{\theta-1}\Bigg[1-\frac{\theta}{\theta-1} (1+\pi) \, \Delta \, \nu'(\Delta \, Y)\Bigg]
+ \\ \mathbb{E}\sum_{T=t+1}^{\infty}\alpha^{*}_{t, \, T}\, \beta^{T-t}\, \psi \, u'(C)\, Y \, (1+\pi)^{\theta-1}\Bigg[1- \frac{\theta}{\theta-1}(1+\pi)\, \Delta \, \nu'(\Delta \, Y)\Bigg]\Bigg\}
 \end{multline}
Note that ${*}$ is used to indicate the optimizing choice of reset frequency. 
\par On the other hand, the value of reset prices is\footnote{The impact of nominal rigidity is indeed discernible in stock prices (see \cite{gorodnichenko2016sticky} and \cite{faia2024cost}).} 
\begin{multline}{V}^{*}_{t}(i)=\psi_{t} \, u'(C_{t})\, Y_{t} \int_{\bar{b}_{t} (\alpha^{*}_{t})}\bigg(\frac{b(i)}{b}\bigg)^{\theta}f(i)\, \mathrm{d}b(i) \times \\ \bigg[\bigg(\frac{p^{*}_{t}(i)}{P_{t}}\bigg)^{1-\theta}-\frac{\theta}{\theta-1} \bigg(\frac{p^{*}_{t}(i)}{P_{t}}\bigg)^{-\theta}v'(\Delta_{t} \, Y_{t})\bigg]  +  \\ \mathbb{E}_{t}\sum_{T=t+1}^{\infty}\alpha^{*}_{t, \, T}\, \beta^{T-t}\, \psi \, u'(C)\, Y \int_{\bar{b}_{t}(\alpha^{*}_{t})}\bigg(\frac{b(i)}{b}\bigg)^{\theta}
 f(i)\, \mathrm{d}b(i) \times \\ \bigg[\bigg(\frac{p^{*}_{t}(i)}{P_{T}}\bigg)^{1-\theta}-\frac{\theta}{\theta-1} \bigg(\frac{p^{*}_{t}(i)}{P_{T}}\bigg)^{-\theta}\Delta_{T} \, v'(\Delta_{T} Y_{T})\bigg] -(1-\alpha_{t})\, c
\end{multline}
On average, we find 
\begin{multline}\mathbb{E}\, {V}^{*}(i)=\psi \, u'(C)\, Y
\bigg[1-\alpha^{*} \, (1+\pi)^{\theta-1}-(1-\alpha^{*})\frac{\theta}{\theta-1}\Delta \, v'(\Delta \, Y)\bigg]  + \\  \mathbb{E}\sum_{T=t+1}^{\infty}\, \alpha^{*}_{t, \, T}\, \beta^{T-t}\, (1+\pi)^{\theta-1}\, \psi \, u'(C)\, Y \times \\  \bigg[(1-\alpha^{*} \, (1+\pi)^{\theta-1})
 - (1-\alpha^{*})\frac{\theta}{\theta-1}
 (1+\pi)\, \Delta \, v'(\Delta \, Y)\bigg] -(1-\alpha)\, c
\end{multline}
The incentives in play are that a higher reset probability increases the expected stream of profits, at the expense of paying the menu cost. Optimization with respect to $\alpha^{*}$ is recursive and hence depends on all the variables of the standard Calvo Phillips curve.
The optimality condition is 
\begin{equation}\partial \bar{V}_{t}(i)/ \partial \alpha^{*} + \partial V^{*}_{t}(i)/ \partial \alpha^{*} = 0 \end{equation}
This implies a stochastic equilibrium reset probability 
$\alpha(\pi, \, Y, \, \cdot )$.
The hazard probabilities are then built by forward recursions of (61), the Phillips curve and the aggregate demand equation. This yields a hazard profile dependent on the development of inflation during the price spell  $\alpha_{t,\, T}(\pi_{t}, \, \cdots, \, \pi_{T}, \, Y, \, \cdot)$, which induces dependence on the time horizon $T-t$ and also the shape of the shock distributions. 
\par Analysis is cleanest at zero trend inflation, where the constant hazard rate ensures that 
\begin{multline} \mathbb{E}\, \bar{V}(i)=\alpha\Bigg\{ \psi \, u'(C)\, Y \bigg(1-\frac{\theta}{\theta-1} \Delta \, \nu'(\Delta \, Y)\bigg)
+ \\ \frac{\alpha \, \beta}{1-\alpha \, \beta}\mathbb{E}\, \psi \, u'(C)\, Y\,  (1+\pi)^{\theta-1}\bigg(1- \frac{\theta}{\theta-1}(1+\pi)\, \Delta \, \nu'(\Delta \, Y)\bigg)\Bigg\}\end{multline}
\begin{multline}\mathbb{E}\, {V}^{*}(i)=(1-\alpha)\, \psi \, u'(C)\, Y\, 
\bigg(1-\frac{\theta}{\theta-1}\Delta \, v'(\Delta \, Y)\bigg)  + \\ 
\frac{\alpha \, \beta}{1-\alpha \,\beta}\mathbb{E}\, \psi \, u'(C)\, Y \, (1+\pi)^{\theta-1}  \bigg[(1-\alpha \, (1+\pi)^{\theta-1})- \\ (1-\alpha)\frac{\theta}{\theta-1}(1+\pi)\, \Delta \, v'(\Delta \, Y)\bigg)\bigg] -(1-\alpha)\,c
\end{multline}
Hence, the long-run equilibrium condition is 
\begin{multline} \frac{\alpha \, \beta}{(1-\alpha \, \beta)^{2}}\mathbb{E}\, \psi \, u'(C)\, Y \, (1+\pi)^{\theta-1}\bigg(1- \frac{\theta}{\theta-1}(1+\pi)\, \Delta \, \nu'(\Delta \, Y)\bigg) + \\ \frac{1}{(1-\alpha \, \beta)^{2}}\mathbb{E}\, \psi \, u'(C)\, Y \, (1+\pi)^{\theta-1}  \times \\ \bigg[(1-\alpha \, (1+\pi)^{\theta-1})- (1-\alpha)\frac{\theta}{\theta-1}(1+\pi)\, \Delta \, v'(\Delta \, Y)\bigg)\bigg] -\\ \frac{\alpha \, \beta}{1-\alpha \, \beta}\mathbb{E}\, \psi \, u'(C)\, Y \, (1+\pi)^{\theta-1}  \, \bigg[ (1+\pi)^{\theta-1}- \frac{\theta}{\theta-1}(1+\pi)\, \Delta \, v'(\Delta \, Y)\bigg)\bigg] +\, c =0\end{multline}
\subsection{Theoretical Implications}
This conclusion of our state dependent pricing act has 
three scenes. The first completes the construction of stochastic equilibrium 
alongside consideration of the relationship between idiosyncratic heterogeneity and aggregate efficiency. The 
second business is to carry over a piece of mathematical analysis contained in SELCKE from Calvo to menu cost framework and briefly discuss prospects for improved solutions. The final heading rules out small noise limiting solutions.
\subsubsection{A Prioi Properties}
The final component of the equilibrium characterization of general interest. 
\begin{proposition} $\Delta= \Delta_{b}\, \Delta_{P}$ \end{proposition}
\begin{proof} An immediate consequence of Birkhoff's ergodic theorem, the aggregate price distribution is fixed $\mu$ a.s. then integrate 
over the individual shock \\ distribution.\end{proof}
Therefore, our equilibrium conditions are completed by. 
\begin{equation}
\Delta=\frac{\Delta_{b}}{(1-\mathbb{E}\, \alpha \, (1+\pi)^{\theta})}\mathbb{E}\, \bigg[\frac{(1-\alpha\, (1+\pi)^{\theta-1})^{\theta/(\theta-1)}}{(1-\alpha)^{1/(\theta-1)}}\bigg]
\end{equation}
where the only material change over $\Delta_{P}$ from (31) is once again a response to the endogenous frequency of price adjustment. 
In view of (26), Proposition 3 embodies the following generality 
\begin{principle} Microeconomic inefficiencies, such as inertia in the face of idiosyncratic shocks, appear like static wedges between actual and efficient output.\end{principle}
Finally, the dynamic characterization is relatively tedious. 
Once the \\ economy is close to stochastic equilibrium, it tracks Proposition 4 in SELCKE with one addition. This is the simply differentiating (70) with respect to $\mathbb{E}\alpha$ and noting that it is strictly positive.\footnote{This is because the expectation of frequent resetting reduces the value of profit foregone by not changing price today.} 
Formally, the recursive equilibrium is $\mathbb{E}_{t}\, Z_{t+1}=f(Z_{t}, \, U_{t}, \, \cdot)$ $\mu$ a.s. where $Z_{t}=(\pi_{t}, \, Y_{t}, \, \alpha_{t}, \, \Delta_{t}, \, \pi_{t-1}, \, \alpha_{t-1})'$ and $
U_{t}=(\psi_{t}, \, \psi_{t-1})'$. 
The proof of Proposition 2 is complete. 
\end{proof}
\par This immediately implies that the limiting equivalence between Calvo and Taylor pricing established in SELCKE extends to menu costs. Therefore, there is a full universality amongst the suite of sticky price models. 
Principle 1 is intuitive. It fits our understanding from non-stochastic theory and is surely more widely applicable. This derivation is surely testament to the simplifying power of stochastic equilibrium, complicated initial dynamics can die away to leave a parsimonious form. 
\par There are two further consequences of interest, both possess a topological flavor. 
\begin{corollary} Calvo and menu costs share a common wall-crossing singularity at ZINSS except the dimension of the hole is one higher. 
\end{corollary}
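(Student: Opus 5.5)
The plan is to reduce the corollary to the Calvo computation already carried out for (33)--(34), using the equivalence of Proposition 2. Proposition 2 shows that, near stochastic equilibrium, the menu cost model is a Calvo model with an endogenous reset probability. Its recursive equilibrium $\mathbb{E}_{t}\, Z_{t+1}=f(Z_{t}, \, U_{t}, \, \cdot)$ has state vector $Z_{t}=(\pi_{t}, \, Y_{t}, \, \alpha_{t}, \, \Delta_{t}, \, \pi_{t-1}, \, \alpha_{t-1})'$. Relative to Calvo, the only genuinely new coordinate is the reset frequency $\alpha_{t}$; its lag is bookkeeping. I would therefore linearize the two Phillips curve components of the menu cost system at ZINSS, treating $\alpha_{t}$ as an additional endogenous variable. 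I would then track how the common-root cancellation at $\mathbb{L}=1/\alpha\beta$ in (33)--(34) reappears.

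First, I check that the analogues of $\aleph_{t}$ and $\beth_{t}$ in the menu cost recursion (59) still share a common root at ZINSS. At zero trend inflation the hazard profile is flat (the Remark after (61)), so $\sum_{T}\alpha_{t,T}\beta^{T-t}$ collapses to $\alpha\beta/(1-\alpha\beta)$. The idiosyncratic integrals $\int_{\bar b}(b(i)/b)^{\theta}f(i)\,\mathrm{d}b(i)$ enter numerator and denominator as a common multiplicative factor, by the dichotomy Remark. Proposition 3 ($\Delta=\Delta_{b}\Delta_{P}$) shows that $\Delta_{b}$ only adds a static wedge, not a new root. After these simplifications the linearized numerator and denominator have the same lag polynomial factor $1-\alpha\beta\mathbb{L}^{-1}$. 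This gives the same wall-crossing location as Calvo, with the same cancellation of the $\hat{\psi}_{t}$ terms.

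Second, I count the dimension of the hole. In Calvo, the cancellation removes the three directions $(\hat{\aleph},\hat{\beth},\hat{\psi})$, which is the three-dimensional hole noted after (34). In the menu cost case, $\hat{\alpha}_{t}$ enters both components through the discount factor $\alpha\beta$. Differentiating (70) with respect to $\mathbb{E}\alpha$ gives a strictly positive derivative, so this direction is non-degenerate. At ZINSS, however, the reset-probability optimality condition is singular, since the Remark on monotonicity records the collapse of the reset probability there. The linearized coefficient on $\hat{\alpha}_{t}$ therefore multiplies the same vanishing factor, and $\hat{\alpha}_{t}$ is absorbed into the cancelled subspace. The hole is thus four-dimensional: the Calvo directions plus the reset-frequency direction, one higher as claimed.

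The main obstacle is this second step: showing that the $\hat{\alpha}_{t}$ direction falls into the cancelled set rather than surviving as a new independent dimension. My first attempt would compute the Jacobian of the system formed by (59) and the optimality condition (67) at ZINSS and show that its kernel is one dimension larger than the Calvo kernel. I would use the strict positivity from (70) to rule out a further degeneracy that would make the hole two dimensions larger.
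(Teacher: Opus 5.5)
Your first step, inheriting the Calvo common root through Proposition 2, matches the paper. Your second step does not go through, because it looks for the extra dimension in the wrong place. The paper does not measure the hole by counting cancelled directions. It measures it by the number of common roots arising at ZINSS (SELCKE Theorem 7). Proposition 2 carries over Calvo's inflation singularity $\pi_{t-1}=\beta\,\pi_t$. The one additional root comes from the lagged reset probability: when the recursive form (35) is lagged in the menu cost model, an $\alpha_{t-1}$ term enters and yields a second singularity, now in $\alpha$, alongside the inherited one in inflation. This $\alpha_{t-1}$ is the counterpart of $\pi_{t-1}$ in the state vector $Z_t$ of Proposition 2. You explicitly set $\alpha_{t-1}$ aside as bookkeeping, which discards the very term that produces the $+1$.

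Your baseline count is also of the wrong object. The three-dimensional hole after (36) (the linearizations you cite as (33)--(34)) is the small-noise hole, where the shock terms cancel. The Remark immediately after the corollary says that for the singularity at issue the hole is two for Calvo and three for menu costs. Counting the directions $(\hat\aleph,\hat\beth,\hat\psi)$ therefore gives $3\to 4$ rather than $2\to 3$.

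The mechanism you offer for absorbing $\hat\alpha_t$ is a non sequitur. Singularity of the reset-probability condition (67)/(70) at ZINSS is a statement about the $\alpha$-equation. It does not force the coefficient of $\hat\alpha_t$ in the $\hat\aleph$ and $\hat\beth$ recursions to carry the factor $1-\alpha\beta\mathbb{L}^{-1}$. If $\hat\alpha_t$ enters both components symmetrically through $\alpha\beta$, it simply cancels in the ratio, as $\hat\psi_t$ does. That is a cancelled term, not a new common root, so under the paper's invariant it adds nothing to the hole.

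You also use the strict positivity from (70) to call this direction non-degenerate, and then claim it is absorbed into the cancelled set; these pull in opposite directions. The Jacobian-kernel computation that would have to settle the point is left undone. To close the gap, show that lagging the menu-cost recursion creates exactly one extra common root, in $\alpha$ via $\alpha_{t-1}$, rather than counting absorbed directions.
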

\begin{proof} This follows swiftly from Proposition 3 and Theorem 7 in SELCKE, where it is established that the size of the hole reflects the number of common roots arising at ZINSS. Proposition 2 implies the inflation singularity $\pi_{t-1}=\beta \, \pi_{t}$ (from Proposition 20 in SELCKE) will be inherited from Calvo. In addition, there will be one in $\alpha$, associated with the $\alpha_{t-1}$ term, introduced when lagging of the recursive form (35) is undertaken. \end{proof}
\begin{remark} The size of the hole in the model presented here would be two for Calvo and three for menu. This differs from the small noise limits in the previous section, where the shock terms cancel.
\end{remark}
This result implies that the bifurcation analysis and econometric duality apply. It also extends the equivalence noted with the Calvo Wage setting and means the same results would apply with menu costs in wages. 
A full analysis of that model and its Phillips curve is beyond these pages. 
\begin{proposition}
Stochastic Equilibrium is the site of a two-dimensional reduction in the recursive equilibrium.
\end{proposition}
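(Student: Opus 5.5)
The plan is to compare the dimension of the menu cost recursive equilibrium stated at the end of the proof of Proposition 2 with the state the model carries off stochastic equilibrium. The reduction should come from two variables that are genuinely needed off stochastic equilibrium but become redundant on it.

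Off stochastic equilibrium, the firm's problem (the value function $V_{t}(i)$ with its s-S thresholds) depends on the whole cross-sectional price distribution $\mathrm{d}p_{t-1}(i)$. The price level recursion (55) therefore carries this distribution as a state. Summarising it beyond the Calvo-type aggregates $(\Delta_{t},\pi_{t-1})$ needs at least two extra coordinates:
\begin{itemize}
\item the selection term, namely the mass of adjusting firms weighted by $(p_{t-1}(i)/P_{t})^{1-\theta}$, which is what (57) pins down;
\item the pure price component $\Delta_{P,t}$, which is needed separately from $\Delta_{t}$.
\end{itemize}

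\emph{Step 1: eliminate the selection coordinate.} The no-selection identity (57) fixes the integral over the inaction region to equal $\alpha$. The Calvo-form reset equation (58) then expresses the reset distribution purely through $(\alpha,\pi)$. So this coordinate is no longer an independent state.

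\emph{Step 2: eliminate $\Delta_{P}$.} Proposition 3, $\Delta=\Delta_{b}\Delta_{P}$, combined with Birkhoff's ergodic theorem, makes $\Delta_{b}$ a constant at the ergodic invariant measure. Hence $\Delta_{P}$ is determined by $\Delta$, and the second coordinate also collapses.

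\emph{Step 3: count dimensions and check eigenvalues.} After Steps 1 and 2 the system is the six-dimensional $Z_{t}=(\pi_{t},Y_{t},\alpha_{t},\Delta_{t},\pi_{t-1},\alpha_{t-1})'$, against at least eight before, which is a reduction of two. I would then invoke Theorem 1 to confirm the eliminated directions contribute no unstable eigenvalues. This ensures the Blanchard-Kahn count is unaffected and the reduction is legitimate rather than a spurious cancellation of the kind discussed at ZINSS.

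The main obstacle is making precise that the distributional state off equilibrium collapses to exactly these two coordinates, so that the reduction is exactly two. Without this, the argument only shows the off-equilibrium state is infinite-dimensional. My first attempt would fix the comparison object as the linearisation of the recursive equilibrium in a neighbourhood of the stochastic steady state. I would then show, via the super-exponential decay of the initial price distribution later established (Theorem 4), that all higher distributional modes are slaved to these two. Both (57) and Proposition 3 fail off equilibrium, which is why they cannot be used before the linearisation is fixed.
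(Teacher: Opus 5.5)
Your Step 1 matches half of the paper's argument. The Calvo-form reset equation (59), imposed along the stochastic-equilibrium path at date $t$ (the paper's (72)), trades the date-$t$ aggregate reset integral for $(\alpha_t,\pi_t)$. The paper's second dimension is the same identity one period back, (73). As in Calvo, the menu-cost recursive form is built by lagging the recursive Phillips curve, and that lagging step brings in the date-$(t-1)$ reset integral. Equation (73) replaces it by $(\alpha_{t-1},\pi_{t-1})$, which is why $\alpha_{t-1}$ appears in $Z_t$ alongside $\pi_{t-1}$ (compare the proof of Corollary 1). Your proposal never touches this lagged integral. So the six-dimensional $Z_t$ you quote in Step 3 does not follow from your Steps 1--2: it is imported from the end of the proof of Proposition 2, and the ``eight before, six after'' count is asserted rather than derived.

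Step 2 cannot supply the missing dimension, because $\Delta_P$ is not a coordinate of the recursive equilibrium in the first place. The equilibrium conditions involve dispersion only through $\Delta$: labor-market clearing (26), and $\nu'(\Delta Y)$ in (60)--(61) and in the value functions. Meanwhile $\Delta_b$ is fixed by the cross-sectional distribution of $b_t(i)$ at every date. The paper uses Proposition 3 to evaluate $\Delta$ in (71), not to eliminate a state, so no dimension is removed there.

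Your closing paragraph is also not the right repair. Theorems 3--4 presuppose a finite-dimensional $C^1$ recursive equilibrium $\mathbb{E}_t\mathbf{X}_{t+1}=f(\mathbf{X}_t,\boldsymbol{\gamma},\mathbf{e}_t)$. For the menu-cost model, that object is exactly what Proposition 2 and this proposition provide, so appealing to the super-exponential decay here is circular. An eigenvalue check via Theorem 1 is not needed either. The paper's legitimacy argument (the remark after the proof) is that (72)--(73) are static, one-to-one substitutions rather than intertemporal ones, so they cannot cancel a root of the lag polynomial. Replace your Step 2 with (59) imposed at date $t-1$, and the two-dimensional count goes through.
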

\begin{proof}
This is a consequence of two stochastic equilibrium conditions, contained within (59), which replace the aggregate reset distribution with inflation rates. In particular,  
\begin{equation}1-\alpha_{t}\,
(1+\pi_{t})^{\theta-1}=\bigg(\int_{\bar{S}_{t}(i)}^{\infty}+ \int_{0}^{\underbar{s}_{t}(i)}\bigg)\, \bigg(\frac{b_{t}(i)}{b}\bigg)^{\theta}\bigg(\frac{p^{*}_{t}(i)}{P_{t}}\bigg)^{1-\theta}\, \mathrm{d}p_{t}(i)\end{equation}
\begin{multline}1-\alpha_{t-1}\, 
(1+\pi_{t-1})^{\theta-1}= \\ \bigg(\int_{\bar{S}_{t-1}(i)}^{\infty}+ \int_{0}^{\underbar{s}_{t-1}(i)}\bigg)\, \bigg(\frac{b_{t-1}(i)}{b}\bigg)^{\theta}\bigg(\frac{p^{*}_{t-1}(i)}{P_{t}}\bigg)^{1-\theta}\, \mathrm{d}p_{t-1}(i)\end{multline}
\end{proof}
\begin{remark} The meeting is transversal (\cite{hatcher2002algebraic}) because every singularity is one-to-one and onto. From an alternative algebraic standpoint, the fact that they are not intertemporal avoids spurious Zariski reduction. These features ensure that the stochastic equilibrium is smooth and thus amenable to perturbation analysis.\end{remark}
This validates the abstract 
proclamation about the capacity for stochastic equilibrium to streamline dynamics. 
\subsubsection{Existence Result}
\begin{theorem} Consider the menu cost model defined here in the limit, where $\beta \rightarrow 1$, there exists a recursive equilibrium if and only if 
the eigenvalue condition stipulated in Theorem 1 is met. 
\end{theorem}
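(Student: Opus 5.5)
The natural route is to reduce the statement to Theorem 1 (equivalently Theorem 3 of SELCKE) applied to the recursive form constructed at the end of the proof of Proposition 2. There the recursive equilibrium is $\mathbb{E}_{t}\, Z_{t+1}=f(Z_{t},\, U_{t},\, \cdot)$ with $Z_{t}=(\pi_{t},\, Y_{t},\, \alpha_{t},\, \Delta_{t},\, \pi_{t-1},\, \alpha_{t-1})'$ and $U_{t}=(\psi_{t},\, \psi_{t-1})'$. I will verify the hypotheses of Theorem 1 for this system in the limit $\beta \rightarrow 1$, and then read off the ``if and only if'' directly.

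\textbf{Step 1: finite-dimensional recursive form.} I need to show that the menu cost model genuinely admits this recursive form, so that the infinite-dimensional price distribution is not a state variable. The plan is to show that the initial relative price distribution decays super-exponentially. Its weight after $k$ periods is a product of survival hazards $\alpha_{1}\cdots\alpha_{k}$ from (61), evaluated along paths of the idiosyncratic shocks. Because the $b_{t}(i)$ are i.i.d. and continuously distributed on $\mathbb{R}_{++}$, the set $\bar{b}$ of shocks consistent with a price staying fixed shrinks as the price ages. The reason is that the gap between $p_{t-1}(i)$ and the optimal $p^{*}_{t}(i)$ widens in expectation, so the hazards tend to zero and the product decays faster than any geometric rate. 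Once the distribution is forgotten, the no-selection conditions (57)--(58) and the two-dimensional reduction of Proposition 4 replace it by $(\pi_{t},\, \alpha_{t},\, \pi_{t-1},\, \alpha_{t-1})$, and Proposition 3 handles $\Delta$.

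\textbf{Step 2: regularity and mutual dependence.} Next I check that $f\in C^{1}$ and that the mutual dependence pattern of Theorem 1 holds. Smoothness follows from the optimality condition (66): by the implicit function theorem, $\alpha(\pi,\, Y,\, \cdot)$ is $C^{1}$, provided the second derivative in $\alpha^{*}$ is nonzero. This is where the strict positivity of the derivative of (70) with respect to $\mathbb{E}\alpha$ enters. Mutual dependence is the trivial forward-looking case noted in Remark 1: the jumps $(\pi,\, Y,\, \alpha)$ depend on the states $(\Delta,\, \pi_{t-1},\, \alpha_{t-1})$, and vice versa through the dispersion recursion. Non-constancy $\mu$ a.s. follows from continuity of the shock distributions.

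\textbf{Step 3: blow-up at the boundary (main obstacle).} The hardest step, which SELCKE's Proposition 16(ii) requires, is showing that the objective blows up at the boundary of the state space so that the fixed point is interior. The boundaries are $\alpha\rightarrow 0,1$, $\Delta\rightarrow\infty$ and $Y\rightarrow 0$ or $Y\rightarrow\bar{L}/\Delta$.
\begin{itemize}
\item At $Y\rightarrow 0$, the Inada condition (6) with consumption $C=Y-\alpha c$ gives blow-up.
\item At $Y\rightarrow\bar{L}/\Delta$, condition (7) applies.
\item For $\alpha\rightarrow 1$ (prices never reset), the idiosyncratic shocks drive $\Delta\geq\Delta_{P}$ upward without bound through Proposition 1. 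I would use a Jensen-type lower bound on $\int (b(i)/b)^{-\theta}(p(i)/P)^{-\theta}$ along fixed prices to make labor hit $\bar{L}$.
\item For $\alpha\rightarrow 0$ (reset every period), the menu cost absorbs output, again forcing $C\rightarrow 0$.
\end{itemize}
I expect the $\alpha\rightarrow 1$ case to need a genuinely new distributional argument: one must show that the dispersion integral diverges uniformly over admissible price distributions, not merely in expectation.

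\textbf{Step 4: conclusion.} With $\beta\rightarrow 1$, the patient-limit uniqueness part of SELCKE Theorem 3 is available as well. Theorem 1 then gives existence exactly when the Blanchard--Kahn count matches, for the linearization of $f$ around the candidate stochastic steady state determined by (70) and the Calvo-type conditions. This completes the equivalence.
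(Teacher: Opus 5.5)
Your outer reduction matches the paper's. Proposition 2 puts the menu cost model into the Calvo-type recursive form, so by Theorem 1 and SELCKE Proposition 16 only blow-up at the boundary remains to be checked. The gap is in that check.

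Your boundary list omits two directions the paper has to handle. The first is inflation. The SELCKE argument carries over, but the upper bound $\bar{\pi}$ now depends on the endogenous reset probability, through positivity of $1-\alpha(1+\pi)^{\theta-1}$ in (59). This coupling is exactly what the paper later uses. The second is the price distribution $\mathrm{d}\mu_{i}$. You discard it in Step 1 on the grounds that it is forgotten. The paper stresses that although its influence decays in stochastic equilibrium, it enters the recursive equilibrium at every finite time as a determinant of the reset rate, so its boundary must still be controlled. The paper handles this by observing that a positive mass of prices tending to $0$ or $\infty$ sends $\pi$ to $-100\%$ or to its upper extreme, cases already covered. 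Your decay mechanism is also doubtful in its own right. With i.i.d. $b_{t}(i)$ and zero trend inflation, a fixed price's relative position does not drift, so $\bar{b}$ need not shrink with age; Remark 5 says the hazard profile is then flat.

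Your handling of the new variable $\alpha$ also does not go through.
\begin{itemize}
\item For $\alpha\rightarrow 1$, Proposition 1 only gives the strict inequality $\Delta>\Delta_{P}$, not divergence. If prices are frozen and the current $b_{t}(i)$ are independent of them, then $\Delta=\Delta_{b}\,\Delta_{P}$. This stays bounded when relative prices do not drift, so labor need not approach $\bar{L}$; you flag this yourself as unresolved.
\item For $\alpha\rightarrow 0$, under either reading of (50) the aggregate repricing cost is at most $c$. It therefore cannot drive $C\rightarrow 0$ unless $Y$ is already close to $c$.
\end{itemize}
The missing idea is to route every new boundary back through $(\pi,\, Y)$ rather than through $\Delta$ or the resource cost. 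Using the unbounded support of the idiosyncratic shocks, the paper argues that $\alpha_{T}\rightarrow 0$ forces $\pi_{T}\rightarrow -100\%$ or $Y_{T}\rightarrow 0$. It also argues that $\alpha_{T}\rightarrow 1$ forces $\pi_{T}\rightarrow\bar{\pi}$ and hence $Y_{T}\rightarrow 0$. In every case the Inada condition (6) sends the objective to $-\infty$. This avoids the uniform divergence of the dispersion integral that you identify as your main obstacle.
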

\begin{proof}The proof is an interplay between Proposition 16 (SELCKE) and \\ Theorem 1 here. By Proposition 2, all that is required is to verify blowup at the boundary conditions. The arguments in my seminal paper map over for \\ inflation, output and price dispersion with a modification that the upper bound on inflation now depends on the endogenous reset probability $\alpha$. 
\par This only leaves sections of the boundary in the two new variables, the reset price probability $\alpha_{T}$ and the price distribution $\mathrm{d}\mu_{i}$. The latter consideration arises despite its influence decaying away in stochastic equilibrium. This is because it arises in recursive equilibrium at every finite time as a determinate of the reset rate. 
\par The task is to prove that it has to be ergodic. It is sufficient to show that this is true for the price of (almost) any particular firm $p_{i}$. This is accomplished by noting that if any positive share of firms $\mathrm{d}\mu_{j}$ had prices that tended to $0$, it would send $\pi \rightarrow -100\%$. On the other hand, if they went to $\infty$, inflation would move to its positive extreme. Both of these cases have been covered already.
\par 
Finally, turning to the reset rate. The unbounded property of the \\ idiosyncratic shock support allows us to complete the proof by noting that $\alpha_{T} \rightarrow 0$ implies $\pi_{T} \rightarrow -100\%$ or $Y_{T} \rightarrow 0$, while $\alpha_{T} \rightarrow 1$ requires that $\pi_{T} \rightarrow \bar{\pi}$, driving $Y_{T} \rightarrow 0$. In all instances, Inada condition (6) sends the objective function to $-\infty$.\end{proof}
\begin{remark}
The desired constellation is normally referred to as the Blanchard-Kahn conditions by macroeconomists, after (\cite{blanchard1980solution}). \end{remark}
\begin{remark} 
The techniques presented in SELCKE Theorem 3 can be used to generalize the result to a broader family of preferences. \end{remark}
There are no concrete existence results, without detailed analysis of the function space supporting a solution. 
\subsubsection{Non-Existence Result}
Unlike with Calvo and Taylor it is not possible to drop down to a small noise limit because of the presence of unit roots in this vicinity. This is formalized by 
\begin{proposition} Consider the menu cost model of this section where $\pi=0$ and $\beta \rightarrow 1$. With aggregate and idiosyncratic shock sizes $\vert{\varepsilon} \vert$ and $\vert{\varepsilon}_{b}(i)\vert$. There exists no recursive equilibrium in which both $\vert{\varepsilon}\vert \rightarrow 0$ and $\vert{\varepsilon}_{b}(i)\vert \rightarrow 0$. \end{proposition}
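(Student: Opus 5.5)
The plan is to argue by contradiction through the eigenvalue characterization of Theorem 2 (equivalently Theorem 1), which says a recursive equilibrium exists exactly when the Blanchard--Kahn counting condition holds around the candidate stochastic steady state. Suppose a recursive equilibrium survives as both $\vert\varepsilon\vert \rightarrow 0$ and $\vert\varepsilon_{b}(i)\vert \rightarrow 0$, at $\pi=0$ and $\beta \rightarrow 1$. Then the stochastic steady state must converge to ZINSS. The recursive form $\mathbb{E}_{t}\, Z_{t+1}=f(Z_{t},\, U_{t},\, \cdot)$ from the proof of Proposition 2, with $Z_{t}=(\pi_{t},\, Y_{t},\, \alpha_{t},\, \Delta_{t},\, \pi_{t-1},\, \alpha_{t-1})'$, must then admit a limiting linearization satisfying the eigenvalue condition. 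I would show that the limiting Jacobian has an eigenvalue on the unit circle, so the counting condition fails in any neighborhood of the limit.

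The first step treats the idiosyncratic limit. As $\vert\varepsilon_{b}(i)\vert \rightarrow 0$, $\Delta_{b} \rightarrow 1$, and by Proposition 1 and Proposition 3 the dispersion collapses to its pure price component. With $\pi=0$ and no idiosyncratic dispersion, the s-S band has nothing to react to. The gain from resetting is then second order in the price gap, while the menu cost $c$ is fixed. In the optimality condition (66), evaluated through (67)--(69), the value difference $\mathbb{E}V^{*}-\mathbb{E}\bar V$ becomes flat in $\alpha$ at the limit. That flatness is the reset-probability collapse mentioned in Remark 5 (the singularity at ZINSS).

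The second step is to differentiate the long-run condition (69) with respect to $\alpha$ at $\pi=0$, $\Delta=1$, $\beta\rightarrow1$. I would aim to show that the $\alpha_{t}$ equation of the linearized recursion reduces to $\hat\alpha_{t}=\hat\alpha_{t-1}+\cdots$, a unit root. The mechanism is that the factors $1/(1-\alpha\beta)^{2}$ and $\alpha\beta/(1-\alpha\beta)$ produce terms that cancel exactly when $\beta=1$ and the bracketed markup terms vanish at the efficient steady state. Combined with Corollary 1, which places the extra common root of the menu cost hole in $\alpha$ through the $\alpha_{t-1}$ term, this gives a root $\mathbb{L}=1$ (the analogue of $1/\alpha\beta$ with $\beta\rightarrow1$ and the reset frequency frozen). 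I would also check that it does not cancel against the aggregate shock terms, so the aggregate limit $\vert\varepsilon\vert\rightarrow0$ cannot remove it by a Calvo-style cancellation. Remark 6 already distinguishes the menu cost hole from the small noise cancellations, and I would lean on that.

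The final step concludes. A unit eigenvalue is neither inside nor outside the unit circle, so the Blanchard--Kahn count of Theorem 1 cannot be matched, and the indeterminacy prevents the fixed point from being continued to the joint limit. The main obstacle is the second step, showing that the root sits exactly on the unit circle rather than merely near it. My first attempt would be to compute the $2\times2$ block in $(\alpha_{t},\alpha_{t-1})$ explicitly at $\pi=0$, using the constant hazard simplification (67)--(68). If that is inconclusive, I would instead exhibit a continuum of stationary $\alpha$ solving (69) in the limit, which directly yields the unit root.
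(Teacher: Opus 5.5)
Your proposal has a genuine gap at its central step: it places the unit root in the $\alpha$ block, and the mechanism you propose for it does not operate. Your observation that the reset gain becomes second order while $c$ stays fixed is correct, but it implies the opposite of flatness. At $\pi=0$ with $\vert\varepsilon\vert,\vert\varepsilon_{b}(i)\vert\rightarrow 0$, the reset price equals the static optimum, since $\frac{\theta}{\theta-1}\Delta\,\nu'(\Delta Y)=1$ at ZINSS. Every bracketed markup term in the long-run reset condition (your (69)) therefore vanishes, and the condition collapses to $c=0$. For $c>0$ this has no solution: there is no determinate $\alpha$ and no continuum of stationary $\alpha$. The derivative of the firm's value in $\alpha$ is the saved menu cost $c$, not zero, so the firm is pushed to the corner $\alpha=1$ and never resets. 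That corner is the ``collapse of the reset probability'' in Remark 5.

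At a corner $\alpha$ is locally constant. Hence $\hat{\alpha}\rightarrow 0$, $\alpha$ is second order, and it drops out of the first-order system. Linearizing in $(\alpha_{t},\alpha_{t-1})$ will therefore not produce $\hat{\alpha}_{t}=\hat{\alpha}_{t-1}+\cdots$, and your fallback fails for the same reason. Corollary 1 cannot supply the root either. Its extra common root is a cancellation hole, which the limiting stochastic equilibrium is built not to undertake. Moreover, the analogue of $1/\alpha\beta$ at $\beta\rightarrow 1$ is $1/\alpha$, which equals one only if $\alpha\rightarrow 1$. The missing idea is exactly $\alpha\rightarrow 1$. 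First-order dynamics then reduce to Calvo with vanishing reset probability, and the SELCKE Calvo roots $\lambda_{2}$, $\lambda_{3}=\alpha$ and $\lambda_{4}=1/\alpha$ are all driven to unity. The near-unit roots live in the inflation, output and dispersion block, not in the $\alpha$ block.

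Even granting a limiting unit eigenvalue, your last step does not close. A root that reaches the unit circle only in the limit does not make the Blanchard--Kahn count fail in a neighborhood. Eigenvalues move continuously, so for each positive noise level the root may sit strictly on the correct side, and equilibria could exist along the sequence. The paper's proof instead uses a rate comparison. The near-unit roots approach one at order $\vert\varepsilon\vert^{2}$, with $\lambda_{2}=1-O(\vert\varepsilon\vert^{2})$ and $\lambda_{4}=1+O(\vert\varepsilon\vert^{2})$, while innovations are of order $\vert\varepsilon\vert$. The series solution for $(\pi_{t},\hat{y}^{e}_{t})'$ therefore has magnitude $O(\vert\hat{\mathbf{e}}_{t}\vert)/(1-\lambda_{2})=O(1/\vert\varepsilon\vert)$. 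This is incompatible with the $O(\vert\varepsilon\vert)$ scaling any small-noise limiting equilibrium must have. To repair your argument you need two ingredients. First, the corner result $\alpha\rightarrow 1$, which locates the roots. Second, the order at which those roots approach the unit circle relative to the shock size, which turns the limiting singularity into a contradiction.
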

\begin{proof} In pursuit of a contradiction, focus first on $\alpha$. Consider the optimal repricing condition (70), it is clear that $\alpha=1$ delivers the highest possible profit to the firms. Hence, the De Rham Cohomology tells us that $\hat{\alpha} \rightarrow 0$, as $\alpha \rightarrow 1$. Thus, $\alpha$ is second-order and first-order dynamics converge on the Calvo model. I learn from Theorem 9 in SELCKE that in general its eigenvalues are 
\begin{multline}\lambda_{1}= 1+\frac{1}{2}\, a_{y} + \frac{(1-\alpha)^{2}}{2\alpha}\, (1+\eta) + \\ \frac{1}{2}\sqrt{\bigg( a_{y} + \frac{(1-\alpha)^{2}}{\alpha}\, (1+\eta)\bigg)^{2} + 4(1-a_{\pi})\, \frac{(1-\alpha)^{2}}{\alpha}(1+\eta)} \end{multline}
\begin{multline}\lambda_{2}= 1+\frac{1}{2}\, a_{y} + \frac{(1-\alpha)^{2}}{2\alpha}\, (1+\eta) - \\ \frac{1}{2}\sqrt{\bigg(a_{y} + \frac{(1-\alpha)^{2}}{\alpha}\, (1+\eta)\bigg)^{2} + 4(1-a_{\pi})\, \frac{(1-\alpha)^{2}}{\alpha}\, (1+\eta)}\end{multline}
\begin{equation} \lambda_{3}=\alpha \end{equation}
\begin{equation} \lambda_{4}=\frac{1}{\alpha}\end{equation}
It is clear that $\lambda_{1}, \, \lambda_{3}$ and $\lambda_{4}$ all tend to unity as $\alpha \rightarrow 1$. $\lambda_{2}$ is bounded outside the unit circle so long as $a_{y} >0$. $\lambda_{2} \rightarrow 1$ also but its effect is localized such that it only affects price dispersion $(\Delta)$ in (31). It will be treated in Remark 8. Taking a Taylor expansion, it is clear that $\lambda_{2}=1 - O(\vert \varepsilon\vert^{2})$, whilst $\lambda_{4}=1 + O(\vert \varepsilon\vert^{2})$. This reduces to an absurdity when I compute the series solution of $\mathbf{\hat{X}}_{t}=(\pi_{t}, \, \hat{y}^{e}_{t})'$ as 
$$\vert \mathbf{\hat{X}}_{t}\vert=\frac{1}{(1-\lambda_{2})}O(\vert \mathbf{\hat{e}}_{t}\vert)=\frac{O(\vert \varepsilon\vert)}{O(\vert \varepsilon\vert^{2})}=O(1/\vert \varepsilon \vert)>>O(\vert \varepsilon \vert)$$
where I have used a special case of Result 2, expressed formally as Theorem 4 in Section 7.1.
\end{proof}
Further clarification and explanation are in order.
\begin{remark} This divergence serves to rule out any perturbation solution \\ defined in this neighborhood. \end{remark}
\begin{remark} In the proposed equilibrium, as $\alpha$ tends to unity, its eigenvalue pair $(\lambda^{J,\,  M}_{n}, \, \lambda^{S, \, M}_{n}) \rightarrow (\pm \infty, \, 0)$ as $n \rightarrow \infty$, consistent with the dieing away of first-order dynamics. Price dispersion would be $O(1)$ and thus its influence would vanish asymptotically. There would be no polydromy.\end{remark}
\begin{remark} If there were no idiosyncratic noise, the price level would \\ retain the rigidity of the individual firms. There are alternative formulations of menu costs that get around this problem but they have counterfactual predictions concerning aggregate nominal rigidity, see respectively \cite{williamson2010new} (theory) and \cite{gorodnichenko2016sticky} (empirics).\end{remark}
Intuitively, noise is intrinsic to a well-behaved solution, at least when agents are patient.\footnote{I am skipping over the issue that in application, there are problems with the \\ disappearance of price flexibility. Empirically, it is difficult to defend $\alpha > 0.8$ (SELCKE) Appendix H.1.3. Usually, these models are used to fit moments of real-world price change distributions and calculate costs of sticky price adjustment.} 
I leave the more complicated case where $\beta < 1$ to others. 
\par Numerical evidence of stable simulations suggests that existence may not pose a problem in the case of passive monetary policy, as the papers cited in the introduction all use money supply shocks abstracting from systemic policy reaction. This model would be an excellent test bed for the new algorithms derived in Section 8. Mathematical analysis may have to wait until after the basic Calvo model of Section 3 is first studied, although there should
be some similarity thanks to the equivalence Proposition 2. 
\section{Taylor Contracts: Theory}
This is the first of two sections on Taylor pricing, a model of nominal rigidity where prices last for fixed intervals. The analysis
builds off Appendix G.1 in SELCKE but there is richer model development, including price dispersion and explicit error terms. This verse is divided into three stanza. The first describes salient characteristics of the general setting. I then move to the long-run. The third portion narrows in on the non-stochastic steady state.
\subsection{Environment}
\par The goal is to analyze the recursive equilibrium. The key difference from the Calvo and menu cost frameworks is that 
price-setters know the length of contract for sure, in advance. 
Contracts overlap so that in each period the same fraction comes up for renewal. This setup seems intuitively appealing.  Studies, dating back at least to \cite{alvarez2006sticky}, all find there is a large share of firms who reprice periodically. Supporting Supplementary Appendix A solves out for the general case of the Phillips curve around ZINSS. 
\par The fundamental optimization problem comparable to (17) for Calvo is 
\begin{equation}
\max _{p_{t}^{*}}\, \mathbb{E}_{t}\sum_{T=t}^{t + M} Q_{t,\, T}\bigg(\frac{p_{t}^{*}}{P_{T}}\bigg)^{-\theta}Y_{T}\bigg[\frac{p_{t}^{*}}{P_{T}}-C (y_{T})(i)\bigg]=0
\end{equation}
From which springs forth the backbone of the model
\begin{equation} \frac{p_{t}^{*}}{P_{t}}= \frac{\theta}{\theta-1}\bigg( \frac{\psi_{t}\nu'(\Delta_{t} Y_{t}) + \beta \sum_{T=t+1}^{t+M-1}  \mathbb{E}_{t}\, \psi_{T} \, \nu'(\Delta_{T} Y_{T})\prod^{T}_{j=t+1}(1+\pi_{j})^{\theta}}{\psi_{t} + \beta \sum_{T=t+1}^{t+M-1} \mathbb{E}_{t}\, \psi_{T} \, \prod^{T}_{j=t+1} (1+\pi_{t+1})^{\theta-1} }\bigg)\end{equation}
As ever, optimal pricing balances expected marginal revenue in the numerator against expected marginal costs in the denominator, suitably weighted to reflect time preference and probabilistic projections for inflation over the duration of the contract. 
\par Price dispersion is given by 
\begin{equation}\Delta_{t}=\frac{1}{M}\bigg( \frac{p^{*}_{t}}{P_{t}}\bigg)^{-\theta} + \frac{1}{M}\bigg( \frac{p^{*}_{t-1}}{P_{t}}\bigg)^{-\theta} + \cdots + \frac{1}{M}\bigg( \frac{p^{*}_{t-(M-1)}}{P_{t}}\bigg)^{-\theta}\end{equation}
whilst the price level is aggregated as follows 
\begin{equation} P^{1-\theta}_{t}=\frac{1}{M}(p^{*}_{t})^{1-\theta}
+ \frac{1}{M}(p^{*}_{t-1})^{1-\theta} + \cdots + \frac{1}{M}(p^{*}_{t-(M-1)})^{1-\theta}
\end{equation}
The other supply side conditions are common to all models. The semi-group is exemplified below. 
\par 
\begin{proposition}
An economy with Taylor contracts 
of length $M \geq 2$ periods, 
resides on a $(5M-2)$-dimensional manifold, that takes the form
$$\mathbb{E}_{t}Z_{t+1}= f(Z_{t}, \, \gamma, \, U_{t}) \: \: \: \mu \: a.s. $$
where
\begin{enumerate}[(i)]
\item  $Z_{t}=(Z^{J}_{t}, \, Z^{S}_{t} )'$, such that $\mathbf{Z}^{J}_{t}=(\pi_{t}, \, \pi_{t-1}, \,  \cdots, \, \pi_{t-(M-2)}, \,  Y_{t})'$ and \\ $Z^{S}_{t}=(\pi_{t-(M-1)},  \,  \cdots \, , \, \pi_{t-3(M-1)})'$
\item $U_{t}=(\psi_{t}, \, \psi_{t-1}, \, \cdots, \, \psi_{t-2(M-1)})'$.
\end{enumerate}
\end{proposition}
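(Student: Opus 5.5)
The plan is to build the recursive form by eliminating every auxiliary variable in favour of inflation, output and the shocks, and then count what survives. There are three steps: reduce the price-level aggregator to inflation, unwind the forward-looking reset-price condition, and count the lags the substitutions create.

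\emph{Step one: price level.} Divide the aggregator by $P_{t}^{1-\theta}$. Each relative reset price $p^{*}_{t-k}/P_{t}$ equals $p^{*}_{t-k}/P_{t-k}$ times $\prod_{j=t-k+1}^{t}(1+\pi_{j})^{-1}$. The aggregator then ties the current relative reset price to the previous $M-1$ relative reset prices and to inflation over the last $M-1$ periods. Inverting it recursively expresses $p^{*}_{t}/P_{t}$ through inflation lags. The same substitution into the dispersion formula writes $\Delta_{t}$ in terms of $\pi_{t},\dots,\pi_{t-2(M-1)}$. Dispersion is therefore not an independent state, which is why it is absent from $Z_{t}$.

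\emph{Step two: reset-price condition.} The optimality condition contains expectations of $\psi_{T}\,\nu'(\Delta_{T}Y_{T})$ and of inflation products up to $t+M-1$. I would write numerator and denominator as finite sums and quasi-difference them (the finite-horizon analogue of the recursions for $\aleph$ and $\beth$). This turns the $(M-1)$-step forward sums into a one-step expectation $\mathbb{E}_{t}Z_{t+1}$, at the cost of a further $M-1$ lags. Carrying the dispersion lags of step one through this operation pushes inflation back to $\pi_{t-3(M-1)}$ and the shocks back to $\psi_{t-2(M-1)}$. The aggregate demand (Euler) equation and labour-market clearing then close the system in $Y_{t}$. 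By Theorem 1, jumps are the variables determined by these forward equations, namely $\pi_{t},\dots,\pi_{t-(M-2)},Y_{t}$. The remaining inflation lags are predetermined states.

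\emph{Step three: counting.} The blocks contribute:
\begin{itemize}
\item $M$ jump variables;
\item $3(M-1)-(M-1)+1=2M-1$ state variables;
\item $2(M-1)+1=2M-1$ shock entries.
\end{itemize}
The total is $5M-2$, matching the claimed dimension. Smoothness of $f$ follows from Theorem 1's setting, since only compositions of $C^{1}$ operations are used. Invertibility of the aggregator requires $p^{*}>0$, which the Inada and blow-up arguments of the previous section supply.

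\emph{Main obstacle.} I expect the hard part to be the exact lag bookkeeping in step two. I must show that the quasi-differencing introduces precisely $M-1$ additional inflation lags and $M-1$ additional shock lags, with no cancellations that would lower the dimension. I would verify this first for $M=2$ by direct substitution: that case gives jumps $(\pi_{t},Y_{t})$, states $(\pi_{t-1},\pi_{t-2},\pi_{t-3})$ and shocks $(\psi_{t},\psi_{t-1},\psi_{t-2})$, eight in all, equal to $5\cdot 2-2$. I would then proceed by induction on $M$. Non-degeneracy of the highest lag coefficients, which is needed to rule out cancellation, follows from the mutual dependence condition in Theorem 1.
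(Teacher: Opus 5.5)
Your proposal has a genuine gap: Step one fails, and the whole elimination rests on it. Put $z_{s}=(p^{*}_{s}/P_{s})^{1-\theta}$. The aggregator gives one equation per period, $\sum_{k=0}^{M-1}z_{t-k}\,\Pi_{t-k,t}^{\theta-1}=M$. For a given inflation path this is an $(M-1)$-th order difference equation in $z$. Iterating it never terminates, and an $(M-1)$-parameter family of reset-price paths is compatible with the same inflation history. For instance, with $M=2$ and $\pi\equiv 0$, every $z_{t}=2-z_{t-1}$ satisfies the aggregator, yet $\Delta_{t}=\frac{1}{2}(z_{t}^{\theta/(\theta-1)}+(2-z_{t})^{\theta/(\theta-1)})$ moves with $z_{t}$. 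So the aggregator cannot make $p^{*}_{t}/P_{t}$ or $\Delta_{t}$ a function of $\pi_{t},\ldots,\pi_{t-2(M-1)}$. After differencing, all it delivers is the $M$-period link between $z_{t}$, $z_{t-M}$ and $\pi_{t-M+1},\ldots,\pi_{t}$.

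The missing equations are the reset-price first-order conditions themselves. The paper proceeds as follows:
\begin{enumerate}[(i)]
\item It breaks open expectations, using $\mathbb{E}_{t}g(X_{t+k})=g(X_{t+k})+v_{t+k}$.
\item It writes the reset condition as a surface in the relative reset prices dated $t-(M-1),\ldots,t+(M-1)$.
\item It lags that surface by $M$ and ties the two ends together with the $M$-period aggregator link.
\item It solves the intermediate first-order conditions for the common lags $p^{*}_{T}/P_{T}$, $T=t-(M-1),\ldots,t-1$. Uniqueness comes from monotonicity (each is increasing in its own period and decreasing elsewhere) and existence from asymptotes.
\end{enumerate}
This yields the window $\pi_{t-(2M-1)},\ldots,\pi_{t+M-1}$ and $\psi_{t-M},\ldots,\psi_{t+M-1}$. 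Shifted so the lead is $t+1$, it gives exactly $\pi_{t-3(M-1)}$ and $\psi_{t-2(M-1)}$. In your proposal these lag lengths are asserted rather than derived.

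Your quasi-differencing also does not work as stated. A fixed-horizon sum $N_{t}$ satisfies no one-step recursion: subtracting $\beta\,\mathbb{E}_{t}(1+\pi_{t+1})^{\theta}N_{t+1}$ from $N_{t}$ leaves a term dated $t+M$ instead of collapsing the horizon.

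Second, you omit what the paper treats as the crux. The reduced relation must be solved \emph{uniquely} for the one-step-ahead variable; that is what makes $\mathbb{E}_{t}Z_{t+1}=f(Z_{t},\gamma,U_{t})$ a well-defined map. The paper writes the relation as $a=(h+q\,x^{\theta-1-\eta})/(r+s\,x^{\theta-1})$ with $x=1+\pi_{t+1}$, and reduces it to $a\,s\,x^{\theta+1+\eta}+(a\,r-h)\,x^{1+\eta}-q\,x^{\theta}=0$. It then proves a unique positive root by Descartes' rule of signs in the three cases $\theta'<1+\eta'$, $\theta'=1+\eta'$ and $\theta'>1+\eta'$. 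In the last case it discards a spurious root via the sign of $a\,r-h$. Finally, it intersects this schedule with the price-level relation for $p^{*}_{t+1}/P_{t+1}$ using the intermediate value theorem.

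``Compositions of $C^{1}$ operations'' cannot substitute for this, because the final step is an implicit equation that could have no positive solution or several. Invoking the mutual dependence condition of Theorem 1 to exclude cancellations is circular: it is a hypothesis on $f$ that must be verified for this model, not a property you may use to construct $f$. The proposed induction on $M$ also has no inductive step.
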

\begin{proof}The proof has two principle columns. The first sifts through a complicated system of non-linear simultaneous equations to dredge up the required curve. The second analyzes the roots of a polynomial equation to obtain the sought after cocycle. 
\par Begin by breaking open expectations, as before, using the fact that for any $k \geq 1$, $\mathbb{E}_{t}\, g(X_{t+k})=g(X_{t+k}) + v_{t+k}(\psi_{t+k})$, where $\mathbb{E}_{t}\,v_{t+k}(\psi_{t+k})=0$ and obeys any restrictions on the range of $g$. Observe that 
(79) combined with (11), (27) and 
(80) proffers a relationship 
\begin{equation} \frac{p_{t}^{*}}{P_{t}}=S_{0}\bigg(\bigg\{ \frac{p_{T}^{*}}{P_{T}}\bigg\}^{t+(M-1)}_{T=t-(M-1)}, \,
\bigg\{ \pi_{T}\bigg\}^{t+ M-1}_{T=t-(M-1)}, \, Y_{t}, 
\, \bigg\{ \psi_{T}\bigg\}^{t+M-1}_{T=t}\bigg)\end{equation}
A sophisticated strategy to eliminate the relative prices now unfurls. Begin by lagging by $M$ periods.  
\begin{equation} \frac{p_{t-M}^{*}}{P_{t-M}}=S_{0}\bigg(\bigg\{ \frac{p_{T}^{*}}{P_{T}}\bigg\}^{t-1}_{T=t-(2M-1)}, \,
\bigg\{ \pi_{T}\bigg\}^{t-1}_{T=t-(2M-1)}, \, Y_{t}, 
\, \bigg\{ \psi_{T}\bigg\}^{t-1}_{T=t-M}\bigg)\end{equation}
First, extract the price dispersion contribution from the right-hand sides to form a function. It is easy to take out the two left hand-side terms because 
\begin{equation} (1+\pi_{t})^{\theta-1}-1=\frac{1}{M}\bigg( \frac{p^{*}_{t}}{P_{t}}\bigg)^{1-\theta}-\frac{1}{M}\bigg( \frac{p^{*}_{t-M}}{P_{t-M}}\bigg)^{1-\theta}\, \prod_{k=0}^{M-1}(1+\pi_{t-k})^{\theta-1}\end{equation}
appears out of 
(81) and its immediate lag. \par The common lags 
$\big\{p_{T}^{*}/P_{T}\big\}^{t-1}_{T=t-(M-1)}$
can be deciphered, by solving the Phillips curves, intermediate between 
(82) and 
(83). This works because of the special structure of the equations. Each variable $p_{T}^{*}/P_{T}$ is increasing in its own period and decreasing elsewhere, which ensures uniqueness. Each equation has an asymptote, which ensures existence. 
This leaves a surface 
\begin{multline} S_{1}\bigg(\bigg\{ \frac{p_{T}^{*}}{P_{T}}\bigg\}^{t-M}_{T=t-(2M-1)}, \, \bigg\{\frac{p_{T}^{*}}{P_{T}}\bigg\}^{t+M-1}_{T=t}, \\ \,
\bigg\{ \pi_{T}\bigg\}^{t +M-1}_{T=t-(2M-1)}, 
\, Y_{t}, 
\, \bigg\{ \psi_{T}\bigg\}^{t+M-1}_{T=t-M}\bigg)=0\end{multline}
With the middle sector of prices safely accounted for, 
repeated applications of 
(84)
vanquishes the unwanted relative prices. 
\par The final stumbling block is to prove that 
future inflation $(\pi_{t+1})$ is uniquely \\ determined (in expectation). Two coordinate transforms act as vehicles. Surface 
(82) can be rendered implicitly as 
\begin{equation} a = \frac{h+q\, x^{\theta-1-\eta}}{r +s\, x^{\theta-1}}\end{equation}
where $a=p^{*}_{t-(M-2)}/p_{t-(M-2)}$, $x=(1+\pi_{t+1})$ and the other coefficients, which are all strictly positive, can be easily deduced but are unimportant on their own. \par Algebraic 
dexterity implies the power equation 
\begin{equation} a\, s\, x^{\theta+ 1 +\eta} + (a\, r-h)x^{1+\eta}-q\, x^{\theta}=0\end{equation}
The first assignment is to prove that solutions are unique for $x>0$. The continuity of the power relation $z \rightarrow n^{z}$, for $n >0$, allows me to 
reconfigure the right-hand side so that 
\begin{equation} g_{1}(x')\equiv a\, s\, (x')^{\theta'\, + \, 1 +\, \eta'} + (a\, r-h)(x')^{1\, +\, \eta'}-q\, (x')^{\theta'}=0\end{equation}
with $\theta'$ and $\eta'$ positive integers, 
whilst retaining all the other restrictions on variables and parameters. This creates compatibility with the classical theories locating polynomial roots. There are three eventualities. 
\newline
\newline \underline{Case 1: $\theta' < 1+ \eta'$}
\newline \\ 
\newline This is the simplest occurrence. Notice that the highest order term is positive and the lowest is negative. Therefore, no matter what the sign 
of the middle term, there can only be one sign change and hence there must be exactly one positive root, according to Descartes' law of signs. The crucial point is to notice that 
\newline 
\newline 
\underline{Case 2: $\theta' = 1+ \eta'$}
\newline 
\\ The system simplifies to 
\begin{equation} 
g_{2}(x')\equiv as(x')^{\theta'} + ar-h-q=0\end{equation}
\newline 
There can only be one positive (real) solution. This must arise for the \\ optimization problem of the firm 
(78) to have a solution. I know this is true by inspecting the first-order condition 
(79), when $p^{*}_{t}/P_{t} \rightarrow 0$, 
the right-hand side dominates the left, when it tends to infinity the situation is reversed. Thus, the intermediate value theorem suffices. Note that we have gleaned that 
\begin{equation} ar-h-q<0\end{equation}
\newline  
\underline{Case 3: $\theta' > 1+ \eta'$}
\newline
\\
This is the trickiest case. The 
Phillips curve boils down to 
\begin{equation} 
g_{3}(x')\equiv a\, s\, (x')^{\theta'} -q\, (x')^{\theta'\, -1 \, -\eta}+ a\, r-h=0\end{equation}
If $ar-h \geq 0$ then previous arguments go through. Otherwise, an unwanted second solution arises. The exercise is to rule one out. 
\begin{equation}
ar-h=\mathbb{E}_{t}\sum_{T=t}^{t + M-1} Q_{t,\, T}\, \bigg(\frac{p_{t}^{*}}{P_{T}}\bigg)^{-\theta}Y_{T}\, \bigg[\frac{p_{t}^{*}}{P_{T}}-
C (y_{T})(i)
\bigg]
\end{equation}
Hence, $a\, r-h \geq 0$ if and only if $\pi \geq 0$, which translates to 
$x' \geq 1$. \\ Substituting into (91), using (90). I uncover that the middle coefficient dominates in magnitude, thus $g(1) <0$. Since $g(0)>0$, 
this locates the invalid solution. 
\par The previous three cases serve to prove that there exists a function $\pi_{t+1}=f(\Delta_{t}, \, \cdot)$. The mathematical demonstration moves to solving the appropriate system of simultaneous equations. In all three permutations (88), (89) and (91), $g_{i}(x')$ is increasing in $x'$ and decreasing in $q$. This means there is an upward sloping 
schedule between $\pi_{t+1}$ and $\Delta_{t+1}$. This points to a negative relationship between next period inflation $(\pi_{t+1})$ and its 
contemporaneous reset price $(p^{*}_{t+1}/P_{t+1})$, due to 
(81). On the other hand, the price level construction equation showcases a positively sloped curve connecting future prices and future inflation, to be exact 
\begin{equation} \frac{p^{*}_{t+1}}{P_{t+1}}=\bigg( \frac{M}{\tilde{P}}\bigg)^{1/{(\theta-1)}}\end{equation}
where
\begin{equation} \tilde{P}=\frac{\theta}{\theta-1}\Bigg(M-(1+\pi_{t+1})^{\theta-1}\, \bigg\{ \sum_{j=0}^{M-2}\bigg(\frac{p^{*}_{t-j}}{P_{t-j}}\bigg)^{\theta-1}\, \prod^{j}_{k=0}(1+\pi_{t-k})^{\theta-1}\bigg\}\Bigg)\end{equation}
This is adequate for uniqueness of the pair $(\pi_{t+1}, \, p^{*}_{t+1}/P_{t+1})$. \par To confirm that this represents a general equilibrium, a final petition to the intermediate value theorem is required. Send $p^{*}_{t+1}/P_{t+1} \rightarrow 0$ (ceterus parebus) in the Phillips curve, whichever of (88), (89) or (91) applies, it is necessary that $x'$ and so $\pi \rightarrow \infty$. 
By contrast, this limit does not exist in the pricing duo (93) and (94)
because $\pi \rightarrow -100\%$ is associated with a finite value of $p^{*}_{t+1}/P_{t+1}$. On the other hand, these curves have an asymptote at a finite $\pi$ (a function of the reset history). Therefore, the Phillips curve must intersect the price level graph from above. This ensures $\pi_{t+1}$ can be written as a function of present and past quantities. This extends to the standard recursive equilibrium presentation under a further change of variables.
\par The other sections of the recursive equilibrium can be built by inputting these solutions. The existence of expectations logic copies over from Proposition 2 for the shared macroeconomic variables. 
After recompiling expectations, it is clear that the dimensionality is correct. From the underlying optimization problem 
(78), it is easily ascertained that there are $M$ inflation jump variables and one for output.
\end{proof}
\begin{remark}It is noteworthy that, unlike with Calvo and menu costs, price dispersion is not a state variable with Taylor contracts. This is because it can be expressed as a function of a finite sequence of inflation history, in (78). This is a figment of how prices drop out of the model once a contract becomes renegotiable. \end{remark}
\begin{remark} A further distinction from the other models is that it is impossible to remove the effect of past reset prices, which influence the economy above and beyond past rates of inflation. However, this is 
exclusive to recursive equilibrium.\end{remark}
Section 6.2 and Supplementary Appendix B contain step by step derivations of Phillips curves, albeit in rarefied circumstances. 
\subsection{Steady-State Formulations} 
The business here begins by displaying the stochastic equilibrium conditions, with brief comment. It finishes by digging down to the small noise limit and analyzing the non-stochastic steady state. 
\subsubsection{Stochastic Equilibrium}
In harmony with other Keynesian contracting models, Taylor pricing is represented by a Phillips curve and a price distortion equation.
\par The tale flows from the optimal re-pricing decision 
(79). Unlike with the other pricing formats, (30) and (61), price dispersion evolves during the course of the contract, so it is best supplanted. For readability, enroll the functional form (9). Ergo, the Phillips curve is 
\begin{equation} \nabla = \frac{\theta}{\theta-1}\frac{\aleph} {\beth} \end{equation}
In all, there are three possible singularities, owing to the possibility for unit roots in the geometric progressions.
These are located at $\pi=0$, $\pi_{1}:\mathbb{E}(1+\pi)^{\theta}=1$ and $\pi_{2}: \beta \, \mathbb{E}\, (1+\pi)^{\theta - 1}=1$.
\footnote{It is possible to gauge the position of the singularities. $\pi_{1} <0$, by Jensen's inequality because it is the expectation of an strictly increasing strictly convex function. Meanwhile, $\pi_{2} >0$, so long as noise is sufficiently small. Therefore, it is most probable that $\pi_{1} < 0 < \pi_{2}$.} All singularities in the model are removable. The component inventory is below. 

\begin{equation}
\nabla=
 \begin{cases} 
 \nabla^{SS}
 &\text{if $\pi \neq 0$} \\ 
 1 &\text{if $\pi =0$}
 \end{cases}
\end{equation}
\begin{equation} \nabla^{SS}=\bigg(\frac{(1+\pi)^{M\, (\theta-1)}-1}{M\, \big\{(1+\pi)^{\theta-1}-1\big\}}\bigg)^{1/(\theta-1)}\end{equation}

\begin{equation}
\aleph=
 \begin{cases} 
 \aleph^{SS}
 &\text{if $\pi \neq 0$} \\ 
 \aleph^{SS}(0)&\text{if $\pi =0$}
 \end{cases}
\end{equation}
where 
\begin{multline} \aleph^{SS} =  M^{(1 + \eta-\theta)/(\theta-1)}\, 
\Bigg\{\bigg(\frac{(1+\pi)^{M\theta} -1}{(1+\pi)^{\theta} -1}\bigg)^{\eta} \bigg(\frac{(1+\pi)^{\theta-1} -1}{(1+\pi)^{M(\theta-1)} -1}\bigg)^{{\eta}\, \theta/(\theta-1)}\\ \times \psi \, Y^{1+\eta} + \mathbb{E}\sum_{k=1}^{M-1}\beta^{k}(1+\pi)^{k \theta}\Bigg\{\sum_{a=1}^{k}
(1+\pi)^{a \theta}\bigg(\frac{(1+\pi)^{M(\theta-1)}-1}{M\big\{(1+\pi)^{\theta-1}-1\big\}}\bigg)^{\theta/(\theta-1)} 
\\ + (M-k)\, \bigg(\frac{(1+\mathbb{E}\, \pi)^{M\theta} -1}{(1+\mathbb{E}\, \pi)^{\theta} -1}\bigg)\bigg(\frac{(1+\mathbb{E}\, \pi)^{\theta-1} -1}{(1+\mathbb{E}\, \pi)^{M(\theta-1)} -1}\bigg)^{\theta/(\theta-1)}\Bigg\}^{\eta}\psi \, Y^{1+\eta}\end{multline}
\begin{multline} \aleph^{SS}(0) =  M^{(1 + \eta-\theta)/(\theta-1)}\, 
\Bigg\{\psi \, Y^{1+\eta} + 
\mathbb{E}\sum_{k=1}^{M-1}\beta^{k}\, (1+\pi)^{k \theta}\, \Bigg\{\sum_{a=1}^{k}
(1+\pi)^{a \theta} \times \\ \bigg(\frac{(1+\pi)^{M(\theta-1)}-1}{M, \big\{(1+\pi)^{\theta-1}-1\big\}}\bigg)^{\theta/(\theta-1)} 
+ (M-k)\Bigg\}^{\eta} \, \psi \, Y^{1+\eta}\end{multline}

\begin{equation}
\beth=
 \begin{cases} 
 \beth^{SS}
 &\text{if $\pi \neq 0$} \\ 
 \psi + (M-1)\, \mathbb{E}\, \psi \, (1+\pi)^{\theta-1}&\text{if $\pi =\pi_{2}$}
 \end{cases}
\end{equation}
\begin{equation} \beth^{SS}=\psi + \beta \, \mathbb{E}\, \psi \, (1+\pi)^{\theta-1}\, \bigg(\frac{\beta^{M-1}\,\mathbb{E}\, (1+\pi)^{(M-1)(\theta-1)}-1}{\beta \, \mathbb{E}\, (1+\pi)^{\theta-1}-1}\bigg)\end{equation}
which wraps up the Phillips curve. Proceedings culminate with price dispersion. Factorizing terms at common times unearths
\begin{equation}
\Delta=
 \begin{cases} 
 \Delta^{SS}
 &\text{if $\pi \neq 0$} \\ 
  \Delta^{SS}(1) &\text{if $\pi =\pi_{1}$}
 \end{cases}
\end{equation}
\begin{equation} \Delta^{SS}=M^{1/(\theta-1)}\bigg(\frac{\mathbb{E}\, (1+\pi)^{(M-1)\theta} -1}{\mathbb{E}\, (1+\pi)^{\theta} -1}\bigg)\mathbb{E}\, (1+\pi)^{\theta}\, \bigg(\frac{(1+\pi)^{\theta-1} -1}{(1+\pi)^{M(\theta-1)} -1}\bigg)^{\theta/(\theta-1)}\end{equation}
\begin{equation} \Delta^{SS}(1)=M^{1/(\theta-1)}(M-1) \, \mathbb{E}\, (1+\pi)^{\theta}\, \bigg(\frac{(1+\pi)^{\theta-1} -1}{(1+\pi)^{M(\theta-1)} -1}\bigg)^{\theta/(\theta-1)}\end{equation}
These equilibrium descriptions can be compared with (30) and (31) for Calvo pricing, as well as (70) and (71) for menu costs.
\footnote{For equations (96), (98), (101) and (103), there is no need to specify the value of the first function in the integral at zero because it is measure zero and therefore does not affect the expectation.}
\begin{remark} I decline to write out the stochastic equilibrium profit function here. It is established for the Calvo model that the value of any firm is positive in stochastic equilibrium and in any recursive equilibrium where households are sufficiently patient. These arguments go through for the finite horizons of a Taylor contract. Re-optimization possibilities under menu costs would increase the value of the firm, strictly because the model lives on a manifold.\end{remark} 
Readers may wish to refer back to SELCKE Section 4.4 for watertight \\ justification of claims in the previous remark.
\subsubsection{Non-Stochastic Equilibrium and Analysis}
The purpose at hand is two fold; first to create expressions for equilibrium quantities at any non-stochastic steady and then to discover the qualitative behavior of price dispersion. The solution process is simple save the ubiquitous singularity at ZINSS.
Formally, for $X=\{MC, \, \Delta, \, Y, \, L, \, \Pi \}$
\begin{equation}
X=
 \begin{cases} 
 X^{NSS}  &\text{if $\pi \neq 0$} \\ 
  X^{ZINSS} &\text{if $\pi =0$}
 \end{cases}
\end{equation}
To simplify the optimization relationships, I will eliminate time preference by sending $\beta \rightarrow 1$. The two pivotal quantities are 
\begin{equation} MC^{NSS}= \frac{\theta-1}{\theta M^{1/(\theta-1)}}\bigg(\frac{(1+\pi)^{\theta} -1}{(1+\pi)^{M\theta} -1}\bigg)\bigg(\frac{(1+\pi)^{M(\theta-1)} -1}{(1+\pi)^{\theta-1} -1}\bigg)^{\theta/(\theta-1)}\end{equation}
\begin{equation} \Delta^{NSS}= 
M^{1/(\theta-1)}\bigg(\frac{(1+\pi)^{M\theta} -1}{(1+\pi)^{\theta} -1}\bigg)\bigg(\frac{(1+\pi)^{\theta-1} -1}{(1+\pi)^{M(\theta-1)} -1}\bigg)^{\theta/(\theta-1)}\end{equation}
Using the labor supply (12), profit function from 
(78) and market equilibrium condition (26), I further uncover that 
\begin{multline}Y^{NSS}= \frac{A}{M^{1/(\theta-1)}}\bigg(\frac{\theta-1}{\theta}\bigg)^{1/(\eta+1)}\bigg( \frac{(1+\pi)^{\theta} -1}{(1+\pi)^{M\theta} -1}\bigg)\times \\ \bigg(\frac{(1+\pi)^{M(\theta-1)} -1}{(1+\pi)^{\theta-1} -1}\bigg)^{\theta/(\theta-1)}\end{multline}
\begin{equation} L^{NSS}=\bigg(\frac{\theta}{\theta-1}\bigg)^{1/(\eta+1)}\end{equation}
\begin{multline} \Pi^{NSS}= \Bigg\{ 1 - \frac{\theta-1}{\theta M^{1/(\theta-1)}}\bigg(\frac{(1+\pi)^{\theta} -1}{(1+\pi)^{M\theta} -1}\bigg)\bigg(\frac{(1+\pi)^{M(\theta-1)} -1}{(1+\pi)^{\theta-1} -1}\bigg)^{\theta/(\theta-1)}\Bigg\} \times \\ \frac{A}{M^{1/(\theta-1)}}\bigg(\frac{\theta-1}{\theta}\bigg)^{1/(\eta+1)}\bigg( \frac{(1+\pi)^{\theta} -1}{(1+\pi)^{M\theta} -1}\bigg)
\bigg(\frac{(1+\pi)^{M(\theta-1)} -1}{(1+\pi)^{\theta-1} -1}\bigg)^{\theta/(\theta-1)}\end{multline} 
\begin{remark} Section 4.2 in SELCKE discusses how to make the exact form of the labor-leisure preferences (9) consistent with the boundary condition (7).\end{remark}
The values at ZINSS are well-known and match with the neoclassical benchmark model. It is possible to impose a corrective subsidy $\tau = 1/\theta$ funded by (an unrealistic) lump sum tax to bring about the first best associated with $MC^{*}=1$. Appendix D.2.1 contains general cases of interest. The invariance of aggregate hours to trend inflation distortions is a well-understood consequence of optimal labor supply, with log-utility absent time discounting. 
\begin{proposition}
Price dispersion $(\Delta)$ is second-order around ZINSS.
\end{proposition}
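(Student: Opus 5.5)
The plan is to work directly with the closed-form non-stochastic steady state expression for price dispersion with Taylor contracts, $\Delta^{NSS}$, and expand it in trend inflation $\pi$ around ZINSS. I will write $x=1+\pi$ and expand each geometric-sum ratio to second order in $\pi$. The first ratio is $\frac{x^{M\theta}-1}{x^{\theta}-1}=\sum_{j=0}^{M-1}x^{j\theta}$. The second is $\frac{x^{\theta-1}-1}{x^{M(\theta-1)}-1}=\big(\sum_{j=0}^{M-1}x^{j(\theta-1)}\big)^{-1}$. Rewriting them as finite sums removes the removable singularity at $\pi=0$ noted in the paper, so both are smooth (indeed polynomial or rational) in $\pi$ near zero. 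At $\pi=0$ the first sum equals $M$ and the second equals $1/M$. Hence $\Delta^{NSS}(0)=M^{1/(\theta-1)}\cdot M\cdot M^{-\theta/(\theta-1)}=1$, which confirms the normalisation $\hat{\Delta}=0$ at ZINSS.

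The core step is to show that the first derivative in $\pi$ vanishes at zero. Taking logs, I will write
$$\log\Delta^{NSS}=\tfrac{1}{\theta-1}\log M+\log\sum_{j}x^{j\theta}-\tfrac{\theta}{\theta-1}\log\sum_{j}x^{j(\theta-1)}.$$
Differentiating at $x=1$ gives
$$\frac{\theta\sum_j j}{M}-\frac{\theta}{\theta-1}\cdot\frac{(\theta-1)\sum_j j}{M}=0.$$
So $\hat{\Delta}=O(\pi^{2})$. This is the same Jensen-type cancellation as in the Calvo case, where $\hat{\Delta}_{t}=\alpha\hat{\Delta}_{t-1}$ carries no first-order inflation term. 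I will then compute the second derivative to confirm it is strictly positive. It should be proportional to $\theta\,\mathrm{Var}(j)$ over the uniform distribution on $\{0,\dots,M-1\}$, namely $\theta(M^{2}-1)/12$, which is positive for $M\ge 2$. This also matches Proposition 1's $\Delta\ge 1$.

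To pass from the steady-state comparative static to the dynamic statement, I will linearise the dispersion definition, the price-level aggregator and their combination around ZINSS. Write $r_{t-j}=p^{*}_{t-j}/P_{t}$. The aggregator gives $\frac1M\sum_j r_{t-j}^{1-\theta}=1$, so $\sum_j \hat{r}_{t-j}=0$ to first order. The dispersion formula gives $\hat{\Delta}_t=-\frac{\theta}{M}\sum_j\hat{r}_{t-j}$ to first order. Therefore $\hat{\Delta}_t=0+O(\|\hat{r}\|^{2})$ at every date, not just in steady state. Finally, since inflation and relative prices are first order in the shock size $|\varepsilon|$ by the recursive form in Proposition 6, $\Delta$ is $O(|\varepsilon|^{2})$.

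I expect the main obstacle to be presentational rather than analytical. The stochastic-equilibrium expression $\Delta^{SS}$ mixes $\mathbb{E}(1+\pi)^{\theta}$ with realised terms, and it carries a separate branch at $\pi_{1}$. I would sidestep this by working with the exact definitional identities above, using the small noise limit ($\mathbb{E}\to$ realised values), and handling the removable singularity at $\pi=0$ through the finite-sum rewriting.
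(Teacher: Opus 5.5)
Your proof is correct. The steady-state half works with the same object as the paper, the closed form (108) for $\Delta^{NSS}$ differentiated in $\pi$ at ZINSS, but uses a different and cleaner device.

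\textbf{Steady-state computation.} The paper differentiates by brute force. For $M=2$ it factors $\mathrm{d}\Delta/\mathrm{d}\pi$ as a bracket times $\pi$ in (113). For general $M$ it pushes the quotient-rule identity and $\sum k^{2}$ through (224). You instead rewrite the two ratios as finite sums and take logs, which turns everything into statements about $j$ uniform on $\{0,\dots,M-1\}$. The mean terms cancel at first and second order, and what remains is $(\log\Delta^{NSS})''=\big(\theta^{2}-\theta(\theta-1)\big)\mathrm{Var}(j)=\theta\,\mathrm{Var}(j)$. This gives $\Delta''(0)=\theta(M^{2}-1)/12$ for every $M$ at once, with no split between the text and the appendix.

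Your route also gives the correct constant, which the paper does not:
\begin{itemize}
\item At $M=2$ your value is $\theta/4$. This is what the bracket in (113)--(114) actually equals at $\pi=0$ once the denominator $2^{(2\theta-1)/(\theta-1)}$ is kept; (114) instead reports $2^{1/(\theta-1)}\theta$.
\item The last line of (224) gives $\theta$ at $M=2$.
\item At $M=3$, $\theta=2$, (224) gives $55/9$. The exact value, from $\Delta=3(1+\pi+\pi^{2})/(3+3\pi+\pi^{2})$, is $4/3$, matching your formula.
\end{itemize}
Only the sign matters for the proposition, so the paper's conclusion stands, but your computation is the one to trust.

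\textbf{Linearisation step.} This has no counterpart in the paper's proof of this proposition. The paper reaches the fact only later, in (134)--(136) and (243)--(245), phrased as the unit root $\hat{\Delta}_{t}=\hat{\Delta}_{t-1}$. Your observation that $\hat{\Delta}_{t}=-\tfrac{\theta}{M}\sum_{j}\hat{r}_{t-j}$ vanishes identically by the aggregator is sharper.

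Two caveats:
\begin{itemize}
\item The final claim $\Delta=O(|\varepsilon|^{2})$ also needs a bounded first-order solution, so that the $\hat{r}_{t-j}$ really are $O(|\varepsilon|)$. Proposition 6 does not supply this on its own. The paper itself flags a unit root at ZINSS for Taylor contracts in Section 6.1, and establishes existence only case by case in Appendix D (for $M=4$ only under Construction 1). The unconditional content of your step is that the first-order coefficient of $\Delta_{t}$ is zero.
\item The bound $\Delta\geq 1$ is the one recorded at (24), not Proposition 1; Proposition 1 compares $\Delta$ with $\Delta_{P}$.
\end{itemize}
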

I prove the simplest two period case here in the text, leaving the general $M$ period environment to Supplementary Appendix Section A.2.2.
\begin{proof}\textbf{(Two Period Model)} \\ 
By direct computation, non-stochastic steady state price dispersion takes the simple form 
\begin{equation} \Delta=\frac{2^{1/(\theta-1)}\big\{1 + (1+\pi)^{\theta} \big\}}{(1 + (1+\pi)^{\theta-1} )^{\theta/ (\theta -1)}}\end{equation}
Differentiating gives 
\begin{equation} \frac{\mathrm{d}\Delta}{\mathrm{d}\pi}= \bigg[\frac{2^{1/(\theta-1)}\, \theta \, (1+\pi)^{\theta-2}}{(1 + (1+\pi)^{\theta-1} )^{(2\theta-1)/ (\theta -1)}}\bigg]\pi\end{equation}
When it comes to applying the product rule, it is clear that, at $\pi=0$ the impact of the derivative of the bracket is nullified. Hence, 
\begin{equation} \frac{\mathrm{d}^{2}\Delta}{\mathrm{d}^{2}\pi}\bigg\rvert_{\pi=0}=
\frac{2^{1/(\theta-1)}\, \theta(1+\pi)^{\theta-2}}{(1 + (1+\pi)^{\theta-1} )^{(2\theta-1)/ (\theta -1)}}
\bigg\rvert_{\pi=0}=2^{1/(\theta-1)} \, \theta>0\end{equation}
is sufficient.\footnote{Naturally, we knew already that it had to be non-negative from the basic properties of $\Delta$. This stylized exposition may present pedagogic advantages.} 
\end{proof}
\section{Taylor Contracts: Towards Application}
This practically oriented section is split in two. The 
first piece considers business cycle 
characteristics of the Taylor Phillips curve with a view towards practical work with data. The second is taken up with an intuition building solution of the most parsimonious form. The underlying calculations for general repricing frequency are contained in 
Supplementary Appendix B. Additional facets appear in Supplementary Appendix C.
The requisite existence results are proven in Appendix D. There are theoretical and empirical challenges ahead.
\subsection{Discussion and Empirics}
\par At the outset, it is essential to demonstrate any proposed solution is
correct. I hone in on $(a_{\pi}, \, a_{y})'=(0.5, \, 0.5)'$, for comparability with Calvo. The proof is relatively straightforward for the two period toy model, coming up in the next subsection. Things appear trickier with longer contracts. I do not attempt a general rule, even around a particular parametization. Instead, I stick to a single contract length of four periods. \par The original motivation for this choice comes from \cite{alvarez2006sticky}, who show that this was close to the average frequency of adjustment in the Eurozone.
It is close to the upper end of the rigidity range considered plausible in SELCKE Supplementary Materials Section G.1.3, corresponding to $\alpha=0.75$ there. Moreover, \cite{dixon2006compare}, \cite{dixon2011contract}and \cite{dixon2012generalised} suggest that long contracts dominate short contracts, in the sense that, with realistic heterogeniety, the data is better fit by using a single Taylor contract length above the true average duration. 
\par The specific difficulty arises that there is a unit root at ZINSS, regardless of the policy stance. I am able to circumvent this problem by adding a small fringe of flexible price firms. This alteration has empirical pedigree, many prices linked to global markets like oil or agricultural goods adjust rapidly.\footnote{The empirical consensus is around $20-30\%$ of firms adjust their prices more than weighted by their share in the consumer price indices, see \cite{coenen2007identifying} in addition to previous references and https://huwdixon.org/GTE.html. } In fact, this share loads directly on to the output gap term in the Philips curve. It would be interesting to see whether these predictions are true in general, for example for any contract length $M \geq 2$. 
\par The precise mathematical statements are as follows
\begin{proposition} Consider the Taylor contracting framework laid out in Section 6.1 with contract length $M-2$, there exists a recursive equilibrium at the standard parameter setting $(a_{\pi}, \, a_{y})'=(0.5, \, 0.5)'$.\end{proposition}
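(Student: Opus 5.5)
The plan is to read ``contract length $M-2$'' as the two-period toy model ($M=2$) announced just before the statement, and to deduce existence from Theorem 1. That theorem reduces existence of a (stochastic, hence recursive) equilibrium to two things: the mutual dependence pattern, and the Blanchard--Kahn count around the candidate stochastic steady state. So the work has three stages: set up the recursive form, linearise, and count eigenvalues at $(a_{\pi},a_{y})'=(0.5,0.5)'$.

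\emph{Stage one: recursive form.} Specialise Proposition 6 to $M=2$. The manifold has dimension $5M-2=8$. The jumps are $Z^{J}_{t}=(\pi_{t},Y_{t})'$, the states are $Z^{S}_{t}=(\pi_{t-1},\pi_{t-2},\pi_{t-3})'$, and the shocks are $U_{t}=(\psi_{t},\psi_{t-1},\psi_{t-2})'$. The $f\in C^{1}$ requirement and the well-definedness of $\pi_{t+1}$ as a function of the history are inherited from the Descartes-sign and intermediate value arguments in the proof of Proposition 6. Mutual dependence then holds as in Remark 1, since jumps load on lagged inflation through (81) and states are lags of jumps.

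\emph{Stage two: linearisation.} Work at ZINSS with $\beta\rightarrow 1$, invoking the limiting stochastic equilibrium rather than the singular cancellation. Proposition 7 makes $\hat{\Delta}$ second order, so it drops out of the first-order system. With $M=2$, (79) and (81) linearise to a reset-price equation $\hat{p}^{*}_{t}=\tfrac12\big[(1+\eta)(\hat{y}_{t}+\mathbb{E}_{t}\hat{y}_{t+1})+\mathbb{E}_{t}\hat{\pi}_{t+1}\big]+\text{shocks}$. The price index gives $\hat{\pi}_{t}=\tfrac12(\hat{p}^{*}_{t}-\hat{p}^{*}_{t-1})$ in relative terms. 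Substituting the ZINSS aggregate demand curve (36) yields a finite-order characteristic polynomial in the lag operator.

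\emph{Stage three: the obstacle.} The main obstacle is the unit root at ZINSS flagged in the preceding discussion: a root $\lambda=1$ persists whatever the policy stance. I would first follow the paper's remedy and add a small flexible-price fringe of share $\omega>0$. This enters the output-gap coefficient and perturbs the unit root by $O(\omega)$. Its direction comes from implicitly differentiating the characteristic polynomial with respect to $\omega$, and I want it pushed inside the unit circle, so that it pairs as a stable root with a state.

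With the root moved, I would evaluate the remaining roots numerically at $\eta=4$, $a_{\pi}=a_{y}=0.5$. The check is that exactly two roots lie outside the unit circle, matching the two jumps $(\pi_{t},Y_{t})$ (equivalently, two roots inside in the inverse convention of Theorem 1). I would confirm this through a Schur--Cohn or Jury test, using signs of the polynomial at $\pm1$ together with a continuity argument in $\omega\downarrow 0$.

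Theorem 1 then delivers stochastic equilibrium existence. The recursive equilibrium existence follows as in Theorem 2, where boundary blow-up through the Inada condition (6) holds verbatim for finite contracts.
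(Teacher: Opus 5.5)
Your skeleton matches the paper's proof in Appendix D.1: Theorem~1 reduces existence to a Blanchard--Kahn count on the linearised $M=2$ system. That proof goes straight from SELCKE Theorem~3 to the eigenvalues, so your closing boundary blow-up step is not needed. Your Stage three is not the paper's route. In the paper, the policy-independent unit root and the flexible fringe of Construction~1 belong to $M=4$. For $M=2$ the calibration is simply substituted into the quartic built from (151) and the demand curve, $F(\lambda)=\lambda^{4}+\tfrac{16}{9}\lambda^{3}-\tfrac{79}{24}\lambda^{2}-\tfrac14\lambda+\tfrac58$. This has $F(1)=-\tfrac{5}{36}$ and roots $\approx-0.435,\,0.478,\,1.048,\,-2.869$: two inside for the states $(\pi_{t-1},\pi_{t-2})$ and two outside for the jumps $(\pi_{t},\hat{y}^{e}_{t})$. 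Your suspicion of a unit root is nonetheless correct. Equation (152) omits the term $\beta(1+\eta)a_{y}/(1+\beta a_{y})^{2}$, which $\hat{y}_{t-2}$ contributes to the $\pi_{t}$ coefficient through the double substitution of the demand curve. The same substitution is carried through in (154) and (156), and the term appears in the general denominator (265). Restoring it gives $b^{2}=\tfrac{91}{18}$ rather than $\tfrac{37}{6}$, and the characteristic equation becomes $(\lambda-1)\big(\lambda^{3}+\tfrac52\lambda^{2}-\tfrac38\lambda-\tfrac58\big)=0$, with remaining roots $\approx-2.55,\,-0.47,\,0.52$. In the patient limit, both sides of the characteristic equation of (137) with the demand curve equal $4(1+\eta)(1-a_{\pi})a_{y}$ at $\lambda=1$. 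So the unit root is present for every policy stance, exactly as for $M=4$.

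The genuine gap is that your remedy cannot deliver the statement.
\begin{itemize}
\item \textbf{Wrong model.} The proposition concerns the unperturbed Taylor framework. There, a root on the unit circle already violates the count in Theorem~1, so letting $\omega\downarrow0$ gives at best an $M=2$ analogue of Proposition~9 under Construction~1, not this proposition.
\item \textbf{Wrong direction.} Even that analogue needs the perturbed root to move \emph{outside}. Only $-2.55$ is strictly unstable, and the two jumps need two unstable roots. Pushing the root inside, as you propose, leaves three stable roots against two states. Your preference comes from using the three lags of the nonlinear count in Proposition~6, whereas the linear system here is Proposition~10's, with states $(\pi_{t-1},\pi_{t-2})$. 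In any case the direction is fixed by $-\partial_{\omega}G/\partial_{\lambda}G$ at $\lambda=1$; you do not get to choose it.
\item \textbf{Slip in Stage two.} Since $\hat{P}_{t}=\tfrac12(\hat{p}^{*}_{t}+\hat{p}^{*}_{t-1})$, the quantity $\tfrac12(\hat{p}^{*}_{t}-\hat{p}^{*}_{t-1})$ is the relative reset price $x_{t}=\hat{p}^{*}_{t}-\hat{P}_{t}$, not inflation. The correct identity is $\pi_{t}=\tfrac12(\hat{p}^{*}_{t}-\hat{p}^{*}_{t-2})=x_{t}+x_{t-1}$.
\end{itemize}
With the correct identity, your undifferenced reduction gives exactly the cubic factor above, $8\lambda^{3}+20\lambda^{2}-3\lambda-5=0$. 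Its single root outside the unit circle is one short of the two jumps, which is the familiar indeterminacy under $a_{\pi}<1$; the unit root enters only through the differencing in (129)--(131). So once the algebra is right, neither your route nor the paper's establishes the Blanchard--Kahn configuration for the unperturbed $M=2$ model at $(a_{\pi},a_{y})'=(0.5,0.5)'$.
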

\begin{construction}
Suppose a perturbation to the economy, where a fraction $\omega>0$ is introduced that re-optimize every period, so the remainder $1-\omega$ continue to reoptimize every $M$ period. Consider the limit such that \, $\omega \rightarrow 0$ whilst $1-\beta  << \vert \varepsilon \vert << \omega$.
\end{construction}
\begin{proposition}
Under Construction 1, there exists a 
recursive equilibrium 
when $M=4$ and $(a_{\pi}, \, a_{y})'=(0.5, \, 0.5)'$.
\end{proposition}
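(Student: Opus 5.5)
The plan is to reduce existence to the eigenvalue criterion of Theorem 1, applied to the recursive equilibrium of Proposition 5 for $M=4$, augmented by the flexible-price fringe of Construction 1. Theorem 1 says a stochastic equilibrium exists if and only if the Blanchard--Kahn counting condition holds around the candidate stochastic steady state. The first step is to confirm that the hypotheses of Theorem 1 are met. The mutual dependence pattern holds by Remark 1, since this is a forward-looking optimization model with state variables. The $C^{1}$ property of $f$ follows from the construction in Proposition 5. The fringe only adds a static term to the price index (80) and to dispersion (79), so it does not change the smoothness argument there.

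The second step is to linearize in the ordering $1-\beta \ll \vert\varepsilon\vert \ll \omega$. Sending $\beta\to 1$ first lets me use the compact coefficients, as with the Calvo curve (43). Taking $\vert\varepsilon\vert\to 0$ next places the expansion at ZINSS. By Proposition 7, price dispersion is second order there, so it drops out of the first-order system, and the state vector reduces to lagged inflation together with $(\pi_t,\pi_{t-1},\pi_{t-2},Y_t)$.

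With $\omega>0$ fixed, the fringe contributes a term of order $\omega$ multiplying the marginal cost gap, and hence the output gap, in the Taylor Phillips curve. Combining this curve with the aggregate demand equation (42), at $(a_\pi,a_y)=(0.5,0.5)$, gives a characteristic polynomial $\chi_\omega(\lambda)$. I would compute it explicitly, following the Appendix D calculations.

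The key step is the unit root at ZINSS that appears when $\omega=0$. I would factor $\chi_0(\lambda)=(\lambda-1)\,q(\lambda)$, check numerically that $q$ has the correct number of roots inside the unit circle, and bound them away from it. I would then perturb in $\omega$: the root at $\lambda=1$ moves to $1+c\,\omega+O(\omega^{2})$. I need the sign of $c=-\partial_\omega\chi/\chi_0'(1)$ to place this root on the side of the circle that makes the count match the four jump variables $(\pi_t,\pi_{t-1},\pi_{t-2},Y_t)$. The other roots move only by $O(\omega)$ and so remain strictly inside or outside.

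The requirement $\vert\varepsilon\vert\ll\omega$ ensures that the stochastic corrections, which are $O(\vert\varepsilon\vert)$ shifts of the coefficients evaluated at the ergodic measure, cannot cancel the $O(\omega)$ displacement of the near-unit root. The requirement $1-\beta\ll\vert\varepsilon\vert$ plays the same role for the discounting corrections. Once the count is established, Theorem 1 delivers existence. Taking $\omega\to 0$ along the ordering preserves the conclusion, because the count is fixed for every small $\omega$.

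The main obstacle is determining the sign of $c$. It depends on the competing effects of $a_\pi<1$ and the cost channel $a_y$, which produced output neutrality in the Calvo case. I would first try to evaluate $\partial_\omega\chi_\omega(1)$ in closed form. If that proves awkward, I would fall back on a Schur--Cohn or Jury table for the quartic factor $q$ at the calibration.
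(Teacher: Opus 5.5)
Your plan is the same as the paper's: form the $M=4$ characteristic polynomial at the calibration, note the unit root at $\omega=0$, and move it off the circle by implicit differentiation in the fringe share. The step you flag as the main obstacle is exactly where it breaks, because the sign of $c$ comes out negative.

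With $\beta\to1$, $\eta=4$ and $(a_\pi,a_y)=(0.5,0.5)$ the monic characteristic polynomial is
\[
G(\lambda)=\lambda^{8}+\tfrac12\lambda^{7}+\tfrac58\lambda^{6}+\tfrac{19}{4}\lambda^{5}-\tfrac{57}{8}\lambda^{4}-\tfrac18\lambda^{2}-\tfrac14\lambda+\tfrac58 ,
\]
so $G(1)=0$ and $G'(1)=10>0$. The other seven roots split into four inside the circle ($-0.526$, a real root near $0.577$, and $0.545e^{\pm1.604i}$) and three outside ($-2.082$ and $1.827e^{\pm1.415i}$). You therefore need $c>0$.

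The fringe raises only the output-gap coefficient, by a positive multiple of $\omega/(1-\omega)$. The longest-lead coefficient $\tilde b^{\pi,4}_{-3}=-\eta$ is negative, so making the polynomial monic turns this into adding a positive multiple of $\lambda^{5}-a_\pi\lambda^{4}$ to $G$. Hence $\partial_\omega G(1,0)>0$, and your own formula gives $c=-\partial_\omega G(1,0)/G'(1)<0$: the unit root moves \emph{inside} the circle.

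For every small $\omega>0$ you then have five stable roots against the four predetermined lags, and only three unstable roots against the four jumps. The Blanchard--Kahn count fails (indeterminacy), and Theorem~1 delivers no existence. This is not a delicate balance between $a_\pi<1$ and the cost channel, as you anticipate. In the appendix's general Taylor(4) polynomial one finds $G'(1)=16(1+\eta)(1-a_\pi)/\eta$, while $\partial_\omega G(1,0)$ is a positive multiple of $1-a_\pi$. The factor cancels, so the sign does not hinge on that trade-off.

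Do not fill the missing sign from the paper's printed proof. It infers $\mathrm{d}\lambda/\mathrm{d}\omega>0$ ``by the chain rule'' from $\partial G/\partial\lambda>0$ and $\partial G/\partial\omega>0$, which drops the minus sign in $\mathrm{d}\lambda/\mathrm{d}\omega=-G_\omega/G_\lambda$. Its values $3$ and $\tfrac14$ are also slips for $10$ and $\tfrac18$, though their signs are right. With the sign restored, the perturbation argument you and the paper share shows the flexible-price fringe produces the wrong eigenvalue constellation at this calibration. Proving the proposition would need a different device that pushes the unit root outward, not a sharper evaluation of $c$.
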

Each proposition fills its own subsection of the Supplementary Appendix. The idea behind Construction 1 is to reduce the eigenvalue polynomial to its non-stochastic counterpart and then add a small share of firms who reset their prices every period to divert the eigenvalues away from the unit circle and into the appropriate combination. However, the limit is degenerate and therefore not conducive to welfare analysis. The heightened possibility for unit roots may cause computational challenges for the standard Taylor pricing model, although the absence of polydromy works counter to this.
\par Finally, it is profitable to consider the new Phillips curves as numerical artifacts. 
\begin{equation} 
\pi_{t}=0.270\, \pi_{t-2} + 0.072\, \pi_{t-1} + 1.126\, \hat{y}^{e}_{t} - 0.649\, \mathbb{E}_{t}\, \pi_{t+1} + \hat{u}^{2}_{t}(\beta \rightarrow 1)\end{equation}
\begin{multline} \pi_{t}=0.205\, \pi_{t-4} + 0.055\, \pi_{t-3} - 0.009\, \pi_{t-2} -0.006\, \pi_{t-1} + 4.000\, \hat{y}^{e}_{t} - \\ 1.779\, \mathbb{E}_{t}\, \pi_{t+1}-0.982\, \mathbb{E}_{t}\, \pi_{t+2} -0.492\, \mathbb{E}_{t}\, \pi_{t+3} + \hat{u}^{4}_{t}(\beta \rightarrow 1)\end{multline}
To help build intuition, note that for the Taylor two period model, the coefficient expressions can be obtained from 
(152)-(156), written out in full in the next subsection.

\begin{equation} b^{2}\equiv 
2\, (2+\eta) -(1+\eta)\, a_{\pi}- \frac{2\, (1+\eta)\, a_{y}}{1 + a_{y}}=\frac{37}{6}
\end{equation}
\begin{equation} \tilde{b}^{\pi, \, 2}_{2}\equiv \frac{a_{\pi}}{(1 +a_{y})}(1+\eta)=\frac{5}{3}\end{equation} 
\begin{equation} \tilde{b}^{\pi, \, 2}_{1} \equiv 2\, (1+\eta) \, a_{\pi}-\eta - \frac{(1+\eta)\, a_{y}\, a_{\pi}}{(1+ a_{y})^{2}} - \frac{(1+\eta)\, a_{y}}{(1+  a_{y})}\bigg[ 2\,a_{\pi}-1\bigg]= \frac{4}{9}
\end{equation}
\begin{equation} \tilde{b}^{\pi, \, 2}_{-1}=\prescript{\circ}{}{b}^{\pi, \, 2}_{-1}\equiv -\eta=-4\end{equation}
\begin{equation} \tilde{b}^{y, \, 2} \equiv \frac{(1+\eta)\, a_{y}}{(1+  a_{y})^{2}}\bigg(1+2\, (1+a_{y}) + (1+a_{y})^{2} \bigg) = \frac{125}{18}\end{equation} 
Any analysis is preliminary without having solved out for the error coefficients and rough without hypothesis testing protocol. Nevertheless, two features stand out. The first is that inflation dynamics seem to be output rather than its own internal dynamics. This contrasts with Calvo where the slope of the Phillips curve at standard parameters is zero. Secondly, inflation dynamics might be counterintuitive because coefficients seem to have different sign patterns to Calvo. For example, the longest lead of inflation is always negative and in both cases negative coefficients dominate their positively signed counterparts. Moreover, there are difficulties generalizing results related to the fact that there seems to be no guarantee that the  numerator is positive, unlike with Calvo contracts (40). 
The cost channel is the culprit. When it disappears with the inactive policy $(a_{\pi}, \, a_{y})'=(0, \, 0)'$, we are back in the setting of SELCKE Supplementary Materials Section G.1 and equations (340) and (343), where the inflation coefficients are all positive and in fact sum to unity.\footnote{The common practice hitherto has been to close the model with a money demand function $\hat{y}^{e}_{t}=\hat{m}_{t}-\hat{p}_{t}$ and a shock process for the money supply, which would equally circumvent changing interest burdens.} In particular, the annual contract model looks like this 
\begin{equation} \pi_{t}=\frac{1}{12}\, \pi_{t-3} + \frac{1}{6}\, \pi_{t-2} + 
\frac{1}{4}\, \pi_{t-1} + \frac{1}{4}\, \mathbb{E}_{t}\, \pi_{t+1} + \frac{1}{6}\, \mathbb{E}_{t}\,\pi_{t+2} + \frac{1}{12}\, \mathbb{E}_{t}\, \pi_{t+3} + \hat{u}^{4}_{t}(0, \, 0)\end{equation}
The fringe of flexible prices is likely to turn out a more powerful determinant of the slope of the Phillips curve. Nevertheless, it is surely the case that the cost channel is intrinsic to the trade-offs of monetary stabilization. 
Measures may need to be taken to reduce the force of this mechanism relative to others. To this end, it would be natural to consider positive trend inflation and slow adjustment of interest rates first, although in time more sophisticated financial and behavioral facets may be worth exploring. 
\par Finally, it is easy to overlook advantages stemming from the absence of a maximum rate of inflation. This contrasts with Calvo and menu costs.\footnote{In fact with Calvo, there is an overall maximum bound, a tighter bound in non-stochastic steady state, which is even stricter with stochasticity.} 
Indeed, owing to Jensen's inequality the latter actually worsen as inflation volatility increases. With state dependent pricing, there is no upper bound but this arises at the expense of prices becoming unrealistically flexible. Thus, it is probably be desirable to consider both Taylor and menu cost models when looking at high or volatile inflation regimes. In general, moreover, it might prove propitious to combine aspects of state and time dependence, as well as reconsidering alternative real or nominal frictions like positive trend inflation.
\subsection{Basic Two Period Phillips Curve}
The step-by-step solution with two period contracts is summarized 
as follows. The mathematical purpose is to determine that all the forces are first-order. A significant observation concerning price dispersion is mentioned at the end. 
\begin{proposition}
Suppose Taylor contracts last for 
$M=2$ periods, then around ZINSS recursive equilibrium takes the form $\mathbb{E}_{t}\mathbf{\hat{Z}}_{t+1}=\mathbf{A}\mathbf{\hat{Z}}_{t} + \mathbf{\Phi}\mathbf{\hat{U}}_{t}$ 
where\\ $\mathbf{Z}_{t}=(\pi_{t}, \, \pi_{t-1}, \, , \pi_{t-2}, \,   Y^{e}_{t})'$, $\mathbf{U}_{t}=(\psi_{t}, \, \psi_{t-1}, \, \psi_{t-2})'$ 
and $\mathbf{\Phi}=\mathbf{\Phi}(\boldsymbol{\gamma})$. 
\end{proposition}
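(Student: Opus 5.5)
The plan is to linearize the two-period Taylor system around ZINSS, following the structure of the proof of Proposition 6 specialised to $M=2$. I will then show that the resulting linear system closes in the four variables $(\pi_{t},\pi_{t-1},\pi_{t-2},\hat{y}^{e}_{t})$, with errors entering only through $(\psi_{t},\psi_{t-1},\psi_{t-2})$.

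\textbf{Step 1: linearize the reset-price condition and the price index.} With $M=2$ and $\beta\rightarrow 1$, I log-linearize the reset-price condition (79) and the price index (81). Equation (81) gives
$$\hat{p}^{*}_{t}+\hat{p}^{*}_{t-1}-\pi_{t-1}\approx 0,$$
written in relative terms, i.e.
$$\pi_{t}=\tfrac{1}{2}\big(\widehat{p^{*}_{t}/P_{t}}-\widehat{p^{*}_{t-1}/P_{t-1}}\big)+\tfrac{1}{2}\pi_{t}.$$

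\textbf{Step 2: handle price dispersion and marginal cost.} By Proposition 7, $\hat{\Delta}_{t}$ is second order at ZINSS, so it drops out of the labour-market condition. Marginal cost is then $\hat{mc}_{t}=(1+\eta)\hat{y}_{t}$, and output can be replaced by the efficient gap $\hat{y}^{e}_{t}$.

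\textbf{Step 3: eliminate the reset price.} The linearized (79) expresses $\widehat{p^{*}_{t}/P_{t}}$ in terms of $\hat{mc}_{t}$, $\mathbb{E}_{t}\hat{mc}_{t+1}$, $\mathbb{E}_{t}\pi_{t+1}$ and $\hat{\psi}_{t}-\mathbb{E}_{t}\hat{\psi}_{t+1}$. Using the price-index identity, I write $\widehat{p^{*}_{t}/P_{t}}$ as a lag polynomial in inflation and substitute it out. This yields a Phillips curve linking $\pi_{t+1},\pi_{t},\pi_{t-1},\hat{y}^{e}_{t},\hat{y}^{e}_{t+1}$ and shock terms.

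\textbf{Step 4: substitute aggregate demand and the policy rule.} I then bring in the aggregate demand curve (35) with the Taylor rule (27). Because the rule enters lagged through the cost channel (the interest term appears in the Euler equation, and $\hat{y}$ shows up at $t$ and $t-1$), one extra inflation lag appears. That is the source of $\pi_{t-2}$ in $\mathbf{Z}_{t}$, mirroring how the Calvo derivation in Section 3.2.2 picked up $\pi_{t-1}$ and $\hat{\psi}_{t-1}$. The coefficients should collapse to the expressions (117)--(121) quoted above.

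\textbf{Step 5: assemble the first-order system.} Stacking the Phillips curve (solved for $\mathbb{E}_{t}\pi_{t+1}$), the demand equation (solved for $\mathbb{E}_{t}\hat{y}^{e}_{t+1}$) and the trivial identities for the lags gives $\mathbb{E}_{t}\hat{\mathbf{Z}}_{t+1}=\mathbf{A}\hat{\mathbf{Z}}_{t}+\mathbf{\Phi}\hat{\mathbf{U}}_{t}$. The expectation of $\hat{\psi}_{t+1}$ is replaced by $\rho\hat{\psi}_{t}$ via (28). Lagging the demand equation twice is what introduces $\psi_{t-1}$ and $\psi_{t-2}$, so $\mathbf{\Phi}$ depends only on $\boldsymbol{\gamma}=(\eta,a_{\pi},a_{y},\rho,\ldots)$.

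The dimension check is consistent with Proposition 6 at $M=2$ after the reduction. Proposition 6 lists jumps $(\pi_{t},Y_{t})$, states $(\pi_{t-1},\pi_{t-2},\pi_{t-3})$ and shocks $(\psi_{t},\psi_{t-1},\psi_{t-2})$. I need to argue that $\pi_{t-3}$ disappears at first order, because its coefficient involves the derivative of dispersion, which vanishes by Proposition 7.

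The main obstacle is Step 4. One must verify that the cost-channel substitution does not generate further lags, and that the common root at ZINSS (the analogue of $1/\alpha\beta$ in Calvo) is handled as a limiting stochastic equilibrium rather than by spurious cancellation. I would approach this as in Section 3.2.2: keep $\beta<1$ and trend noise nonzero, compute the lag polynomials, and pass to the limit only at the end, checking that the coefficient matrix stays finite and non-degenerate.
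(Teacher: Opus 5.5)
Your outline has the same skeleton as the paper's proof: linearize, eliminate the reset price, drop dispersion, substitute demand, stack. But the elimination of the reset price in Steps~1 and~3 fails as described.

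Dividing (81) by $P_t^{1-\theta}$ with $M=2$ and linearizing gives $\pi_t=\widehat{p^{*}_{t}/P_{t}}+\widehat{p^{*}_{t-1}/P_{t-1}}$, i.e.\ $\pi_t=(1+\mathbb{L})\,\widehat{p^{*}_{t}/P_{t}}$. Your two displayed versions differ from this and from each other.

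More importantly, with either sign the operator $1\pm\mathbb{L}$ has its root on the unit circle. So the relative reset price cannot be written as a finite or summable lag polynomial in inflation and ``substituted out''. The paper runs the elimination the other way:
it keeps (124) in log levels, subtracts its lag (125) to get (129), and repeats this one period back;
it then inserts both differences into $\pi_t=\tfrac12(\hat p^{*}_{t}-\hat p^{*}_{t-2})$ from (128).

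This produces two ingredients your Step~3 lacks.
First, expectations formed at $t-1$ and $t-2$ (e.g.\ $\mathbb{E}_{t-1}\pi_t$, $\mathbb{E}_{t-1}\hat{mc}_t$), which are traded for realized values plus the forecast-error terms $\hat v^{0,2}_t,\hat v^{1,2}_t$.
Second, marginal-cost differences reaching back to $\hat{mc}_{t-1}-\hat{mc}_{t-2}$; see (131)--(133).
Separately, at ZINSS $\hat\psi$ enters the numerator and denominator of (79) with identical weights and cancels. So the reset price contains no $\hat\psi_t-\mathbb{E}_t\hat\psi_{t+1}$ term; compare (124).

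This is why Steps~4--5 do not go through as written. A Phillips curve containing only $\hat y^{e}_{t}$ and $\hat y^{e}_{t+1}$, combined with the date-$t$ demand curve (38), can never generate $\pi_{t-2}$, $\hat\psi_{t-1}$ or $\hat\psi_{t-2}$. The ``rule entering lagged'' has nothing to act on.

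In the paper, the lagged output differences in (133) are eliminated with the demand curve one and two periods back. Their cost-channel and shock terms ($\beta a_\pi\pi_{t-2}$, $\beta a_y\hat y^{e}_{t-2}$, $(1-\rho)\hat\psi_{t-2}$) give (137). A further round of demand substitutions removes $\hat y^{e}_{t-1},\hat y^{e}_{t-2}$ and yields (151).

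Two smaller points.
Dispersion is removed in the paper by the dynamic linearization (134)--(136), $\hat\Delta_t=\hat\Delta_{t-1}$, which annihilates the differenced dispersion terms in (133). Proposition~7's proof concerns the non-stochastic steady state as a function of trend inflation, not first-order dynamic deviations.
Neither a Calvo-style common-root limit nor an argument about $\pi_{t-3}$ is involved.

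What the proof does need at the end, and your ``non-degenerate'' check leaves unspecified, is the exclusion of Zariski reductions. The error terms cannot cancel, and (153)--(154) prevent $\tilde b^{\pi,2}_{1}$ and $\tilde b^{\pi,2}_{2}$ from vanishing together.
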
 
\begin{proof}
Rather than (78) the first-order condition comes about from 
\begin{equation}
\max _{p_{t}^{*}}\mathbb{E}_{t}\sum_{T=t}^{t + 2} Q_{t,\, T}\bigg(\frac{p_{t}^{*}}{P_{T}}\bigg)^{-\theta}Y_{T}\bigg[\frac{p_{t}^{*}}{P_{T}}-\frac{\theta}{\theta-1}MC_{T}(y_{T}(i))\bigg]=0
\end{equation}
It is clear that the price change will be the average of the expected changes in marginal cost and aggregate price level, throughout the contract, suitably weighted to reflect time preference. 
For simplicity, I focus on ZINSS 
\begin{equation} \hat{p}^{*}_{t}= \frac{1}{1+\beta}\, \hat{P}_{t} + \frac{\beta}{1+\beta}\, \mathbb{E}_{t}\, \hat{P}_{t+1} + \frac{1}{1+\beta}\, \hat{mc}_{t} + \frac{\beta}{1+\beta}\, \mathbb{E}_{t}\, \hat{mc}_{t+1} \end{equation}
Lagging the relationship I obtain 
\begin{equation} \hat{p}^{*}_{t-1}= \frac{1}{1+\beta}\, \hat{P}_{t-1} + \frac{\beta}{1+\beta}\, \mathbb{E}_{t-1}\,\hat{P}_{t} +  \frac{1}{1+\beta}\, \hat{mc}_{t-1} +  \frac{\beta}{1+\beta}\, \mathbb{E}_{t-1}\, \hat{mc}_{t} \end{equation} 
With Taylor pricing, it is important to keep track of when expectations are computed. As expected,

\begin{equation} \hat{P}_{t}=\frac{1}{2}\, \hat{p}^{*}_{t-1} + \frac{1}{2}\, \hat{p}^{*}_{t}\end{equation}
Lagging this relationship leads to 
\begin{equation} \hat{P}_{t-1}=\frac{1}{2}\hat{p}^{*}_{t-2} + \frac{1}{2}\hat{p}^{*}_{t-1}\end{equation}
Thus, (126) and (127) imply 
\begin{equation} \pi_{t}= \frac{1}{2}\, (\hat{p}^{*}_{t}-\hat{p}^{*}_{t-1}) + \frac{1}{2}\, (\hat{p}^{*}_{t-1}-\hat{p}^{*}_{t-2})\end{equation}
Subtracting (124) and (125) yields 
\begin{multline} \hat{p}^{*}_{t}-\hat{p}^{*}_{t-1} = \frac{1}{1 +\beta} \, \pi_{t} + \frac{\beta}{1 + \beta} \, \mathbb{E}_{t}\, \pi_{t+1} + \frac{1}{1 +\beta} \, (\hat{mc}_{t} -\hat{mc}_{t-1}) + \\ \frac{\beta}{1 + \beta}\, (\mathbb{E}_{t}\, \hat{mc}_{t+1} - \hat{mc}_{t})  + 
\hat{v}^{0, \, 2}_{t}\end{multline}
where \begin{equation} 
\hat{v}^{0, \, 2}_{t}= \frac{\beta}{1 + \beta}\, (\pi_{t}-\mathbb{E}_{t-1}\, \pi_{t}) + \frac{\beta}{1 + \beta}\, (\hat{mc}_{t}-\mathbb{E}_{t-1}\, \hat{mc}_{t})\end{equation}
reflects differences between expected and realized outcomes. Carrying out the same steps for (125) and its lag then substituting into 
(128) reveals 
\begin{multline} \pi_{t}= \frac{1}{1 + \beta}\, \pi_{t-1} + \frac{\beta}{1 + \beta}\, 
\mathbb{E}_{t}\, \pi_{t+1} + \frac{1}{1 + \beta} \, (\hat{mc}_{t-1} -\hat{mc}_{t-2}) + \\ \frac{\beta}{1 + \beta} \, (
\mathbb{E}_{t}\, \hat{mc}_{t+1} -\hat{mc}_{t}) + 
(\hat{mc}_{t} -\hat{mc}_{t-1}) + \hat{v}^{1, \, 2}_{t}
\end{multline}
with \begin{equation} \hat{v}^{1, \, 2}_{t}=\frac{\beta}{1 + \beta}\, (\pi_{t-1}-\mathbb{E}_{t-2}\, \pi_{t-1})+ \frac{\beta}{1 + \beta}\, (\hat{mc}_{t-1}-\mathbb{E}_{t-2}\, \hat{mc}_{t-1}) + \hat{v}^{0, \, 2}_{t}\end{equation}
Deploying aggregate supply relation (33) implies that 
\begin{multline} \pi_{t}= \frac{1}{1 + \beta}\, \pi_{t-1} + \frac{\beta}{1 + \beta}\, \mathbb{E}_{t}\, \pi_{t+1} + \frac{(1+\eta)}{1 + \beta} \, (\hat{y}
_{t-1} -\hat{y}_{t-2}) + \\ \frac{\beta\, (1+\eta)}{1 + \beta}\,  (
\mathbb{E}_{t}\, \hat{y}_{t+1} -\hat{y}_{t})   + 
(1+\eta)\, (\hat{y}_{t} -\hat{y}_{t-1}) + \frac{\eta}{1 + \beta} \, (\hat{\Delta}_{t-1} -\hat{\Delta}_{t-2}) +  \\ \frac{\beta \, \eta}{1 + \beta}\,  (\mathbb{E}_{t}\, \hat{\Delta}_{t+1} -\hat{\Delta}_{t}) +  \eta\, (\hat{\Delta}_{t} -\hat{\Delta}_{t-1}) +\hat{v}^{1, \, 2}_{t}\end{multline}
\par Turning to price dispersion defined back at (24), it is evident that 
\begin{equation} \hat{\Delta}_{t}= \frac{\theta}{2}\, \bigg(\hat{p}^{*}_{t}-\hat{P}_{t}\bigg) +  \frac{\theta}{2}\, \bigg(\hat{p}^{*}_{t-1}-\hat{P}_{t}\bigg)\end{equation}
In the previous period, 
\begin{equation} \hat{\Delta}_{t-1}= \frac{\theta}{2}\bigg(\hat{p}^{*}_{t-1}-\hat{P}_{t-1}\bigg) + \frac{\theta}{2}\bigg(\hat{p}^{*}_{t-2}-\hat{P}_{t-1}\bigg)\end{equation}
Subtracting (135) from (134) with help from (126), I uncover that 
\begin{equation} \hat{\Delta}_{t}=\hat{\Delta}_{t-1}\end{equation}
It is self-evident that this unit root will destroy all the price dispersion terms in the Phillips curve (115). 
This expedites a re-designation of the demand pressure measure. \par Repeatedly plugging in the aggregate demand curve (32), down the small noise limit, 
\begin{equation} \pi_{t}=
\prescript{\circ}{}{b}^{\pi, \, 2}_{2}\, \pi_{t-2} + \prescript{\circ}{}b^{\pi, \, 2}_{1}\, \pi_{t-1} + \prescript{\circ}{}\, b^{\pi, \, 2}_{-1}\, \mathbb{E}_{t}\, \pi_{t+1} + \prescript{\circ}{}b^{y, \, 2}_{2}
\, \hat{y}^{e}_{t-2}+ \prescript{\circ}{}b^{y, \, 2}_{1}\, \hat{y}^{e}_{t-1} + \prescript{\circ}{}b^{y, \, 2}_{0}\, \hat{y}^{e}_{t} + 
\hat{v}^{2, \, 2}_{t}
\end{equation} 
where
\begin{equation} \prescript{\circ}{}b^{2}= (1 + \beta)\, (2+\eta) 
-\beta^{2}\, (1+\eta)\, a_{\pi}\end{equation}
and $\prescript{\circ}{}{b}^{j, \, 2}_{i}=\prescript{\circ}{}{\tilde{b}}^{j, \, 2}_{i}/\prescript{\circ}{}{b}^{2}$ for $j=\{ \pi, \, y, \, \psi\}$ and $i=\{1, \, 2\}$
are constructed from  
\begin{equation} \prescript{\circ}{}{\tilde{b}}^{\pi, \, 2}_{2}
=\beta\, (1+\eta) \, a_{\pi}\end{equation}
\begin{equation} \prescript{\circ}{}{\tilde{b}}^{\pi, \, 2}_{1}=\beta\, (1+\beta)\, (1+\eta) \, a_{\pi}-\eta\end{equation}
\begin{equation} \prescript{\circ}{}{\tilde{b}}^{\pi, \, 2}_{-1}=-\beta \, \eta \end{equation}
\begin{equation}\prescript{\circ}{}{\tilde{b}}^{y, \, 2}_{2}=\beta \,  (1+\eta)\, a_{y} \end{equation}
\begin{equation} \prescript{\circ}{}{\tilde{b}}^{y, \, 2}_{1}
=\beta\, (1+\beta) \, (1+\eta)\, a_{y}\end{equation}
\begin{equation} \prescript{\circ}{}{\tilde{b}}^{y, \, 2}_{0}
=\beta^{2}\, (1+\eta)\, a_{y}\end{equation}
bearing in mind the properties of the errors embodied in (28)
\begin{multline} 
\hat{v}^{2, \, 2}_{t}= \prescript{\circ}{}b^{\psi, \, 2}_{2}\, \hat{\psi}_{t-2} + \prescript{\circ}{}b^{\psi, \, 2}_{1}\, \hat{\psi}_{t-1} + \prescript{\circ}{}b^{\psi, \, 2}_{0}\, \hat{\psi}_{t}
+ \prescript{\circ}{}b^{x, \, 2}_{1}\, (\hat{y}^{e}_{t-1}-\mathbb{E}_{t-2}\, \hat{y}^{e}_{t-1}) +  \\ \prescript{\circ}{}b^{x, \, 2}_{0}\, (\hat{y}^{e}_{t}-\mathbb{E}_{t-1}\, \hat{y}^{e}_{t}) + 
\prescript{\circ}{}b^{x, \, 2}_{1}\, (\pi_{t-1}-\mathbb{E}_{t-2}\, \pi_{t-1}) + \prescript{\circ}{}b^{x, \, 2}_{0}\, (\pi_{t}-\mathbb{E}_{t-1}\, \pi_{t})  +  \frac{(1+\beta)}{\prescript{\circ}{}b^{2}}\, \hat{v}^{1, \, 2}_{t}
\end{multline}
\begin{equation} 
\prescript{\circ}{}{\tilde{b}}^{\psi, \, 2}_{2}=-(1-\rho)\, (1+\eta)\end{equation}
\begin{equation} \prescript{\circ}{}{\tilde{b}}^{\psi, \, 2}_{1}=-(1+\beta)\, (1-\rho)\, (1+\eta)\end{equation}
\begin{equation} 
\prescript{\circ}{}{\tilde{b}}^{\psi, \, 2}_{0}=-\beta\, (1-\rho)\, (1+\eta)\end{equation}
\begin{equation} \prescript{\circ}{}{\tilde{b}}^{x, \, 2}_{1}= 1+\eta \end{equation}
\begin{equation}  \prescript{\circ}{}{\tilde{b}}^{x, \, 2}_{0}=(1+\beta)\, (1+\eta)\end{equation}
More use of the aggregate demand curve gives the desired compact form. 
\begin{equation} \pi_{t}= b_{2}^{\pi, \, 2}\, \pi_{t-2} + b_{1}^{\pi, \,2}\, \pi_{t-1} +   b^{y, \, 2}\, \hat{y}^{e}_{t} +
 b^{\pi, \, 2}_{-1}\, \mathbb{E}_{t}\, \pi_{t+1}+ \hat{u}^{2}_{t}\end{equation} 
similar to previous notation 
$b^{j}_{i}=\tilde{b}^{j}_{i}/b^{2}$. Here 
\begin{equation} b^{2}= \prescript{\circ}{}{b}^{2}-\frac{\prescript{\circ}{}{\tilde{b}}^{y, \, 2}_{1}}{1 + \beta a_{y}}=
(1 + \beta)\, (2+\eta) 
-\beta^{2}\, (1+\eta)\, a_{\pi}- \beta\, (1+\beta) \, \frac{(1+\eta)\, a_{y}}{1 + \beta \, a_{y}}\end{equation}
whilst
\begin{equation} \tilde{b}^{\pi, \, 2}_{2}=\prescript{\circ}{}{b}^{\pi, \, 2}_{2}-\frac{\beta \, a_{\pi}}{(1 +\beta \, a_{y})}\prescript{\circ}{}{b}^{y, \, 2}_{2}=\frac{\beta \,  a_{\pi}}{(1 +\beta \, a_{y})}(1+\eta)\end{equation} 
\begin{multline} \tilde{b}^{\pi, \, 2}_{1}=\prescript{\circ}{}{b}^{\pi, \, 2}_{1} -\bigg( \frac{\prescript{\circ}{}{b}^{y, \, 2}_{2}}{(1+ \beta \, a_{y})^{2}} + \frac{\prescript{\circ}{}{b}^{y, \, 2}_{1}}{(1+ \beta \, a_{y})}\bigg) \beta \, a_{\pi} + \frac{\prescript{\circ}{}{b}^{y, \, 2}_{2}}{(1+ \beta \, a_{y})}= \\ \beta\, (1+\beta)\, (1+\eta) \, a_{\pi}-\eta - \frac{\beta^{2} \, (1+\eta)\, a_{y}\, a_{\pi}}{(1+ \beta \, a_{y})^{2}} 
- \frac{\beta \, (1+\eta)\, a_{y}}{(1+ \beta \, a_{y})}\bigg[ \beta\, (1+\beta)\, a_{\pi}-1\bigg]\end{multline}
\begin{equation} \tilde{b}^{\pi, \, 2}_{-1}=\prescript{\circ}{}{b}^{\pi, \, 2}_{-1}=-\beta \, \eta\end{equation}
\begin{multline} b^{y, \, 2} = \frac{\prescript{\circ}{}{b}^{y, \, 2}_{2}}{(1+ \beta \, a_{y})^{2}} + \frac{\prescript{\circ}{}{b}^{y, \, 2}_{1}}{(1+ \beta \, a_{y})} + \prescript{\circ}{}{b}^{y, \, 2}_{0}= \\ \frac{\beta \,  (1+\eta)\, a_{y}}{(1+ \beta \, a_{y})^{2}}\bigg(1+(1+\beta)\, (1+\beta \, a_{y}) + \beta\, (1+\beta \, a_{y})^{2} \bigg)\end{multline} 
Finally, the error adjustment is 
\begin{multline} 
\hat{u}^{2}_{t}=\prescript{\circ}{\circ}b^{\psi, \, 2}_{2}\, \hat{\psi}_{t-2} + \prescript{\circ}{\circ}b^{\psi, \, 2}_{1}\, \hat{\psi}_{t-1} + 
\prescript{\circ}{\circ}{b}^{x, \, 2}_{1}\, (\pi_{t-1}-\mathbb{E}_{t-2}\, \pi_{t-1}) + \prescript{\circ}{\circ}{b}^{x, \, 2}_{1}\, (\hat{y}^{e}_{t-1}-\mathbb{E}_{t-2}\, \hat{y}^{e}_{t-1}) 
\\+ \prescript{\circ}{\circ}{b}^{x, \, 2}_{0}\, (\hat{\pi}_{t}-\mathbb{E}_{t-1} \,\hat{\pi}_{t}) + \prescript{\circ}{\circ}{b}^{x, \, 2}_{0}\, (\hat{y}^{e}_{t}-\mathbb{E}_{t-1} \,\hat{y}^{e}_{t}) + \frac{\prescript{\circ}{}{b}^{2}}{b^{2}}\, \hat{v}^{2, \, 2}_{t}\end{multline}
\begin{equation} \prescript{\circ}{\circ}{\tilde{b}}^{\psi, \, 2}_{2}= \beta\, (1-\rho)\, \frac{(1+\eta)\, a_{y}}{1 +\beta \, a_{y}} \end{equation}
\begin{multline} \prescript{\circ}{\circ}{\tilde{b}}^{\psi, \, 2}_{1}= (1-\rho)\,\bigg(\frac{\prescript{\circ}{}{\tilde{b}}^{y, \, 2}_{2}}{(1+\beta \, a_{y})^{2}} + \frac{\prescript{\circ}{}{\tilde{b}}^{y, \, 2}_{1}}{(1+\beta \, a_{y})}\bigg)= \\ (1-\rho)\, \frac{\beta \, (1+\eta)\, a_{y}}{(1+\beta \, a_{y})^{2}} \bigg( 1 + (1+\beta)\, (1+\beta \, a_{y})\, \bigg)\end{multline}
\begin{equation} \prescript{\circ}{\circ}{\tilde{b}}^{x, \, 2}_{1}=-\frac{\prescript{\circ}{}{\tilde{b}}^{y, \, 2}_{2}}{1+\beta \, a_{y}}=-\frac{\beta\,  (1+\eta)\, a_{y}}{1+\beta \, a_{y}} \end{equation} 
\begin{equation} \prescript{\circ}{\circ}{\tilde{b}}^{x, \, 2}_{0}=-\bigg(\frac{\prescript{\circ}{}{\tilde{b}}^{y, \, 2}_{2}}{(1+\beta \, a_{y})^{2}} + \frac{\prescript{\circ}{}{\tilde{b}}^{y, \, 2}_{1}}{(1+\beta \, a_{y})}\bigg)=-\frac{\beta \, (1+\eta)\, a_{y}}{(1+\beta \, a_{y})^{2}} \bigg( 1 + (1+\beta)\,(1+\beta \, a_{y})\bigg)\end{equation}
\par It is not necessary for the proof to evaluate the lagged expectation \\ component of the error terms. By way of justification, it is clear, they will not contribute additional lags. The alternative possibility is that the errors would cancel out. This cannot happen because it would induce a contemporaneous relationship between the error and the endogenous variables, which would create a contradiction with (153) and (154). 
These jointly imply $(\tilde{b}^{\pi, \, 2}_{1}, \, \tilde{b}^{\pi, \, 2}_{2})\geq \bf{0}$ but not simultaneously, since $\tilde{b}^{\pi, \, 2}_{1}=0$ is associated with $\tilde{b}^{\pi, \, 1}_{1}=-\eta \neq 0$. This rules out all possible Zariski reductions. 
\par These arguments are sufficient to conclude the Taylor pricing model with two period contracts has the desired recursive equilibrium. 
\end{proof}
The long form of the errors reside in Appendix B.1
\begin{remark} Note how price dispersion vanishes at ZINSS. Consistent with Proposition 6 and the preceding discussion, however, the fact that this implies no loss of lags is specific to the shortest price spells. 
\end{remark}
\section{Convergence Results}
We arrive at the first of a pair of sections providing the titular core of the paper. There are two detachments. The first deals with the medium to long-run phenomenon of shock decay. The second relates to the conditions under which short-run dynamics can be non-monotonic. 
\subsection{Persistence and Propagation}
The first theorem of the section identifies two different lower bounds on the convergence rates, specific circumstances when they apply and the general case where both forces are present and so which ever is slowest provides the \\ supremum of the shock effect. The second guarantees swifter convergence for high-order terms. All permutations yield a dramatic improvement over \\ arithmetic convergence $O(1/T)$ under mere ergodicity. Interpretation and \\ comment finish business. 
\subsubsection{Different Convergence Rates}
\begin{theorem}
Consider a DSGE model in stochastic equilibrium with recursive equilibrium 
$$\mathbb{E}_{t}\mathbf{X}_{t+1}= f(\mathbf{X}_{t}, \, \boldsymbol{\gamma}, \, \mathbf{e}_{t}) \: \: \: \mu \: a.s. $$ where  $f \in C^{1}$ with mutual dependence, locally characterized by 
\begin{equation} \mathbb{E}_{t}\hat{\bf{X}}_{\bf{t+1}}=\bf{B}
\hat{\bf{X}}_{t} + \boldsymbol\phi\hat{\bf{e}}_{t} \end{equation}
The vector $\bf{X}_{t}=(\bf{X}^{J}_{t}, \, \bf{X}^{S}_{t})'$ contains jump 
and state variables. These vectors can be further partitioned into 
$\bf{X}^{J}_{t}=(\bf{
X}^{J}_{t, \, t}, \, \bf{X}^{J}_{t-1, \, t}, \, \cdots , \, \mathbf{X}^{J}_{t-k(X^{J}), \, t})'$ and \\ 
$\bf{X}^{S}_{t}=(\bf{X}^{S}_{t, \, t}, \, \bf{X}^{S}_{t-1, \, t}, \, \cdots , \, \mathbf{X}^{S}_{t-k(X^{S}), \, t})'$, where the first subscript always refers to when the underlying variable is realized and the second is the standard index for its date, within the recursive equilibrium.
Likewise, $\mathbf{e}_{t}=(\mathbf{e}_{t, \, t}, \cdots , 
\mathbf{e}_{t-k(e),\, t})'$ consists of exogenous variables.\, 
The underlying errors are assumed to be \\ stationary and auto-regressive $(AR(1))$ with persistence $\boldsymbol{\rho}$. $\boldsymbol{\gamma}$ groups together all the structural parameters.
\par  Define  $\bar{\lambda}_{2}=\max\{ \vert \lambda_{i}\vert : \, \vert \lambda_{i}\vert < 1\}$ and $\bar{\rho}=\max\{\vert\boldsymbol{\rho}_{i}\vert\}$.Finally, suppose $T >> k(e)+1$, then the following convergence rates apply
\begin{enumerate}[(i)]
\item In general, following an initial shock $\mathbf{e}_{0, \, 0}$: $$\vert \mathbb{E}_{0}\, \mathbf{\hat{X}}_{T}\vert \leq
C\, (\max\{\bar{\lambda}_{2}, \, \bar{\rho}\})^{T}\, 
\vert \mathbf{\hat{e}}_{0, \, 0}
\vert $$
\item If $\mathbf{\hat{X}}^{\mathbf{S}}_{t}= \emptyset$ then 
$\vert \mathbb{E}_{0}\, 
\mathbf{\hat{X}}_{T} \vert
\leq
C\, \bar{\rho}^{T}\, \vert \mathbf{\hat{e}}_{0, \, 0}
\vert $
\item The effect of the initial state of the economy $\mathbf{X}_{0, \, 0}$ 
declines such that 
\\  $\vert \mathbb{E}_{0}\, 
\mathbf{\hat{X}}_{T}\vert 
\leq C \, \bar{\lambda}^{T}_{2}\, \vert 
\mathbf{\hat{X}}_{0, \, 0}\vert $
\\ where $C>0$ is a sufficiently large constant that may vary for convenience from line to line. 
\end{enumerate}
\end{theorem}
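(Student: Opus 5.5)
The plan is to reduce the statement to the linear algebra of the saddle-path solution supplied by Theorem 1. Since the model is in stochastic equilibrium, Theorem 1 guarantees that the Blanchard-Kahn conditions hold at the stochastic steady state. So I would first bring $\mathbf{B}$ to (real) Jordan form, $\mathbf{B}=\mathbf{P}\mathbf{J}\mathbf{P}^{-1}$, and split it into the stable block $\mathbf{J}_{s}$ (eigenvalues with $\vert\lambda_{i}\vert<1$) and the unstable block $\mathbf{J}_{u}$ (eigenvalues with $\vert\lambda_{i}\vert>1$). The numbers match the jump variables, so the unstable block can be solved forward and the stable block backward.

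\textbf{Step 1: the unstable block.} Iterating the expectational equation forward and imposing boundedness (the condition that selects the recursive equilibrium) gives the unstable coordinates as a discounted sum of expected future shocks:
$$\mathbf{w}^{u}_{t}=-\sum_{j\geq 0}\mathbf{J}_{u}^{-(j+1)}\mathbf{\Gamma}_{u}\,\mathbb{E}_{t}\hat{\mathbf{e}}_{t+j}.$$
Because the errors are stationary AR(1), we have $\mathbb{E}_{t}\hat{\mathbf{e}}_{t+j}=\boldsymbol{\rho}^{j}\hat{\mathbf{e}}_{t}$. The sum therefore converges and is a fixed linear map $\mathbf{G}\hat{\mathbf{e}}_{t}$. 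This gives the decision rule
$$\hat{\mathbf{X}}^{J}_{t}=\mathbf{F}\hat{\mathbf{X}}^{S}_{t}+\mathbf{G}\hat{\mathbf{e}}_{t}.$$
The stacked-lag structure with $k(e)$ lags of the shock is handled by the assumption $T\gg k(e)+1$: after $k(e)+1$ periods the lagged copies $\mathbf{e}_{t-k,t}$ of the initial impulse are all themselves of the form $\boldsymbol{\rho}^{t-k}\hat{\mathbf{e}}_{0,0}$. Hence any constant polynomial factor from lags is absorbed into $C$.

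\textbf{Step 2: the state block.} Substituting the decision rule into the state equations gives
$$\mathbb{E}_{0}\hat{\mathbf{X}}^{S}_{t+1}=\mathbf{M}\,\mathbb{E}_{0}\hat{\mathbf{X}}^{S}_{t}+\mathbf{N}\boldsymbol{\rho}^{t}\hat{\mathbf{e}}_{0,0},$$
where $\mathbf{M}$ is similar to $\mathbf{J}_{s}$. Variation of constants then yields
$$\mathbb{E}_{0}\hat{\mathbf{X}}^{S}_{T}=\mathbf{M}^{T}\hat{\mathbf{X}}^{S}_{0}+\sum_{t=0}^{T-1}\mathbf{M}^{T-1-t}\mathbf{N}\boldsymbol{\rho}^{t}\hat{\mathbf{e}}_{0,0}.$$
The three parts of the theorem come from this formula.

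Part (iii) is the first term alone, using $\Vert\mathbf{M}^{T}\Vert\leq C\bar{\lambda}_{2}^{T}$. Part (ii) is immediate: with no states, $\hat{\mathbf{X}}_{T}=\mathbf{G}\boldsymbol{\rho}^{T}\hat{\mathbf{e}}_{0,0}$. Part (i) bounds the convolution sum by
$$\sum_{t}\bar{\lambda}_{2}^{T-1-t}\bar{\rho}^{t}\leq T\max\{\bar{\lambda}_{2},\bar{\rho}\}^{T-1}.$$
The jump variables then inherit the same bound through $\mathbf{F}$ and $\mathbf{G}$.

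\textbf{Main obstacle.} The main obstacle is that this argument naively produces polynomial prefactors. These come from two sources: Jordan blocks of size greater than one, which give $\Vert\mathbf{M}^{T}\Vert\leq CT^{m-1}\bar{\lambda}_{2}^{T}$, and the resonance case $\bar{\lambda}_{2}=\bar{\rho}$, which gives the factor $T$ above. Both are incompatible with a bound of the form $C\max\{\cdot\}^{T}$ with $C$ independent of $T$.

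I would first try to remove them with genericity. Distinct eigenvalues, and $\bar{\lambda}_{2}\neq\bar{\rho}$, hold on an open dense set of $\boldsymbol{\gamma}$. In that case the sum is a difference of geometric series, bounded by
$$\frac{C}{\vert\bar{\lambda}_{2}-\bar{\rho}\vert}\max\{\bar{\lambda}_{2},\bar{\rho}\}^{T}.$$
In the degenerate cases I would read the stated rate in the sense $\limsup_{T}\vert\mathbb{E}_{0}\hat{\mathbf{X}}_{T}\vert^{1/T}$, or equivalently allow $C$ to absorb a polynomial factor by replacing the rate with any $\max\{\bar{\lambda}_{2},\bar{\rho}\}+\delta$. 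I would state that convention explicitly, since the theorem says $C$ "may vary for convenience".

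Finally, I would note that the local linear characterization is legitimate because $f\in C^{1}$. The nonlinear remainder only contributes terms of second order in $\vert\hat{\mathbf{e}}_{0,0}\vert$, which decay at least as fast.
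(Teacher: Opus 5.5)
Your argument is essentially the paper's: the paper also diagonalizes $\mathbf{B}$ and takes the Blanchard--Kahn saddle-path solution from SELCKE, solving jumps forward and states backward. It likewise absorbs the $k(e)$-lag structure of $\mathbf{e}_{t}$ into a constant $\bar{M}$, reduces part (i) to a geometric-series bound on the convolution $\sum_{i}\boldsymbol{\Lambda}_{2}^{T-i}\mathbf{R}^{i}$, and obtains part (ii) as the pure $\mathbf{R}^{T}$ term and part (iii) as the homogeneous $\boldsymbol{\Lambda}_{2}^{T}$ term.

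Your caveat about polynomial prefactors is correct, and it marks a place where the paper's proof is too quick. The paper treats non-diagonalizable $\mathbf{B}$ ``by limit taking'' and sets $\mathbf{V}=T\mathbf{R}^{T}$ in the resonant case $\boldsymbol{\Lambda}_{2}=\mathbf{R}$, yet still concludes $O(\vert\mathbf{R}\vert^{T})$ with $C$ independent of $T$. Its non-resonant $\bar{\mathbf{V}}$ also uses $\vert\mathbf{I}-\boldsymbol{\Lambda}_{2}^{-1}\mathbf{R}\vert$ where the geometric sum needs its inverse, and that inverse is what carries the $1/\vert\bar{\lambda}_{2}-\bar{\rho}\vert$ dependence you make explicit. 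So your genericity restriction, or the $\limsup$ / $\max\{\bar{\lambda}_{2},\bar{\rho}\}+\delta$ reading, is genuinely needed for the stated bound to hold.
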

\begin{proof}

The mixing properties inherent in stochastic equilibrium mean that $\mu$ a.s. the long-run convergence is determined by the linear approximation around the stochastic steady state. 
The proof therefore amounts to an exercise in linear algebra and basic analysis. The long-time asymptotic is designed to 
cut around the non-monotone short-run dynamics that are the focus of the next subsection.  \par The pertinent diagonalization is 
\begin{equation} 
\bf{B}=\bf{P^{-1}}\, \boldsymbol \Lambda\, \bf{P}
\end{equation} 
Instances of non-diagonalizable matrices $\bf{B}$ can be dealt with by limit taking because the singularities are removable. The process is to solve the jump and state variables separately.\footnote{The dimensional restrictions this requires are covered in detail in the proof of SELCKE Theorem 3.} It will therefore prove beneficial to work with three partitions.
$\boldsymbol{\phi}=(\boldsymbol{\phi_{1}}, \, \boldsymbol{\phi_{2}})'$ and 
\begin{equation} \bf{P}=\begin{bmatrix} \bf{P}_{11} & \bf{P}_{12} \\
\bf{P}_{21} & \bf{P}_{22}
\end{bmatrix} \end{equation}
and its inverse takes the form 
\begin{equation} \bf{P}^{-1}=\begin{bmatrix} \bf{Q}_{11} & \bf{Q}_{12} \\
\bf{Q}_{21} & \bf{Q}_{22}
\end{bmatrix} \end{equation}
\par Here, $\bf{\Lambda}$ is a special matrix. where the eigenvalues are organized in \\ descending order along the diagonal, with all other elements $0$. $\bf{\Lambda}_{1}$ incorporates the eigenvalues from outside the unit circle, while those inside lie in $\bf{\Lambda}_{2}$. It appears as 
\begin{equation} \bf{\Lambda}=\begin{bmatrix} \bf{\Lambda}_{1} & 0 \\ 0 & \bf{\Lambda}_{2}  \end{bmatrix}\end{equation}
Lastly, innovations are displayed as 
\begin{equation}\hat{\bf{e}}_{t, \, t-1} =\mathbf{R}\hat{\bf{e}}_{t-1, \, t-1} + \boldsymbol{\hat{\xi}}_{t}\end{equation}
such that 
\begin{equation} \bf{R}=\begin{bmatrix} \ddots & 0 & \cdots  & 0\\  0 &  \ddots & \cdots & 0 \\ \vdots & \vdots & \boldsymbol{\rho} &\vdots \\ 0 & 0 & \cdots & \ddots\end{bmatrix}\end{equation}
and $\boldsymbol{\xi}_{t}$ is white noise. Lastly, $\Omega_{t}$ will denote information available up to time $t$.
\par Consider an economy struck by a shock at time $0$, after a sufficiently long time $t$, SELCKE Theorem 3, underscored by its Supplementary Appendix E.3, shows that
\begin{multline}
\hat{\bf{X}}^{\bf{J}}_{\bf{t}} \approx -\sum_{i=1}^{t}\bf{Q}_{11}\, \Lambda_{2}^{i-1}\, (\bf{Q}_{21}^{-1}\, \bf{Q}_{22}-\Lambda_{2}\, \bf{Q}_{21}^{-1}\, \bf{Q}_{22}\, \Lambda^{-1}_{1}) \times \\ \sum_{k=0}^{\infty}\boldsymbol{\Lambda_{1}^{-k}}( \bf{P}_{11}\,\boldsymbol{\phi}_{2} + \bf{P}_{12}\, \boldsymbol{\phi}_{1})\, \mathbb{E}_{t}\, (\hat{\bf{e}}_{t+ k-i}\, \vert \, \Omega_{t-i}) - \\ \sum_{k=0}^{\infty}\bf{P}_{12}^{-1}\boldsymbol{\Lambda_{1}^{-
(k+1)}}(\bf{P}_{11}\,\boldsymbol{\phi}_{2} + \bf{P}_{12}\, \boldsymbol{\phi}_{1})\, \mathbb{E}_{t}\, (\hat{\bf{e}}_{t+ k}\, \vert \, \Omega_{t}) + \sum_{i=1}^{t}\bf{Q}_{11}\, \Lambda_{2}^{i-1}\, \bf{Q}_{21}^{-1}\, \boldsymbol{\phi}_{2}\, \hat{\bf{e}}_{t-i}
\end{multline}
\begin{multline}
\hat{\bf{X}}^{\bf{S}}_{\bf{t}} \approx -\sum_{i=1}^{t}\bf{Q}_{21}\, \Lambda_{2}^{i-1}\, (\bf{Q}_{21}^{-1}\, \bf{Q}_{22}-\Lambda_{2}\, \bf{Q}_{21}^{-1}\, \bf{Q}_{22}\, \Lambda^{-1}_{1}) \times \\ \sum_{k=0}^{\infty}\Lambda_{1}^{-
k}(\bf{P}_{11}\,\boldsymbol{\phi}_{2} + \bf{P}_{12}\, \boldsymbol{\phi}_{1})\, \mathbb{E}_{t}\, (\hat{\bf{e}}_{t+ k-i}\, \vert \, \Omega_{t-i}) + \sum_{i=1}^{t}\bf{Q}_{21}\, \Lambda_{2}^{i-1}\, \bf{Q}_{21}^{-1}\, \boldsymbol{\phi}_{2}\, \hat{\bf{e}}_{t-i} 
\end{multline} 
where any singular sub-matrix can be re-imagined as its pseudo-inverse \\ counterpart without trouble. 
\par To specialize to the situation of (i) it is imperative to dive down to the primitive disturbances. It is clear that 
\begin{equation} \mathbb{E}_{0}\, \hat{\bf{e}}_{k(e)+1, \,0} = \bf{R}\, \hat{\bf{e}}_{k(e), \,0}\end{equation}
To keep the derivation as simple as possible bound 
\begin{equation}\vert \hat{\bf{e}}_{t, \, 0} \vert \leq \bar{M}\, \vert \hat{\bf{e}}_{0, \, 0} \vert\end{equation}
for $0 \leq t \leq k(e)$, where $\bar{M}$ must be larger than expectation of the peak size of the error term, which must exist by assumption that we are in a first-order environment with mathematical expectations.\footnote{It is possible to obtain a quantitative handle on $\bar{M}$. 
\begin{equation*}
\bf{e}_{t}
=  \boldsymbol{\Gamma}_{0}\, \bf{e}_{t, \, t}  + \cdots + \boldsymbol{\Gamma}_{k(e)}\, \bf{e}_{t, \, t-k(e)} 
\end{equation*}
It follows that 
\begin{equation*} \mathbb{E}\, \vert \bf{e}_{t}\vert \leq (\vert \boldsymbol{\Gamma}_{0}\vert  + \cdots + \vert \boldsymbol{\Gamma}_{k(e)}\vert)\, \mathbb{E}\,\vert \bf{e}_{0,\, 0}\vert \end{equation*}
Hence, $\bar{M} \geq  \vert \bf{R} \vert^{k(e)} \, (\vert \boldsymbol{\Gamma}_{0}\vert  + \cdots + \vert \boldsymbol{\Gamma}_{k(e)}\vert)$ } \par Application to the context of 
(169)-(170) follows after taking (Euclidean) norms and a first application of their sub-additive and sub-multiplicative properties.\footnote{Strictly speaking, the matrix norms are a special type of norm induced by the vector norm. Consult \cite{meyer2023matrix} for further explanation.}
\begin{multline}
\vert \mathbb{E}_{0}\, \hat{\bf{X}}^{\bf{J}}_{T}\vert \leq 
\bar{M}\vert \sum_{i=1}^{T}\bf{Q}_{11}\, \Lambda_{2}^{T-(i-1)}\, (\bf{Q}_{21}^{-1}\, \bf{Q}_{22}-\Lambda_{2}\, \bf{Q}_{21}^{-1}\, \bf{Q}_{22}\, \Lambda^{-1}_{1})\\ \times\,(1-\Lambda^{-1}_{1})^{-1} \,(\bf{P}_{11}\,\boldsymbol{\phi}_{2} + \bf{P}_{12}\, \boldsymbol{\phi}_{1})\,(1-\bf{R})^{-1}\, \bf{R}^{i-1}\, \vert \vert\hat{\bf{e}}_{0, \, 0}\vert
+ \\ 
\bar{M}\vert\bf{P}_{12}^{-1}\boldsymbol{\Lambda_{1}^{-1}}(\bf{I}-\boldsymbol{\Lambda_{1}^{-1}})^{-1}(\bf{P}_{11}\,\boldsymbol{\phi}_{2} + \bf{P}_{12}\, \boldsymbol{\phi}_{1})\,(\bf{I}-\bf{R})^{-1}\, \bf{R}^{T}\vert \, \vert \hat{\bf{e}}_{0, \, 0} \vert
+ \\ \bar{M}\vert \sum_{i=1}^{T}\bf{Q}_{11}\, \Lambda_{2}^{T-(i-1)}\, \bf{Q}_{21}^{-1}\, \boldsymbol{\phi}_{2}\, \bf{R}^{i-1}\vert \, \vert \hat{\bf{e}}_{0, \, 0}\vert 
\end{multline}
\begin{multline} 
\vert \mathbb{E}_{0}\hat{\bf{X}}^{\bf{S}}_{T}\vert \leq 
\bar{M}\, \vert \sum_{i=1}^{T}\bf{Q}_{21}\, \Lambda_{2}^{T-(i-1)}\, (\bf{Q}_{21}^{-1}\, \bf{Q}_{22}-\Lambda_{2}\, \bf{Q}_{21}^{-1}\, \bf{Q}_{22}\, \Lambda^{-1}_{1})\\ \times\,(1-\Lambda^{-1}_{1})^{-1} \,(\bf{P}_{11}\,\boldsymbol{\phi}_{2} + \bf{P}_{12}\, \boldsymbol{\phi}_{1})\,(1-\bf{R})^{-1}\, \bf{R}^{i-1}\vert \, \vert \hat{\bf{e}}_{0, \, 0}\vert
+ \\ \bar{M}\, \vert \sum_{i=1}^{T}\bf{Q}_{21}\, \Lambda_{2}^{i-1}\, \bf{Q}_{21}^{-1}\, \boldsymbol{\phi}_{2}\, \bf{R}^{i-1}\vert \, \vert \hat{\bf{e}}_{0, \, 0}\vert
\end{multline}
Factorization and further norm bounding give
\begin{multline}
\vert \mathbb{E}_{0}\,\hat{\bf{X}}^{\bf{J}}_{T}\vert  \leq 
\bar{M} \, \bigg\{\vert \bf{Q}_{11} \vert 
\, \vert\bf{Q}_{21}^{-1}\, \bf{Q}_{22}-\Lambda_{2}\, \bf{Q}_{21}^{-1}\, \bf{Q}_{22}\, \Lambda^{-1}_{1}\vert\\ \times \vert \bf{I}-\Lambda^{-1}_{1}\vert^{-1} \vert\bf{P}_{11}\,\boldsymbol{\phi}_{2} + \bf{P}_{12}\, \boldsymbol{\phi}_{1}\vert \,\vert\bf{I}-\bf{R}\vert^{-1}
\vert \bar{V} \vert \, + \\ 
\vert \bf{P}_{12}^{-1}\vert \vert\boldsymbol{\Lambda_{1}}\vert^{-1}\, \vert \bf{I}-\boldsymbol{\Lambda_{1}^{-1}}\vert^{-1} \,
\vert \bf{P}_{11}\,\boldsymbol{\phi}_{2} + \bf{P}_{12}\, \boldsymbol{\phi}_{1} \vert \, \vert\bf{I}-\bf{R}\vert^{-1}\, \vert \bf{R}\vert^{T}\, 
+ \\ 
\vert \bf{Q}_{11}\vert \vert \bf{Q}_{21}^{-1}\vert \, \vert \boldsymbol{\phi}_{2}\vert \, 
\vert \bar{V} \vert \bigg\}\, \vert \hat{\bf{e}}_{0, \, 0} \vert
\end{multline}
where, to treat a removable singularity, I defined 
\begin{equation}
 \bf{V}=\begin{cases}
     (\boldsymbol{{\Lambda}_{2}^{T}}-\bf{R}^{T})(\bf{I}-\boldsymbol{{\Lambda}_{2}^{-1}}\, \bf{R})  & \text{if $\boldsymbol{{\Lambda}_{2}}\neq \bf{R}$}\\ 
     T\, \bf{R}^{T} & \text{otherwise}
 \end{cases}   
\end{equation}
such that 
\begin{equation}
\vert \bf{V} \vert \leq \vert \bar{V} \vert 
\end{equation}
whilst 
\begin{equation}
 \bf{\bar{V}}=\begin{cases}
(\vert\boldsymbol{{\Lambda}_{2}\vert^{T}} + \vert\bf{R}\vert^{T})\vert\bf{I}-\boldsymbol{{\Lambda}_{2}^{-1}}\, \bf{R}\vert  & \text{if $\boldsymbol{{\Lambda}_{2}}\neq \bf{R}$}\\ 
T \, \vert\bf{R}\vert^{T} & \text{otherwise}
 \end{cases}   
\end{equation}
For convenience, I also made use of the matrix norm inequality $\vert \bf{A}^{-1}\vert \leq \vert A\vert^{-1}$ for $\bf{A}\neq 0$ and the fact that $\vert \bf{D}^{n} \vert =\vert D \vert^{n}$ for any diagonal matrix raised to a power $n \in \mathbb{R}$. 
In a similar fashion, the estimate for (155) is
\begin{multline}
\vert \mathbb{E}_{0}\, \hat{\mathbf{X}}^{\mathbf{S}}_{T}\vert\leq 
\bar{M}\, \bigg\{\vert \bf{Q}_{21} \vert
\, \vert\bf{Q}_{21}^{-1}\, \bf{Q}_{22}-\Lambda_{2}\, \bf{Q}_{21}^{-1}\, \bf{Q}_{22}\, \Lambda^{-1}_{1}\vert\\ \times \vert \bf{I}-\Lambda^{-1}_{1}\vert^{-1} \vert\bf{P}_{11}\,\boldsymbol{\phi}_{2} + \bf{P}_{12}\, \boldsymbol{\phi}_{1}\vert \,\vert\bf{I}-\bf{R}\vert^{-1}
\vert \bar{V} \vert 
+ \vert \boldsymbol{\phi}_{2}\vert \, 
\vert \bar{V}\vert + \\ 
\vert \bf{P}_{12}^{-1}\vert \vert\boldsymbol{\Lambda_{1}}\vert^{-1}\, \vert \bf{I}-\boldsymbol{\Lambda_{1}^{-1}}\vert^{-1} \,
\vert \bf{P}_{11}\,\boldsymbol{\phi}_{2} + \bf{P}_{12}\, \boldsymbol{\phi}_{1} \vert \, \vert\bf{I}-\bf{R}\vert^{-1}\, \vert \bf{R}\vert^{T}\, \vert \bigg\} \, \vert \hat{\bf{e}}_{0, \, 0}\vert
\end{multline}
To complete the proof of part (i), first realize that the supremum of both 
(175) and (179) can be broken into terms of 
$O(\vert\boldsymbol{{\Lambda}_{2}\vert^{T}})$ and $O(\vert\bf{R}\vert^{T} )$, owing to the intervening relations
(176)-(178). Second, sum (175) and (179)
to obtain a bound on the aggregate norm 
$(\vert\mathbb{E}_{0}\, \hat{\bf{X}}_{T}\vert \leq \vert\mathbb{E}_{0}\,  \hat{\bf{X}}^{\bf{J}}_{T}\vert + \vert\mathbb{E}_{0} \, \hat{\bf{X}}^{\bf{S}}_{T}\vert )$, due to the triangle inequality. Finally, 
call to mind the properties that $\vert\boldsymbol{\Lambda}_{2}\vert=\bar{\lambda}_{2}$ and $\vert\bf{R}_{2}\vert=\bar{\rho}$ to compute the grand upper bound, then conflate constants to finish.
\par For parts (ii) and (iii) the mathematical analysis is unchanged. The only difference is the starting point. With regard to statement (ii), the entire system collapses down to the second term in 
(173). For part (iii), \cite{blanchard1980solution} teaches us that\footnote{By way of warning, there are some differences in notation between the two papers.}  
\begin{equation} \mathbb{E}_{0}\, \hat{\bf{X}}^{\bf{J}}_{T} \approx \bf{Q}_{21}\, \Lambda^{T}_{2}\, \bf{Q}^{-1}_{21}\, \hat{\bf{X}}^{S}_{0}\end{equation}
\begin{equation} \mathbb{E}_{0}\,\hat{\bf{X}}^{\bf{S}}_{T} \approx \bf{Q}_{11}\, \Lambda^{T}_{2}\, \bf{Q}^{-1}_{11}\, \hat{\bf{X}}^{S}_{0}\end{equation}
Following the arguments defending 
(172), it follows that there must exist some $\bar{M}$ such that  $\vert \mathbf{X}^{S}_{0}\vert \leq \bar{M}\, \vert \mathbf{X}^{S}_{0, \, 0}\vert$. 
Hence, by the estimates explained above 
\begin{equation} \vert \mathbb{E}_{0}\, \hat{\mathbf{X}}^{\mathbf{J}}_{T}\vert \leq \bar{M}\, \vert \mathbf{Q}_{21}\vert \, \bar{\lambda}^{T}_{2}\, \vert \mathbf{Q}_{21}\vert^{-1} \, \vert \hat{\mathbf{X}}^{\mathbf{S}}_{0, \, 0}\vert= C\, \bar{\lambda}^{T}_{2}\,\vert \hat{\bf{X}}^{\mathbf{S}}_{0, \, 0}\vert\end{equation}
where $C \geq \bar{M}$.\footnote{In this case no (finite) \emph{a priori} bounds are available on $\bar{M}$, although it should be possible to construct a more general solution to rectify this.} The same playbook works for $\vert \mathbb{E}_{0}\, \hat{\mathbf{X}}^{\mathbf{S}}_{T}\vert$. 
Thereby the aggregate claim is proven.
\end{proof}
\begin{remark} All estimates of this type are sharp because as $T \rightarrow \infty$ the largest exponent dominates all others. \end{remark}
This surely constitutes a powerful contribution to both theoretical and \\ econometric understanding. The salient feature is no matter how great the non-linearity, eventually the asymptotics of the linear model predominate in stochastic equilibrium.
\par It is possible to be more precise for the subset of variables  
that end up too small to feature in the relevant first-order approximation. Before embarking on this non-linear avenue new terminology is needed. 
\begin{definition}
A stochastic process $\{S_{t}\}$ is said to exhibit super-exponential convergence if for $\mu$ a.e. $S_{0}$ and any $(a, \, C)$, 
$\mathbb{E}_{0}\, \vert S_{T} -S_{0}\vert \leq C\, a^{T}$, for $T$ sufficiently large. 
\end{definition}
\begin{theorem}
Consider a limiting sequence of DSGE models $\{f_{i}\}$, each \\ consistent with the hypotheses of the previous theorem. Suppose further that 
$\bf{X}_{t}=(\bf{x}_{t}(1), \, \bf{x}_{t}(2), \, \cdots , \, 
\bf{x}_{t}(\infty))'$,  where $\bf{x}_{t}(n)$
refers to variables of order $n$ around the limiting stochastic equilibrium. The following convergence rates \\ apply 
\begin{enumerate}[(i)]
\item In general, following an initial shock $\mathbf{e}_{0, \, 0}$: $$\vert\mathbb{E}_{0}\, 
\mathbf{\hat{x}}_{T}(n)\vert
\leq 
C\, (\max\{\bar{\lambda}_{2}, \, \bar{\rho}\})^{nT}\, \vert \mathbf{\hat{e}}_{0, \, 0}\vert^{n}$$
\item If $\mathbf{\hat{X}}^{S}_{t}= \emptyset$ then 
$\vert \mathbb{E}_{0}\, 
\mathbf{\hat{x}}_{T}(n)\vert \leq
C \, \bar{\rho}^{nT}\, \vert \mathbf{\hat{e}}_{0, \, 0}\vert^{n} $
\item The effect of the initial state of the economy $\mathbf{X}_{0, \, 0}$ 
declines such that 
\\  $\vert\mathbb{E}
_{0}\, \mathbf{\hat{x}}_{T}(n)\vert \leq
C \, \bar{\lambda}^{nT}_{2}\, \vert \mathbf{\hat{X}}_{0, \, 0}(1)\vert^{n} $
\item In all cases $\vert\mathbb{E}
_{0} \, \mathbf{\hat{x}}_{T}(\infty)\vert$ 
decays super-exponentially.
\end{enumerate}
\end{theorem}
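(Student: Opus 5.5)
The plan is to reduce everything to Theorem 3 by a perturbation (Volterra-type) expansion around the limiting stochastic steady state. Write the recursive equilibrium of each $f_{i}$ as its linear part (the matrix $\mathbf{B}$ and loading $\boldsymbol{\phi}$ of Theorem 3) plus higher-order terms. Taylor-expand $f$ in $(\hat{\mathbf{X}}_{t},\hat{\mathbf{e}}_{t})$ and collect terms by order. Then the order-$n$ block $\hat{\mathbf{x}}_{t}(n)$ obeys a linear system with the \emph{same} matrix $\mathbf{B}$, driven by a forcing term $\mathbf{F}_{n,t}$. This forcing term is a finite sum of multilinear forms in $\hat{\mathbf{x}}_{t}(m_{1}),\dots,\hat{\mathbf{x}}_{t}(m_{k})$ and the shocks, with $m_{1}+\dots+m_{k}=n$ and each $m_{j}<n$. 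Because the limiting sequence $\{f_{i}\}$ is what defines ``order $n$'', the coefficients of these forms are bounded uniformly along the sequence. The smoothness hypothesis is used here, and at order $n$ I would assume $f\in C^{n}$, matching the paper's convention that requirements track the order of perturbation.

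The core step is an induction on $n$. The base case $n=1$ is exactly Theorem 3(i)--(iii). For the inductive step, write $\kappa=\max\{\bar{\lambda}_{2},\bar{\rho}\}$. By hypothesis each factor satisfies $|\mathbb{E}_{0}\hat{\mathbf{x}}_{T}(m_{j})|\le C\kappa^{m_{j}T}|\hat{\mathbf{e}}_{0,0}|^{m_{j}}$, so the forcing obeys $|\mathbf{F}_{n,T}|\le C\kappa^{nT}|\hat{\mathbf{e}}_{0,0}|^{n}$. Taking expectations of products needs care. I would appeal to the mixing and ergodic properties of stochastic equilibrium, used at the start of the proof of Theorem 3, to argue that conditional covariances of the lower-order terms are themselves of higher order and can be absorbed into $\mathbf{x}(n+1)$ and beyond.

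The inhomogeneous linear system is then solved with the same diagonalisation $\mathbf{B}=\mathbf{P}^{-1}\boldsymbol{\Lambda}\mathbf{P}$ and the jump/state split as in the proof of Theorem 3. The forcing $\mathbf{F}_{n,t}$ now plays the role of $\hat{\mathbf{e}}$, with effective decay rate $\kappa^{n}$ in place of $\mathbf{R}$. The forward-solved jump part gives $\sum_{k}\boldsymbol{\Lambda}_{1}^{-k}\mathbf{F}_{n,T+k}=O(\kappa^{nT})$, since $|\boldsymbol{\Lambda}_{1}^{-1}|<1$. The backward state convolution gives $\sum_{i}\boldsymbol{\Lambda}_{2}^{T-i}\mathbf{F}_{n,i}$, which is the analogue of the matrix $\mathbf{V}$.

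This state convolution is where I expect the main obstacle. The convolution is only $O(\max\{\bar{\lambda}_{2}^{T},\kappa^{nT}\})$, and since $\bar{\lambda}_{2}>\kappa^{n}$ for $n\ge2$ this naive bound loses the claimed rate. To repair it I would use the definition of ``order $n$'': $\mathbf{x}(n)$ collects precisely the components that are $O(|\varepsilon|^{n})$ in the limiting model. The homogeneous first-order mode $\bar{\lambda}_{2}^{T}$ is already counted in $\mathbf{x}(1)$. The particular solution of the order-$n$ equation can then be chosen as the one generated purely by the forcing, with zero projection on the stable eigenspace. That particular solution is a geometric series in $\kappa^{n}$ with constants involving $(\mathbf{I}-\kappa^{-n}\boldsymbol{\Lambda}_{2})^{-1}$. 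The removable-singularity device from Theorem 3 (the $T\mathbf{R}^{T}$ case) handles resonances $\kappa^{n}=\lambda_{i}$ at the cost of a polynomial factor, which is absorbed into $C$.

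Parts (ii) and (iii) follow the same induction. In (ii) there are no states, so the forward solution alone gives $\bar{\rho}^{nT}$. In (iii) the base case is Theorem 3(iii) and the forcing is built from $\hat{\mathbf{X}}_{0,0}(1)$, giving $|\hat{\mathbf{X}}_{0,0}(1)|^{n}\bar{\lambda}_{2}^{nT}$. For (iv), note that $\mathbf{x}(\infty)$ is bounded by $\mathbf{x}(n)$ for every $n$, so $|\mathbb{E}_{0}\hat{\mathbf{x}}_{T}(\infty)|\le C_{n}\kappa^{nT}$ for all $n$. Given any $a>0$, choose $n$ with $\kappa^{n}<a$; then the bound is eventually below $Ca^{T}$, which is the definition of super-exponential convergence.
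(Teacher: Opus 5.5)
\textbf{Where your proof breaks.} The gap is at exactly the step you flagged, and the proposed repair does not work. In your framework the order-$n$ state components are predetermined. So you are not free to choose the particular solution with zero projection on the stable eigenspace. Their initial value (zero deviation before the shock, or $\mathbf{\hat{X}}_{0,0}(n)$ in part (iii)) pins down the homogeneous part, which is generically nonzero and of size $\vert\mathbf{\hat{e}}_{0,0}\vert^{n}$. That is a genuinely order-$n$ quantity, so it cannot be ``already counted in $\mathbf{x}(1)$''.

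A scalar example with precisely your structure (same linear part, forcing from lower orders) shows the loss is real. Take $x_{t+1}=\lambda x_{t}$ with $x_{0}=e$, and $y_{t+1}=\lambda y_{t}+x_{t}^{2}$ with $y_{0}=0$. Then $y_{T}=\lambda^{T-1}(1-\lambda^{T})e^{2}/(1-\lambda)$, which decays like $\lambda^{T}$, not $\lambda^{2T}$. No resonance is involved, so the $T\mathbf{R}^{T}$ device is beside the point.

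Carried out honestly, your induction gives only $O(\max\{\bar{\lambda}_{2},\kappa^{n}\}^{T}\vert\mathbf{\hat{e}}_{0,0}\vert^{n})$ for the state block and for any jumps loading on it. This is strictly weaker than (i) whenever $\bar{\lambda}_{2}>\kappa^{n}$, and weaker than (iii) for every $n\geq 2$. Part (iv), which you derive from (i) exactly as the paper does, falls with it.

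A secondary slip concerns expectations of products. The conditional covariance of two first-order terms is $O(\varepsilon^{2})$, the same order as their product, so it cannot be pushed into $\mathbf{x}(n+1)$. Its deviation from the stationary value does decay like $\kappa^{2T}$. However, the constant is proportional to the shock variance rather than $\vert\mathbf{\hat{e}}_{0,0}\vert^{2}$, unless responses are measured relative to the no-shock path.

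\textbf{How the paper avoids this.} The paper never meets the obstacle because it never sets up an order-$n$ system with matrix $\mathbf{B}$. It starts from the static representation
\[
\mathbf{\hat{x}}_{t}(n)\approx(\mathbf{\hat{x}}_{t}(1)+\mathbf{\hat{e}}_{t})\prod_{s=1}^{n-1}\bigl(\mathbf{c}_{s,n}(x)\,\mathbf{\hat{x}}_{t}(1)+\mathbf{c}_{s,n}(e)\,\mathbf{\hat{e}}_{t}\bigr),
\]
in which order-$n$ variables are contemporaneous degree-$n$ polynomials in first-order variables and shocks, with no propagation of their own. It then passes $\mathbb{E}_{0}$ through the product and uses sub-additivity and sub-multiplicativity to get $\vert\mathbb{E}_{0}\mathbf{\hat{x}}_{T}(n)\vert\le C\max\{\vert\mathbb{E}_{0}\mathbf{\hat{x}}_{T}(1)\vert^{n},\vert\mathbb{E}_{0}\mathbf{\hat{e}}_{T}\vert^{n}\}$. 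Finally it substitutes the first-order bounds of the previous theorem.

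The $\kappa^{nT}$ and $\bar{\lambda}_{2}^{nT}$ rates therefore come from a structural premise: higher-order terms do not feed back through $\boldsymbol{\Lambda}_{2}$. They do not come from order-$n$ dynamics derived by Taylor-expanding $f$. To reach the statement you would have to adopt that representation as a hypothesis. Your more faithful expansion of $f$ cannot deliver these rates for state variables.
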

\begin{proof}
The method is an extension of the last proof. Once the expansion has been formulated, the analysis
is straightforward. I will treat parts (i)-(iii) \\ simultaneously.
\par Some new vocabulary is needed, let $\bf{X}_{t}=\big(\bf{
x}_{t}(n), \, \bf{
X}_{t}(-n)\big)'$ 
with $\bf{X}_{t}(-n)=(\bf{x}_{t}(1), \, \cdots, \, \bf{x}_{t}(n-1), \cdots)'$. 
Apply this decomposition at the sub-vector level, in particular $\bf{X}_{t, \,t}=\big(\bf{x}_{t, \, t}(n), \, \bf{X}_{t, \, t}(-n)\big)'$. 
Further define families of vectors $\{ \bf{c}_{s, \, n}(\bf{x})\}^{n-1}_{s=1}$ and $\{ \bf{c}_{s, \, n}(\bf{e})\}^{n-1}_{s=1}$ conformal with $\bf{x}_{t}$ and $\bf{e}_{t}$ respectively. 
\par In general, 
\begin{equation}
 \bf{\hat{x}}_{t}(n) \approx (\bf{\hat{x}}_{t}(1)+ \bf{\hat{e}}_{t})\prod_{s=1}^{n-1}(\bf{c}_{s,\, n}(x)\, \bf{\hat{x}}_{t}(1)+ \bf{c}_{s,\, n}(e)\, \bf{\hat{e}}_{t})
\end{equation}
Passing initial expectations and then
repeated appeals to sub-additivity and sub-multiplicity lead to the conclusion that our target $\vert\mathbb{E}_{0}\,
\mathbf{\hat{x}}_{T}(n)\vert$ is an $n^{th}$-order polynomial in $\vert\mathbb{E}_{0}\,\bf{\hat{e}_{T}}\vert$. Monotonicity implies the estimate \begin{equation}
 \vert\mathbb{E}_{0}\,
\mathbf{\hat{x}}_{T}(n)\vert \leq C\, \max\{\vert\mathbb{E}
_{0}\,\bf{\hat{x}_{T}(1)}\vert^{n}, \, \vert\mathbb{E}
_{0}\,\bf{\hat{e}_{T}}\vert^{n}\}
\end{equation}
To finish, substitute in the estimates from Theorem 4 (i)-(iii) in place of \\ $\vert\mathbb{E}_{0}\,\bf{\hat{x}_{T}(1)}\vert$ and the error decay bound. 
Finally, evaluate the maximum, factorize and merge constants.
\par Item (iv) is even easier. Statement (i) implies that \\ $\vert \mathbb{E}
_{0}\, \mathbf{\hat{X}}_{T}(\infty)\vert < C\, (\max\{\bar{\lambda}_{2}, \, \bar{\rho}\})^{nT}\, \vert \mathbf{\hat{e}}_{0, \, 0}\vert^{n}$, for all $n>0$. Hence, sending $n \rightarrow \infty$ is sufficient to demonstrate compliance with Definition 2. The entire Theorem is proven. 
\end{proof}
\begin{remark}  By convention, disturbances are always first-order. The \\ preference innovation enters linearly. This means that all high-order $(n \geq 2)$ read $O(\mathbf{\hat{e}}_{T})\, O(\mathbf{\hat{X}}^{n-1}_{T})$. This is probably advantageous from the standpoints of both small sample econometric estimation and mathematical analysis. Nevertheless, a richer selection of shocks would permit the full panopticon of non-linear effects to blossom. \end{remark}
\begin{remark} The stipulation that the underlying shocks be mutually \\ uncorrelated, so as to make $\bf{R}$ diagonal, is not an imposition, since there are classical procedures to produce orthogonal shocks that should be adaptable to the present setting (\cite{hamilton2020time}.) \end{remark}
\begin{remark} The analysis would go through if we dropped the restriction that errors were $AR(1)$, with $\bar{\rho}$ re-imagined as $\bar{\lambda(e)}$ the magnitude of the modulus of the largest eigenvalue.  \end{remark}
\subsubsection{Discussion}
Some signification of the shocks is in order. The notion that initial \\ differentials unrelated to basal economic incentives decrease over time, has been a major theme in economics, since \cite{solow1956contribution} and \cite{swan1956economic}. This result can be seen as an extension of their model along three dimensions, the first is to rigorously show that this is true for a sequence of shocks, not just a one off impulse. Secondly, I show the result is applicable to Keynesian, as well as Classical forces.\footnote{This underlines the importance of projecting stochastic equilibrium theory to economies with endogenous capital accumulation. I predict these will be available in the next few years.} Thirdly, the amplitude of the deviation is unrestricted. The historical context of these papers proffers upon us a \\ natural example of the sort of one-off plausibly unanticipated shock, appropriate to invoke the asymptotics of (iii).\footnote{These are often referred to as "MIT shocks" by macroeconomists.} 
A sizable war that causes extensive capital destruction, would be sufficiently abnormal, difficult to forecast from historical data and 
large relative to the normal business cycle to be an interesting \\ candidate shock. 
 $(\mathbf{\hat{X}}_{0, \, 0}-\mathbb{E}\, X<0)$.
\cite{crafts1996economic} and \cite{harrison2000economics} provide historical accounts pertaining to the recovery of post-war Europe. 
\par A prescient peace time example, might be a financial or government debt crisis. Narrative evidence suggest these commotions are often unanticipated by many agents, see for example the extensive survey \cite{reinhart2011time}.
A common feature appears a sudden re-evaluation of risk, often called a "Minsky moment" \cite{wray2011minsky}, after Hyman Minsky. Indeed, this entered into the formal lexicon through models, such as \cite{gertler2020macroeconomic}, which are initiated with by a downward revaluation of existing assets. 
\par The convergence analysis provides a window on the polydromy phenomenon discovered in my first paper. The theoretical experiment involves starting from a situation with shocks small enough to consider a first-order approximation \\ accurate and systematically reducing the shock size in each  simulation long enough to precisely infer statistical properties. Initially, the rate of decline of price dispersion $(\Delta)$ roughly matches that of the other endogenous variables but as volatility vanishes the rate of decline of price dispersion starts to quicken \\ relative to its counterparts, asymptotically approaching twice their speed of decay.
\par The exponential decay of initial conditions should be a burden removed from applied researchers. It can be viewed as a consequence of the 
substitutions (72) and (73). It is likely to extend to flexible price models with heterogeneity. 
\par Finally, an explicit example of (ii) is exposited in Supplementary Appendix A.3. It is of more technical than policy interest. For long-time practicioners, Theorem 4 suggests that the previous focus on exogenous persistence, as the driving force of the business cycle may have been overstated, if not misplaced.\footnote{There has been a long-running concern, elucidated famously in \cite{chari2009new}, that spurious exogenous forces have compensated for missing endogenous dynamics, that I \\ uncovered in SELCKE.}
\subsection{Impulse Response Peak Bounds}
Like the previous, this subsection has two slices. The first sets out the underlying dynamic, theory. The second angles for scientific 
content.
\subsubsection{Theory}
\begin{theorem}Consider a DSGE model in a small noise limiting stochastic \\ equilibrium $(\vert \varepsilon \vert \rightarrow 0)$ characterized ($\mu$ a.s.) by a recursive equilibrium with mutual dependence
\begin{equation} \mathbb{E}_{t}\, \hat{\bf{X}}_{t+1}=\bf{B}
\, \hat{\bf{X}}_{t} + \boldsymbol{\phi}\, \hat{\bf{e}}_{t} \end{equation}
where $\bf{\hat{X}}_{t}$ are the endogenous variables, whilst $\mathbf{\hat{e}}_{t}$ consists of $AR(1)$ exogenous variables, 
and $\boldsymbol{\gamma}$ is a vector of parameters. 
Suppose further that the following time ordering of the realizations arises 
$\bf{\hat{X}}_{t}=(\bf{\hat{X}}_{t, \, t}, \, \bf{\hat{X}}
_{t-1, \, t}, \, \cdots , \, \mathbf{\hat{X}}_{t-k(X), \, t})'$ and $\mathbf{\hat{e}}_{t}=(\mathbf{\hat{e}}_{t, \, t}, \cdots , \mathbf{\hat{e}}_{t-k(e), \, t})'$.
Following a shock, the expected time after which the impulse response peaks is sharply bounded above by $k(e) +1$, regardless of the length of lag $k(X)$ embodied in the endogenous variables. 
\end{theorem}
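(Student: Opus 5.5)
The plan is to work directly with the saddle-path solution in the small noise limit, where the linear system holds exactly. Theorem 4 and the representation used in its proof are the starting point. Diagonalise $\mathbf{B}=\mathbf{P}^{-1}\boldsymbol{\Lambda}\mathbf{P}$ and split into unstable ($\boldsymbol{\Lambda}_{1}$) and stable ($\boldsymbol{\Lambda}_{2}$) blocks. Solve the jumps forward and the states backward.

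The expected impulse response $\mathbb{E}_{0}\hat{\mathbf{X}}_{T}$ to $\hat{\mathbf{e}}_{0,0}$ then breaks into two pieces. The first is a forward-looking piece $\sum_{k}\boldsymbol{\Lambda}_{1}^{-(k+1)}(\cdot)\,\mathbb{E}_{0}\hat{\mathbf{e}}_{T+k}$. The second is a backward-looking convolution $\sum_{i}\boldsymbol{\Lambda}_{2}^{i-1}(\cdot)\,\mathbb{E}_{0}\hat{\mathbf{e}}_{T-i}$.

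The key observation concerns the stacked exogenous vector $\hat{\mathbf{e}}_{t}=(\hat{\mathbf{e}}_{t,t},\dots,\hat{\mathbf{e}}_{t-k(e),t})'$. Its expected path after an innovation at $0$ is "filled in" during the first $k(e)+1$ periods, as the shock moves through the lag slots. After that, every component is of the form $\boldsymbol{\rho}^{T-j}\hat{\mathbf{e}}_{0,0}$ and is monotonically decaying. This is the step (172) from the previous proof, now used for its shape and not just as a bound.

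Next, I would argue that the lag structure $k(X)$ of the endogenous block adds no delay. The lagged coordinates $\hat{\mathbf{X}}_{t-j,t}$ are identities $\hat{\mathbf{X}}_{t-j,t}=\hat{\mathbf{X}}_{t-j+1,t-1}$. They therefore contribute eigenvalues that enter only through the companion structure. Under mutual dependence, these combine into the same $\boldsymbol{\Lambda}_{2}$ modes, real and in $(0,1)$ in the small-noise limit.

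So for $T>k(e)+1$ the forward piece is a fixed matrix times a monotonically decaying geometric sequence. The backward piece is a convolution of two decaying geometric sequences whose inputs have stopped growing. The conclusion I will aim for is that $T\mapsto|\mathbb{E}_{0}\hat{\mathbf{X}}_{T}|$ is nonincreasing for $T\ge k(e)+1$. That places the peak no later than $k(e)+1$.

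The expected obstacle is exactly this monotonicity of the backward convolution. The difference $\boldsymbol{\Lambda}_{2}^{T}-\mathbf{R}^{T}$ appearing in $\mathbf{V}$ from the proof of Theorem 4 can rise before it falls. My first attempt will be to exploit the small-noise limit. There the shocks affect states only through the jumps at the impact date, i.e. $\boldsymbol{\phi}_{2}$ loads only on the contemporaneous shock. With that, the state response is $\boldsymbol{\Lambda}_{2}^{T}$ times a vector fixed once the shock has been fully absorbed at $k(e)+1$, plus terms in $\mathbf{R}^{T}$. I would then show that each eigencomponent has a sign-preserving, decreasing coefficient, using that eigenvalues are real and positive.

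Sharpness is shown by an example. The two-period Taylor model of Proposition 11 has $k(e)=2$, which gives $U_{t}=(\psi_{t},\psi_{t-1},\psi_{t-2})'$. Choosing $\rho$ close to $1$ and the stable root small makes the response peak exactly at $k(e)+1$, so the bound is attained.
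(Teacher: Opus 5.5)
The gap is the step everything rests on: that $T\mapsto|\mathbb{E}_{0}\hat{\mathbf{X}}_{T}|$ is nonincreasing for $T\ge k(e)+1$. Neither the saddle-path decomposition nor the location of the spectrum implies this, and it fails exactly where you anticipated trouble.

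\emph{Persistent shocks.} A convolution of two decaying geometric sequences is hump-shaped even after its inputs have ``stopped growing''. Take $k(e)=0$ (so the claimed bound is $1$), one stable root $\lambda$, and an AR(1) shock of persistence $\rho$. Up to an index shift, your backward piece is $\sum_{i=0}^{T}\lambda^{T-i}\rho^{i}=(\lambda^{T+1}-\rho^{T+1})/(\lambda-\rho)$. In the removable case $\lambda=\rho=0.9$ (the $T\,\mathbf{R}^{T}$ branch of $\mathbf{V}$ in the convergence-rate proof) this is $(T+1)(0.9)^{T}$, which increases up to $T=8$.

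\emph{The repair.} Writing the state response as $\boldsymbol{\Lambda}_{2}^{T}$ times a fixed vector plus $\mathbf{R}^{T}$ terms still leaves a difference of geometric sequences. Eigencomponents that are each sign-preserving and decreasing need not sum to a decreasing sequence when their signs differ.

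\emph{Lags with $\rho=0$.} The failure persists with $\rho=0$ and real eigenvalues in $(0,1)$, so the claim that $k(X)$ ``adds no delay'' does not follow from spectral information. Take $\hat{\mathbf{X}}_{t}=(x_{t},x_{t-1})'$ with $x_{t+1}=1.7x_{t}-0.72x_{t-1}+e_{t}$ and white-noise $e$. The eigenvalues are $0.9$ and $0.8$, there are no jumps, and $k(e)=0$. The response $x_{T}=(0.9^{T}-0.8^{T})/0.1$ runs $1,\,1.7,\,2.17,\,2.47,\,2.63,\,2.69,\,2.69,\dots$ and peaks at $T=6$. A stable companion block can amplify for many periods. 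A peak bound would need an extra hypothesis of contraction type on the saddle-path transition, which stacked lags typically violate.

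\emph{Real positive eigenvalues.} The premise that stable eigenvalues become real and positive in the small-noise limit is also unfounded. The paper's own Taylor(2) polynomial has the stable root $\approx-0.435$, and the Taylor(4) polynomial has the stable complex pair $0.545e^{\pm1.604i}$.

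The paper takes a different and far more limited route. It works with scalars, sends $\rho\to0$, and inspects two extreme specifications:
\begin{itemize}
\item all weight on the oldest shock lag, $\mathbb{E}_{t}\hat{X}_{t+1}=\hat{e}_{t-k(e)}$, which peaks at $k(e)+1$ and doubles as its sharpness argument;
\item all endogenous persistence on one old lag, $\mathbb{E}_{t}\hat{X}_{t+1}=\lambda\hat{X}_{t-k(e)}+\hat{e}_{t}$, which peaks at $1$.
\end{itemize}
It then treats all other configurations as dominated by these two. It never proves monotonicity for mixtures. The AR(2) example above is such a mixture and peaks later than either extreme, so nothing in the paper's argument can be imported to close your step.

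Your sharpness example should also be replaced by the paper's degenerate specification. In the Taylor(2) model the stable roots are functions of the structural parameters ($0.478$ and $-0.435$ at the paper's calibration), not free quantities, and you have not shown where the peak lands when they are moved. Moreover, every $\hat{\psi}$ coefficient there carries the factor $1-\rho$, so the limit $\rho\to1$ makes the response degenerate rather than placing the peak at $k(e)+1$.
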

\begin{proof}At the outset, it is legitimate to focus on the central case of the limiting impulse response function, as though the model were certainly equivalent. This because as $(\vert \varepsilon \vert\rightarrow 0)$ the 
size of subsequent shocks becomes too small to \\ influence the peak of the impulse response, formally $\vert \varepsilon \vert << 
\vert \mathbb{E}_{t}\, \mathbf{\hat{X}}_{t+h}-\mathbb{E}_{t}\, \mathbf{\hat{X}}_{t+h'}\vert $, for any $h, \, h' >0$. 
The rest of the proof consists of carefully reviewing two \\ extreme cases. For notational simplicity, I work with scalars and send $\rho \rightarrow 0$. 
\\ \\ \textbf{Case 1: Maximal Lag Driven Error Persistence}
\\ \\ Consider the limiting system where the entire weight of expected inflation \\ dynamics rests on the error farthest in the past
\begin{equation} \mathbb{E}_{t}\, \hat{X}_{t+1} \rightarrow \hat{e}_{t-k(e)}\end{equation}
It is clear that the solution is 
\begin{equation} 
\hat{X}_{t+1} \rightarrow \hat{e}_{t-k(e)}\end{equation}
Hence, following a shock at time $t=0$
\begin{equation}
\mathbb{E}_{0}\, \hat{X}_{T} \rightarrow
 \begin{cases} 
 0
 &\text{if $T \neq k(e)$} \\ 
 \hat{e}_{0} &\text{if $T= k(e)+1$}
 \end{cases}
\end{equation}
Clearly, the impulse response peaks after $k(e)$ periods. This proves the claim justifying maximal achievable persistence. 
\par Turning to the irrelevance of lags in the endogenous variables to the bound, the maximum possible impact of endogenous persistence on the impulse \\ function must occur where all the endogenous dynamics come from the variable determined longest ago. 
\\ \\ \textbf{Case 2: Maximal Lag Driven Endogenous Persistence}\\ \\ 
For concreteness, let \begin{equation} \mathbb{E}_{t}\, \hat{X}_{t+1} \rightarrow \lambda \, \hat{X}_{t-k(e)}+ \hat{e}_{t}\end{equation}
with $\hat{X}_{t-k(e)}$ a state variable and the eigenvalue $-1< \lambda < 1$. 
Computing the exact solution yields\footnote{I solved the model by writing in lag operator terms to obtain 
$$\hat{X}_{t}=\bigg(\frac{(1+b)\, \mathbb{L}}{1-\lambda \, \mathbb{L}^{k(e)}}\bigg)\, \hat{e}_{t}$$
where $b \equiv \pi_{t}-\mathbb{E}_{t-1}\, \pi_{t}=0$,
with suitable care taken to distinguish realized from expected outcomes.} 
\begin{equation} \hat{X}_{t} \rightarrow \hat{e}_{t-1} + \lambda \, \hat{e}_{t-(k_{e}+1)} + \lambda^{2} \, \hat{e}_{t-(2k_{e}+1)} + \cdots + \lambda^{r} \, \hat{e}_{t-(rk_{e}+1)} + \cdots + \end{equation}
This implies the time profile 
\begin{equation}
\mathbb{E}_{0}\, \hat{X}_{T} \rightarrow
 \begin{cases} 
 0
 &\text{if $T \neq r\, k(e) +1$} \\ 
 \lambda^{r}\, \hat{e}_{0} &\text{if $T= r\, k(e)+1$}
 \end{cases}
\end{equation}

Thus, the implied peak arises at $t=1 \leq k(e) +1$ meaning it contributes nothing to the peak time estimate, as asked for. 
Finally, the focus on extreme cases proves sharpness. 
\end{proof}
\begin{remark} The result is robust to relaxing the $AR(1)$ assumption provided the error peaks on impact.\end{remark}
\newpage
\subsubsection{Applications}
The practical implications for the New Keynesian framework are laid out below 
\begin{theorem}
Consider the collection of economies in stochastic equilibrium detailed in this paper. Suppose further the size of aggregate shocks $(\vert \varepsilon \vert\rightarrow 0)$. Let these models be hit by an initial shock $\hat{\psi}_{0}$. The expected peak of the impulse response $H$ obeys the following bounds 
\begin{enumerate}[i]
\item For Calvo and menu cost pricing $H\leq 2$
\item For Taylor pricing $H \leq 2M-1$
\item With Rotemberg pricing $H=0$.
\end{enumerate}
\end{theorem}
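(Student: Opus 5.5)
The natural route is to apply Theorem 6 model by model. Each case reduces to reading off the maximal lag $k(e)$ of the exogenous disturbance in the relevant recursive equilibrium, and the bound is then $H\le k(e)+1$. Theorem 6 applies because we are in the small noise limit with AR(1) errors, so the endogenous lag length $k(X)$ is irrelevant. The work is bookkeeping: for each pricing scheme, identify the shock vector $\mathbf{U}_{t}$ that enters the limiting first-order recursive equilibrium. Cancellations at ZINSS matter here, because the wall-crossing discussion shows that the limiting stochastic equilibrium, not the spurious cancelled form, is the correct object.

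For part (i), I would use the Calvo Phillips curve (40). Its error enters only through $\hat{\psi}_{t}-\hat{\psi}_{t-1}$, and the aggregate demand curve (41) adds only $\hat{\psi}_{t}$. Hence $\mathbf{U}_{t}=(\psi_{t},\psi_{t-1})'$, so $k(e)=1$ and $H\le 2$. For menu costs, Proposition 2 gives the equivalence with Calvo under an endogenous reset probability, and its proof records $U_{t}=(\psi_{t},\psi_{t-1})'$. The extra state $\alpha_{t-1}$ is endogenous, and Theorem 6 says it cannot raise the bound. The non-existence result (Proposition 5) forbids a joint small noise limit in both shocks, so I would interpret $|\varepsilon|\to0$ as aggregate noise only, with idiosyncratic noise held fixed; Proposition 2 still delivers the Calvo-form recursion in that case.

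For part (ii), Proposition 6 gives $U_{t}=(\psi_{t},\dots,\psi_{t-2(M-1)})'$, so $k(e)=2M-2$ and Theorem 6 yields $H\le 2M-1$. As a consistency check I would use the $M=2$ case of Proposition 9. There $\mathbf{U}_{t}=(\psi_{t},\psi_{t-1},\psi_{t-2})'$, so $H\le3=2M-1$, matching the explicit error $\hat{u}^{2}_{t}$, whose oldest term is $\hat{\psi}_{t-2}$.

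Part (iii) is the one I expect to be the main obstacle, because Rotemberg pricing has not been set out in the preceding text. I would write the quadratic adjustment-cost model, which has no price dispersion and no reset-price state. Linearizing around ZINSS gives the purely forward-looking Phillips curve $\pi_{t}=\beta\mathbb{E}_{t}\pi_{t+1}+\kappa\hat{y}_{t}$ together with (41). The system then has no state variables, and the disturbance enters only contemporaneously, so $k(e)=0$. This is case (ii) of Theorem 4 ($\mathbf{\hat{X}}^{S}=\emptyset$). The solution is a linear function of $\hat{\psi}_{t}$ alone, so the response is proportional to $\rho^{T}$, which is monotone and peaks on impact; hence $H=0$.

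The remaining subtlety is the indexing mismatch: Theorem 6 bounds the peak by $k(e)+1$, which would give $1$ here rather than $0$. So I would argue directly from the explicit solution instead of citing the theorem. I would read the ``$+1$'' in Theorem 6 as the one-period delay from the expectational timing $\mathbb{E}_{t}\hat{X}_{t+1}$, and that delay is absent when the shock feeds contemporaneously into a jump-only system. Sharpness in (i) and (ii) then follows from the extreme cases in the proof of Theorem 6.
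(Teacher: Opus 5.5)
Your proposal is correct and follows the paper's own route. You read off the oldest shock lag $k(e)$ in each model's recursive equilibrium: $\mathbf{U}_{t}=(\psi_{t},\psi_{t-1})'$ for Calvo and menu costs, and $\mathbf{U}_{t}=(\psi_{t},\dots,\psi_{t-2(M-1)})'$ for Taylor. You then apply the general peak bound $H\le k(e)+1$ and settle the Rotemberg case from its explicit, state-free solution (Appendix A.3), exactly as the paper does. Your separation of aggregate from idiosyncratic noise in the menu-cost case is a sensible extra precision.

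One caution: drop the closing claim that sharpness in (i) and (ii) follows from the extreme cases in the peak-bound proof. Those cases are artificial systems, not Calvo or Taylor economies, and the paper itself notes these upper bounds may not be attained by particular models. The theorem only asserts the bounds, so nothing else is lost.
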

\begin{proof}
All are applications of the last theorem. To prove part (i) apply Theorem 4 to Proposition 4 SELCKE for Calvo and Proposition 4 here for menu costs. For part (ii) use Proposition 5 for Taylor contracts. Part (iii) is the consequence of the exact form solution in Section A.3.
\end{proof}
\begin{remark} It can also be seen as an application of part (iii) of Theorem 4 because of the well-known absence of any state variables in the model.
\end{remark}
This justifies Result 2. Further hermeneutics are in order. The small noise limit assumption is necessary to make clean predictions, otherwise we would be in the world with a distribution of business cycle lengths. These are upper bounds that may not be met by particular models. The overarching lesson is that 
\begin{principle}
Stochastic Equilibrium, where arbitrarily low volatility common shocks are laid on top of a mean field of individual specific uncertainty, offers a powerful tool to obtain tractable dynamic predictions and to estimate causal effects of introducing heterogeniety into benchmark models of aggregate \\ uncertainty.
\end{principle}
\begin{remark}
With small aggregate noise $(\vert \varepsilon \vert \rightarrow 0)$, one can think of menu costs as the basic Calvo model with the addition of idiosyncratic demand shocks, if the cost parameter $c$ is calibrate to match, for example, the average reset frequency $(1-\alpha)$. This is common practice in the literature, where beyond its empirical merits, it is used as a device to estimate the extent of selection effects, associated with the choice of whether to reset prices, particular to menu costs.    
\end{remark}
Verily, a multiplicity of applications awaits. 
\section{Statistical and Numerical Approximation}
The first mission is to derive the peculiar limiting distribution of first-order approximate solutions when the stochastic steady state approaches its non-stochastic counterpart. The second assignment zooms out to the exceptional statistical trait possessed by any approximation local to stochastic equilibrium. The final act constructs a consistent regression based test to determine whether a DSGE model has a stochastic equilibrium.
\subsection{Super-Consistency}
There are three stages in this subsection. The first proves the Theorem. The \\second discusses potential drawbacks in terms of the required assumptions, whilst the third suggests improvements to econometric procedure. 
\begin{theorem}Consider a DSGE model consistent with the hypotheses of \\ Theorem 4. Suppose that as $\varepsilon^{2} \rightarrow 0$, there exists a subset of parameters $\boldsymbol{\gamma_{0}}
\subseteq \boldsymbol{\gamma}$, such that asymptotically (as $T \rightarrow \infty$) $\boldsymbol{\hat{\gamma}_{0}} \sim \mathcal{N}(\boldsymbol{\gamma_{0}}, \, \mathbf{K}\,(\boldsymbol{\Sigma }\boldsymbol{\Sigma}')^{-1}\, \mathbf{K}'/T)$, then in the limit where 
$\vert \varepsilon \vert \rightarrow 0$, $\boldsymbol{\hat{\gamma}_{0}} \sim \mathcal{N}(\boldsymbol{\gamma_{0}}, \, \mathbf{K}\,(\boldsymbol{\Sigma }\boldsymbol{\Sigma}')^{-1}\, \mathbf{K}'/T^{2})$ and thus \\ $\hat{\boldsymbol{\gamma}}
=\boldsymbol{\gamma} + O(1/T)$.
\end{theorem}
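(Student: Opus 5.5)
The plan is to turn the standard $\sqrt{T}$ asymptotics into $T$-asymptotics by tying the small-noise scale $\varepsilon$ to the sample length $T$. The key claim to establish is that the only place $T$ enters the covariance is through the information matrix $\boldsymbol{\Sigma}\boldsymbol{\Sigma}'$. When second-order terms vanish, the effective information carried by each observation about $\boldsymbol{\gamma_{0}}$ should scale linearly with $T$ rather than being fixed.

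\textbf{Step 1: where the information about $\boldsymbol{\gamma_{0}}$ comes from.} Start from the hypothesised asymptotic normality of $\boldsymbol{\hat{\gamma}_{0}}$ for fixed $\varepsilon$. By Theorem 4(i), first-order deviations around the stochastic steady state decay geometrically at rate $\max\{\bar{\lambda}_{2},\bar{\rho}\}$. Theorem 5 adds that order-$n$ terms decay at rate $(\max\{\bar{\lambda}_{2},\bar{\rho}\})^{nT}$ and scale like $\vert \hat{\mathbf{e}}\vert^{n}$.

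The assumption that $\boldsymbol{\gamma_{0}}$ is identified only through terms that survive as $\varepsilon^{2}\rightarrow 0$ means the score for $\boldsymbol{\gamma_{0}}$ has no $O(\varepsilon^{2})$ noise component. What remains is the first-order signal, whose cumulative contribution over a sample of length $T$ is a sum of geometrically weighted innovations.

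\textbf{Step 2: rescale the problem.} Normalise the regressors by $\varepsilon$. The estimation error then decomposes into two pieces:
\begin{itemize}
\item a martingale term, of order $\varepsilon^{2}\sqrt{T}$, which vanishes in the limit;
\item a design term, of order $T$.
\end{itemize}
To show the second piece is of order $T$, I would use the long-run (spectral density at zero) variance of the linearised process from Theorem 4. Geometric convergence guarantees this long-run variance is finite, so $\sum_{t}\vert \mathbb{E}_{0}\hat{\mathbf{X}}_{t}\vert$ is summable.

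\textbf{Step 3: combine the two orders.} Putting the orders together, the estimator minus $\boldsymbol{\gamma_{0}}$ should behave like $\mathbf{K}\,(\boldsymbol{\Sigma}\boldsymbol{\Sigma}')^{-1/2}\,Z/T$ with $Z$ standard normal, by a functional CLT (Billingsley Theorem 37.8) applied to the normalised partial sums. This delivers the covariance $\mathbf{K}\,(\boldsymbol{\Sigma}\boldsymbol{\Sigma}')^{-1}\mathbf{K}'/T^{2}$. Chebyshev then gives $\hat{\boldsymbol{\gamma}}_{0}=\boldsymbol{\gamma}_{0}+O_{p}(1/T)$.

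\textbf{Main obstacle: the order of limits.} I expect the hard step to be justifying the interchange of $\vert\varepsilon\vert\rightarrow 0$ and $T\rightarrow\infty$. If $\varepsilon$ is sent to zero first, the CLT degenerates. I would therefore take a joint limit with $\varepsilon^{2}T\rightarrow 0$ while $T\rightarrow\infty$, in the spirit of Construction 1. I would verify that the Theorem 5 bounds on higher-order terms hold uniformly along this sequence, using the constants $C$ there, which are independent of $T$. This uniformity is exactly what makes the remainder negligible relative to $1/T$.
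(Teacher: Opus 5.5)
There is a genuine gap, and it sits exactly where the rate $T$ is supposed to appear (Steps 2--3). For an OLS/MLE-type estimator the error is (design)$^{-1}\times$(score). In stochastic equilibrium the regressors and the innovations are driven by the same shocks. So write $\mathbf{x}_t=\varepsilon\tilde{\mathbf{x}}_t$ and $u_t=\varepsilon\tilde u_t$, with $\tilde{\mathbf{x}}_t,\tilde u_t$ of order one. Then
\[
\hat{\boldsymbol{\gamma}}_{0}-\boldsymbol{\gamma}_{0}=\Big(\varepsilon^{2}\sum_{t}\tilde{\mathbf{x}}_{t}\tilde{\mathbf{x}}_{t}'\Big)^{-1}\varepsilon^{2}\sum_{t}\tilde{\mathbf{x}}_{t}\tilde{u}_{t}=\Big(\sum_{t}\tilde{\mathbf{x}}_{t}\tilde{\mathbf{x}}_{t}'\Big)^{-1}\sum_{t}\tilde{\mathbf{x}}_{t}\tilde{u}_{t}.
\]
Your score is indeed $O_p(\varepsilon^{2}\sqrt{T})$, but your design is $O_p(\varepsilon^{2}T)$, not $O(T)$. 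The $\varepsilon^{2}$ cancels, and the error is $O_p(T^{-1/2})$ for every $\varepsilon$.

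Your own justification for an order-$T$ design points the other way. Geometric decay makes $\sum_t|\mathbb{E}_0\hat{\mathbf{X}}_t|$ summable, so any non-random contribution to the design is $O(1)$. The only part that grows with $T$ is the stochastic one, and it carries $\varepsilon^{2}$. For the same reason, Billingsley's Theorem 37.8, applied to a stationary strongly mixing sequence with finite long-run variance, yields $\sqrt{T}$-normalised partial sums, not $Z/T$. The paper itself cites that theorem as the source of the $O(1/\sqrt{T})$ pace. A $T$ rate would need a unit-root-type design with $\sum_t\hat{\mathbf{X}}_t\hat{\mathbf{X}}_t'=O_p(T^{2})$, which stochastic equilibrium excludes.

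Even taking your orders at face value, they do not combine to $1/T$: $\varepsilon^{2}\sqrt{T}/T=\varepsilon^{2}/\sqrt{T}=o(T^{-3/2})$ under $\varepsilon^{2}T\to 0$. Along that joint limit the stochastic part of the design is itself $O(\varepsilon^{2}T)\to 0$, which contradicts your order-$T$ design. Nothing in the argument produces the constant $\mathbf{K}(\boldsymbol{\Sigma}\boldsymbol{\Sigma}')^{-1}\mathbf{K}'$ either.

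A concrete check shows this cannot be repaired by bookkeeping or by the choice of joint limit. Take $x_t=\rho x_{t-1}+\varepsilon\xi_t$ with $|\rho|<1$; this is exactly what inflation reduces to in the Rotemberg solution of Appendix A.3. Started from its stationary law (or from $0$), $x_t=\varepsilon z_t$ with $z_t$ free of $\varepsilon$, so
\[
\hat{\rho}-\rho=\frac{\sum_{t}z_{t-1}\xi_{t}}{\sum_{t}z_{t-1}^{2}}
\]
holds exactly. Hence $\sqrt{T}(\hat{\rho}-\rho)\Rightarrow\mathcal{N}(0,1-\rho^{2})$ for every $\varepsilon$ and along every joint limit.

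If instead you keep an $O(1)$ initial condition and impose $\varepsilon^{2}T\to 0$, the error becomes $O_p(\varepsilon)$, driven by the summable transient. The rate is then set by $\varepsilon$ rather than $T$, and the limit is still not $\mathcal{N}(\cdot,\mathbf{K}(\boldsymbol{\Sigma}\boldsymbol{\Sigma}')^{-1}\mathbf{K}'/T^{2})$. A genuine $T$ rate needs an ingredient the hypotheses do not supply. Examples would be nonstationary regressors, or an estimator that exploits the two-point limit of the errors in the paper's remark rather than OLS/MLE.

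For comparison, the paper's proof uses none of your steps: no rescaling, no FCLT, no joint limit. It asserts that, once second-order terms vanish, $\mathbf{K}(\boldsymbol{\Sigma}\boldsymbol{\Sigma}')^{-1}\mathbf{K}'/T\approx\sqrt{(\mathbf{K}(\boldsymbol{\Sigma}\boldsymbol{\Sigma}')^{-1}\mathbf{K}'/T)^{2}}$, read as a Taylor expansion of the covariance in $\sqrt{\boldsymbol{\Sigma}}/T$. It then obtains $O(1/T)$ from a first-order expansion of the normal law. That identity holds trivially and leaves the power of $T$ unchanged, so importing it would not close your gap either.
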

\begin{proof} The distributional equivalence is a consequence of the fact that \\ $\mathbf{K}\,(\boldsymbol{\Sigma }\boldsymbol{\Sigma}')^{-1}\, \mathbf{K}'/T \approx \sqrt{(\mathbf{K}\,(\boldsymbol{\Sigma }\boldsymbol{\Sigma}')^{-1}\, \mathbf{K}'/T)^{2}}$ once second-order terms have \\ vanished, which can be viewed as a Taylor expansion of variance-covariance matrix in $\sqrt{\boldsymbol{\Sigma}}/T$. Finally, the super-consistency claim follows from a first-order Taylor expansion of the normal distribution.\end{proof}
\begin{remark} The precepts covers any parameter whose estimator is \\ consistent and subject to a standard $O(1/\sqrt{T})$ central limit theorem. This will be the case for any Autoregressive Moving Average (ARMA) process, which is stable and invertible. Stability follows from stochastic equilibrium. Invertibility of the errors, requires analyzing the solutions of lag polynomial on a case-by-case basis. \cite{dias2018estimation} provide an iterative algorithm. \\ Otherwise, further economic information would be required for identification (consult \cite{shumway2025time} for an explanation of the equivalence \\ problem under non-invertiblity.) Note that the inference idea of \cite{funovits2024} is unworkable for lack of non-linearity.
\end{remark}
\begin{remark}It is critical to specify a subspace rather than the whole parameter set since in the $\sqrt{\varepsilon}$ limit the price elasticity of demand for each variety $\theta$ is unidentified 
because it does not feature in the underlying coefficient expressions (40)-(42) or (44)-(45). In actuality, parameters are typically over-identified but this is a more applied topic susceptible to coverage elsewhere.
\end{remark}
\begin{remark} The crucial asymptotic can be understood as convergence in \\ distribution of each error $\hat{e}_{i, \, t}$ to a 
Markov chain on $\{-\varepsilon_{i}, \, \varepsilon_{i}\}$ with transition matrix $$\bf{R}_{i}=\begin{bmatrix} \rho_{i} & 1-\rho_{i} \\  1-\rho_{i}& \rho_{i}\end{bmatrix}$$ 
Testing the goodness-of-fit of this distribution to the estimated errors would prove the strictest possible scientific discipline of our model. 
Testing for normality would be more forgiving (less powerful). For example, 
implementing \cite{jarque1980efficient}), focusing on the third moments, would be the most direct for the weaker condition that $\varepsilon^{2} \rightarrow 0$. \end{remark}
It has been common since \cite{hamilton1989new} to model the economy as Markov switching between recession and expansion. However, there is an implicit \\ restriction here that recessions last as long as expansions, which is 
historically challenging, as basic calculations from official NBER data (available at \\ https://www.nber.org/research/business-cycle-dating) reveal that, from \\ 1946-2025, the US economy was only in recession for $124$ months, a mere $12.9\%$ of the total. Nevertheless, (169) and (170) tell us that, the distribution of the endogenous variables would still be continuous provided there were state variables, like lags of inflation, as there are in the three main New Keynesian models (Sections 3-5). Overall, this new limit should be viewed as a lower bound on variance (upper bound on precision) and be subjected to extensive computational interrogation.
\par The new limiting argument suggests a specific optimization procedure. 
\begin{principle}
In the limit where $\vert \varepsilon \vert \rightarrow 0$ the following estimation problem emerges 
$$\min_{\boldsymbol{\gamma}}\sum_{t=0}^{T}\Vert \hat{\bf{X}}_{t+1}-\mathbb{E}_{t}\, \hat{\bf{X}}_{t+1}\Vert $$
subject to $\mathbb{E}_{t}\, \hat{\bf{X}}_{t+1}=\bf{B(\boldsymbol{\gamma})}
\, \hat{\bf{X}}_{t} + \boldsymbol{\phi(\boldsymbol{\gamma})}\, \hat{\bf{e}}_{t}$
\end{principle}
It is well-known that in the limit $\varepsilon^{2} \rightarrow 0$, Maximum Likelihood (MLE) converges on Ordinary Least Squares (OLS). The new idea here is that OLS and hence MLE converge on the minimum $(L^{1})$ distance estimator. Hence, if one believes $\vert \varepsilon \vert \rightarrow 0$ this is the efficient estimation method. Minimum distance has a long history as an outlier robust estimator (see \cite{kennedy2008guide}). It is common practice to use this objective function in indirect inference estimation, where authors have focused on first order effects at the expense of any non-linearity (\cite{le2016testing}). This could offer another theoretically justified means to obtain more precise estimates and powerful hypothesis tests when the focus is on linear relationships between aggregate variables. 
\subsection{Optimal Approximation}
The centerpiece of this subheading is the proof that the stochastic equilibrium neighborhood is always the best location from which to take perturbations. Following close behind is an easily implementable regression based strategy for consistent estimation. Lastly, discourse breaks out about the ramifications of varying the loss function.
\begin{theorem} Consider the family of polynomial approximations of degree $k \geq 1$ to a recursive equilibrium consistent with the impositions of Theorem 4
$$\mathbb{E}_{t}\, \mathbf{X}_{t+1}= f(\mathbf{X}_{t}, \, \boldsymbol{\gamma}, \, \mathbf{e}_{t}) \: \: \: \mu \: a.s.$$ 
with $f \in C^{k}$
\begin{equation}
\mathbb{E}_{t}\, \bf{\hat{X}}_{t+1}(k) =(\bf{\hat{x}}_{t}(1)+ \bf{\hat{e}}_{t})\prod_{s=1}^{k-1}
 (\boldsymbol{\varsigma}_{s,\, k}(x)\, \bf{\hat{x}}_{t}(1)+ \boldsymbol{\varsigma}_{s,\, k}(e)\, \bf{\hat{e}}_{t})
\end{equation}
Direct attention to the the $k^{th}$ order Taylor expansion in the vicinity of the Stochastic Equilibrium, which can be rendered in the multi-index notation \\ (explained in \cite{reed1980methods}) as
\begin{equation} \mathbb{E}^{*}_{t}\, \mathbf{\hat{X}}_{t+1}(k) =\sum_{a=0}^{k}
\mathbf{\hat{X}}_{t}^{a}(1)(\partial^{a}f(\mathbb{E}\mathbf{X}))/a!\end{equation} 
For every Euclidean goodness of fit metric 
$\Vert \cdot \Vert$, such that in the associated norm $\mathbb{E}\vert \mathbf{X}\vert< \infty$, ponder the appertaining optimal approximation problem. 
\begin{equation}\min_{\boldsymbol{\varsigma}_{k}}\sum_{t=0}^{T}\Vert \hat{\bf{X}}_{t+1}-\mathbb{E}_{t}\,\hat{\bf{X}}_{t+1}\Vert \end{equation}
where $\boldsymbol{\varsigma}_{k}=(\boldsymbol{\varsigma}_{1, \, k}, \, \cdots, \, \boldsymbol{\varsigma}_{k, \, k})'$. 
It has the unique solution that \\ $\hat{\mathbb{E}}_{t}\, \hat{\bf{X}}_{t+1}(\boldsymbol{\hat{\varsigma}}_{k}) \rightarrow \mathbb{E}^{*}_{t}\, \mathbf{\hat{X}}_{t+1}(k)$, as $T \rightarrow \infty$, $\mu \: a.s.$
\end{theorem}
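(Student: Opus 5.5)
The plan is to cast the optimal approximation problem as an M-estimation (minimum distance) problem and use the ergodicity supplied by stochastic equilibrium (Theorem 1) to pass from sample to population objective. For each fixed coefficient vector $\boldsymbol{\varsigma}_{k}$, Birkhoff's ergodic theorem applied along the stochastic equilibrium path gives
$$\frac{1}{T}\sum_{t=0}^{T}\Vert \hat{\bf{X}}_{t+1}-\mathbb{E}_{t}\,\hat{\bf{X}}_{t+1}(\boldsymbol{\varsigma}_{k})\Vert \rightarrow \mathbb{E}\,\Vert \hat{\bf{X}}_{t+1}-\mathbb{E}_{t}\,\hat{\bf{X}}_{t+1}(\boldsymbol{\varsigma}_{k})\Vert \equiv L(\boldsymbol{\varsigma}_{k})$$
$\mu$ a.s. The integrability hypothesis $\mathbb{E}\vert \mathbf{X}\vert<\infty$ in the chosen norm is exactly what makes $L$ finite. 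Because the approximating family is polynomial in $\boldsymbol{\varsigma}_{k}$, I would upgrade this pointwise convergence to uniform convergence on compacts with a standard uniform law of large numbers for stationary ergodic sequences (a dominating envelope follows from the same moment condition). The argmin theorem then delivers $\boldsymbol{\hat{\varsigma}}_{k}\rightarrow \arg\min L$, provided the population minimiser is unique and well separated.

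The core step is to identify $\arg\min L$ with the coefficients of the Taylor expansion of $f$ at $\mathbb{E}\mathbf{X}$. Decompose $\hat{\bf{X}}_{t+1}=f(\mathbf{X}_{t},\cdot)+\mathbf{v}_{t+1}$, where $\mathbb{E}_{t}\mathbf{v}_{t+1}=0$. Expand $f$ around the stochastic steady state to order $k$. This gives $\mathbb{E}^{*}_{t}\hat{\mathbf{X}}_{t+1}(k)$ plus a remainder $R_{k}$ of order $\vert\hat{\mathbf{X}}_{t}\vert^{k+1}$. By Theorem 5 (iv), and the bound $O((\max\{\bar\lambda_{2},\bar\rho\})^{nT})$ in Theorem 5 (i) for $n\geq k+1$, this remainder lies in higher-order components that decay faster than every retained term. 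The loss therefore splits into a part that depends on $\boldsymbol{\varsigma}_{k}$ only through the difference $\mathbb{E}_{t}\hat{\bf{X}}_{t+1}(\boldsymbol{\varsigma}_{k})-\mathbb{E}^{*}_{t}\hat{\mathbf{X}}_{t+1}(k)$, plus the martingale-difference innovation $\mathbf{v}_{t+1}$ and the asymptotically negligible $R_{k}$.

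For a Euclidean metric I would then argue as follows. The innovation is orthogonal to every $\Omega_{t}$-measurable polynomial. Hence $L(\boldsymbol{\varsigma}_{k})-L(\boldsymbol{\varsigma}^{*}_{k})$ is, up to the remainder, a nonnegative quantity that vanishes only when the two polynomials coincide $\mu$ a.s. Continuity of the invariant distribution, which the paper assumes for all shock distributions, then forces equality of coefficients. This gives uniqueness.

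Centring at $\mathbb{E}\mathbf{X}$ rather than at any other point (for instance the non-stochastic steady state) is what removes the first-order bias. Re-expanding a Taylor polynomial about any other centre shifts the constant and lower-order coefficients by terms proportional to the displacement, and these leave a nonvanishing mean error under the ergodic measure.

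The hard part is handling a general Euclidean goodness-of-fit metric rather than just squared loss. For non-quadratic losses the orthogonality argument does not directly give the minimiser, and for the $L^{1}$-type loss (cf. Principle 3) the natural minimiser is a conditional median rather than a mean. My first attempt would be to exploit symmetry of the limiting innovation distribution, namely the two-point Markov chain limit noted in the remark after Theorem 7. Under that symmetry, conditional median and conditional mean coincide, so every convex Euclidean metric is minimised by $\mathbb{E}_{t}$, and hence by its $k$th-order Taylor representative. Failing that, I would restrict to strictly convex norms and argue via the first-order condition that the residual must have zero conditional mean along the polynomial directions. The remaining technical gap is the well-separatedness and identification condition, which I would establish from the mutual dependence assumption of Theorem 1 guaranteeing non-degenerate $\hat{\mathbf{x}}_{t}(1)$.
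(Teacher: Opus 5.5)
Your M-estimation frame (ergodic theorem, uniform law of large numbers, argmin theorem) is sensible. But the step that carries all the content fails, and it cannot be repaired at a fixed shock size. That step identifies $\arg\min L$ with the Taylor polynomial $\mathcal{T}_k$ of $f$ at $\mathbb{E}\mathbf{X}$.

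First, the remainder $R_k$ is not negligible. The higher-order decay theorem you invoke (parts (i)--(iv), rate $(\max\{\bar{\lambda}_{2},\bar{\rho}\})^{nT}$) bounds $\mathbb{E}_{0}\hat{\mathbf{x}}_{T}(n)$ at horizon $T$ after a one-off shock $\hat{\mathbf{e}}_{0,0}$. Along the stationary sample path, however, shocks keep arriving. Under the invariant measure, $\mathbb{E}\Vert R_k(\hat{\mathbf{X}}_t)\Vert$ is a fixed, generically positive number that does not shrink with the sample size. You are conflating the two $T$'s.

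Second, even a small remainder moves the minimiser. For squared loss, $L(\boldsymbol{\varsigma})-L(\boldsymbol{\varsigma}^{*})=\mathbb{E}\Vert \mathcal{T}_k-P_{\boldsymbol{\varsigma}}\Vert^{2}+2\,\mathbb{E}\langle R_k,\,\mathcal{T}_k-P_{\boldsymbol{\varsigma}}\rangle$, and the cross term is linear in the coefficient difference. So the population minimiser is the $\mu$-projection of $f$ onto the polynomial family. That projection depends on higher derivatives of $f$ and higher moments of $\mu$, not only on derivatives at $\mathbb{E}\mathbf{X}$.

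Concretely, $X_{t+1}=\tfrac12\sin X_t+\xi_{t+1}$ with Gaussian $\xi$ is a contraction whose invariant law is symmetric with a density. Hence $\mathbb{E}X=0$ and $\mathcal{T}_1(x)=\tfrac12x$. Yet the least-squares slope is $\tfrac12\mathbb{E}[X\sin X]/\mathbb{E}[X^{2}]<\tfrac12$, because $x\sin x<x^{2}$ for $x\neq0$. The theorem's own objective fails too: in one dimension it is $\sum_t\vert\cdot\vert$, with optimal intercept $0$ by symmetry. Its first-order condition $\mathbb{E}[X_t\,\mathrm{sign}(X_{t+1}-bX_t)]=0$ fails at $b=\tfrac12$, since $\mathbb{E}[\mathrm{sign}(X_{t+1}-\tfrac12X_t)\mid X_t=x]=1-2P(\xi\le\tfrac12(x-\sin x))$ has the sign of $-x$. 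Nothing in the smoothness, mixing or mutual-dependence hypotheses excludes this; it only needs $f$ nonlinear on the support of a nondegenerate $\mu$.

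Your centring argument does not help. A full degree-$k$ polynomial family is the same set of functions whatever the centre, and $\mathcal{T}_k$ at $\mathbb{E}\mathbf{X}$ is itself biased. For $k=1$ its mean error is the Jensen gap $\mathbb{E}f(\mathbf{X}_t,\cdot)-f(\mathbb{E}\mathbf{X})=\mathbb{E}\mathbf{X}-f(\mathbb{E}\mathbf{X})$. This is nonzero whenever the stochastic steady state is not a fixed point of $f$, which is the very case the theory is about. A least-squares fit with a free constant, by contrast, has zero mean error. At best you get agreement up to terms that vanish in a small-noise limit, which is a different theorem.

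You rightly flag the loss as a second problem, but your fixes do not close it. The objective is a sum of norms, not of squared norms, so even for the Euclidean norm the orthogonality argument (which is specific to squared loss) is unavailable. The first-order condition requires a subgradient of the norm at the residual (the sign, in one dimension) to be orthogonal to the polynomial directions. That is a median-type projection, not a mean-type one. Restricting to strictly convex norms changes nothing, since in one dimension every norm is a multiple of $\vert\cdot\vert$.

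The symmetry fallback also fails. Under the two-point chain in the remark after Theorem 7, given $\hat{e}_t=\varepsilon$ the innovation is $2(1-\rho)\varepsilon$ with probability $\rho$ and $-2\rho\varepsilon$ with probability $1-\rho$. This is asymmetric unless $\rho=\tfrac12$: for $\rho>\tfrac12$ the conditional median of $\hat{e}_{t+1}$ is $\varepsilon$, while its conditional mean is $(2\rho-1)\varepsilon$. That chain is a small-noise limit in any case.

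Two smaller points:
\begin{enumerate}
\item $L$ is finite, with an integrable envelope, only if $\mathbb{E}\vert\mathbf{X}\vert^{k}<\infty$, not merely $\mathbb{E}\vert\mathbf{X}\vert<\infty$.
\item The product family in the statement is homogeneous of degree $k$, with no free coefficients at all when $k=1$. Speaking of constant and lower-order coefficients therefore already replaces it by the full polynomial family. In product form the coefficients are unidentified for $k\geq3$ (rescale one factor by $c$ and another by $c^{-1}$), and mutual dependence cannot cure that.
\end{enumerate}

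Finally, the paper's proof does not supply the missing identification either. It reads the convergence to $\mathbb{E}^{*}_{t}\hat{\mathbf{X}}_{t+1}(k)$ directly off strong mixing. Its uniqueness step asserts $\Vert\hat{\mathbf{X}}_{t+1}-\mathbb{E}^{*}_{t}\hat{\mathbf{X}}_{t+1}(k)\Vert<\Vert\hat{\mathbf{X}}_{t+1}-\mathbb{E}'_{t}\hat{\mathbf{X}}_{t+1}(k)\Vert$ after a triangle equality whose strictly positive second term gives the reverse inequality. Carried through correctly, your framework delivers the $\mu$-projection of $f$ (or its median analogue), not the Taylor polynomial.
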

\begin{proof}
Existence follows because strong-mixing of $\mathbb{E}_{t}\bf{X}_{t+1}$ in Stochastic \\ Equilibrium implies that $\hat{\mathbb{E}}_{t}\hat{\bf{X}}_{t+1}(\boldsymbol{\hat{\varsigma}}_{k}) \rightarrow \mathbb{E}^{*}_{t}\mathbf{\hat{X}}_{t+1}(k)$, $\mu \: a.s.$ and thus in every metric induced by a convergent norm. Uniqueness comes about by noting that for any alternative  $\mathbb{E}^{'}_{t}\mathbf{\hat{X}}_{t+1}(k) \not\equiv \mathbb{E}^{*}_{t}\mathbf{\hat{X}}_{t+1}(k)$ optimality ensures the triangle \\ inequality becomes an equality  
\begin{equation} \Vert\mathbf{\hat{X}}_{t+1} -\mathbb{E}^{*}_{t}\mathbf{\hat{X}}_{t+1}(k) \Vert= \Vert\mathbf{\hat{X}}_{t+1} -\mathbb{E}^{'}_{t}\mathbf{\hat{X}}_{t+1}(k) \Vert + \Vert \mathbb{E}^{*}_{t}\mathbf{\hat{X}}_{t+1}(k)-\mathbb{E}^{'}_{t}\mathbf{\hat{X}}_{t+1}(k)\Vert\end{equation}
The second term is strictly positive thanks to the Euclidean concordance \\ between the metric and the topology of the state space. Thus, the almost surely strict inequality 
\begin{equation} \Vert\mathbf{\hat{X}}_{t+1} -\mathbb{E}^{*}_{t}\mathbf{\hat{X}}_{t+1}(k) \Vert< \Vert\mathbf{\hat{X}}_{t+1} -\mathbb{E}^{'}_{t}\mathbf{\hat{X}}_{t+1}(k) \Vert \end{equation}
justifies the uniqueness claim and completes the entire proof. 
\end{proof}
 Besides, the obvious accommodation changes to the dependent variable, there are further substantive benefits of being able to vary the metric.\\ \cite{hansen2011robustness} explicate how high order metrics (induced from $L^{n}$ $n\geq 3$) can be useful to model attitudes to risk and uncertainty. This offers practical applications to asset pricing and financial regulation (see also \\ \cite{danielsson2011financial}), which are too subtle for the 
 representative agent framework here. 
This opens up the possibility for regression based construction of optimal approximations but I leave this development for future work. 
\subsection{Numerical Analysis}
This closing session has two goals. The first task is to carry Theorems 4 and 5 over to the entire invariant measure. The final installment uses these to build an arbitrarily powerful statistical test for whether a simulated system has a Stochastic Equilibrium. 
\subsubsection{Distribution Recovery}
\begin{theorem}
Take a standard DSGE model described in Theorem 4, 
allow that the error process $\{\bf{e}_{T}\}$ is known, at least up to its invariant measure. There exists algorithms capable of generating a sequence of approximate measures $\{ \mu_{T}\}$ where $\mu_{T} \rightarrow \mu$ with respect to Wasserstein metrics $(W)$, at the following rates
\begin{enumerate}[i]
\item    $O(\max\{\bar{\lambda}_{2}, \, \bar{\rho}\})^{T})$ in $W^{1}$
\item    $O(\max\{\bar{\lambda}_{2}, \, \bar{\rho}\})^{nT})$ in $W^{n}$
\end{enumerate}
\end{theorem}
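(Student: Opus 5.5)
The natural algorithm is forward simulation from an arbitrary initial state. Fix an initial condition $\mathbf{X}_{0,\,0}$ in the support and draw the errors $\{\mathbf{e}_{t}\}$ from their known invariant measure, so the error process starts in stationarity. Iterate the recursive equilibrium $\mathbb{E}_{t}\,\mathbf{X}_{t+1}=f(\mathbf{X}_{t},\,\boldsymbol{\gamma},\,\mathbf{e}_{t})$ forward, together with the realised expectational errors. Let $\mu_{T}$ be the law of $\mathbf{X}_{T}$. In practice this law can be approximated by the empirical measure of many independent replications, but I will first treat the exact law and add the sampling error only at the end.

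The core of the argument is a synchronous coupling. Run a second copy $\mathbf{Y}_{t}$ of the system started from a draw $\mathbf{Y}_{0}\sim\mu$, driven by the same error sequence. Stationarity then gives $\mathbf{Y}_{T}\sim\mu$ for every $T$. This coupling is admissible in the infimum defining the Wasserstein distance, so
$$W^{n}(\mu_{T},\,\mu)\leq\Big(\mathbb{E}\,\vert\mathbf{X}_{T}-\mathbf{Y}_{T}\vert^{n}\Big)^{1/n}.$$
Because the shocks are common, the difference $\mathbf{X}_{T}-\mathbf{Y}_{T}$ is driven only by the discrepancy in initial states. That discrepancy decays at the rate given in Theorem 4 (iii), namely $C\,\bar{\lambda}_{2}^{T}\,\vert\hat{\mathbf{X}}_{0,\,0}-\hat{\mathbf{Y}}_{0,\,0}\vert$. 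The parts of the state tied to past shocks contribute decay at rate $\bar{\rho}^{T}$, as in Theorem 4 (i). Taking the worse of the two rates gives part (i) in $W^{1}$, provided $\mathbb{E}\,\vert\mathbf{X}\vert<\infty$ under $\mu$. That moment bound is exactly the integrability the statement assumes.

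For part (ii), I expand both paths in the order decomposition $\mathbf{X}_{t}=(\mathbf{x}_{t}(1),\,\mathbf{x}_{t}(2),\,\cdots)'$. Theorem 5 bounds the order-$n$ component of the discrepancy by $C\,(\max\{\bar{\lambda}_{2},\,\bar{\rho}\})^{nT}$ times the $n$th power of the initial gap. I would then argue that the $W^{n}$ cost is governed by this $n$th-order component. The gap raised to the $n$th power and integrated produces the exponent $nT$, and the constant absorbs the $n$th moment of $\mu$, which the perturbation-order hypotheses guarantee. If an empirical measure is used, the sampling error can be driven below the geometric term by taking the number of replications to grow suitably with $T$.

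\textbf{Main obstacle.} The hardest step is justifying that the linear, and order-$n$, contraction estimates of Theorems 4 and 5 hold globally along the coupled pair rather than only near the stochastic steady state. I would first try to use the mixing property invoked in the proof of Theorem 4: after a random but almost surely finite burn-in, both paths enter the neighbourhood where the local approximation is valid. The burn-in can then be absorbed into $C$, using a uniform-in-initial-condition version of Theorem 1's stability.

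Part (ii) has a further weak point, and the argument may need strengthening here. Read literally, $\big(\mathbb{E}\,\vert\mathbf{X}_{T}-\mathbf{Y}_{T}\vert^{n}\big)^{1/n}$ only decays at the first-order rate. The $nT$ exponent therefore has to be read as a statement about $\big(W^{n}\big)^{n}$, or obtained by applying the bound to the order-$n$ components. I would state explicitly whichever of these the theorem intends.
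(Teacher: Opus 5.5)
Your route differs from the paper's, and the difference exposes a genuine gap: the algorithm cannot be run from the data of the theorem.

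The recursive equilibrium $\mathbb{E}_{t}\,\mathbf{X}_{t+1}=f(\mathbf{X}_{t},\,\boldsymbol{\gamma},\,\mathbf{e}_{t})$ pins down only conditional expectations. The realised expectational errors of the jump block $\mathbf{X}^{J}$ are endogenous: they are exactly what the unknown equilibrium decision rule determines. You cannot fix the jump components of $\mathbf{X}_{0,\,0}$. If you iterate forward with any expectational errors other than the equilibrium ones, the path leaves the stable manifold. It then diverges along the eigendirections outside the unit circle that the Blanchard--Kahn count in Theorem 1 assigns to the jumps. So ``iterate forward together with the realised expectational errors'' presupposes the solved law of motion. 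What your coupling then establishes is a mixing rate for the true equilibrium process, not an algorithm that recovers $\mu$ from the primitives. Simulating the Blanchard--Kahn solution of the linearisation instead would be implementable, but its invariant measure is not $\mu$, so the error would not vanish as $T\rightarrow\infty$.

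The paper's proof is built around exactly this forward-looking step:
\begin{enumerate}[i]
\item it writes the model as an infinite functional equation;
\item it imposes the terminal condition $\mathbf{X}_{t+T}=\bar{\mathbf{X}}$ at the stochastic steady state and truncates the past;
\item it eliminates future terms by successive inversion to reach $\mathbf{X}_{t}=g_{3}(\mathbf{e}_{t},\,\mu(\mathbf{e}_{t});\,\boldsymbol{\gamma})$;
\item it integrates over a mesh of shock values to form $\mu_{T}$.
\end{enumerate}
There $T$ is a truncation horizon, and the geometric rate is the truncation error. The paper controls it with the linear-rate result, which is its Theorem 3 (your ``Theorem 4''). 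Your coupling is still worth keeping in a second role. Once truncation gives a decision rule within $\delta_{T}=O(\max\{\bar{\lambda}_{2},\,\bar{\rho}\}^{T})$ of the true one, a synchronous-coupling contraction with factor $\kappa<1$ yields the standard perturbation bound $W^{1}(\mu_{T},\,\mu)\leq C\,\delta_{T}/(1-\kappa)$. That makes explicit the step the paper delegates to Poincar\'e duality and Theorem 3.

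Two further points. First, your globalisation sketch does not close the local-to-global step, for four reasons:
\begin{enumerate}[i]
\item An almost surely finite burn-in time $\tau$ can be absorbed into a deterministic $C$ only if something like $\mathbb{E}\,r^{-\tau}<\infty$ holds with $r=\max\{\bar{\lambda}_{2},\,\bar{\rho}\}$; almost sure finiteness alone gives no rate.
\item Paths that enter the region where the linearisation is accurate need not stay there unless shocks are small.
\item Theorem 1 provides no uniform quantitative stability.
\item The linear-rate theorem bounds $\vert\mathbb{E}_{0}\,\hat{\mathbf{X}}_{T}\vert$, whereas your coupling needs $\mathbb{E}\,\vert\mathbf{X}_{T}-\mathbf{Y}_{T}\vert$. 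These agree only when the synchronous difference is deterministic, i.e.\ in the linear model.
\end{enumerate}
The paper rests on the same linear-dominance assertion, made at the start of the proof of its Theorem 3. So this is not where you fall short of it, but you should not present the burn-in argument as settling the matter.

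Second, on part (ii) your closing paragraph is right, and you should commit to it. Because $W^{1}\leq W^{n}$, a literal $O(\max\{\bar{\lambda}_{2},\,\bar{\rho}\}^{nT})$ bound on $W^{n}$ would force the same rate on $W^{1}$. For your scheme that already fails for a scalar AR(1) $X_{t+1}=\lambda X_{t}+\xi_{t+1}$ started at $x_{0}\neq 0$: comparing means gives $W^{n}(\mu_{T},\,\mu)\geq W^{1}(\mu_{T},\,\mu)\geq\vert\lambda\vert^{T}\vert x_{0}\vert$. The exponent $nT$ can therefore only describe $(W^{n})^{n}=\inf\mathbb{E}\,\vert\mathbf{X}-\mathbf{Y}\vert^{n}$. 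That is the coherent reading of the paper's one-line appeal to Villani's definition and the order-$n$ theorem (its Theorem 4, your ``Theorem 5''). Drop the intermediate claim that the $W^{n}$ cost is governed by the order-$n$ component. The $nT$ comes from raising the first-order gap $C\,\max\{\bar{\lambda}_{2},\,\bar{\rho}\}^{T}$ to the $n$th power, and that first-order discrepancy dominates $\vert\mathbf{X}_{T}-\mathbf{Y}_{T}\vert$.
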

\begin{proof}
Focus on part (i), as part (ii) is a corollary. The proof is based on a logical troika. First comes a straightforward derivation. Second is a standard programming exercise, whilst thirdly, innovative topological reasoning makes these procedures watertight. 
\par For the technical derivation, recursively substitute to obtain the infinite functional equation form of the model where
\begin{multline} \bf{X}_{t}=g_{0}(\cdots, \, \bf{X}_{t-1}, \, \bf{X}_{t+1}, \, \cdots ; \, \cdots, \, \bf{e}_{t-1}, \, \bf{e}_{t}, \, \bf{e}_{t+1}, \cdots; \\ \, \cdots, \, \mu(\bf{e}_{t-1}), \, \mu(\bf{e}_{t}), \, \mu(\bf{e}_{t+1}) \cdots , \, \boldsymbol{\gamma})\end{multline}
which lives on an appropriate Banach space. Explicitly, for $T$ very large, let $\mu(\bf{e}_{t+T}) = \mu(\bf{e})$,  $\bf{X}_{t+T} = \bar{\bf{X}}$ and 
$\bf{e}_{t-l}=\bar{e}$.  
Truncate the entire past, \\ so
$\bf{X}_{t-l} = \bar{\bf{X}}$ and $\bf{e}_{t-l}=\bar{e}$, $\forall l \geq 0$. 
The approximation simplifies to  
\begin{equation} \bf{X}_{t}=g_{1}(\bf{X}_{t+1}, \, \cdots , \,\bf{X}_{t+T-1}, \, \bf{e}_{t}; \, \boldsymbol{\mu}(\bf{e}_{t}) , \, \boldsymbol{\gamma})\end{equation}
This relationship can be carried forward, for example 
\begin{equation} \bf{X}_{t+1}=g_{1}(\bf{X}_{t}, \, \bf{X}_{t+2}, \, \cdots , \,\bf{X}_{t+T-1}, \, \bf{e}_{t}, \, \mu(\bf{e}_{t}) ; \, \boldsymbol{\gamma})\end{equation}
$$\vdots $$
\begin{equation} \bf{X}_{t+T-2}=g_{1}(\bf{X}_{t}, \,\bf{X}_{t+T-1}, \, \bf{e}_{t}; \, \mu(\bf{e}_{t}), \, \boldsymbol{\gamma})\end{equation}
Invert the last functional equation to remove terms in $\bf{X}_{t+T-1}$.
Applying this transformation successively produces an equality of the form 
\begin{equation} \bf{X}_{t}=g_{2}(\bf{X}_{t},\, \bf{e}_{t}, \, 
 ; \, \mu
(\bf{e}_{t}),\, \boldsymbol{\gamma})\end{equation}
The last line of calculation is to solve for $\bf{X}_{t}$ to yield an integral equation
\begin{equation} \bf{X}_{t}=g_{3}(\bf{e}_{t}, \, 
\mu(\bf{e}_{t}) ; \, \boldsymbol{\gamma})\end{equation}
For the second stage, sum $\bf{X}_{t}$ over a mesh in
$\epsilon$. In the limit where the distance between the points $\vert \epsilon \vert \rightarrow 0$, a candidate for a consistent approximation $\mu_{T} \rightarrow \mu$ emerges via the fundamental theorem of calculus.\footnote{To be fastidious the Lebesgue version would be required to cope with the potential for unbounded domains of integration. Nevertheless, the differential structure ensures that these can be approximated by a sequence of Riemann summands.} 
\par It is necessary to justify all the proceeding steps rigorously. The integration scheme works because (197) is a Bochner integral, thus it can be approached via Lebesgue measures (\cite{aliprantis2007infinite}). Poincare Duality ensures the abridged system yields solutions that are close to the true model (\cite{hatcher2002algebraic}). Specifically, the cohomology of the manifolds means that all the inversions will produce mappings that are accurate modulo measure zero removable singularities. Theorem 3 ensures that the distance is $O(\max\{\bar{\lambda}_{2}, \, \bar{\rho}\})^{T})$, as \\ required.
\par Multiplicity of solutions will converge towards one another and
can be \\ handled by taking a convex combination over the multitude \\
$(\mathbf{X}_{t} = \sum_{i}h_{i}\mathbf{X}_{i, \, t}: \sum_{i}h_{i}=1)$ to yield the desired approximation. Part (ii) follows swiftly from inspecting the definition in \cite{villani2021topics} and applying the same arguments as Theorem 4. The entire theorem is solemnized. 
\end{proof}
\begin{remark}In practice, the integration proposal may prove unwieldy, \\ particularly for larger models (\cite{judd2023numerical}). It may be preferable to use an alternative method, for example based on a (Schauder) basis of the function space. The most famous is the space of polynomials established by the celebrated \\ Stone-Weierstrass Theorem (see \cite{aliprantis2007infinite}); clearly $\epsilon \rightarrow 0$ corresponds to the number of terms in the approximation $N \rightarrow \infty$. There are alternative bases available, described in detail, with implementation, in \cite{kincaid2009numerical}. Machine learning methods, such as \cite{yang2025structural}, are also gaining popularity and should prove useful in this regard. Optimization of computational routines is beyond the scope of this study. 
\end{remark}
\begin{corollary}Suppose there are two approximate solutions $\hat{\mu}^{1}_{T}$ and $\hat{\mu}^{2}_{T}$, \\ as $T \rightarrow \infty$, $\vert \mu^{1}_{T}-\mu^{2}_{T}\vert \leq C \, \max\{\bar{\lambda}_{2}, \, \bar{\rho}\})^{2T})$. \end{corollary}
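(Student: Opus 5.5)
The natural route is a triangle inequality through the true invariant measure $\mu$, sharpened by the second-order structure established in Theorem 5. Let $\bar{r}=\max\{\bar{\lambda}_{2}, \, \bar{\rho}\}$. The first step is the crude bound
$$\vert \mu^{1}_{T}-\mu^{2}_{T}\vert \leq \vert \mu^{1}_{T}-\mu\vert + \vert \mu-\mu^{2}_{T}\vert .$$
By Theorem 10 (i), each term on the right is $O(\bar{r}^{T})$. This only gives a rate of $\bar{r}^{T}$, not the claimed $\bar{r}^{2T}$, so a finer argument is needed.

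The plan is to show that the first-order discrepancy from $\mu$ is common to both approximations and cancels in the difference. Both $\mu^{1}_{T}$ and $\mu^{2}_{T}$ come out of the truncation construction in the proof of Theorem 10. There the entire past is set to $(\bar{\mathbf{X}}, \, \bar{e})$ and the far future to $\bar{\mathbf{X}}$, and the truncation error is controlled by Theorem 3 applied to the forgotten initial condition. At first order this error is the linear response $\mathbb{E}_{0}\,\hat{\mathbf{X}}_{T}$ of the linearization (163). That response is fully determined by $\mathbf{B}$, $\boldsymbol{\phi}$, $\mathbf{R}$ and the common truncation point, and none of these depend on which solution of the reduced system is selected. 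I would therefore write
$$\mu^{i}_{T}=\mu+\Delta^{(1)}_{T}+\Delta^{(2),i}_{T},$$
where $\Delta^{(1)}_{T}$ is the shared linear term of order $\bar{r}^{T}$. The selection-dependent remainder $\Delta^{(2),i}_{T}$ involves only the variables $\mathbf{x}_{T}(n)$ with $n \geq 2$, so by Theorem 5 (i) and (iii) it is $O(\bar{r}^{2T})$. Subtracting the two decompositions removes $\Delta^{(1)}_{T}$ and leaves the desired bound after constants are merged.

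The main obstacle is justifying that the first-order term really is identical across solutions, and not merely of the same order. The proof of Theorem 10 explicitly allows a multiplicity of solutions to the inverted system (194)--(196), handled by convex combination. I would argue that at first order around the stochastic steady state, Theorem 1 pins down a unique saddle-path solution under the Blanchard--Kahn conditions. Any two branches of $g_{2}$ must therefore agree in their linear part and can differ only at quadratic order in the truncation error. If this fails, the fallback is to fix the metric as $W^{2}$ and invoke Theorem 10 (ii) with $n=2$ for each approximation separately. That yields $O(\bar{r}^{2T})$ directly through the triangle inequality, at the cost of stating the corollary in the Wasserstein-2 sense.
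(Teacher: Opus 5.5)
Your argument rests on $\Delta^{(1)}_{T}$ being \emph{identical} for the two approximations, and the justification you give does not establish this. Blanchard--Kahn uniqueness in Theorem~1 concerns the first-order expansion in the size of deviations from the stochastic steady state: it pins down one linear saddle-path solution. Two nonlinear branches that share it differ at second order in the deviation (shock) size. That is a statement about the scale $\vert\varepsilon\vert$, not about $T$, so nothing in it makes their discrepancy decay like $\bar{r}^{2T}$.

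Nor can the higher-order decay result bound $\Delta^{(2),i}_{T}$. In the paper that result is Theorem~4; Theorem~5 is the impulse-response peak bound. It controls the order-$n$ component of the response at horizon $T$ to a date-$0$ impulse or initial state. It says nothing about a difference between two solution branches of the truncated system.

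If ``linear part'' instead means the first-order expansion in the truncation error, the argument defeats itself. The nonsingular linearization that makes that first-order term unique also makes the solution of the truncated system locally unique, by the implicit function theorem. So two branches that both converge to $\mu$ coincide for large $T$, and the bound is trivial. Distinct branches can only merge at $\mu$ where the linearization is singular, which is exactly where the uniqueness you invoke is lost.

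If the two approximations come from different constructions, the first-order terms genuinely differ. This covers different anchors $(\bar{\mathbf{X}},\bar{e})$, or the basis and machine-learning alternatives in the remark after Theorem~9. For example, take $X_{t}=\lambda X_{t-1}+e_{t}$ truncated $T$ periods back at anchors $\bar{X}^{1}\neq\bar{X}^{2}$. The two approximating laws are translates by $\lambda^{T}(\bar{X}^{1}-\bar{X}^{2})$, so their $W_{1}$ distance is exactly $\vert\lambda\vert^{T}\vert\bar{X}^{1}-\bar{X}^{2}\vert$. ``Same order'' versus ``identical'', the obstacle you flagged yourself, is precisely where the proof breaks.

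The $W^{2}$ fallback does not close the gap. It changes the statement, and it only moves the work to Theorem~9(ii).
\begin{itemize}
\item Read literally with standard Wasserstein distances, (ii) would already contain the corollary. Since $W_{1}\leq W_{n}$, it would also give faster-than-any-exponential $W_{1}$ convergence as $n\to\infty$, making part (i) superfluous.
\item The reading consistent with the $\vert\hat{\mathbf{e}}_{0,0}\vert^{n}$ scaling of Theorem~4, on which (ii) is said to rest, is that $W^{n}$ denotes the $n$-th power of the transport cost. Under that reading your bound is just the square of an $O(\bar{r}^{T})$ distance, i.e.\ the crude triangle-inequality rate again.
\end{itemize}

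For comparison, the paper never looks for a shared linear term. Its proof combines Theorem~9 with ``basic calculus relating minima to stationary points''. (Your ``Theorem~10'' should be Theorem~9; the paper's Theorem~10 is the existence test.) The approximations are treated as $O(\bar{r}^{T})$ displacements from a point where the relevant criterion is minimized, hence stationary. The first-order effect therefore drops out and the discrepancy is quadratic, $O(\bar{r}^{2T})$. The remark after the corollary extends this to $\bar{r}^{2nT}$ when the first non-vanishing derivative has order $2n$.

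That stationarity is the ingredient your argument lacks. If you adopt it, state which map is stationary at $\mu$ and why the two approximations differ only through its argument. Stationarity of a loss at its minimizer, on its own, bounds differences in loss values, not the distance between two near-minimizers.
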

\begin{proof}Follows immediately from Theorem 9 and basic calculus relating minima to stationary points.
\end{proof}
\begin{remark}Suppose $f \in C^{\infty}$, the same arguments 
guarantee a convergence rate of $\max\{\bar{\lambda}_{2}, \, \bar{\rho}\})^{2nT})$ for some $n \geq 1$, depending on which even derivative is the first non-zero. This idea can be easily adapted to alternative expansions for models with less regularity.
\end{remark}
\subsubsection{Existence Testing}
\begin{theorem} For a standard DSGE model operating under the numerical scheme in Theorem 9, there exists a test with standard normal asymptotics and full asymptotic power to test the hypothesis that a Stochastic Equilibrium exists against the alternative that it does not as $\epsilon \rightarrow 0$, 
for large $T$. \end{theorem}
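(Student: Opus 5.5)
The plan is to build the test from the short-run response to a small white-noise impulse, injected after a long burn-in, exactly as advertised in the introduction. The statistic compares the post-impulse decay of a simulated path with the geometric envelope that Theorem 3 guarantees under the null.

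\textbf{Simulation and statistic.} Run the numerical scheme of Theorem 9 for a burn-in of length $T_{0}$. By Theorem 9(i), the simulated measure $\mu_{T_{0}}$ is then within $O((\max\{\bar{\lambda}_{2},\bar{\rho}\})^{T_{0}})$ of the invariant measure $\mu$ in $W^{1}$. Next draw $N$ independent replications. In each one, hit the system with a white-noise impulse of size $\epsilon$ and record the response $\hat{\mathbf{X}}^{(j)}_{h}$ at a fixed short horizon $h$, measured against a paired control run with no impulse.

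Under the null, Theorem 3(i) together with Theorem 4 implies two things. First, the normalized response $\epsilon^{-1}\hat{\mathbf{X}}^{(j)}_{h}$ has mean equal to the linear impulse response $\mathbf{B}^{h}\boldsymbol{\phi}$ of the stochastic-equilibrium linearization, up to an $O(\epsilon)$ bias coming from the second-order terms. Second, its variance is finite, because $\mathbb{E}\vert\mathbf{X}\vert<\infty$ under $\mu$.

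Define the studentized statistic
$$\tau_{N}=\sqrt{N}\,\frac{\bar{Z}_{N}}{s_{N}},$$
where $Z^{(j)}=\log\vert\hat{\mathbf{X}}^{(j)}_{h+1}\vert-\log\vert\hat{\mathbf{X}}^{(j)}_{h}\vert-\log\kappa$ for a known envelope rate $\kappa$, with $\kappa$ taken from the eigenvalue bound. Under the null, Theorem 1 supplies the Blanchard--Kahn configuration, so $\max\{\bar{\lambda}_{2},\bar{\rho}\}<1$. The observations $Z^{(j)}$ are i.i.d. across replications, so the Lindeberg--L\'evy CLT (\cite{billingsley1995probability}) gives $\tau_{N}\Rightarrow\mathcal{N}(0,1)$. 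This requires the biases to vanish, which I would arrange by letting $T_{0}\to\infty$ and $\epsilon\to 0$ with $\sqrt{N}\,\epsilon\to 0$ and $\sqrt{N}\,(\max\{\bar{\lambda}_{2},\bar{\rho}\})^{T_{0}}\to 0$. Corollary 2 controls the discrepancy between competing approximate solutions at rate $(\max\{\bar{\lambda}_{2},\bar{\rho}\})^{2T}$, so the choice among multiple numerical solutions is asymptotically irrelevant.

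\textbf{Power.} Under the alternative, Theorem 1 says the Blanchard--Kahn condition fails around the candidate stochastic steady state. Then either some eigenvalue of $\mathbf{B}$ assigned to the stable block has modulus at least one, or the count of unstable roots is wrong. In the first case the simulated response does not decay at any rate $\kappa<1$, so $\mathbb{E}Z^{(j)}\geq\delta>0$. In the second case the simulated path explodes or wanders, in the manner of the unit-root argument behind Proposition 5, where the response is of order $O(1/\vert\epsilon\vert)\gg O(\vert\epsilon\vert)$. Either way $\bar{Z}_{N}$ is bounded away from zero while $s_{N}$ stays bounded or grows more slowly, so $\tau_{N}\to\infty$ and power tends to one. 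Sharpness of the Theorem 3 estimates (Remark following Theorem 3) ensures that $\kappa$ can be chosen strictly between the null rate and one.

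\textbf{Main obstacle.} The hardest step is the alternative. Without a stochastic equilibrium there is no invariant measure, so $Z^{(j)}$ may lack finite moments and $s_{N}$ may blow up alongside $\bar{Z}_{N}$. I would first handle this by truncation: replace $Z^{(j)}$ with $\min\{Z^{(j)},K\}$ for a fixed $K$. This keeps the null distribution unchanged for large $K$, since the null moments are finite, and bounds the variance under the alternative, so divergence of $\sqrt{N}\bar{Z}_{N}$ drives the test. If that fails, I would fall back on a sign test on $\mathbf{1}\{Z^{(j)}>0\}$, which has standard normal asymptotics by the de Moivre--Laplace theorem and needs no moments at all.
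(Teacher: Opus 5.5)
The step that fails is your claim that $\tau_N\Rightarrow\mathcal{N}(0,1)$ under the null. That requires $\mathbb{E}Z^{(j)}=0$ up to $o(s_N/\sqrt{N})$, so $\log\kappa$ would have to equal the expected one-step log decay of the response at horizon $h$. Nothing you invoke pins this down.

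Theorem 3 gives only an envelope $\vert\mathbb{E}_{0}\,\hat{\mathbf{X}}_{T}\vert\leq C\,(\max\{\bar{\lambda}_{2},\bar{\rho}\})^{T}\,\vert\hat{\mathbf{e}}_{0,0}\vert$ with an unspecified constant $C$, and the sharpness remark concerns $T\rightarrow\infty$. At a fixed short $h$, the ratio $\vert\mathbf{B}^{h+1}\boldsymbol{\phi}\vert/\vert\mathbf{B}^{h}\boldsymbol{\phi}\vert$ depends on every mode of $\mathbf{B}$, oscillates if the dominant root is complex, and can exceed one. Indeed Theorems 5 and 6 say that in stochastic equilibrium the response may keep rising for up to $k(e)+1$ periods ($2M-1$ under Taylor contracts). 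So for $h$ before the peak, $\mathbb{E}Z^{(j)}>0$ under the null, and your test rejects a true null with probability tending to one.

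If instead you put $\kappa$ strictly between the null rate and one, as your power paragraph requires, then at horizons long enough for the dominant mode to govern, $\mathbb{E}Z^{(j)}\rightarrow\log(\max\{\bar{\lambda}_{2},\bar{\rho}\}/\kappa)<0$ and $\tau_N\rightarrow-\infty$ under the null. Either way the null limit is not standard normal.

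The same step has two further leaks. Knowing the mean of $\epsilon^{-1}\hat{\mathbf{X}}^{(j)}_{h}$ does not give you $\mathbb{E}\log\vert\hat{\mathbf{X}}^{(j)}_{h}\vert$ (Jensen). And if the paired control shares the shock draws while deviations from linearity are $O(\epsilon)$, then $s_N$ is itself $O(\epsilon)$, the same order as the bias, so $\sqrt{N}\epsilon\rightarrow 0$ does not make $\sqrt{N}\,\mathrm{bias}/s_N$ vanish. Note also that a $\kappa$ ``taken from the eigenvalue bound'' presupposes knowing the eigenvalues of $\mathbf{B}$, at which point Theorem 1 decides existence without any test.

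Your power argument also has a hole. The Blanchard--Kahn count can fail with too \emph{few} unstable roots (indeterminacy). Then stable solutions exist, simulated responses decay, and $\bar{Z}_N$ is not bounded away from zero; truncation or a sign test does not help there.

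The paper takes a different route that avoids calibrating $\kappa$ at all:
\begin{itemize}
\item it runs the Theorem 9 scheme as a single long series;
\item it fits $\hat{\mathbf{X}}_{t}=\zeta\,\hat{\mathbf{X}}_{t-1}+O(\epsilon,\,(\max\{\bar{\lambda}_{2},\bar{\rho}\})^{T})$ by OLS;
\item it combines the strong-mixing CLT (Theorem 27.4 of \cite{billingsley1995probability}) with Theorem 7 to get an $O(1/T)$ standard error;
\item it tests the one-sided $H_{0}:\zeta<1$ against $H_{1}:\zeta\geq 1$.
\end{itemize}
The boundary is therefore the unit circle rather than a model-specific rate, and power comes from the collapsing standard error. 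The paper handles non-existence simply by identifying it with $\zeta\geq 1$.

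To rescue your replication design you would need the same ingredients:
\begin{itemize}
\item a one-sided test against the boundary $\kappa=1$;
\item a horizon $h$ beyond the peak bound of Theorems 5--6 and long enough for the dominant mode to govern;
\item a statistic whose null centering you can actually compute, e.g.\ a regression coefficient rather than a mean log-norm ratio.
\end{itemize}
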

\begin{proof}
Theorem 3 informs us that down the prescribed limit output from the algorithm put forward in Theorem 9 is inevitably described by the relationship
\begin{equation} \hat{\bf{X}}_{t}(\epsilon, \, T)=\zeta\, \hat{\bf{X}}_{t-1}(\epsilon, \, T) + O(\epsilon, \,\max\{\bar{\lambda}_{2}, \, \bar{\rho}\})^{T})\end{equation}
This is (more than) sufficient to enlist Theorem 27.4 of \cite{billingsley1995probability}. This central limit theorem for strongly mixing processes aided by Theorem 7 imply that 
\begin{equation} \hat{\zeta} \sim \mathcal{N}(\max\{\bar{\lambda}_{2}, \, \bar{\rho}\})^{T}, \, O(\epsilon^{2}, \, \max\{\bar{\lambda}_{2}, \, \bar{\rho}\})^{T})/T^{2})\end{equation}
where $\hat{\zeta} $ is its OLS estimator. This settles the distribution claim. Finally, this leads to a standard hypothesis with $H_{0}: \zeta < 1$, which corresponds to the null hypothesis of the existence of stochastic equilibrium against \\ $H_{1}:\zeta \geq 1 $ the alternative that non-exists. The power of the test corresponds to the complement of the critical region $\Phi(z)$. The asymptotic power landmark is reached by observing that for some $c>0$
\begin{equation}
 \lim_{\epsilon \rightarrow 0, \, T \rightarrow \infty} \Phi(z)=\Phi\bigg(\frac{c\, T}{\max\{\bar{\lambda}_{2}, \, \bar{\rho}\})^{T}\, \epsilon}\bigg) \rightarrow  \Phi(\infty) \rightarrow 1
\end{equation}
\end{proof}
\begin{remark}
This test is efficient assuming the dominant eigenvalue is real. Otherwise, the model will miss the sinusoidal pattern associated with complex roots. In which case, one could start by regressing 
\begin{equation} \hat{\bf{X}}_{t}(\epsilon, \, T)=\zeta_{1}\, \hat{\bf{X}}_{t-1}(\epsilon, \, T) + \zeta_{2}\, \hat{\bf{X}}_{t-2}(\epsilon, \, T) +O(\epsilon, \,\max\{\bar{\lambda}_{2}, \, \bar{\rho}\})^{T})\end{equation}
where the pretest $H_{0}:\zeta_{2}=0$ versus $H_{1}:\zeta_{2}\neq 0$ tests the null of a real lead characteristic root against the alternative that it is complex, at no loss of asymptotic power. Note that $\zeta=\max\{\bar{\lambda}_{2}, \, \bar{\rho}\}$ could be estimated by indirect least squares by $\hat{\zeta}= \sqrt{\hat{\zeta}_{2}}$ in a consistent and asymptotically normal fashion. For details consult \cite{simon1994mathematics} (second-order difference equations) and \cite{greene2003econometric}(asymptotic theory). \end{remark}
It would be interesting to implement this kind of examination on real-world data. Although, in this case one would have to rely on standard asymptotics $O(1/\sqrt{T})$ and use the perturbation solutions laid forth as part of Theorem 8. 
\section{Conclusions}
This paper constitutes substantial progress in numerical approximation and \\ statistical analysis of DSGE. It is remarkable to uncover 
such precise \\ convergence outcomes across such a wide class of models.
The optimality of perturbations taken from around the stochastic equilibrium is surely a milestone in approximation theory. Moreover, the mathematically watertight algorithm to compute and test for stochastic equilibrium should prove highly significant for nonlinear simulation and estimation. The new lower bound on the asymptotic variance will surely provide crucial additional precision and power for the most basic tests of the benchmark New Keynesian framework. These constitute clear cut gains from rigorous stochastic equilibrium theory. I anticipate significant opportunities for mathematical refinement and computational investigation. 
\par Finally, I make two further theoretical advances. I derive sharp bounds on the peaks of impulse response functions, which decisively favor greater consideration of Taylor-style contracts in applied work. I also extend \\ commonality within the modern Keynesian framework to include the popular menu cost model. There are many new avenues to explore. Our understanding of inflation dynamics is progressing fast.
\bibliographystyle{plainnat} 
\bibliography{sncse.bib}
\appendixpage
\appendix
As presaged in the main text (Section 2.2), these Supplementary Materials follow a fourfold path. The first section contains an array of 
items missing from the main corpus. The final three sections are given over to Taylor contracts. The first two sections alight upon variants of its Phillips curve. The last stanza relates to the eigenvalue calculations cementing Propositions 7 and 8.
\section{Miscellaneous Items}
This inaugural section of the Supplementary Materials possesses three distinct parts. 
The first derives the menu cost demand aggregator from Section 4.1. The second gives additional results for the non-stochastic steady state of the Taylor pricing, pertaining to Section 5.2.2. The last subsection takes up solving a forward-looking toy model mentioned in Section 6 and resolving potential singularities.  
\subsection{Demand with Idiosyncratic Shocks}
This section tracks the presentation in SELCKE Appendix B. The consumption objective (utility from consumption) is
\begin{equation}
u_{t}=\bigg[\int_0^1 b_{t}(i)\, {c_{t}(i)}^{(\theta -1)/\theta} \,{d}i\bigg]^{\theta/(\theta-1)}\end{equation}
where $\theta > 1$ is required for a well-behaved problem. The household's problem is to maximize consumption utility subject to an expenditure constraint
\begin{equation}P_{t}\, C_{t}=\int_0^1 p_{t}(i)\, c_{t}(i)\,{d}i\end{equation}
For any two varieties $i$ and $i'$, this yields relative demand. 
\begin{equation}\frac{{c}_{t}(i)}{c_{t}(i')}=\bigg(\frac{b_{t}(i)}{b_{t}(i')}\bigg)^{\theta} \bigg(\frac{p_{t}(i)}{p_{t}(i')}\bigg)^{-\theta} \end{equation}
Now a little manipulation and then integration with respect to $i'$ yields 
\begin{equation} \int_0^1 p_{t}(i')\, c_{t}(i')\,{d}i'=\int_0^1 \bigg(\frac{b_{t}(i)}{b_{t}(i')}\bigg)^{-\theta}c_{t}(i)\, p_{t}(i)^{\theta}\, p_{t}^{1-\theta}(i')\,{d}i' \end{equation} 
Guessing and verifying yields the demand curve price level pair 
\begin{equation}
P_{t}=\bigg[\int_0^1 \bigg(\frac{b_{t}(i)}{b}\bigg)^{\theta}\, {p_{t}(i)}^{1-\theta}{d}i\bigg]^{1/(1-\theta)}\end{equation}
\begin{equation}c_{t}(i)= \bigg(\frac{b_{t}(i)}{b}\bigg)^{\theta}\, \bigg(\frac{p_{t}(i)}{P_{t}}\bigg)^{-\theta} \, C_{t}\end{equation}
substituting in the resource constraint (50) completes the task of deriving \\ expression (51). 
\qed
\subsection{Taylor Price Dispersion and Steady State}
The first action is to provide expressions for the non-stochastic steady state with general time preferences. The second delivers the proof of the order of magnitude for price dispersion in the most pressing neighborhood.
\subsubsection{Non-Stochastic Steady State with Time Discounting}
This item marches in lockstep with the text, save for the minor inconvenience of the reemergence of the singularities from Section 5.2.1. For exactitude, 
\begin{equation}
X=
 \begin{cases} 
 X^{NSS}  &\text{if $\pi \neq 0, \, \pi_{1}, \, \pi_{2}$} \\ 
 X^{1}  &\text{if $\pi =\pi_{1}$} \\ 
  X^{2}  &\text{if $\pi =\pi_{2}$} \\
  X^{ZINSS} &\text{if $\pi =0$}
 \end{cases}
\end{equation}
and as before $X=\{MC, \, \Delta, \, Y, \, L, \, \Pi \}$. 
The following correspond with the special cases (107), (109) and (111). (108) and (110) are of course unchanged.
\begin{multline} MC^{NSS}= \frac{\theta-1}{M^{1/(\theta-1)}\, \theta}\bigg(\frac{\beta (1+\pi)^{\theta} -1}{\beta^{M}(1+\pi)^{M\theta} -1}\bigg)\bigg(\frac{\beta^{M}(1+\pi)^{M(\theta-1)} -1}{\beta (1+\pi)^{\theta-1} -1}\bigg)\times \\ \bigg(\frac{(1+\pi)^{M(\theta-1)} -1}{(1+\pi)^{\theta-1} -1}\bigg)^{1/(\theta-1)}\end{multline}
\begin{equation} MC^{1}= \frac{\theta-1}{M^{\theta/(\theta-1)}\, \theta}\bigg(\frac{\beta\, ^{M}(1+\pi)^{M(\theta-1)} -1}{\beta \, (1+\pi)^{\theta-1} -1}\bigg)\bigg(\frac{(1+\pi)^{M(\theta-1)} -1}{(1+\pi)^{\theta-1} -1}\bigg)^{1/(\theta-1)}\end{equation}
\begin{multline} MC^{2}=M^{(\theta-2)/(\theta-1)}\bigg( \frac{\theta-1}{\theta}\bigg)\bigg(\frac{\beta \, (1+\pi)^{\theta} -1}{\beta^{M}\, (1+\pi)^{M\theta} -1}\bigg) \times \\ \bigg(\frac{(1+\pi)^{M(\theta-1)} -1}{(1+\pi)^{\theta-1} -1}\bigg)^{1/(\theta-1)}\end{multline}
\begin{multline}Y^{NSS}= 
\frac{A}{M^{1/(\theta-1)}}\bigg(\frac{\theta-1}{\theta}\bigg)^{1/(\eta+1)}\bigg(\frac{\beta \, (1+\pi)^{\theta} -1}{\beta^{M}\, (1+\pi)^{M\theta} -1}\bigg)\times \\ \bigg(\frac{\beta^{M}\, (1+\pi)^{M(\theta-1)} -1}{\beta \, (1+\pi)^{\theta-1} -1}\bigg) \bigg(\frac{(1+\pi)^{M(\theta-1)} -1}{(1+\pi)^{\theta-1} -1}\bigg)^{1/(\theta-1)}
\end{multline}
\begin{multline}Y^{1}= 
\frac{A}{M^{\theta/(\theta-1)}}\bigg(\frac{\theta-1}{\theta}\bigg)^{1/(\eta+1)} \bigg(\frac{\beta^{M}\, (1+\pi)^{M(\theta-1)} -1}{\beta \, (1+\pi)^{\theta-1} -1}\bigg) \times \\ \bigg(\frac{(1+\pi)^{M(\theta-1)} -1}{(1+\pi)^{\theta-1} -1}\bigg)^{1/(\theta-1)}
\end{multline}
\begin{multline}Y^{2}= 
A \, M^{(\theta-2)/(\theta-1)}\bigg(\frac{\theta-1}{\theta}\bigg)^{1/(\eta+1)}\bigg(\frac{\beta \, (1+\pi)^{\theta} -1}{\beta^{M}\, (1+\pi)^{M\theta} -1}\bigg)\times \\
\bigg(\frac{(1+\pi)^{M(\theta-1)} -1}{(1+\pi)^{\theta-1} -1}\bigg)^{1/(\theta-1)}
\end{multline}
\begin{multline} L^{NSS}=\bigg(\frac{\theta-1}{\theta}\bigg)^{1/(\eta+1)}\bigg(\frac{\beta \, (1+\pi)^{\theta} -1}{\beta^{M}\, (1+\pi)^{M\theta} -1}\bigg) \bigg(\frac{\beta^{M}\, (1+\pi)^{M(\theta-1)} -1}{\beta \, (1+\pi)^{\theta-1} -1}\bigg) \times \\ 
\bigg(\frac{(1+\pi)^{M\theta} -1}{(1+\pi)^{\theta} -1}\bigg)
\bigg(\frac{(1+\pi)^{\theta-1} -1}{(1+\pi)^{M(\theta-1)} -1}\bigg)
\end{multline}
\begin{multline} L^{1}=\frac{1}{M}
\bigg(\frac{\theta-1}{\theta}\bigg)^{1/(\eta+1)}\bigg(\frac{\beta^{M}(1+\pi)^{M(\theta-1)} -1}{\beta (1+\pi)^{\theta-1} -1}\bigg) 
\bigg(\frac{(1+\pi)^{M\theta} -1}{(1+\pi)^{\theta} -1}\bigg)\times \\
\bigg(\frac{(1+\pi)^{\theta-1} -1}{(1+\pi)^{M(\theta-1)} -1}\bigg)
\end{multline}
\begin{multline} L^{2}=M\bigg(\frac{\theta-1}{\theta}\bigg)^{1/(\eta+1)}\bigg(\frac{\beta \, (1+\pi)^{\theta} -1}{\beta^{M}\, (1+\pi)^{M\theta} -1}\bigg) 
\bigg(\frac{(1+\pi)^{M\theta} -1}{(1+\pi)^{\theta} -1}\bigg) \times \\ 
\bigg(\frac{(1+\pi)^{\theta-1} -1}{(1+\pi)^{M(\theta-1)} -1}\bigg)
\end{multline}
\begin{multline} \Pi^{NSS}= \Bigg\{ 1 - \frac{A}{M^{2/(\theta-1)}}\bigg(\frac{\theta-1}{\theta}\bigg)^{(\eta+2)/(\eta+1)}\, \bigg(\frac{\beta \, (1+\pi)^{\theta} -1}{\beta^{M}\, (1+\pi)^{M\theta} -1}\bigg)\times \\ \bigg(\frac{\beta^{M}\, (1+\pi)^{M(\theta-1)} -1}{\beta \, (1+\pi)^{\theta-1} -1}\bigg)  \bigg(\frac{(1+\pi)^{M(\theta-1)} -1}{(1+\pi)^{\theta-1} -1}\bigg)^{1/(\theta-1)}\Bigg\}\bigg(\frac{\beta (1+\pi)^{\theta} -1}{\beta^{M}(1+\pi)^{M\theta} -1}\bigg)\times \\ \bigg(\frac{\beta^{M}(1+\pi)^{M(\theta-1)} -1}{\beta (1+\pi)^{\theta-1} -1}\bigg) \bigg(\frac{(1+\pi)^{M(\theta-1)} -1}{(1+\pi)^{\theta-1} -1}\bigg)^{1/(\theta-1)}\end{multline}
$\Pi^{1}$ and $\Pi^{2}$, which take after the 
common pattern, are suppressed for brevity.
\subsubsection{Second-Order Price Dispersion at Arbitrary Horizons}
This part carries forward the analysis of Section 5.2.2, by furnishing a full proof of Proposition 7, covering the more cumbersome instances of arbitrary contract length. 
\newline 
\newline 
\textbf{Proof of Proposition 7} \textbf{(Taylor} $\mathbf{M}$\textbf{)}
\newline
\newline 
The proof is a careful calculation.
\begin{proof}  
The stated result is a consequence of the following fact 
derived from (108)
\begin{multline} \frac{\mathrm{d}^{2}\Delta^{NSS}}{\mathrm{d}^{2}\pi}\bigg\rvert_{\pi=0}= \\ \bigg( \frac{
M \, \theta\, (\theta-1)\, \pi\, \big\{ 2\, (1+\pi) + (M\, \theta-1)\, (\theta-1)\, \pi \big\}-(\theta-2)\big\{ (1+\pi)^{M\theta}-1\big\} }{\big\{(1+\pi)^{M(\theta-1)} -1\big\}^{\theta/(\theta-1)}\, \pi^{(2\theta-3)/(\theta-1)}}\bigg) \\ -\frac{2 \, M \, \theta}{\theta -1}\bigg( \frac{(1+\pi)^{M(\theta-1)-1}\big\{ \theta \, (\theta-1)\, M \, (1+\pi)^{M(\theta-1)}\, \pi -\big\{ (1+\pi)^{M\theta}-1\big\}\big\}}{\big\{(1+\pi)^{M(\theta-1)} -1\big\}^{(2\, \theta-1)/(\theta-1)}\, \pi^{(\theta-2)/(\theta-1)}}\bigg) \\ 
+ \theta \, M \bigg(\frac{
\pi^{1/(\theta-1)}\, (1+\pi)^{M(\theta-1)-2}\, \big\{(1+\pi)^{M\theta} -1\big\}}{\big\{(1+\pi)^{M(\theta-1)} -1\big\}^{(3\theta-2)/(\theta-1)}}\bigg)\times \\ \bigg[ (2\, \theta -1)\, M -\big\{M\, (\theta-1)-1\big\}\big\{(1+\pi)^{M(\theta-1)} -1\big\}\bigg]\\ = 
(M-1)\, \frac{\theta}{6}\, \bigg\{ (2\, M-1)\, \theta -3\bigg\}-\frac{2\, (M-1)^{2}}{M^{2}}\, \theta\, (\theta-1) + \\ \frac{1}{M^{2}}\bigg[2\, (M-1)^{2}\, (\theta-1)^{2}-M^{2}\, (M-1)\, \frac{(\theta-1)}{6}\, \bigg\{ (2\, M-1)\, (\theta-1) -3\bigg\}\bigg]
\\ =\frac{(M-1)}{6M^{2}}\, \bigg( 2M^{2}\, (M+1)\, \theta + (M-2)\, \big\{ 2\, (M-1)^{2} + 7\, (M-2) +6 \big\}\, (\theta-1)\bigg)>0\end{multline}
\end{proof}
\begin{remark} The proceeding succession of identities 
was irreplacable to my \\ derivation of (224). 
\begin{multline*} \frac{\mathrm{d}^{2}}{\mathrm{d}x^{2}}\bigg( \frac{f(x)}{g(x)}\bigg)=\frac{1}{g(x)}\, \frac{\mathrm{d}^{2}(f(x))}{\mathrm{d}x^{2}} -\frac{2}{g^{2}(x)}\, \frac{\mathrm{d}(f(x))}{\mathrm{d}x}\, \frac{\mathrm{d}(g(x))}{\mathrm{d}x} + \\ \frac{f(x)}{g^{3}(x)}\, \bigg[ 2\, \bigg( \frac{\mathrm{d}(g(x))}{\mathrm{d}x}\bigg)^{2} -g(x)\, \frac{\mathrm{d}^{2}(g(x))}{\mathrm{d}x^{2}} \bigg]\end{multline*}
$$ \sum_{k=1}^{n}k^{2}=\frac{n}{6}\, (n+1)\, (2n+1)$$
which was used to sum second derivative expressions. Finally, although I used 
(108) in the first part of 
(224), I chose as my starting point series expansions as in the two period case (112) rather than attempting to apply L'H{\^o}pital's rule.
\end{remark}
\subsection{Forward Looking Alternative (Rotemberg Pricing)}
The Rotemberg model of sticky nominal adjustment is in the process of falling out of favor, in the light of the disproof of equivalence with Calvo in SELCKE.
Nevertheless, it remains an interesting theoretical device. I mention it here as it is apposite in the arenas of Theorem 3 (ii) and Theorem 6 (iii).
I execute the convenient closed-form analysis suggested by the latter, avoiding the rigmarole of calculating eigenvalues or performing complicated inter-temporal \\ maximization. 
Lastly, the analytical significance of a possible singularity is explained. 
\par The starting point is that in every period each firm faces a convex cost of changing prices 
\begin{equation}
C^{a}_{t}(i)=\frac{c_{p}}{2}\, \bigg(\frac{p_{t}(i)}{p_{t-1}(i)}-1\bigg)^{2}
\end{equation}
with $c_{p}$ the magnitude of the cost of changing a price. 
This implies the first-order condition 
\begin{multline}
y_{t}(i)- \theta \, p^{*}_{t}(i)\, \bigg(\frac{p^{*}_{t}(i)}{P_{t}}\bigg)^{-(\theta +1)}\, \frac{Y_{t}}{P_{t}} + \theta \, \bigg(\frac{p^{*}_{t}(i)}{P_{t}}\bigg)^{-(\theta +1)} MC_{t}\, Y_{t} - \\ c_{p}\, \bigg(\frac{p^{*}_{t}(i)}{p_{t-1}(i)}-1\bigg)\frac{P_{t}}{p^{*}_{t-1}(i)}Y_{t} + c_{p} \, \mathbb{E}_{t}\, Q_{t, \, t+1}\, \bigg(\frac{p^{*}_{t+1}(i)}{p^{*}_{t}(i)}-1\bigg)\, Y_{t+1}\, \frac{p^{*}_{t+1}(i)}{(p^{*}_{t}(i))^{2}}\, P_{t+1}=0
\end{multline}
Firms select the same price.  
This means the dynamics are considerably simpler with 
\begin{equation}\frac{p_{t}(i)}{p_{t-1}(i)}=1+ \pi_{t}\end{equation}
\begin{equation}\Delta =1 \end{equation}
Hence, the first-order conditions simplify to 
\begin{equation}
(1- \theta)  + \theta  \, MC_{t} 
-  c_{p}\, \pi_{t} \, (1+\pi_{t}) +  c_{p}\,  \mathbb{E}_{t}\, Q_{t, \, t+1}\, \pi_{t+1} \, (1+\pi_{t+1})\, \frac{Y_{t+1}}{Y_{t}}=0
\end{equation}
Finally, the resource constraint reflecting a wedge between production and \\ consumption, caused by the cost of price 
changes, replaces (13)
\begin{equation}
Y_{t}=C_{t} + \frac{1}{2}c_{p}\, \pi^{2}_{t}\, Y_{t}
\end{equation}
Therefore, around ZINSS $\hat{c}_{t}=\hat{y}^{e}_{t}$.\, 
Here, the dynamic linearzied Phillips curve solves out to 
\begin{equation}\pi_{t}=\tilde{\omega}\, \hat{y}^{e}_{t} + \beta \, \mathbb{E}_{t}\, \pi_{t+1} \end{equation}
where 
\begin{equation}
\tilde{\omega}=\frac{(1 + \eta)\,(\theta -1)}{c_{p}}
\end{equation}
The basic demand relation (38) is in place. 
From Theorem 9, it can be deduced that 
\begin{equation} \mathbb{E}_{t}\, \pi_{t+1}=\rho \, \pi_{t}=k_{\pi}\,\hat{\psi}_{t}\end{equation}
\begin{equation} \mathbb{E}_{t}\, \hat{y}^{e}_{t+1}=\rho \, \hat{y}^{e}_{t}=k_{y}\, \hat{\psi}_{t}\end{equation}
From (38), (233) and (234) result the impulse response coefficients 
\begin{equation} k_{\pi}= \frac{(1-\rho)\, \tilde{\omega}}{(1+\beta \, a_{y})(1-\beta \, \rho)-(\beta \, a_{\pi}-\rho)\, \tilde{\omega} }\end{equation}
\begin{equation} k_{y}= \frac{(1-\beta \, \rho)\,  (1-\rho)}{(1+\beta \, a_{y})(1-\beta \, \rho)-(\beta \, a_{\pi}-\rho)\, \tilde{\omega} }\end{equation}
for $ (1+\beta \, a_{y})(1-\beta \, \rho)-(\beta \, a_{\pi}-\rho)\, \tilde{\omega}\neq 0$. Otherwise, there is a singularity because the aggregate demand equation is inconsistent. This means that the limiting stochastic equilibrium where $\vert \varepsilon\vert \rightarrow 0$ degenerates. However, Theorem 8 guarantees this problem will not arise for any particular shock size $\vert \bar{\varepsilon} \vert >0$. Therefore, it will be possible to resolve this singularity with higher-order perturbations, whenever $a_{\pi} >1$, as stipulated by Theorem 5 in SELCKE. 
\section{Phillips Cure for Longer Contracts}
This second section of the Supplementary Materials is devoted to the extraction of a working Phillips curve, 
admitting less frequent repricing. It can be seen as an extension of the short contracts model, the subject of Proposition 9 and a development upon the non-linear context of Proposition 5. It is integral to the proof of Proposition 8 in Section D. The notational convention in place is that $X_{t\pm (M-k)}$ is nonzero only if $k \leq M-1$.
\begin{proposition}
Consider an economy with Taylor contracts
reset every $M \geq 2$ periods, then around ZINSS 
recursive equilibrium takes the form \\ $\mathbb{E}_{t}\, \mathbf{\hat{Z}}_{t+1}=\mathbf{B}\, 
\mathbf{\hat{Z}}_{t} + \mathbf{\Phi}\, \mathbf{\hat{U}}_{t}$, $\mu$ a.s. where
\begin{enumerate}[(i)]
\item $\mathbf{Z}_{t}=(\pi_{t}, \, \pi_{t-1}, \,  \cdots \, \pi_{t-2(M-1)}, \,  Y^{e}_{t})'$ 
\item  $\mathbf{Z}_{t}=(\mathbf{Z}^{J}_{t}, \, \mathbf{Z}^{S}_{t} )'$, such that $\mathbf{Z}^{J}_{t}=(\pi_{t}, \, \pi_{t-1}, \,  \cdots \, \pi_{t-(M-2)}, \,  Y^{e}_{t})'$ and \\ $\mathbf{Z}^{S}_{t}=(\pi_{t-(M-1)},  \,  \cdots \, \pi_{t-2(M-1)})'$
\item $\mathbf{U}_{t}=(\psi_{t}, \, \psi_{t-1}, \, \cdots \, \psi_{t-2(M-1)})'$
\item $\mathbf{\Phi}=\mathbf{\Phi}(\boldsymbol{\gamma})$ and $\mathbf{B}=\mathbf{B}(\boldsymbol{\gamma})$ announced herein.
\end{enumerate}
\end{proposition}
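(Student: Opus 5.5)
The plan is to generalize, step by step, the two-period derivation in the proof of Proposition 9 to arbitrary $M$. I work at ZINSS, send $\beta \rightarrow 1$ only at the end, and keep careful track of the dates at which expectations are formed.

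\textbf{Reset price and price level.} The first step is to linearize the Taylor first-order condition (79) around ZINSS. This gives
$$\hat{p}^{*}_{t}=\frac{1}{\sum_{j=0}^{M-1}\beta^{j}}\sum_{j=0}^{M-1}\beta^{j}\,\mathbb{E}_{t}\big(\hat{P}_{t+j}+\hat{mc}_{t+j}\big).$$
The price level aggregator (81) linearizes to
$$\hat{P}_{t}=\frac{1}{M}\sum_{j=0}^{M-1}\hat{p}^{*}_{t-j}.$$
Differencing the aggregator expresses $\pi_{t}$ as the average of the reset-price changes $\hat{p}^{*}_{t-j}-\hat{p}^{*}_{t-j-1}$ for $j=0,\dots,M-1$. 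Differencing the reset equation in turn expresses each of these changes through inflation leads up to $M-1$ periods ahead and marginal cost differences. Along the way I break open expectations as in Proposition 5: each $\mathbb{E}_{t-j}X_{t+i}$ is written as $\mathbb{E}_{t}X_{t+i}$ plus forecast-revision errors. These revisions are collected into an error $\hat{v}_{t}$, following the pattern of (130) and (132).

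Substituting back produces a Phillips curve in $\pi_{t-1},\dots,\pi_{t-(M-1)}$, leads $\mathbb{E}_{t}\pi_{t+1},\dots,\mathbb{E}_{t}\pi_{t+M-1}$, and marginal cost. In that curve the marginal cost terms enter with lags reaching back to $t-(M-1)$ and leads up to $t+M-1$.

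\textbf{Eliminating marginal cost and price dispersion.} Next I replace marginal cost using (33). Price dispersion must be removed, as in (134)--(136). Linearizing (80) gives $\hat{\Delta}_{t}=\frac{\theta}{M}\sum_{j}(\hat{p}^{*}_{t-j}-\hat{P}_{t})$. Because $\sum_{j}(\hat{p}^{*}_{t-j}-\hat{P}_{t})=0$ identically at first order, dispersion is zero at first order, consistent with Proposition 7 (dispersion is second order). So every $\hat{\Delta}$ term drops out.

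I then substitute the aggregate demand curve (38) repeatedly for $\hat{y}_{t-k}$. Each substitution trades one lag of output for lags of inflation, one further lag of $\hat{\psi}$, and expectation errors, exactly as in the passage from (137) to (152). Applying this $M-1$ times pushes the inflation lags back to $\pi_{t-2(M-1)}$ and the shock lags back to $\psi_{t-2(M-1)}$. This produces the vectors in (i) and (iii).

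\textbf{Recursive form and jump/state split.} To obtain $\mathbb{E}_{t}\hat{\mathbf{Z}}_{t+1}=\mathbf{B}\hat{\mathbf{Z}}_{t}+\mathbf{\Phi}\hat{\mathbf{U}}_{t}$, I solve the Phillips curve for its longest lead. The remaining rows are identity shifts plus the demand equation. The jump/state split in (ii) is then read off the dating: variables entering through an unexpired contract (lags $0$ to $M-2$ of inflation, together with $Y^{e}_{t}$) are forward-determined, while older lags are predetermined. This matches Proposition 5 truncated to first order.

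\textbf{Main obstacle.} I expect the hardest step to be showing that the coefficient on the longest lead, and on $\pi_{t-2(M-1)}$, is nonzero, so that no Zariski reduction collapses the dimension. My first attempt would compute these coefficients explicitly in $(\beta,\eta,a_{\pi},a_{y})$, generalizing (153)--(155). I would then argue, as at the end of Proposition 9, that if they vanished the errors would become contemporaneously related to the endogenous variables, which is a contradiction. If that argument fails for general $M$, I would fall back on the explicit $M=4$ check needed for Proposition 8.
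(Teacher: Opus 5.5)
You reach the statement's vector the wrong way. Your first steps match the paper's proof: the linearized reset rule (237), the telescoped price level (239), and forecast revisions collected into the error as in (242). Your dispersion argument is correct and more direct than the paper's unit root $\hat{\Delta}_{t}=\hat{\Delta}_{t-1}$: since $\sum_{j}(\hat{p}^{*}_{t-j}-\hat{P}_{t})=0$ identically, $\hat{\Delta}_{t}\equiv 0$ at first order.

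\textbf{Where it breaks.} Substituting the aggregate demand curve (38) cannot push inflation lags back to $\pi_{t-2(M-1)}$ or shocks to $\hat{\psi}_{t-2(M-1)}$.
\begin{itemize}
\item Removing lagged output by iterating (38) forward toward $\hat{y}^{e}_{t}$ introduces only $\pi_{t-M},\ldots,\pi_{t}$ and $\hat{\psi}_{t-M},\ldots,\hat{\psi}_{t}$. This is why (264) has lags only to $\pi_{t-M}$.
\item Removing expected future output only loads the cost channel onto the inflation leads already present.
\item No demand substitution removes the leads $\mathbb{E}_{t}\pi_{t+2},\ldots,\mathbb{E}_{t}\pi_{t+M-1}$.
\end{itemize}
So for $M\geq 3$, solving the date-$t$ curve for its longest lead $\mathbb{E}_{t}\pi_{t+M-1}$ and appending identity shifts does not give $\mathbb{E}_{t}\hat{\mathbf{Z}}_{t+1}=\mathbf{B}\hat{\mathbf{Z}}_{t}+\mathbf{\Phi}\hat{\mathbf{U}}_{t}$ with $\mathbf{Z}_{t}$ as in (i). A row for $\mathbb{E}_{t}\pi_{t+1}$ would still involve $\mathbb{E}_{t}\pi_{t+2},\ldots$. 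You would have to carry those expected leads in the state, which is a different vector from the one stated. For $M=2$ the two routes coincide, which is why the two-period template looks sufficient.

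\textbf{The missing step.} This is the time shift the paper uses (its (281)--(282)).
\begin{itemize}
\item Lag the Phillips curve so its longest lead is dated $t+1$, i.e.\ evaluate (264) at $t-(M-2)$. Its oldest terms then become $\pi_{t-2(M-1)}$ and $\hat{\psi}_{t-2(M-1)}$; this, not the demand substitutions, is the source of (i) and (iii).
\item Re-expand the stale expectations. For $k\leq M-2$, $\mathbb{E}_{t-(M-2)}\pi_{t-(M-2)+k}$ becomes the realized $\pi_{t-(M-2)+k}$, and $\mathbb{E}_{t-(M-2)}\pi_{t+1}$ becomes $\mathbb{E}_{t}\pi_{t+1}$. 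The revisions go into the disturbance.
\item Use (38) to carry $\hat{y}^{e}_{t-(M-2)}$ forward to $\hat{y}^{e}_{t}$.
\item Solve for $\mathbb{E}_{t}\pi_{t+1}$.
\end{itemize}

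That last division needs only $\tilde{b}^{\pi,M}_{-(M-1)}=-\beta^{M-1}\eta\neq 0$ from (274). This is immediate, because neither the cost channel nor the expectations channel reaches the furthest lead. So the step you flagged as the main obstacle is easy, and the step you treated as routine is where your argument fails. Your fallback to an explicit $M=4$ check would not establish the statement for general $M$ in any case.
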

\begin{proof}
Linearizing (79) yields the subsequent association
corresponding to (124)
\begin{multline} \hat{p}^{*}_{t}= \frac{(1-\beta)}{1-\beta^{M}}\, \hat{P}_{t} + \frac{\beta\, (1-\beta)}{1-\beta^{M}}\, \mathbb{E}_{t}\, \hat{P}_{t+1} + \cdots + \frac{\beta^{M-1}\, (1-\beta)}{1-\beta^{M}}\, \mathbb{E}_{t}\, \hat{P}_{t+M-1}  + \\ \frac{(1-\beta)}{1-\beta^{M}}\, \hat{mc}_{t} + \frac{\beta\, (1-\beta)}{1-\beta^{M}}\, \mathbb{E}_{t}\, \hat{mc}_{t+1} + \cdots + \frac{\beta^{M-1}\, (1-\beta)}{1-\beta^{M}}\, \mathbb{E}_{t}\, \hat{mc}_{t+M-1}\end{multline} 
where I have evaluated the sum of the geometric progression $1 + \beta + \cdots + \beta^{M-1}$.\footnote{Each coefficient approaches the limit $1/M$ as $\beta \rightarrow 1$ thanks to L'H{\^o}pital's rule, as in SELCKE Section G.1.}
As expected, the price level 
(126) generalizes to 
\begin{equation} P_{t}=\frac{1}{M}\, (p^{*}_{t-(M-1)})^{1-\theta} + \cdots + \frac{1}{M}\, (p^{*}_{t})^{1-\theta}\end{equation}
Taking a lag and then subtracting reveals that
\begin{equation} \pi_{t}=\hat{P}_{t}-\hat{P}_{t-1}=\frac{1}{M}\bigg[\hat{p}^{*}_{t-(M-1)}-\hat{p}^{*}_{t-M}+ \cdots + \hat{p}^{*}_{t}-\hat{p}^{*}_{t-1}\bigg]=\frac{1}{M}\bigg(\hat{p}^{*}_{t}-\hat{p}^{*}_{t-M}\bigg)\end{equation}

\par That $M^{th}$ lag reads 
\begin{multline} \hat{p}^{*}_{t-M}= \frac{(1-\beta)}{1-\beta^{M}}\, \hat{P}_{t-M} + \frac{\beta\, (1-\beta)}{1-\beta^{M}}\, \mathbb{E}_{t-M}\, \hat{P}_{t-(M-1)} + \cdots + \\ \frac{\beta^{M-1}\, (1-\beta)}{1-\beta^{M}}\, \mathbb{E}_{t-M}\, \hat{P}_{t-1}  +  \frac{(1-\beta)}{1-\beta^{M}}\, \hat{mc}_{t-M} + \frac{\beta\, (1-\beta)}{1-\beta^{M}}\, \mathbb{E}_{t-M}\, \hat{mc}_{t-(M-1)} + \\ \cdots + \frac{\beta^{M-1}\, (1-\beta)}{1-\beta^{M}}\, \mathbb{E}_{t-M}\, \hat{mc}_{t-1}\end{multline} 
\par Combining (238) and (240) offers up
\begin{multline} \pi_{t}=\frac{1-\beta}{1-\beta^{M}}\, \frac{1}{M-1}\bigg(\pi_{t-(M-1)} + (1+\beta) \, \pi_{t-(M-2)} + \cdots + (1 + \cdots + \beta^{M-2})\, \pi_{t-1} 
\\ + 
(\beta + \cdots + \beta^{M-1})\, \mathbb{E}_{t}\, \pi_{t+1} +   \cdots + (\beta^{M-2} +\beta^{M-1}) \, \mathbb{E}_{t}\, \pi_{t+M-2} + 
\beta^{M-1}\, \mathbb{E}_{t}\, \pi_{t+M-1} 
\\ + (\hat{mc}_{t-(M-1)} - \hat{mc}_{t-M})+ 
(1+\beta)\, (\hat{mc}_{t-(M-2)} - \hat{mc}_{t-(M-1)}) +    \cdots + \\ (1 + \cdots + \beta^{M-2})\, (\hat{mc}_{t-1} - \hat{mc}_{t-2})  
+(1 + \beta + \cdots + \beta^{M-1})\, (\hat{mc}_{t} - \hat{mc}_{t-1}) +  \\ (\beta + \cdots + \beta^{M-1})\, \mathbb{E}_{t}\, (\hat{mc}_{t+1} - \hat{mc}_{t})+  \cdots + \\  (\beta^{M-2} +\beta^{M-1}) \, \mathbb{E}_{t}\, (\hat{mc}_{t+M-2} - \hat{mc}_{t+M-3}) 
 + \beta^{M-1}\, \mathbb{E}_{t}\, (\hat{mc}_{t+M-1} - \hat{mc}_{t+M-2}) + \\ 
\hat{v}^{1, \, M}_{t} \bigg)\end{multline}
with 
\begin{multline} \hat{v}^{1, \, M}_{t}= -\bigg(\pi_{t-(M-1)}- \mathbb{E}_{t-M}\,\pi_{t-(M-1)} \bigg)- \\ (1 +\beta) \bigg(\pi_{t-(M-2)}- \mathbb{E}_{t-M}\, \pi_{t-(M-2)} \bigg) 
- \cdots -   \frac{(1-\beta^{M-1})}{(1-\beta)}\bigg( \pi_{t-1}-\mathbb{E}_{t-M}\, \pi_{t-1}\bigg) - \\ \frac{(1-\beta^{M})}{(1-\beta)}\bigg( \pi_{t}-\mathbb{E}_{t-M}\, \pi_{t}\bigg)
+ \beta \bigg(\hat{mc}_{t-(M-1)} - \mathbb{E}_{t-M}\, \hat{mc}_{t-(M-1)}\bigg) + \\ \beta^{2} \bigg(\hat{mc}_{t-(M-2)} - \mathbb{E}_{t-M}\, \hat{mc}_{t-(M-2)}\bigg) + \cdots + \\ 
\beta^{M-1}\bigg(\hat{mc}_{t-1} - \mathbb{E}_{t-M}\, \hat{mc}_{t-1}\bigg)\end{multline} 
where I have taken the opportunity to pursue notational compactness. 
The next pocket of the derivation mimics (134)-(136), from the 
two-period case
\begin{equation} \hat{\Delta}_{t}= \frac{\theta}{M}\bigg(\hat{p}^{*}_{t}-\hat{P}_{t}\bigg) +  \frac{\theta}{M}\bigg(\hat{p}^{*}_{t-1}-\hat{P}_{t}\bigg) + \cdots +  \frac{\theta}{M}\bigg(\hat{p}^{*}_{t-(M-1)}-\hat{P}_{t}\bigg)\end{equation}
\begin{equation} \hat{\Delta}_{t-1}= \frac{\theta}{M}\bigg(\hat{p}^{*}_{t-1}-\hat{P}_{t-1}\bigg) + \frac{\theta}{M}\bigg(\hat{p}^{*}_{t-2}-\hat{P}_{t-1}\bigg) + \cdots + \frac{\theta}{M}\bigg(\hat{p}^{*}_{t-M}-\hat{P}_{t}\bigg)\end{equation}
Terms cancel when I take (244) from (243), so as before 
\begin{equation} \hat{\Delta}_{t}=\hat{\Delta}_{t-1}\end{equation}
The multiple output gaps form, extending 
(137), is expressed as 
\begin{multline} \pi_{t}= 
\prescript{\circ}{}{b}^{\pi, \, M}_{M}\, \pi_{t-M}+ \cdots + \prescript{\circ}{}b^{\pi, \, M}_{1}\, \pi_{t-1} + \prescript{\circ}{}b^{\pi, \, M}_{-1}\, \mathbb{E}_{t}\, \pi_{t+1} + \cdots + \prescript{\circ}{}b^{\pi, \, M}_{-(M-1)}\, \mathbb{E}_{t}\, \pi_{t+M-1} \\  + \prescript{\circ}{}b^{y, \, M}_{M}\, \hat{y}^{e}_{t-M} + \cdots + \prescript{\circ}{}b^{y, \, M}_{1}\, \hat{y}^{e}_{t-1} + \prescript{\circ}{}b^{y, \, M}_{0}\, \hat{y}^{e}_{t} + \\ \prescript{\circ}{}b^{y, \, M}_{-1}\, \mathbb{E}_{t}\, \hat{y}^{e}_{t+1} + \cdots + \prescript{\circ}{}b^{y, \, M}_{-(M-2)}\, \mathbb{E}_{t}\, \hat{y}^{e}_{t+M-2} + \hat{v}^{2, \, M}_{t}\end{multline}
Each coefficient $\prescript{\circ}{}b^{j}_{i}=\prescript{\circ}{}{\tilde{b}}^{j}_{i}/\prescript{\circ}{}b^{M}$, such that 
\begin{equation} \prescript{\circ}{}b^{M} =\frac{(1-\beta^{M})}{1-\beta}\, (M + \eta)- \frac{\beta^{2}\, (1-\beta^{M-1})}{1-\beta}\, (1+\eta)\, a_{\pi}\end{equation}
At the start there is a cost channel but no expected inflation effect. Thus,
\begin{equation} \prescript{\circ}{}{\tilde{b}}^{\pi, \, M}_{M}= 
\beta\, (1+\eta)\, a_{\pi}\end{equation}
subsequently, both are present so 
\begin{equation}  \prescript{\circ}{}{\tilde{b}}^{\pi, \, M}_{M-1}= \beta\, (1 + \beta) \, (1+\eta)\, a_{\pi} -\eta
\end{equation}
\begin{equation}  \prescript{\circ}{}{\tilde{b}}^{\pi, \, M}_{M-2}= \beta\, (1 + \beta + \beta^{2}) \, (1+\eta)\, a_{\pi} -(1 + \beta)\, \eta
\end{equation}
$$ \vdots $$
\begin{multline}  \prescript{\circ}{}{\tilde{b}}^{\pi, \, M}_{1}= 
\beta(1 + \cdots + \beta^{M-1}) \, (1+\eta)\, a_{\pi} - (1 + \cdots + \beta^{M-2})\, \eta = \\ \beta \, \bigg(\frac{1-\beta^{M}}{1-\beta} \bigg)(1+\eta)\, a_{\pi} - \bigg(\frac{1-\beta^{M-1}}{1-\beta}\bigg) \eta
\end{multline}
\begin{multline}  \prescript{\circ}{}{\tilde{b}}^{\pi, \, M}_{-1}= 
(\beta^{M} + \cdots + \beta^{3})\, (1+\eta)\, a_{\pi} -(\beta^{M-1} + \cdots + \beta)\, \eta = \\ \beta \bigg[ \beta^{2}\bigg(\frac{1-\beta^{M-2}}{1-\beta}\bigg)(1+\eta)\, a_{\pi} - \bigg(\frac{1-\beta^{M-1}}{1-\beta}\bigg) \eta \bigg]
\end{multline}
\begin{multline}  \prescript{\circ}{}{\tilde{b}}^{\pi, \, M}_{-2}= (\beta^{M} + \cdots + \beta^{4}) \, (1+\eta)\, a_{\pi} -(\beta^{M-1} + \cdots + \beta^{2})\, \eta = \\ 
\beta^{2} \bigg[ \beta^{2}\bigg(\frac{1-\beta^{M-3}}{1-\beta}\bigg)(1+\eta)\, a_{\pi} - \bigg(\frac{1-\beta^{M-2}}{1-\beta}\bigg) \eta \bigg]
\end{multline}
$$\vdots$$
\begin{equation} \prescript{\circ}{}{\tilde{b}}^{\pi, \, M}_{-(M-2)}=(1+\eta)\, \beta^{M}\, a_{\pi}-(\beta^{M-1} +\beta^{M-2} )\, \eta\end{equation}
whilst at the end, there is only the expected inflation effect but no cost channel so 
\begin{equation} \prescript{\circ}{}{\tilde{b}}^{\pi, \, M}_{-(M-1)}=-\beta^{M-1}\eta\end{equation} 
The pattern for the output gap coefficients is a little simpler 
\begin{equation} \prescript{\circ}{}{\tilde{b}}^{y, \, M}_{M}=
\beta\, (1+\eta)\, a_{y}\end{equation}
\begin{equation} \prescript{\circ}{}{\tilde{b}}^{y, \, M}_{M-1}=
\beta\, (1 + \beta)\, (1+\eta)\, a_{y}\end{equation}
$$\vdots$$
\begin{equation}\prescript{\circ}{}{\tilde{b}}^{y, \, M}_{1}=(\beta^{M} + \cdots + \beta) \, (1+\eta)\, a_{y}=\beta \bigg(\frac{1-\beta^{M}}{1-\beta}\bigg) (1+\eta)\, a_{y} \end{equation}
\begin{equation}\prescript{\circ}{}{\tilde{b}}^{y, \, M}_{0}=(\beta^{M} + \cdots + \beta^{2}) \, (1+\eta)\, a_{y}=\beta^{2}\bigg(\frac{1-\beta^{M-1}}{1-\beta}\bigg) (1+\eta)\, a_{y} \end{equation}
\begin{equation} \prescript{\circ}{}{\tilde{b}}^{y, \, M}_{-1}=\beta^{3}\bigg(\frac{1-\beta^{M-2}}{1-\beta}\bigg) (1+\eta)\, a_{y}\end{equation}
$$\vdots$$
\begin{equation} \prescript{\circ}{}{\tilde{b}}^{y, \, M}_{-(M-2)}=\beta^{M} \, (1+\eta)\, a_{y}\end{equation}
The error adjustment is 
\begin{multline} \hat{v}^{2, \, M}_{t}=
\frac{(1+\eta)}{\prescript{\circ}{}b^{M}}\Bigg[\hat{y}^{e}_{t-(M-1)} - \mathbb{E}_{t-M}\,\hat{y}^{e}_{t-(M-1)}+ 
\\ 
(1 +\beta) \, (\hat{y}^{e}_{t-(M-2)} - \mathbb{E}_{t-(M-1)}\,\hat{y}^{e}_{t-(M-2)}) + \\ \cdots + 
\bigg(\frac{1-\beta^{M-1}}{1-\beta}\bigg)(\hat{y}^{e}_{t-1} - \mathbb{E}_{t-2}\,\hat{y}^{e}_{t-1}) + \bigg(\frac{1-\beta^{M}}{1-\beta}\bigg)(\hat{y}^{e}_{t} - \mathbb{E}_{t-1}\,\hat{y}^{e}_{t}) + \\ \pi_{t-(M-1)} - \mathbb{E}_{t-M}\,\pi_{t-(M-1)}
+ (1 +\beta) (\pi_{t-(M-2)} - \mathbb{E}_{t-(M-1)}\,\pi_{t-(M-2)}) + \cdots + \\
\bigg(\frac{1-\beta^{M-1}}{1-\beta}\bigg)(\pi_{t-1} - \mathbb{E}_{t-2}\,\pi_{t-1}) + \bigg(\frac{1-\beta^{M}}{1-\beta}\bigg)(\pi_{t} - \mathbb{E}_{t-1}\,\pi_{t}) 
-\\ 
\prescript{\circ}{}{\tilde{b}}^{\psi, \, M}_{0}\, \hat{\psi}_{t} -
\bigg(\frac{1-\beta^{M}}{1-\beta}\bigg)(1-\rho)\, \hat{\psi}_{t-1} - 
\bigg(\frac{1-\beta^{M-1}}{1-\beta}\bigg)(1-\rho)\, \hat{\psi}_{t-2} - \cdots - \\ (1 +\beta)\, (1-\rho)\, \hat{\psi}_{t-(M-1)} -
(1-\rho)\, \hat{\psi}_{t-M}
\Bigg] + \frac{1}{\prescript{\circ}{}b^{M}}
\, \hat{v}^{1, \, M}_{t}\end{multline}
where 
\begin{multline}
 \prescript{\circ}{}{\tilde{b}}^{\psi, \, M}_{0}=(\beta + \cdots + \beta^{M-1})\, (1-\rho) + (\beta^{2} + \cdots + \beta^{M-1})\, \rho\, (1-\rho) + \cdots + \\  (\beta^{M-2} +\beta^{M-1})\, \rho^{M-3}\, (1-\rho) + \beta^{M-1}\, \rho^{M-2}\, (1-\rho) \equiv \beta\, (1-\rho)\, \bigg( \frac{1-\beta^{M-1}}{1-\beta} + \\ \beta \, \rho \, \frac{(1-\beta^{M-2})}{(1-\beta)} + \beta^{2} \, \rho^{2} \, \frac{(1-\beta^{M-3})}{(1-\beta)} + \cdots + \beta^{M-3}\, \rho^{M-3}\, \frac{(1-\beta^{2})}{1-\beta} + \beta^{M-2}\, \rho^{M-2}\bigg) \\ \equiv  
 \frac{\beta\, (1-\rho)}{(1-\beta)} \bigg( \frac{1-\beta^{M-1}\, \rho^{M-1}}{1-\beta \rho} - \beta^{M-1}\, \frac{(1-\rho^{M-1})}{(1-\rho)}\bigg)
\end{multline}
\par As with (137), all that remains is to repetitively substitute in the aggregate demand system 
\begin{multline} \pi_{t}= 
b^{\pi, \, M}_{M}\, \pi_{t-M}+ \cdots + b^{\pi, \, M}_{1}\, \pi_{t-1} + b^{y, \, M}\, \hat{y}^{e}_{t} + \\ b^{\pi, \, M}_{-1}\, \mathbb{E}_{t}\, \pi_{t+1} + \cdots + b^{\pi, \, M}_{-(M-1)}\, \mathbb{E}_{t}\, \pi_{t+M-1}  +\hat{u}^{M}_{t} 
\end{multline}
As usual, 
$b^{j, \, M}_{i}=\tilde{b}^{j, \, M}_{i}/b^{M}$. I will make repeated use of 
The \\ inflation coefficients are relatively straightforward.  
The denominator equals the previous expression (247)
plus two additional components. There is a cost \\ channel entering through the aggregate demand equation used to eliminate $\mathbb{E}_{t}\, \hat{y}^{e}_{t+1}$ and an expectations channel arising from substituting out $\hat{y}^{e}_{t-1}$, \\ specifically  
\begin{multline} 
b^{M} = \prescript{\circ}{}b^{M}- \bigg(\prescript{\circ}{}{\tilde{b}}^{y, \, M}_{-1}
+ \prescript{\circ}{}{\tilde{b}}^{y, \, M }_{-2}\, (1+\beta \, a_{y})
+  \cdots + 
\\ \prescript{\circ}{}{\tilde{b}}^{y}_{-(M-2)}\, (1+\beta \, a_{y})^{M-3}
\bigg)\beta \, a_{\pi} 
-\bigg(\frac{\prescript{\circ}{}{\tilde{b}}^{y, \, M}_{M}}{(1+ \beta \, a_{y})^{M}} + \cdots + \frac{\prescript{\circ}{}{\tilde{b}}^{y, \, M}_{1}}{1+ \beta \, a_{y}}\bigg)\equiv \\ 
\frac{(1-\beta^{M})}{1-\beta}\, (M + \eta)- \frac{\beta^{2}\, (1-\beta^{M-1})}{1-\beta}(1+\eta)\, a_{\pi}
- \\ 
\Bigg[ \frac{1-\beta^{M-2}}{1-\beta}
+ \beta \, \bigg(\frac{1-\beta^{M-3}}{1-\beta}\bigg)(1+\beta \, a_{y}) + \cdots \\ + \beta^{M-3}\, (1+\beta \, a_{y})^{M-3}\Bigg]\beta^{4}\, (1+\eta)
\, a_{\pi}\, a_{y}  \\ - \Bigg[ \frac{1-\beta}{(1+ \beta \, a_{y})^{M}} + \frac{1-\beta^{2}}{(1+ \beta \, a_{y})^{M-1}} + \cdots + \frac{1-\beta^{M}}{(1+ \beta \, a_{y})}\Bigg]\frac{\beta}{1-\beta}(1+\eta)\, a_{y} \equiv \\ 
\frac{(1-\beta^{M})}{1-\beta}\, (M + \eta)- \frac{\beta^{2}\, (1-\beta^{M-1})}{1-\beta}\, (1+\eta)\, a_{\pi} - \\ 
\Bigg[ a_{y}\bigg(\frac{\beta^{M-2} \, (1+\beta \, a_{y})^{M-2}-1}{\beta\, (1+\beta \, a_{y})-1}\bigg)
-\beta^{M-3}\bigg((1+\beta \, a_{y})^{M-2}-1\bigg)\Bigg]\frac{\beta^{4}\, (1+\eta)}{1-\beta}
\, a_{\pi} -\\ 
\Bigg[(1+\beta \, a_{y})^{M}-1- \beta^{2}\, a_{y}\bigg( \frac{ \beta^{M}\, (1+\beta \, a_{y})^{M}-1}{\beta\, (1+\beta \, a_{y})-1}\bigg)\Bigg]\frac{(1+\eta)}{(1-\beta)\, (1+\beta \, a_{y})^{M}}
\end{multline}
For each series, the simplifying procedure is to
separate each term into its \\ positive and negative components, then note that these form distinct geometric progressions. This tactic will reoccur throughout the derivation of (264).
\par For the last lag impact is only through the cost channel
 \begin{equation} \tilde{b}^{\pi, \, M}_{M}=\prescript{\circ}{}{\tilde{b}}^{\pi, \, M}_{M} -\bigg(\frac{\beta \, a_{\pi}}{1+ \beta \, a_{y}}\bigg) \prescript{\circ}{}{\tilde{b}}^{y, \, M}_{M}=
\beta \, a_{\pi}\, \frac{(1+\eta)}{1 + \beta \, a_{y}}\end{equation}
Subsequently, there are both cost channel and expected inflation effects 
\begin{multline} \tilde{b}^{\pi, \, M}_{M-1}=\prescript{\circ}{}{\tilde{b}}^{\pi, \, M}_{M-1} 
-\bigg[\frac{\prescript{\circ}{}{\tilde{b}}^{y, \, M}_{M}}{(1+ \beta a_{y})^{2}} + \frac{\prescript{\circ}{}{\tilde{b}}^{y, \, M}_{M-1}}{1+ \beta \, a_{y}}\bigg]\beta \, a_{\pi}
+ \frac{\prescript{\circ}{}{\tilde{b}}^{y, \, M}_{M}}{(1+ \beta \, a_{y})} \equiv \\ 
\beta \, 
(1 + \beta)\, (1+\eta)\, a_{\pi} 
 -\eta - \beta \, (1+\eta)\, a_{y}\Bigg[ \beta \, a_{\pi}
\bigg(\frac{1}{(1+ \beta \, a_{y})^{2}} + \frac{
(1+ \beta)}{1+ \beta \, a_{y}} 
\bigg)- \frac{1}{1+\beta \, a_{y}}\Bigg]
\end{multline}
\begin{multline} \tilde{b}^{\pi, \, M}_{M-2}=\prescript{\circ}{}{\tilde{b}}^{\pi, \, M}_{M-2} - \bigg[\frac{\prescript{\circ}{}{\tilde{b}}^{y, \, M}_{M}}{(1+ \beta \, a_{y})^{3}} + \frac{\prescript{\circ}{}{\tilde{b}}^{y, \, M}_{M-1}}{(1+ \beta \, a_{y})^{2}} + \frac{\prescript{\circ}{}{\tilde{b}}^{y, \, M}_{M-2}}{1+ \beta \, a_{y}}\bigg]\beta \, a_{\pi}
+ \frac{\prescript{\circ}{}{\tilde{b}}^{y, \, M}_{M}}{(1+ \beta \, a_{y})^{2}} + \\ \frac{\prescript{\circ}{}{\tilde{b}}^{y, \, M}_{M-1}}{(1+ \beta \, a_{y})} \equiv   
\beta \, (1 + \beta + \beta^{2})\, (1+\eta)\, a_{\pi} -
(1 + \beta)\, \eta 
- \\ \beta\, (1+\eta)\, a_{y}\Bigg[ \beta \, a_{\pi}\bigg( \frac{
1}{(1+ \beta \, a_{y})^{3}} + \frac{(1+ \beta)}{(1+ \beta\,  a_{y})^{2}} + \frac{(1+ \beta+ \beta^{2})}{(1+ \beta \, a_{y})}\bigg) - \\ 
\bigg( \frac{1}{(1+ \beta \, a_{y})^{2}} + \frac{(1+ \beta)}{(1+ \beta \, a_{y})} \bigg)\Bigg]
\end{multline}
$$\vdots$$
\begin{multline} \tilde{b}^{\pi, \, M}_{1}=\prescript{\circ}{}{\tilde{b}}^{\pi, \, M}_{1} - \bigg[\frac{\prescript{\circ}{}{\tilde{b}}^{y, \, M}_{M}}{(1+ \beta \, a_{y})^{M}} + \cdots + \frac{\prescript{\circ}{}{\tilde{b}}^{y, \, M}_{1}}{1+ \beta \, a_{y}}\bigg] \beta \, a_{\pi} + \frac{\prescript{\circ}{}{\tilde{b}}^{y, \, M}_{M}}{(1+ \beta \, a_{y})^{M-1}} + \\ \cdots + \frac{\prescript{\circ}{}{\tilde{b}}^{y, \, M}_{2}}{1+ \beta \, a_{y}}
\equiv \beta \, 
(1 + \beta + \cdots +\beta^{M-1}
)\, (1+\eta)\, a_{\pi} -(1 + \beta + \cdots + \beta^{M-2})\ \eta 
- \\ 
\beta \, \frac{(1+\eta)}{1-\beta} \, a_{y}\Bigg[ 
\beta \, a_{\pi} \bigg(\frac{1-\beta}{(1+ \beta \, a_{y})^{M}} + \frac{1-\beta^{2}}{(1+ \beta \, a_{y})^{M-1}} + \cdots +  \frac{1-\beta^{M}}{(1+ \beta \, a_{y})} \bigg)- \\ \bigg(\frac{1-\beta}{(1+\beta \, a_{y})^{M-1}} + \frac{1-\beta^{2}}{(1+ \beta \, a_{y})^{M-2}} + \cdots +  \frac{1-\beta^{M-1}}{(1+ \beta \, a_{y})}\bigg)\Bigg]
\equiv 
\\ \frac{1}{1-\beta}
\Bigg\{\beta \, (1-\beta^{M})\, (1+\eta)\, a_{\pi} - 
(1-\beta^{M-1})\, \eta 
- \\ \frac{\beta \, a_{\pi}\, (1+\eta)}{(1+ \beta \, a_{y})^{M}}\Bigg[ (1+ \beta \, a_{y})^{M}-1- \beta^{2}\, a_{y} \, \bigg\{\frac{\beta^{M}\, (1+\beta \, a_{y})^{M}-1}{\beta\, (1+\beta \, a_{y})-1}\bigg\}\Bigg] + \\ 
\frac{(1+\eta)}{(1+ \beta \, a_{y})^{M-1}}\Bigg[ (1+ \beta \, a_{y})^{M-1}-1- \beta^{2}\, a_{y} \, \bigg\{\frac{\beta^{M-1}\, (1+\beta \, a_{y})^{M-1}-1}{\beta\, (1+\beta \, a_{y})-1}\bigg\}\Bigg]\Bigg\}
\end{multline} 
\begin{multline} 
\tilde{b}^{\pi, \, M}_{-1}=\prescript{\circ}{}{\tilde{b}}^{\pi, \, M}_{-1} + \bigg(\prescript{\circ}{}{\tilde{b}}^{y, \, M}_{-2} + \prescript{\circ}{}{\tilde{b}}^{y, \, M}_{-3}\, (1+\beta \, a_{y}) +  \cdots + \prescript{\circ}{}{\tilde{b}}^{y, \, M}_{-(M-2)}\, (1+\beta \, a_{y})^{M-4}\bigg)\beta \, a_{\pi} \\ - \bigg(\prescript{\circ}{}{\tilde{b}}^{y, \, M}_{-1} + \prescript{\circ}{}{\tilde{b}}^{y, \, M}_{-2}\, (1+\beta \, a_{y}) +  \cdots + \prescript{\circ}{}{\tilde{b}}^{y, \, M}_{-(M-2)}\, (1+\beta \, a_{y})^{M-3}\bigg) \equiv\\ \beta \, \bigg[ \beta^{2}\, \bigg(\frac{1-\beta^{M-2}}{1-\beta}\bigg)(1+\eta)\, a_{\pi} - \bigg(\frac{1-\beta^{M-1}}{1-\beta}\bigg) \eta \bigg] + \\ 
\Bigg[ \beta^{4}\, \bigg(\frac{1-\beta^{M-3}}{1-\beta}\bigg) + \beta^{5}\, \bigg(\frac{1-\beta^{M-4}}{1-\beta}\bigg)(1+\beta a_{y}) + \beta^{6}\, \bigg(\frac{1-\beta^{M-5}}{1-\beta}\bigg)(1+\beta a_{y})^{2} + \cdots + \\   \beta^{M}(1+\beta a_{y})^{M-4} \Bigg]\beta\, (1+\eta)\, a_{y}\, a_{\pi} - \Bigg[ \beta^{3}\, \bigg(\frac{1-\beta^{M-2}}{1-\beta}\bigg) + \\\beta^{4}\, \bigg(\frac{1-\beta^{M-3}}{1-\beta}\bigg)(1+\beta \, a_{y}) +  \cdots + \beta^{M}\, (1+\beta \, a_{y})^{M-3}\Bigg](1+\eta)\, a_{y} \equiv 
\\ \frac{\beta}{1-\beta} \Bigg[ \beta^{2}\, (1+\eta)\, \Bigg\{\bigg[1-\beta^{M-2}
+ \beta^{2} \, a_{y}\, \bigg(\frac{\big\{ \beta\, (1+\beta \, a_{y})\big\}^{M-3}-1}{\beta\, (1+\, \beta \, a_{y})-1} \bigg) - \\ \beta^{M-2}\, \bigg( (1+\beta \, a_{y})^{M-3}-1\bigg)\bigg]
a_{\pi} -\bigg[a_{y}\, \bigg(\frac{\big\{ \beta\, (1+\beta\,  a_{y})\big\}^{M-2}-1}{\beta\, (1+\beta \, a_{y})-1} \bigg)
- \\ \beta^{M-3}\, \bigg( (1+\beta \, a_{y})^{M-2}-1\bigg)\bigg]\Bigg\}-(1-\beta^{M-1}) \, \eta \Bigg]
\end{multline}
The cost channel term comes about from substituting out future output \\ expectations in the Euler equation until one gets to $\mathbb{E}_{t}\, \hat{y}^{e}_{t+2}$. The expectation channel reflects the final step when one has reached $\mathbb{E}_{t}\, \hat{y}^{e}_{t+1}$. Once again, the tactic was 
to break each bracket into two geometric progressions. In a similar vein,  
\begin{multline} \tilde{b}^{\pi, \, M}_{-2}=\prescript{\circ}{}{\tilde{b}}^{\pi, \, M}_{-2} + \bigg(\prescript{\circ}{}{\tilde{b}}^{y, \, M}_{-3} + \prescript{\circ}{}{\tilde{b}}^{y, \, M}_{-4}\, (1+\beta \, a_{y}) +  \cdots + \prescript{\circ}{}{\tilde{b}}^{y, \, M}_{-(M-2)}(1+\beta \, a_{y})^{M-5}\bigg)\beta \, a_{\pi}  \\ -\bigg(\prescript{\circ}{}{\tilde{b}}^{y, \, M}_{-2} + \prescript{\circ}{}{\tilde{b}}^{y, \, M}_{-3}\, (1+\beta \, a_{y}) +  \cdots + \prescript{\circ}{}{\tilde{b}}^{y, \, M}_{-(M-2)}(1+\beta \, a_{y})^{M-4}\bigg) \equiv \\\beta^{2} \bigg[ \beta^{2}\, \bigg(\frac{1-\beta^{M-3}}{1-\beta}\bigg)(1+\eta)\,  a_{\pi} - \bigg(\frac{1-\beta^{M-2}}{1-\beta}\bigg) \eta \bigg] + \\ 
\Bigg[ \beta^{5}\, \bigg(\frac{1-\beta^{M-4}}{1-\beta}\bigg) + \beta^{6}\, \bigg(\frac{1-\beta^{M-5}}{1-\beta}\bigg)(1+\beta \, a_{y}) + \beta^{7}\, \bigg(\frac{1-\beta^{M-6}}{1-\beta}\bigg)(1+\beta \, a_{y})^{2} + \\ \cdots + \beta^{M}\, (1+\beta \, a_{y})^{M-5} \Bigg]\beta\, (1+\eta)\, a_{y}\, a_{\pi} - \Bigg[ \beta^{4}\, \bigg(\frac{1-\beta^{M-3}}{1-\beta}\bigg) + \\ \beta^{5}\, \bigg(\frac{1-\beta^{M-4}}{1-\beta}\bigg)\, (1+\beta \, a_{y}) +  \cdots + \beta^{M}\,(1+\beta \, a_{y})^{M-4}\Bigg]\, (1+\eta)\, a_{y} \equiv 
\\ \frac{\beta^{2}}{1-\beta} \Bigg[ \beta^{2}\, (1+\eta)\, \Bigg\{\bigg[1-\beta^{M-3}
+ \beta^{2} \, a_{y}\, \bigg(\frac{\big\{ \beta\, (1+\beta \, a_{y})\big\}^{M-4}-1}{\beta\, (1+\beta \, a_{y})-1} \bigg) - \\ \beta^{M-3}\bigg( (1+\beta \, a_{y})^{M-4}-1\bigg)\bigg]\, 
a_{\pi} -\bigg[a_{y}\, \bigg(\frac{\big\{ \beta\, (1+\beta \, a_{y})\big\}^{M-3}-1}{\beta\, (1+\beta \, a_{y})-1} \bigg)
- \\ 
\beta^{M-4}\, \bigg( (1+\beta \, a_{y})^{M-3}-1\bigg)\bigg]\Bigg\}-(1-\beta^{M-2}) \, \eta \Bigg]
\end{multline}
$$\vdots$$
The last term with both a cost and expectation channel is 
\begin{multline}\tilde{b}^{\pi, \, M}_{-(M-3)}=\prescript{\circ}{}{\tilde{b}}^{\pi, \, M}_{-(M-3)} + \beta \, a_{\pi}\, \prescript{\circ}{}{\tilde{b}}^{y, \, M}_{-(M-2)} -\bigg(\prescript{\circ}{}{\tilde{b}}^{y, \, M}_{-(M-3)} + (1+\beta \, a_{y})\, \prescript{\circ}{}{\tilde{b}}^{y, \, M}_{-(M-2)}\bigg)
\equiv  \\ (\beta^{M+1}\, a_{y} + \beta^{M} + \beta^{M-1})\, (1+\eta)\, (a_{\pi}-a_{y}) - \\ (\beta^{M-1} + \beta^{M-2} + \beta^{M-3})\, \eta - \beta^{M}\, (1+\eta)\, a_{y}
\end{multline}
At the penultimate lead there is just the expectations channel 
\begin{equation} \tilde{b}^{\pi, \, M}_{-(M-2)}=\prescript{\circ}{}{\tilde{b}}^{\pi, \, M}_{-(M-2)} 
-\prescript{\circ}{}{\tilde{b}}^{y, \, M}_{-(M-2)}=\beta^{M}\, (1+\eta)\, (a_{\pi}-a_{y}) - (\beta^{M-1} +\beta^{M-2})\, \eta 
\end{equation}
Furthest into the future neither are present
\begin{equation} \tilde{b}^{\pi, \, M}_{-(M-1)}=\prescript{\circ}{}{\tilde{b}}^{\pi, \, M}_{-(M-1)}=-\beta^{M-1}\eta\end{equation}
The single output gap term is a little more complicated, featuring terms from both the past and future.\footnote{It is formed out of two pairs of geometric progressions. 
First, substitute in 
(256)-(261) in expanded form.  
Next, separate out the expression into terms derived from lags 
$\bigg(\prescript{\circ}{}{\tilde{b}}^{y, \, M}_{i >0}\bigg)$
and those emanating from current and expected effects $\bigg(\prescript{\circ}{}{\tilde{b}}^{y, \, M}_{i \leq 0}\bigg)$
\begin{multline*}\tilde{b}^{y, \, M}
= \frac{\beta (1+\eta)a_{y}}{(1+\beta a_{y})^{M}} + \frac{(\beta^{2} + \beta)(1+\eta)a_{y}}{(1+\beta a_{y})^{M-1}} + \cdots + \frac{
(\beta^{M} + \cdots + \beta)(1+\eta)a_{y}}{(1+\beta a_{y})} + \\  (\beta^{M} + \cdots + \beta^{2}) (1+\eta)a_{y} + (\beta^{M} + \cdots + \beta^{3}) (1+\eta)a_{y}(1+\beta a_{y}) + \cdots + \beta^{M}(1+\eta)a_{y}(1+\beta a_{y})^{M-2} 
\\ \equiv \frac{\beta a_{y}(1+\eta)}{(1-\beta)(1+ \beta a_{y})}\bigg[ 1-\beta + \frac{1-\beta^{2}}{(1+ \beta a_{y})} + \cdots + \frac{1-\beta^{M-1}}{(1+ \beta a_{y})^{M-1}}\bigg] 
+ \\ \frac{\beta^{2} a_{y}(1+\eta)}{(1-\beta)}\bigg[ 1-\beta^{M-1} + \beta(1-\beta^{M-2})(1+ \beta a_{y}) + \cdots + \beta^{M-2}(1+ \beta a_{y})^{M-2}\bigg]
\end{multline*}
Inside each component there are two geometric progressions one for positive and the other featuring negative terms.
Finally, routine algebra leads to 
(275) in the text.} 
\begin{multline} \tilde{b}^{y, \, M}
= \frac{\prescript{\circ}{}{\tilde{b}}^{y, \, M}_{M}}{(1+\beta a_{y})^{M}} + \frac{\prescript{\circ}{}{\tilde{b}}^{y, \, M}_{M-1}}{(1+\beta \, a_{y})^{M-1}} + \cdots + \frac{\prescript{\circ}{}{\tilde{b}}^{y, \, M}_{1}}{(1+\beta \, a_{y})} + \prescript{\circ}{}{\tilde{b}}^{y, \, M}_{0} + \\ \prescript{\circ}{}{\tilde{b}}^{y, \, M}_{-1}\,  (1+\beta \, a_{y}) +  \cdots + \prescript{\circ}{}{\tilde{b}}^{y, \, M}_{-(M-2)}\, (1+\beta \, a_{y})^{M-2} \equiv \\  
\Bigg[ (1+ \beta \, a_{y})^{M}-1 - \beta^{2}\, a_{y} \, \bigg( \frac{\beta^{M}\, (1+ \beta \, a_{y})^{M}-1}{\beta\, (1+ \beta \, a_{y})-1}\bigg)\Bigg]\frac{(1+\eta)}{(1-\beta)\, (1+\beta \, a_{y})^{M}} + \\ \frac{\beta^{2}\, (1+\eta)}{(1-\beta)}\Bigg[ a_{y} \, \bigg( \frac{\beta^{M-1}\, (1+ \beta \, a_{y})^{M-1}-1}{\beta\, (1+ \beta \, a_{y})-1}\bigg)-\beta^{M-2}\, \bigg( (1+ \beta \, a_{y})^{M-1}-1\bigg)\Bigg]
\end{multline}
The errors are constructed as follows
\begin{multline} 
\hat{u}^{M}_{t}=
\prescript{\circ}{\circ}{b}^{x, \, M}_{M-1}\, (1-\rho)\, \hat{\psi}_{t-M} + \prescript{\circ}{\circ}{b}^{x, \, M}_{M-2}\, (1-\rho)\, \hat{\psi}_{t-(M-1)} + \cdots + \prescript{\circ}{\circ}{b}^{x, \, M}_{0}\, (1-\rho)\, \hat{\psi}_{t-1} - \\ \prescript{\circ}{\circ}{b}^{\psi, \, M}_{0}\, 
\hat{\psi}_{t} 
- \prescript{\circ}{\circ}{b}^{x, \, M}_{M-1}\bigg(\pi_{t-(M-1)}- \mathbb{E}_{t-M}\, \pi_{t-(M-1)}\bigg) - \cdots - \prescript{\circ}{\circ}{b}^{x, \, M}_{0}\bigg(\pi_{t}- \mathbb{E}_{t-1}\, \pi_{t}\bigg) - \\ \prescript{\circ}{\circ}{b}^{x, \, M}_{M-1}\bigg(\hat{y}^{e}_{t-(M-1)}- \mathbb{E}_{t-M}\, \hat{y}^{e}_{t-(M-1)}\bigg) - \cdots - \prescript{\circ}{\circ}{b}^{x, \, M}_{0}\bigg(\hat{y}^{e}_{t}- \mathbb{E}_{t-1}\, \hat{y}^{e}_{t}\bigg) +\frac{\prescript{\circ}{}b^{M}}{b^{M}} \, \hat{v}^{2, \, M}_{t}\end{multline}
In keeping with the previous notation, 
$\prescript{\circ}{\circ}b^{j, \, M}_{i}=\prescript{\circ}{\circ}{\tilde{b}}^{j, M}_{i}/b^{M}$.
Notice that the coefficients 
$b^{x, \, M}_{i}$at each horizon match those for the expectation effect in (267)-(269). As before, this is a consequence of switching between expected and observed quantities. Output gap and inflation coefficients differ in sign because of opposing expectation effects in the Euler equation. Formally, 
\begin{equation} \prescript{\circ}{\circ}{\tilde{b}}^{x, \, M}_{M-1}= \frac{\prescript{\circ}{}{\tilde{b}}^{y, \, M}_{M}}{(1+\beta \, a_{y})}\equiv \frac{\beta \, (1+\eta)\, a_{y}}{1+ \beta \, a_{y}}
\end{equation}
\begin{multline} \prescript{\circ}{\circ}{\tilde{b}}^{x, \, M}_{M-2}=\frac{\prescript{\circ}{}{\tilde{b}}^{y, \, M}_{M}}{(1+ \beta \, a_{y})^{2}} + \frac{\prescript{\circ}{}{\tilde{b}}^{y, \, M}_{M-1}}{1+ \beta \, a_{y}}\equiv \frac{\beta \, (1+\eta)\, a_{y}}{(1+ \beta a_{y})^{2}} + \frac{\beta\, (1 + \beta)\, (1+\eta)\, a_{y}}{(1+ \beta \, a_{y})} \\  = \frac{\beta\, (1+\eta)\, a_{y}}{(1+ \beta \, a_{y})^{2}} \bigg(1+ (1 + \beta)\, (1+ \beta \, a_{y})\bigg)
\end{multline}
$$ \vdots $$
\begin{multline} \prescript{\circ}{\circ}{\tilde{b}}^{x, \, M}_{0}=\frac{\prescript{\circ}{}{\tilde{b}}^{y, \, M}_{M}}{(1+ \beta \, a_{y})^{M}} + \cdots + \frac{\prescript{\circ}{}{\tilde{b}}^{y, \, M}_{1}}{1+ \beta \, a_{y}}\equiv 
\\ \frac{(1+\eta)}{(1-\beta)\, (1+ \beta \, a_{y})^{M}}
\Bigg[ (1+ \beta \, a_{y})^{M}-1 - \beta^{2}\, a_{y} \, \bigg( \frac{\beta^{M}\, (1+ \beta \, a_{y})^{M}-1}{\beta\, (1+ \beta \, a_{y})-1}\bigg)\Bigg]\end{multline}
\begin{multline} \prescript{\circ}{\circ}{\tilde{b}}^{\psi, \, M}_{0}= 
(1-\rho)\bigg( \prescript{\circ}{}{\tilde{b}}^{y, \, M}_{-1}+ \prescript{\circ}{}{\tilde{b}}^{y, \, M}_{-2}\, (1+\beta \, a_{y}) + \cdots + \prescript{\circ}{}{\tilde{b}}^{y, \, M}_{-(M-2)}\, (1+\beta \, a_{y})^{M-3}\bigg) 
+ \\ \rho\, (1-\rho)\bigg( \prescript{\circ}{}{\tilde{b}}^{y, \, M}_{-2}+ \prescript{\circ}{}{\tilde{b}}^{y, \, M}_{-3}\, (1+\beta \, a_{y}) + \cdots + \prescript{\circ}{}{\tilde{b}}^{y, \, M}_{-(M-2)}(1+\beta \, a_{y})^{M-4}\bigg) 
+ \cdots 
+ \\ \rho^{M-3}\, (1-\rho)\, \prescript{\circ}{}{\tilde{b}}^{y, \, M}_{-(M-2)} \equiv 
\frac{\beta^{3}\, (1-\rho)}{1-\beta}\, (1+\eta)\, a_{y}\Bigg[
1-\beta^{M-2} + \\ \rho \, \beta\, (1-\beta^{M-3})\, (1+\beta \, a_{y})   + \rho^{2} \, \beta^{2}\, (1-\beta^{M-4})\, (1+\beta \, a_{y})^{2}  + \cdots +  \\ \rho^{M-3} \, \beta^{M-3}\, (1-\beta)\, (1+\beta \, a_{y})^{M-3}\Bigg] \equiv  \\ 
\frac{\beta^{3}\, (1-\rho)}{1-\beta}(1+\eta)\, a_{y}\Bigg[\bigg(\frac{\big\{\rho \, \beta\, (1+\beta \, a_{y})\big\}^{M-2}-1}{\rho \, \beta\, (1+\beta \, a_{y})-1}\bigg)-  \\ \beta^{M-2}\, \bigg( \frac{\big\{\rho\, (1+\beta \, a_{y})\big\}^{M-2}-1}{\rho\, (1+\beta \, a_{y})-1} \bigg)\Bigg]
\end{multline}
Thus, a convenient Phillips curve interrelationship \begin{equation}\pi_{t}=\tilde{g}_{0}(\mathbb{E}_{t}\, \pi_{t+M-2}, \, \cdots , \, \mathbb{E}_{t}\, \pi_{t+1}, \, \pi_{t-1}, \,  \cdots ,  \, \pi_{t-M}, \,  Y^{e}_{t}, \, \psi_{t}, \, \psi_{t-1}, \, \cdots , \, \psi_{t-M})
\end{equation}
appears. A couple of hurdles remain on the way to recursive equilibrium. First, lag the relationship and expand $M-3$ periods, then expand the expectations to get  
\begin{multline}\pi_{t}=\tilde{g}_{1}(\mathbb{E}_{t}, \pi_{t+1}, \, \pi_{t-1}, \,  \cdots ,  \, \pi_{t-2(M-1)}, \\ \,  Y^{e}_{t}, \, \cdots, \, Y^{e}_{t-(M-1)}, \, \psi_{t}, \, \psi_{t-1}, \, \cdots , \, \psi_{t-2(M-1)})\end{multline}
To reach the due destination, make 
tedious substitutions of the aggregate \\ demand system to remove lagged output, in the same way that these past terms were removed from 
(246). Lastly, rearrange for $\mathbb{E}_{t}\pi_{t+1}$,
which is valid because $\tilde{b}^{\pi, \, M}_{-(M-1)}<0$, in 
(274). The error process (276) is unchanged. The terminal argument to prevent the system collapsing in on its self applies mutatis mutandis from the twice annual price changing realm via (266), (267) and (276).
\end{proof}
\section{Taylor Contracts (Additional Results)}
This appendix features two sets of interesting derivations, for which there was no space in the main text, spread across three subsections. The first processes disturbances into forms fit for future investigation. The second gives general \\ patient limit Phillips curves, extending (115), (116) and (264). The third \\ completes the process for the error components.  
\subsection{Full Error Coefficients}
This subsection contains the lengthy expressions for the two final error terms from the Taylor pricing Phillips curves (157) and (276). For ease of location shorter and longer contracts each have their own heading. 
\subsubsection{Taylor (2) Disturbances}
Here is the expression.
\begin{multline} \hat{u}^{2}_{t}=b^{\psi, \, 2}_{2}\, \hat{\psi}_{t-2} + b^{\psi, \, 2}_{1}\, \hat{\psi}_{t-1} + b^{\psi, \, 2}_{0}\, \hat{\psi}_{t} + b^{\pi(e), \, 2}_{1}\, (\pi_{t-1}-\mathbb{E}_{t-2}\, \pi_{t-1}) + \\ b^{\pi(e), \, 2}_{0}\, (\pi_{t}-\mathbb{E}_{t-1}\, \pi_{t}) + 
b^{y(e)\, 2}_{1}\, (\hat{y}^{e}_{t-1}-\mathbb{E}_{t-2}\, \hat{y}^{e}_{t-1}) + b^{y(e), \, 2}_{0}\, (\hat{y}^{e}_{t}-\mathbb{E}_{t-1}\, \hat{y}^{e}_{t})\end{multline}
Throughout this subsection, normalization conventions follow the equation in the main text. The procedure is to unfurl earlier error terms (130), (132) and (145) inside (157).
 \begin{equation} \tilde{b}^{\psi, \, 2}_{2}= \prescript{\circ}{\circ}{\tilde{b}}^{\psi, \, 2}_{2}+ \prescript{\circ}{}{\tilde{b}}^{\psi, \, 2}_{2}=-\frac{(1-\rho)\, (1+\eta)}{1 +\beta \, a_{y}}\end{equation}
\begin{equation} \tilde{b}^{\psi, \, 2}_{1}=\prescript{\circ}{\circ}{\tilde{b}}^{\psi, \, 2}_{1} + \prescript{\circ}{}{\tilde{b}}^{\psi, \, 2}_{1} =-\frac{(1-\rho)\, (1+\eta)}{(1 +\beta \, a_{y})^{2}}\bigg( 1 +\beta + \beta^{2}\, a_{y}\bigg)\end{equation}
\begin{equation} \tilde{b}^{\psi, \, 2}_{0}=\prescript{\circ}{}{\tilde{b}}^{\psi, \, 2}_{0}=-\beta\, (1-\rho)\, (1+\eta)\end{equation}
\begin{equation} \tilde{b}^{\pi(e), \, 2}_{1}= 
\prescript{\circ}{\circ}{\tilde{b}}^{x, \, 2}_{1} + \prescript{\circ}{}{\tilde{b}}^{x, \, 2}_{1} + \beta = \frac{(1+\eta)\, a_{y}}{ 1+ \beta \, a_{y}}+ \beta\end{equation}
\begin{equation} \tilde{b}^{\pi(e), \, 2}_{0}=  \prescript{\circ}{\circ}{\tilde{b}}^{x, \, 2}_{0} + \prescript{\circ}{\circ}{\tilde{b}}^{x, \, 2}_{0}+\beta= 
\bigg( 1 + \beta + \beta \big\{ 1 + \beta(1-\beta)\big\}a_{y}\bigg)\frac{(1+\eta)}{(1+\beta a_{y})^{2}} + \beta\end{equation}
\begin{equation} \tilde{b}^{y(e), \, 2}_{1}=   \prescript{\circ}{\circ}{\tilde{b}}^{x, \, 2}_{1} + \prescript{\circ}{}{\tilde{b}}^{x, \, 2}_{1}  +\beta\, (1+\eta) = \frac{(1+\eta)\, a_{y}}{ 1+ \beta }\bigg( \beta + (1+\beta^{2})\, a_{y}\bigg)\end{equation}
\begin{equation} \tilde{b}^{y(e), \, 2}_{0}=  \prescript{\circ}{\circ}{\tilde{b}}^{x, \, 2}_{0}+ \prescript{\circ}{}{\tilde{b}}^{x, \, 2}_{0} +\beta\, (1+\eta)= 
\bigg( 2 + \beta + \beta \, \big\{ 3 + \beta\, (1-\beta)\big\}a_{y} + \beta^{2} \, a_{y}^{2}\bigg)\frac{(1+\eta)}{(1+\beta \, a_{y})^{2}} \end{equation}
Naturally, in order, (158), (146), (159), (147), (148), (160), (149), (161) and (150) played a role in reaching the 
clean result sought. Notice that all the expectation adjustment \\ coefficients are positive, whilst all the error term coefficients are negative. There is an opportunity here for computation of an exact form solution, which I offer for others.
\subsubsection{Taylor (M) Disturbances}
In the same spirit, I proceed from 
\begin{multline} \hat{u}^{M}_{t}= b^{\psi, \, M}_{1}\, \hat{\psi}_{t-(M-1)}+ \cdots + b^{\psi, \, M}_{1}\, \hat{\psi}_{t-1} + b^{\psi, \, M}_{0}\, \hat{\psi}_{t} + \\ b^{\pi(e), \, M}_{M-1}\, (\pi_{t-(M-1)}-\mathbb{E}_{t-M}\, \pi_{t-(M-1)}) + b^{\pi(e), \, M}_{M-2}(\pi_{t-(M-2)}-\mathbb{E}_{t-(M-1)}\, \pi_{t-(M-2)}) \\ + \cdots +  b^{\pi(e), \, M}_{1}\, (\pi_{t-1}-\mathbb{E}_{t-2}\, \pi_{t-1}) + b^{\pi(e), \, M}_{0}\, (\pi_{t}-\mathbb{E}_{t-1}\, \pi_{t}) + \\ b^{y(e), \, M}_{M-1}\, (\hat{y}^{e}_{t-(M-1)}-\mathbb{E}_{t-M}\, \hat{y}^{e}_{t-(M-1)}) + b^{y(e), \, M}_{M-2}\, (\hat{y}^{e}_{t-(M-2)}-\mathbb{E}_{t-(M-1)}\, \hat{y}^{e}_{t-(M-2)}) \\ + \cdots +  b^{y(e), \, M}_{1}\, (\hat{y}^{e}_{t-1}-\mathbb{E}_{t-2}\, \hat{y}^{e}_{t-1}) + b^{y(e), \, M}_{0}\, (\hat{y}^{e}_{t}-\mathbb{E}_{t-1}\, \hat{y}^{e}_{t})\end{multline}
whose coefficients are 
\begin{equation}
 \tilde{b}^{\psi, \, M}_{M}= \prescript{\circ}{\circ}{\tilde{b}}^{x, \, M}_{M-1}\, (1-\rho) 
 - (1-\rho)\, (1+\eta)=  -\frac{(1-\rho)\, (1+\eta)}{(1+ \beta\,  a_{y})}\end{equation}

\begin{multline}
 \tilde{b}^{\psi, \, M}_{M-1}= \prescript{\circ}{\circ}{\tilde{b}}^{x, \, M}_{M-2}\, (1-\rho) 
 - (1+\beta)\, (1-\rho)\, (1+\eta)= \\ 
-\frac{(1-\rho)\, (1+\eta)}{(1+ \beta \, a_{y})^{2}}\bigg( 1+ \beta\, (1+ \beta \, a_{y})\bigg)
\end{multline}
$$\vdots$$
\begin{multline}
 \tilde{b}^{\psi, \, M}_{1}= \prescript{\circ}{\circ}{\tilde{b}}^{x, \, M}_{0}(1-\rho) 
 - \bigg(\frac{1-\beta^{M}}{1-\beta}\bigg)(1-\rho)(1+\eta)= 
\\ -\frac{(1-\rho)\, (1+\eta)}{(1+ \beta \, a_{y})^{M}}\bigg( \frac{\beta^{M}(1+ \beta \, a_{y})^{M}-1}{\beta\, (1+ \beta \, a_{y})-1}\bigg)
\end{multline}
\begin{multline} \tilde{b}^{\psi, \, M}_{0}= \prescript{\circ}{\circ}{b}^{\psi, \, M}_{0} + (1+\eta)\, \prescript{\circ}{}{\tilde{b}}^{\psi, \, M}_{0} = \\ \frac{\beta \, (1+\eta)}{(1-\beta)}\Bigg(\frac{1-\rho - \beta^{M-1}\, \big\{ 1-\beta \, \rho-\rho^{M}\, (1-\beta)\big\} }{1-\beta \, \rho} + \\ \frac{\beta^{2}\, a_{y} \, (1-\rho)}{\big\{\rho \, \beta\, (1+\beta \, a_{y})-1\big\}\, \big\{\rho \, (1+\beta\,  a_{y})-1\big\}}\times \\ \bigg[ 1-\beta^{M-2} + \big\{\rho (1+\beta \, a_{y})-1\big\}\big\{ \beta^{M-2}\, \big\{\rho \, (1+\beta \, a_{y})-1\big\}^{M-2}(1-\beta)-1-\beta^{M-1}\big\}\bigg]\Bigg)
\end{multline}
\begin{equation}
 \tilde{b}^{\pi(e), \, M}_{M-1}= -\prescript{\circ}{\circ}{\tilde{b}}^{x, \, M}_{M-1}
 + (1+\eta)=  \frac{(1+\eta)}{(1+ \beta \, a_{y})}
\end{equation}
\begin{equation}
 \tilde{b}^{\pi(e), \, M}_{M-2}= -\prescript{\circ}{\circ}{\tilde{b}}^{x, \, M}_{M-2}
 + (1+ \beta )\, (1+\eta)=  
 \frac{(1+\eta)}{(1+ \beta \, a_{y})^{2}}\bigg( 1+ \beta\, (1+ \beta\,  a_{y})\bigg)
 \end{equation}
 $$\vdots$$
 \begin{multline}
 \tilde{b}^{\pi(e), \, M}_{1}= 
 - \prescript{\circ}{\circ}{\tilde{b}}^{x, \, M}_{1}
 + \bigg(\frac{1-\beta^{M-1}}{1-\beta}\bigg)(
 1+\eta)
 = \\ \frac{(1+\eta)}{(1+ \beta \,a_{y})^{M-1}}\bigg( \frac{\beta^{M-1}\, (1+ \beta \, a_{y})^{M-1}-1}{\beta\, (1+ \beta \, a_{y})-1}\bigg)
\end{multline} 
\begin{multline}
 \tilde{b}^{\pi(e), \, M}_{0}= 
 - \prescript{\circ}{\circ}{\tilde{b}}^{x, \, M}_{0}
 + \bigg(\frac{1-\beta^{M}}{1-\beta}\bigg)(
 1+\eta)
 = \\ 
\frac{(1+\eta)}{(1+ \beta \, a_{y})^{M}}\bigg( \frac{\beta^{M}\, (1+ \beta \, a_{y})^{M}-1}{\beta\, (1+ \beta \, a_{y})-1}\bigg)
\end{multline} 
\begin{equation}
 \tilde{b}^{y(e), \, M}_{M-1}= \tilde{b}^{\pi(e), \, M}_{M-1}=  \frac{(1+\eta)}{(1+ \beta \, a_{y})}
\end{equation}
\begin{equation}
 \tilde{b}^{y(e), \, M}_{M-2}=  \frac{(1+\eta)}{(1+ \beta \, a_{y})^{2}}\bigg( 1+ \beta\, (1+ \beta \, a_{y})\bigg)
 \end{equation}
 $$\vdots$$
 \begin{equation}
 \tilde{b}^{y(e), \, M}_{1}=\tilde{b}^{\pi(e), \, M}_{1}= \frac{(1+\eta)}{(1+ \beta \, a_{y})^{M-1}}\bigg( \frac{\beta^{M-1}\, (1+ \beta \, a_{y})^{M-1}-1}{\beta\, (1+ \beta \, a_{y})-1}\bigg)
\end{equation} 
\begin{equation}
 \tilde{b}^{y(e), \, M}_{0}=\tilde{b}^{\pi(e), \, M}_{0}= 
\frac{(1+\eta)}{(1+ \beta \, a_{y})^{M}}\bigg( \frac{\beta^{M}\, (1+ \beta \, a_{y})^{M}-1}{\beta\, (1+ \beta \, a_{y})-1}\bigg)
\end{equation} 
The derivation involves recursively replacing previous error terms in (276) with their first-principled counterpart, using 
(242) and (262), then simplifying. 
\\ (277)-(280) were mustered in this effort. 
\subsection{Patient Limit Slope Coefficients (Taylor M)}
This subsection is divided into two subsubsections. The first deals with the multiple output gap form (246), whilst the second treats the main result (264). The errors are omitted for brevity. There is no need to bother with (241), where the coefficients agree with (367) in SELCKE.  
\subsubsection{Multiple Output Gap Form}
Each coefficient is built from the denominator in (247), which becomes
\begin{equation} \prescript{\circ}{}b^{M} =
M\, (M + \eta)- (M-1)\, (1+\eta)\, a_{\pi}\end{equation}
The inflation numerators, corresponding to (248)-(255), are 
\begin{equation} \prescript{\circ}{}{\tilde{b}}^{\pi, \, M}_{M}= (1+\eta)\, a_{\pi}\end{equation}
\begin{equation}  \prescript{\circ}{}{\tilde{b}}^{\pi, \, M}_{M-1}= 2\,(1+\eta)\, a_{\pi} -\eta
\end{equation}
\begin{equation}  \prescript{\circ}{}{\tilde{b}}^{\pi, \, M}_{M-2}= 3\, (1+\eta)\, a_{\pi} -2\, \eta\end{equation}
$$ \vdots $$
\begin{equation}  \prescript{\circ}{}{\tilde{b}}^{\pi, \, M}_{1}= M\, (1+\eta)\, a_{\pi} - (M-1)\, \eta \end{equation}
\begin{equation}  \prescript{\circ}{}{\tilde{b}}^{\pi, \, M}_{-1}= (M-2)\, (1+\eta)\, a_{\pi} -
(M-1) \, \eta 
\end{equation}
\begin{equation}  \prescript{\circ}{}{\tilde{b}}^{\pi, \, M}_{-2}= (M-3)\, (1+\eta)\, a_{\pi} -(M-2)\, \eta 
\end{equation}
$$\vdots$$
\begin{equation} \prescript{\circ}{}{\tilde{b}}^{\pi, \, M}_{-(M-2)}=(1+\eta)\, a_{\pi}-2\, \eta\end{equation}
\begin{equation} \prescript{\circ}{}{\tilde{b}}^{\pi, \, M}_{-(M-1)}=-\eta\end{equation} 
On the other hand, for the output gap, (256)-(261) become
\begin{equation} \prescript{\circ}{}{\tilde{b}}^{y, \, M}_{M}=
(1+\eta)\, a_{y}\end{equation}
\begin{equation} \prescript{\circ}{}{\tilde{b}}^{y, \, M}_{M-1}=2\, (1+\eta)\, a_{y}\end{equation}
$$\vdots$$
\begin{equation}\prescript{\circ}{}{\tilde{b}}^{y, \, M}_{1}=
M\, (1+\eta)\, a_{y} \end{equation}
\begin{equation}\prescript{\circ}{}{\tilde{b}}^{y, \, M}_{0}=
(M-1)\, (1+\eta)\, a_{y} \end{equation}
\begin{equation} \prescript{\circ}{}{\tilde{b}}^{y, \, M}_{-1}=
(M-2)\, (1+\eta)\, a_{y}\end{equation}
$$\vdots$$
\begin{equation} \prescript{\circ}{}{\tilde{b}}^{y, \, M}_{-(M-2)}=(1+\eta)\, a_{y}\end{equation}
\subsubsection{Main Phillips Curve}
The denominator (265) takes the form  
 \begin{multline} b^{M} = 
M\, (M + \eta)- 
(M-1)\, (1+\eta)\, a_{\pi} -\frac{(1+\eta)}{a
_{y}}\Bigg\{a_{\pi}\bigg(a_{y}\, \big\{(1+a_{y})^{M-2} -(M-2)\big\} \\ + (1+a_{y})\big\{ (1+a_{y})^{M-3} -1\big\}\bigg) + \\ \frac{a_{y}\, \big\{M\, (1+a_{y})^{M}-1\big\} + 
 (1 +a_{y})\big\{(1+a_{y})^{M-1}-1\big\}}{(1+a_{y})^{M}}
 \Bigg\}\end{multline}
where expressions are valid so long as $a_{y} \neq 0$. The numerators, matching (266)-(275), are as follows 
\begin{equation} {\tilde{b}}^{\pi, \, M}_{M}= 
a_{\pi}\frac{(1+\eta)}{1+a_{y}}\end{equation}
\begin{equation} \tilde{b}^{\pi, \, M}_{M-1}=
2\, (1+\eta)\, a_{\pi} -\eta 
- (1+\eta)\, a_{y}\Bigg[ a_{\pi}
\bigg(\frac{1}{(1+ 
a_{y})^{2}} + \frac{
2}{1+ a_{y}} \bigg)- \frac{1}{1+a_{y}}\Bigg]
\end{equation}
\begin{multline} \tilde{b}^{\pi, \, M}_{M-2}=
3\, (1+\eta)\, a_{\pi} -2 \, \eta
- (1+\eta)\, a_{y}\, \Bigg[ 
a_{\pi}\bigg( \frac{1}{(1+ a_{y})^{3}} + \frac{2}{(1+ a_{y})^{2}} + \frac{3}{(1+ a_{y})}\bigg) - \\ \bigg( \frac{
1}{(1+ a_{y})^{2}} + \frac{2}{(1+ a_{y})} \bigg)\Bigg]
\end{multline}
$$\vdots$$
\begin{multline} \tilde{b}^{\pi, \, M}_{1}= M\, (1+\eta)\, a_{\pi} - (M-1)\, \eta - \frac{a_{\pi}\, (1+\eta)}{a_{y}\, (1+a_{y})^{M}}\bigg((M\, a_{y}-1)\, (1+a_{y})^{M} + 1 \bigg)
+ \\ \frac{(1+\eta)}{a_{y}\, (1+a_{y})^{M-1}}\bigg(\big\{(M-1)\, a_{y}-1\big\}(1+a_{y})^{M-1} + 1 \bigg)
\end{multline}
\begin{multline} \tilde{b}^{\pi, \, M}_{-1}= 
(M-2)\, (1+\eta)\, a_{\pi} -
(M-1) \, \eta -\frac{(1+\eta)}{a_{y}}\Bigg\{a_{y}\, \big\{(1+a_{y})^{M-2} -(M-2)\big\} 
 \\ + (1+a_{y})\, \big\{ (1+a_{y})^{M-3} -1\big\}
- \\ a_{\pi}\bigg(a_{y}\, \big\{(1+a_{y})^{M-3} -(M-3)\big\}  
+ (1+a_{y})\, \big\{ (1+a_{y})^{M-4} -1\big\}\bigg)\Bigg\}\end{multline}
$$\vdots$$
\begin{equation}\tilde{b}^{\pi, \, M}_{-(M-3)}=
(2 +a_{y} )\, (1+\eta)\, (a_{\pi}-a_{y}) - 
3\, \eta - (1+\eta)\, a_{y}
\end{equation}
\begin{equation} \tilde{b}^{\pi, \, M}_{-(M-2)}=
(1+\eta)\, (a_{\pi}-a_{y}) - 2\, \eta 
\end{equation}
\begin{equation} \tilde{b}^{\pi, \, M}_{-(M-1)}=-\eta\end{equation}
\begin{multline} {\tilde{b}}^{y, \, M}
= \frac{(1+\eta)}{a_{y}}\Bigg\{\frac{(M\, a_{y}-1)(1+a_{y})^{M} +1}{(1+a_{y})^{M}}
 + (M-1)\, a^{2}_{y} + \\ 
 (1+a_{y})\bigg(a_{y}\, \big\{(1+a_{y})^{M-2} -(M-2)\big\} 
+ (1+a_{y})\, \big\{ (1+a_{y})^{M-3} -1\big\}\bigg)\Bigg\}\end{multline}
\subsection{Patient Errors}
The limiting shock coefficients as $\beta \rightarrow 1$ are again grouped according to contract length. 
\subsubsection{Taylor (2)}
\begin{equation} \tilde{b}^{\psi, \, 2}_{2}=-\frac{(1-\rho)\, (1+\eta)}{1 + a_{y}}\end{equation}
\begin{equation} \tilde{b}^{\psi, \, 2}_{1} =-(1-\rho)\, (1+\eta)\frac{(2 + a_{y})}{(1 +a_{y})^{2}}\end{equation}
\begin{equation} \tilde{b}^{\psi, \, 2}_{0}=-(1-\rho)\, (1+\eta)\end{equation}
\begin{equation} \tilde{b}^{\pi(e), \, 2}_{1} = \frac{(1+\eta)\, a_{y}}{ 1+ a_{y}}+ 1\end{equation}
\begin{equation} \tilde{b}^{\pi(e), \, 2}_{0}= 
(1+\eta)\, \frac{(2 + a_{y})}{(1 +a_{y})^{2}} + 1 \end{equation}
\begin{equation} \tilde{b}^{y(e), \, 2}_{1} = \frac{(1+\eta)}{2}a_{y}(1 + 2a_{y})\end{equation}
\begin{equation} \tilde{b}^{y(e), \, 2}_{0}= 
(1+\eta)\, \frac{(3+ 3\, a_{y} + a_{y}^{2})}{(1+a_{y})^{2}}\end{equation}
which were a rendering of (284)-(290). 
\subsubsection{Taylor (M)}
Performing the same operation on (292)-(303) yields the following expressions.\footnote{The most difficult answer to obtain was (339). Rather than resorting to L'H{\^o}pital's rule with (295), I regressed to the original expressions in search of an elegant solution. It is easy to perceive that as time preference recedes, the limit of (263) is  
$$\prescript{\circ}{}{\tilde{b}}^{\psi, \, M}_{0}=(1-\rho)\bigg[M-1 + (M-2)\, \rho + (M-3)\, \rho^{2} + \cdots + 2\, \rho^{M-3} + \rho^{M-2}\bigg]$$
distinguishing between terms dependent and independent of $M$ 
gives $$(1-\rho)\bigg[ M\, (1+ \rho + \cdots + \rho^{M-2})-\underbrace{\big\{1+2\,\rho + 3\, \rho^{2} + \cdots + (M-1)\, \rho^{M-2}\big\}}_{S(\rho)}\bigg]$$
The first bracket is an ordinary geometric progression. A second geometric series emerges along with two further terms when $S(\rho)$ is evaluated, via subtracting $\rho S(\rho)$ and telescoping. After some simplification, I arrive at the 
bottom row of (339). The method for finding $\prescript{\circ}{\circ}{b}^{\psi, \, M}_{0}$ is the same. Notice that $\rho =1$ is still a factor of the second row and thus $\tilde{b}^{\psi, \, M}_{0}$ itself, as a consequence of L'H{\^o}pital's rule and the factor theorem.}
\begin{equation} \tilde{b}^{\psi, \, M}_{M}= -\frac{(1-\rho)\, (1+\eta)}{(1+ a_{y})}\end{equation}
\begin{equation}
 \tilde{b}^{\psi, \, M}_{M-1}= -(1-\rho)\,(1+\eta)\frac{(2 + a_{y})}{(1+ a_{y})^{2}}
\end{equation}
$$\vdots$$
\begin{equation}
 \tilde{b}^{\psi, \, M}_{1}= -(1-\rho)(1+\eta)\bigg( \frac{(1+ a_{y})^{M}-1}{a_{y}(1+ a_{y})^{M}}\bigg) 
\end{equation}
\begin{multline} \tilde{b}^{\psi, \, M}_{0}= \prescript{\circ}{\circ}{b}^{\psi, \, M}_{0} + (1+\eta)\, \prescript{\circ}{}{\tilde{b}}^{\psi, \, M}_{0} = (1+\eta) \Bigg[ 
(M-1)\,\rho^{M-1}-1 -\rho \bigg( \frac{1 -\rho^{M-2}}{1 -\rho}\bigg) + \\ (1 -\rho)\, a_{y}\, \Bigg\{ (M-2)\, \rho^{M-2}\, (1+a_{y})^{M-2}-1 - \\ \rho\, (1+a_{y})\bigg( \frac{\rho^{M-3}\, (1+a_{y})^{M-3}-1}{\rho\, (1+a_{y})-1}\bigg)\Bigg\}\Bigg]
\end{multline}
\begin{equation}
 \tilde{b}^{\pi(e), \, M}_{M-1}=  \frac{(1+\eta)}{(1+ a_{y})}
\end{equation}
\begin{equation}
 \tilde{b}^{\pi(e), \, M}_{M-2}= (1+\eta)\frac{(2 + a_{y})}
 {(1 +a_{y})^{2}} \end{equation}
 $$\vdots$$
 \begin{equation}
 \tilde{b}^{\pi(e), \, M}_{1}= (1+\eta)\bigg( \frac{(1+ a_{y})^{M-1}-1}{a_{y}(1+ a_{y})^{M-1}}\bigg)\end{equation} 
\begin{equation}
 \tilde{b}^{\pi(e), \, M}_{0}= (1+\eta)\bigg( \frac{(1+ a_{y})^{M}-1}{a_{y}(1+ a_{y})^{M}}\bigg)
\end{equation} 
\begin{equation}
 \tilde{b}^{y(e), \, M}_{M-1}= \frac{(1+\eta)}{(1+ a_{y})}
\end{equation}
\begin{equation}
 \tilde{b}^{y(e), \, M}_{M-2}= (1+\eta)\, \frac{(2 + a_{y})}{(1 +a_{y})^{2}} 
 \end{equation}
 $$\vdots$$
 \begin{equation}
 \tilde{b}^{y(e), \, M}_{1}=(1+\eta)\bigg( \frac{(1+ a_{y})^{M-1}-1}{a_{y}\, (1+ a_{y})^{M-1}}\bigg)
\end{equation} 
\begin{equation}
 \tilde{b}^{y(e), \, M}_{0}=(1+\eta)\bigg( \frac{(1+ a_{y})^{M}-1}{a_{y}\, (1+ a_{y})^{M}}\bigg)
\end{equation} 
\section{Eigenvalue Configuration Results}
This final leg of the paper completes the proof of the internal consistency of the two Phillips curves (115) and (116). Both sojourns chart the same course, beginning by deriving the relevant characteristic equation and ceasing with an eigenvalue evaluation.
\subsection{Proof of Proposition 8: Existence Taylor $(2)$}
Following Theorem 3 (SELCKE), all that is needed is to verify eigenvalue \\ conditions.
\begin{lemma} The characteristic polynomial of the Taylor $(2)$ period contract model is given by 
\begin{multline} \lambda^{4} 
+ \bigg(\frac{\big\{4 + \eta -(1+\eta)\, a_{\pi} +a_{y}\big\}\, (1+a_{y})^{2}+ 
(1+\eta)\, a_{y} }{\eta\, (1+a_{y})^{2}}\bigg)\lambda^{3}- \\ 
\Bigg\{a_{y}\, (1+\eta) + \big\{a_{\pi}\, a_{y}\, (1+\eta) + 2\, (a_{\pi}-a_{y})\, (1+\eta) -\eta\big\}\, (1+a_{y}) + \\ \big\{ 2\, (2+\eta)-(1+\eta)\, a_{\pi}\big\}\, (1+a_{y})^{2}\Bigg\}\frac{\lambda^{2}}{\eta(1+a_{y})} 
+ \\ \bigg(\frac{ \big\{2\, (1+\eta) \, a_{\pi}-\eta \big\}\, (1 + a_{y})- (1+\eta)\, \big\{ a_{\pi} + a_{y}\, (2\, a_{\pi}-1)\big\}}{\eta}\bigg)\lambda + \\ \frac{a_{\pi}\, (1+\eta)}{\eta}= 0\end{multline}
\end{lemma}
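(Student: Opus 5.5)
The characteristic polynomial will come directly from the compact two-period Phillips curve (151) in the patient limit $\beta\rightarrow 1$, closed with the ZINSS aggregate demand relation (38). Everything is set to $\beta=1$ throughout, so the coefficient expressions (152)--(156) collapse to the values used around (117)--(121).

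\textbf{Step 1: reduce to a homogeneous system.} Drop the error term $\hat{u}^{2}_{t}$ and the shocks, since eigenvalues depend only on the homogeneous part. Use (38) with $\hat{\psi}=0$,
$$(1+a_{y})\,\hat{y}^{e}_{t} = -a_{\pi}\,\pi_{t} + \mathbb{E}_{t}\,\hat{y}^{e}_{t+1} + \mathbb{E}_{t}\,\pi_{t+1},$$
together with
$$b^{2}\,\pi_{t}=\tilde{b}^{\pi,2}_{2}\,\pi_{t-2}+\tilde{b}^{\pi,2}_{1}\,\pi_{t-1}+\tilde{b}^{y,2}\,\hat{y}^{e}_{t}+\tilde{b}^{\pi,2}_{-1}\,\mathbb{E}_{t}\,\pi_{t+1}.$$
This is the state space of Proposition 9, $\mathbf{Z}_{t}=(\pi_{t},\pi_{t-1},\pi_{t-2},Y^{e}_{t})'$. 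However, inflation enters at leads and lags $t+1,\dots,t-2$, which suggests a quartic in inflation once output is eliminated.

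\textbf{Step 2: eliminate output.} Substitute trial solutions $\pi_{t}=\lambda^{t}$ and $\hat{y}^{e}_{t}=\kappa\lambda^{t}$, with expectations replaced by realisations. From demand,
$$\kappa\,(1+a_{y}-\lambda)=\lambda-a_{\pi}.$$
Substituting into the Phillips curve and multiplying through by $(1+a_{y}-\lambda)\lambda^{2}$ would give a quartic in $\lambda$.

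The factor $(1+a_{y})$ multiplying $\lambda^{3}$ and the $(1+a_{y})^{2}$ denominators in the target suggest that the author did not do this. Instead, the author appears to have expanded output lags using $\hat{y}^{e}_{t-1}=(\dots)/(1+a_{y})$ repeatedly, i.e.\ worked from the $\pi$-only recursion that (151) becomes after the substitutions behind (152)--(156). So I will work with (151) as an inflation-only difference equation. The output coefficient $\tilde{b}^{y,2}$ is absorbed by lagging (38) once and using the resulting $(1+a_{y})^{-1}$, $(1+a_{y})^{-2}$ weights. The plan is then to divide through by the leading coefficient $\tilde{b}^{\pi,2}_{-1}=-\eta$ so that the polynomial is monic. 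This accounts for the common denominator $\eta$ and the constant term $a_{\pi}(1+\eta)/\eta$, which in the patient limit is $-\tilde{b}^{\pi,2}_{2}(1+a_{y})/\tilde{b}^{\pi,2}_{-1}$ with the $(1+a_{y})$ cleared.

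\textbf{Step 3: match coefficients.} Check the constant term and the $\lambda$ term against $\tilde{b}^{\pi,2}_{2}$ and $\tilde{b}^{\pi,2}_{1}$ at $\beta=1$. The $\lambda$ term is $\{2(1+\eta)a_{\pi}-\eta\}(1+a_{y})-(1+\eta)\{a_{\pi}+a_{y}(2a_{\pi}-1)\}$, which is (154) multiplied by $(1+a_{y})$ up to the $(1+a_{y})^{-2}$ piece. Then check the $\lambda^{2}$ term against $b^{2}(1+a_{y})$ from (152), plus the $a_{\pi}$-cross terms, and the $\lambda^{3}$ term against the expectation channel from (38), which contributes $\{4+\eta-(1+\eta)a_{\pi}+a_{y}\}$ together with $(1+\eta)a_{y}/(1+a_{y})^{2}$ from $\tilde{b}^{y,2}$.

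\textbf{Main obstacle.} The hard part is bookkeeping: deciding exactly which substitution route (lagging (38) and in which direction) reproduces the stated grouping. Here I would first verify the two extreme coefficients, which are insensitive to the route, and then fix the middle two by expanding $(1+a_{y}-\lambda)$ products symbolically.
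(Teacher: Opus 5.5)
Your Step 1 and the relation $\kappa(1+a_{y}-\lambda)=\lambda-a_{\pi}$ are right. The elimination you first describe in Step 2 (substitute into (151), multiply by $(1+a_{y}-\lambda)\lambda^{2}$) is exactly the paper's proof. The paper writes it in lag-operator form, with $\hat{y}^{e}_{t}=(\mathbb{L}^{-1}-a_{\pi})\pi_{t}/(1+a_{y}-\mathbb{L}^{-1})$ and $\lambda=\mathbb{L}^{-1}$.

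The gap is that you abandon this route on a false inference. Carried out and divided by the leading coefficient $-\tilde{b}^{\pi,2}_{-1}=\eta$, it gives
\begin{multline*}
\lambda^{4}-\frac{b^{2}+\tilde{b}^{\pi,2}_{-1}(1+a_{y})+\tilde{b}^{y,2}}{\tilde{b}^{\pi,2}_{-1}}\lambda^{3}+\frac{a_{\pi}\tilde{b}^{y,2}+\tilde{b}^{\pi,2}_{1}+b^{2}(1+a_{y})}{\tilde{b}^{\pi,2}_{-1}}\lambda^{2}\\
-\frac{\tilde{b}^{\pi,2}_{1}(1+a_{y})-\tilde{b}^{\pi,2}_{2}}{\tilde{b}^{\pi,2}_{-1}}\lambda-\frac{\tilde{b}^{\pi,2}_{2}(1+a_{y})}{\tilde{b}^{\pi,2}_{-1}}=0,
\end{multline*}
which is (350). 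The explicit factors $(1+a_{y})$ come from $1+a_{y}-\lambda$. The $(1+a_{y})^{-1}$ and $(1+a_{y})^{-2}$ denominators are already inside $b^{2}$, $\tilde{b}^{\pi,2}_{2}$, $\tilde{b}^{\pi,2}_{1}$ and $\tilde{b}^{y,2}$ at $\beta=1$, i.e.\ (117)--(121). Inserting these reproduces the lemma term by term:
\begin{itemize}
\item In the $\lambda^{3}$ coefficient, the $\pm 2(1+\eta)a_{y}/(1+a_{y})$ pieces of $b^{2}$ and $\tilde{b}^{y,2}$ cancel, leaving $4+\eta+a_{y}-(1+\eta)a_{\pi}+(1+\eta)a_{y}/(1+a_{y})^{2}$.
\item In the $\lambda^{2}$ coefficient, the $(1+a_{y})^{-2}$ pieces of $a_{\pi}\tilde{b}^{y,2}$ and $\tilde{b}^{\pi,2}_{1}$ cancel.
\item In the $\lambda$ coefficient, the $(1+a_{y})^{-2}$ piece of $\tilde{b}^{\pi,2}_{1}$, once multiplied by $1+a_{y}$, combines with $-\tilde{b}^{\pi,2}_{2}$ into $-(1+\eta)a_{\pi}$.
\item The constant is $a_{\pi}(1+\eta)/\eta$.
\end{itemize}
Your Step 3 checks are precisely these coefficients, so the ``main obstacle'' you name does not exist.

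The route you actually commit to would not close. (151) is not an inflation-only recursion. The substitutions behind (152)--(156) removed $\hat{y}^{e}_{t-1}$ and $\hat{y}^{e}_{t-2}$ from (137), but $\tilde{b}^{y,2}\hat{y}^{e}_{t}$ remains. Lagging (38) gives $\hat{y}^{e}_{t}=(1+a_{y})\hat{y}^{e}_{t-1}+a_{\pi}\pi_{t-1}-\pi_{t}$, up to forecast errors. Substituting this, or its $(1+a_{y})^{-1}$-weighted inverse, only trades one output term for another, so no finite chain of such substitutions yields a quartic in inflation. Re-deriving the $(1+a_{y})^{-1}$, $(1+a_{y})^{-2}$ weights on top of (117)--(121) would also double-count substitutions already encoded in (151). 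The only way to remove $\hat{y}^{e}_{t}$ is to multiply (151) through by the operator $1+a_{y}-\mathbb{L}^{-1}$, or equivalently its lag $1-(1+a_{y})\mathbb{L}$. That is the step you discarded.
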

\begin{proof}The proof is an exercise in lag operator manipulation. For greater \\ succinctness and generality I will keep parametric coefficients in place until the final step. Errors and expectations can be ignored. 
\par The consolidated Phillips curve corresponding to 
(151) is $$\pi_{t}\, (1-  b_{1}^{\pi, \,2}\, \mathbb{L} -b_{2}^{\pi, \, 2}\,\mathbb{L}^{2} -b^{\pi, \, 2}_{-1}\, \mathbb{L}^{-1})=b^{y, \, 2}\,\hat{y}^{e}_{t}$$
whilst the aggregate demand side, represented by (11) and (27), compacts to 
\begin{equation}\hat{y}^{e}_{t}=\frac{(\mathbb{L}^{-1}-a_{\pi})\, \pi_{t}}{1+ a_{y} -\mathbb{L}^{-1}}\end{equation}
combining then clearing out all the negative exponents yields the general \\ characteristic equation.\footnote{The work inside the brackets is routine. Here are some more steps. 

$$\pi_{t}\bigg[(1+ a_{y} -\mathbb{L}^{-1})\, (1-  b_{1}^{\pi, \,2}\, \mathbb{L} -b_{2}^{\pi, \, 2}\, \mathbb{L}^{2} -b^{\pi, \, 2}_{-1}\, \mathbb{L}^{-1})
+ a_{\pi}\, b^{y, \, 2} - b^{y, \, 2}\, \mathbb{L}^{-1}\bigg]=0$$
$$\bigg(\big\{1+ a_{y}\big\}\, \mathbb{L}-1\bigg)\bigg(\mathbb{L}-  b_{1}^{\pi, \,2}\, \mathbb{L}^{2} -b_{2}^{\pi, \, 2}\, \mathbb{L}^{3} -b^{\pi, \, 2}_{-1}\, \bigg)+ a_{\pi}\, b^{y, \, 2}\, \mathbb{L}^{2}- b^{y, \, 2}\, \mathbb{L}=0$$
\begin{multline*}b_{2}^{\pi, \, 2}\, \big\{1+ a_{y}\big\} \mathbb{L}^{4} + \bigg(b_{1}^{\pi, \,2}\big\{1+ a_{y}\big\} -b_{2}^{\pi, \, 2}\bigg)\mathbb{L}^{3} 
- \bigg( a_{\pi}\,b^{y, \, 2} + {b}_{1}^{\pi, \,2} + 1+a_{y}\bigg)\mathbb{L}^{2} + \\  \bigg(1 + b_{-1}^{\pi, \,2}\, \big\{1+ a_{y}\big\} + b^{y, \, 2}\bigg)\mathbb{L} - b_{-1}^{\pi, \,2}=0 \end{multline*}
(350) is attained via the substitution $\lambda = 1/ \mathbb{L}$.} 

\begin{multline} \lambda^{4} - \frac{\bigg(b^{2} + \tilde{b}_{-1}^{\pi, \,2}\, \big\{1+ a_{y}\big\} + \tilde{b}^{y, \, 2}\bigg)}{\tilde{b}_{-1}^{\pi, \,2}}\lambda^{3} + \frac{\bigg( a_{\pi}\,\tilde{b}^{y, \, 2} + \tilde{b}_{1}^{\pi, \,2} + b^{2}(1+a_{y})\bigg)}{\tilde{b}_{-1}^{\pi, \,2}}\lambda^{2}
- \\ \frac{\bigg( \tilde{b}_{1}^{\pi, \,2}\, \big\{1+ a_{y}\big\}- \tilde{b}_{2}^{\pi, \, 2}\bigg)}{\tilde{b}_{-1}^{\pi, \,2}}\lambda - \frac{\tilde{b}_{2}^{\pi, \, 2}\, \big\{1+ a_{y}\big\}}{\tilde{b}_{-1}^{\pi, \,2}} =0\end{multline}
Inputting the parametric expressions (117)-(121) completes the task. \end{proof}
\begin{lemma} 
There exists a recursive equilibrium at standard parameters. 
\end{lemma}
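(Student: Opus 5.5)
The plan is to evaluate the characteristic polynomial of the preceding Lemma at the standard parameters $(a_{\pi}, \, a_{y})'=(0.5, \, 0.5)'$, $\eta=4$, count its roots inside the unit circle, and check that count against the Blanchard--Kahn requirement in Theorem 1. The mutual dependence hypothesis of Theorem 1 is already secured by the structure of the Taylor $(2)$ recursive equilibrium in Proposition 9. The only jump variables there are current inflation and the efficient output gap, $\pi_{t}$ and $\hat{y}^{e}_{t}$. The lags $\pi_{t-1}$ and $\pi_{t-2}$ are predetermined. Existence therefore reduces to showing that exactly two of the four eigenvalues lie outside the unit circle and two lie inside, in the convention where explosive roots are matched to jumps.

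The first step is to substitute the numbers. With $1+a_{y}=3/2$ and $1+\eta=5$, the coefficients of the quartic become rationals. The constant term is $a_{\pi}(1+\eta)/\eta=5/8$. The cubic, quadratic and linear coefficients follow directly from the expressions in the Lemma. It is worth cross-checking them against the general form (351) using the numerical values $b^{2}=37/6$, $\tilde{b}^{\pi,2}_{2}=5/3$, $\tilde{b}^{\pi,2}_{1}=4/9$, $\tilde{b}^{\pi,2}_{-1}=-4$ and $\tilde{b}^{y,2}=125/18$ from (117)--(121). This guards against sign slips, since the two printed forms of the polynomial must agree. Clearing denominators gives an integer quartic $p(\lambda)$.

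The second step is to locate the roots. Rather than solving the quartic, I would first use sign changes and the intermediate value theorem at $\lambda=0,\pm1$ and at large $|\lambda|$. This pins down the real roots and tells us on which side of the unit circle each lies. For the remaining pair, which may be complex, I would apply the Schur--Cohn (Jury) stability test to $p$. An alternative is to factor $p$ as a product of two quadratics, using the known real roots, and then read off moduli from the product of each pair. The product of all roots is $5/8$, and this constrains the configuration: two roots of modulus above one must be balanced by two small ones.

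The main obstacle is the possibility of a root near or on the unit circle. The paper itself flags unit-root difficulties for Taylor pricing at ZINSS, for instance $\hat{\Delta}_{t}=\hat{\Delta}_{t-1}$, which is why Construction 1 is needed when $M=4$. I would evaluate $p(1)$ and $p(-1)$ exactly. If either vanishes, I would appeal to the removable-singularity and limiting arguments invoked in Theorem 3. Otherwise the strict Jury inequalities give a robust $2+2$ split. Theorem 1, fed by Theorem 3 of SELCKE, then delivers the stochastic equilibrium, and with it the recursive equilibrium.
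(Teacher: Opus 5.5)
Your plan is correct and is essentially the paper's argument. You substitute $(a_{\pi},a_{y},\eta)=(0.5,0.5,4)$ to get $\lambda^{4}+\frac{16}{9}\lambda^{3}-\frac{79}{24}\lambda^{2}-\frac{1}{4}\lambda+\frac{5}{8}=0$, and you certify a two-inside/two-outside split against the states $(\pi_{t-1},\pi_{t-2})$ and jumps $(\pi_{t},\hat{y}^{e}_{t})$ with the intermediate value theorem; both printed forms of the polynomial agree, so your cross-check goes through. The paper instead computes the roots numerically ($\approx -2.869,-0.435,0.478,1.048$) and brackets the one nearest the unit circle, while your choice of test points is slightly sharper and settles everything exactly. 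Since $F(0)=5/8>0$, $F(1)=-5/36<0$, $F(-1)=-115/36<0$ and $F$ is positive at $\pm\infty$, the quartic has four real roots, one in each of $(-\infty,-1)$, $(-1,0)$, $(0,1)$ and $(1,\infty)$. So neither the Jury test nor the unit-root fallback is needed, which is just as well, because a root on the unit circle could not be waved away as a removable singularity.
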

\begin{proof}After replacing the parameters with their numerical counterparts,  the equation
\begin{equation}\lambda^{4}+ \frac{16}{9}\lambda^{3} -
\frac{79}{24}\lambda^{2} - \frac{1}{4}\lambda + \frac{5}{8}=0 \end{equation}
materializes. The roots are $\lambda \approx -0.435, \, 0.478, \, 1.048, \, -2.869$. This \\configuration, with two inside and two outside the unit circle, conforms with the presence of two state variables $(\pi_{t-1}, \, \pi_{t-2})'$ and two jump variables $(\pi_{t}, \, \hat{y}^{e}_{t})'$, established in 
Proposition 10. \par All root approximations can be rigorously justified with reference to the intermediate value theorem. I will only depict the charge closest to the unit circle. Label the left-hand side of (351) $F$. Numerical computations show that $F(1.0485)=0.0019 >0$, whilst $F(1.0475)=-0.0014 <0$. Hence, there is a root $1.0475 < \lambda < 1.0485$, which is outside the unit circle. Similar bounds for the other roots serve to make the arguments mathematically concrete. \end{proof}
\subsection{Proof of Proposition $9$: Existence Taylor $(4)$ }%
\begin{lemma} 
The characteristic polynomial for the Taylor (4) takes the form 
\begin{multline} \lambda^{8}- \bigg(\frac{(1+\eta)\, a_{\pi}
-\eta -a_{y}}{\eta}\bigg)\lambda^{7} 
-\bigg(\frac{(1+\eta)\, a_{\pi} -\eta -2\, a_{y}}{\eta}\bigg)\lambda^{6} + \\ \bigg(\frac{16 + \eta -(1+\eta)\, a_{\pi} + 3\, a_{y}}{\eta}\bigg)\lambda^{5} - \bigg(\frac{16 + \eta + (1+\eta)\, a_{\pi} + 12\, a_{y}}{\eta}\bigg)\lambda^{4} \\ 
+ \bigg(\frac{(1+\eta)\, a_{\pi} + 3\, a_{y}-\eta}{\eta}\bigg)\lambda^{3}
+\bigg(\frac{(1+\eta)\, a_{\pi} + 2\, a_{y}-\eta}{\eta}\bigg)\lambda^{2} +
\\ \bigg(\frac{(1+\eta)\, a_{\pi} + a_{y}-\eta}{\eta}\bigg)\lambda + \frac{a_{\pi}\, (1+\eta)}{\eta}=0
\end{multline}
\end{lemma}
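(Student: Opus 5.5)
The plan is to mirror the Taylor $(2)$ lemma exactly, replacing the two-period Phillips curve (151) by its four-period counterpart (264) with $M=4$, and working throughout in the patient limit $\beta \rightarrow 1$. I will use the patient-limit coefficients of Supplementary Appendix C.2.2. Errors and expectational revisions can be discarded, since only the homogeneous part of the system determines the characteristic polynomial.

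\textbf{Step 1 (lag-operator form).} Write the consolidated Phillips curve as
$$\pi_{t}\Big(1-\textstyle\sum_{i=1}^{4} b^{\pi,4}_{i}\mathbb{L}^{i}-\sum_{j=1}^{3} b^{\pi,4}_{-j}\mathbb{L}^{-j}\Big)=b^{y,4}\,\hat{y}^{e}_{t}.$$
Take the aggregate demand side in the same compact form as in the Taylor $(2)$ case,
$$\hat{y}^{e}_{t}=\frac{(\mathbb{L}^{-1}-a_{\pi})\pi_{t}}{1+a_{y}-\mathbb{L}^{-1}}.$$

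\textbf{Step 2 (clear denominators).} Multiply through by $(1+a_{y}-\mathbb{L}^{-1})$. Then multiply by $\mathbb{L}^{4}$ to remove all negative powers; the highest lead is $\mathbb{L}^{-3}$, plus one more from the demand factor. This gives a polynomial of degree $8$ in $\mathbb{L}$, which is consistent with the eight-dimensional state in Proposition 9 (four lags, three leads, and output).

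\textbf{Step 3 (normalise).} Substitute $\lambda=1/\mathbb{L}$ and divide by the coefficient of $\lambda^{8}$, which is proportional to $\tilde b^{\pi,4}_{-3}=-\eta$. This is why every coefficient in the target expression carries the denominator $\eta$. At this stage I will keep generic coefficients $b^{4},\tilde b^{\pi,4}_{i},\tilde b^{y,4}$, as in (350), and only at the end insert the patient-limit expressions (317)–(327) with $M=4$.

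\textbf{Step 4 (verify coefficients).} Check the resulting coefficients against the claimed ones. The constant term should be $\tilde b^{\pi,4}_{4}(1+a_{y})/\eta=a_{\pi}(1+\eta)/\eta$, using (318); this matches the stated constant and the Taylor $(2)$ constant, which is a good sanity check. The $\lambda^{5}$ and $\lambda^{4}$ coefficients pick up $b^{4}$, the denominator that contains $M(M+\eta)=16+4\eta$, which accounts for the $16$ appearing there.

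\textbf{Main obstacle.} The hard part is the bookkeeping in Step 4. The patient-limit numerators (321)–(323) and the denominator (317) involve powers $(1+a_{y})^{k}$ for $k$ up to $4$. For the stated polynomial, which is linear in $a_{y}$, to emerge, these must collapse after multiplying by $(1+a_{y})$ and combining adjacent terms. I will first verify the collapse coefficient by coefficient with the geometric-progression splitting used in the derivation of (265)–(275). If it fails to telescope, I will check numerically at $(a_{\pi},a_{y},\eta)=(0.5,0.5,4)$ against the numerical Phillips curve (116) before reconciling the symbolic form.
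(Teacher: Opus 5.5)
Your route is the paper's route: the lag-operator Phillips curve, the unchanged demand block, clearing by $(1+a_{y}-\mathbb{L}^{-1})$ and $\mathbb{L}^{4}$, setting $\lambda=1/\mathbb{L}$, dividing by $\tilde b^{\pi,4}_{-3}=-\eta$, then inserting the $\beta\rightarrow1$ coefficients. You also correctly keep the $b^{\pi,4}_{-3}\mathbb{L}^{-3}$ term that the paper's first display drops.

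The one step that would break as written is Step 4's use of the general-$M$ patient-limit denominator $b^{M}$ at the head of Appendix C.2.2 (the formula you cite as (317)). As printed, its last fraction has a sign error. The term $(1+a_{y})\{(1+a_{y})^{M-1}-1\}$ must enter with a minus sign, so that the fraction equals $\{(Ma_{y}-1)(1+a_{y})^{M}+1\}/(1+a_{y})^{M}$; this is also the corresponding piece of $\tilde b^{y,M}$. If you plug $M=4$ into the printed version, you get $b^{4}\approx-5.92$ at $(a_{\pi},a_{y},\eta)=(0.5,0.5,4)$ instead of $5281/648\approx8.15$. The $\lambda^{5}$ and $\lambda^{4}$ coefficients then keep non-polynomial rational terms in $a_{y}$, so your telescoping check fails. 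Your numerical fallback would only signal the discrepancy, not locate it. The paper's proof avoids this by writing the $M=4$ denominator directly. You should do the same, deriving it from the defining expression in (265) with $\beta\rightarrow1$:
\[
b^{4}=4(4+\eta)-3(1+\eta)a_{\pi}-(1+\eta)a_{\pi}a_{y}(3+a_{y})-(1+\eta)a_{y}S_{4},\qquad S_{n}=\sum_{k=1}^{n}\frac{n+1-k}{(1+a_{y})^{k}}.
\]

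With this the collapse is exact and short.
\begin{itemize}
\item The $\lambda^{5}$ numerator is $(1+a_{y})\tilde b^{\pi,4}_{-1}+b^{4}+\tilde b^{y,4}$. Since $\tilde b^{y,4}=(1+\eta)a_{y}\{S_{4}+6+4a_{y}+a_{y}^{2}\}$, the $S_{4}$ terms cancel, leaving $16+\eta-(1+\eta)a_{\pi}+3a_{y}$.
\item The $\lambda^{4}$ numerator is $(1+a_{y})b^{4}+\tilde b^{\pi,4}_{1}+a_{\pi}\tilde b^{y,4}$, with $\tilde b^{\pi,4}_{1}=4(1+\eta)a_{\pi}-3\eta-(1+\eta)a_{y}(a_{\pi}S_{4}-S_{3})$. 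The $a_{\pi}S_{4}$ terms cancel, and $(1+a_{y})S_{4}=S_{3}+4$ removes $S_{3}$, leaving $16+\eta+(1+\eta)a_{\pi}+12a_{y}$.
\item The $\lambda^{3},\lambda^{2},\lambda$ coefficients follow from $(1+a_{y})\tilde b^{\pi,4}_{k}-\tilde b^{\pi,4}_{k+1}=(1+\eta)a_{\pi}-\eta+(4-k)a_{y}$ for $k=1,2,3$, and the constant from $(1+a_{y})\tilde b^{\pi,4}_{4}=(1+\eta)a_{\pi}$.
\item The $\lambda^{7}$ and $\lambda^{6}$ coefficients need only $\tilde b^{\pi,4}_{-2}=(1+\eta)(a_{\pi}-a_{y})-2\eta$ and $\tilde b^{\pi,4}_{-1}=(2+a_{y})(1+\eta)(a_{\pi}-a_{y})-3\eta-(1+\eta)a_{y}$.
\end{itemize}
Carried out this way, your Step 4 is an explicit symbolic verification, which the paper itself only asserts.
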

\begin{proof} 
In lag operator terms, the endogenous dynamics of the Phillips curve (116) appear as 
$$\pi_{t}(1-  b_{1}^{\pi, \, 4}\, \mathbb{L} -b_{2}^{\pi, \, 4}\,\mathbb{L}^{2}-b_{3}^{\pi, \, 4}\, \mathbb{L}^{3}-b_{4}^{\pi, \, 4}\,\mathbb{L}^{4} -b^{\pi, \, 4}_{-1}\, \mathbb{L}^{-1}-b^{\pi, \, 4}_{-2}\, \mathbb{L}^{-2})=b^{y, \, 4}\, \hat{y}^{e}_{t}$$ (349) is unchanged.\footnote{Missing steps include 
\begin{multline*} \bigg(\big\{1+ a_{y}\big\}\, \mathbb{L} -1\bigg)\bigg(\mathbb{L}^{3}- b_{4}^{\pi, \, 4}\, \mathbb{L}^{7} -  b_{3}^{\pi, \, 4}\, \mathbb{L}^{6}- b_{2}^{\pi, \, 4}\, \mathbb{L}^{5} -b_{1}^{\pi, \,4}\, \mathbb{L}^{4} -  b^{\pi, \, 4}_{-1}\,\mathbb{L}^{2} - b^{\pi, \, 4}_{-2}\,\mathbb{L}-b^{\pi, \, 4}_{-3}\bigg)
+ \\ a_{\pi}\, b^{y, \, 4}\, \mathbb{L}^{4}- b^{y, \, 4}\, \mathbb{L}^{3}=0\end{multline*}
which collates to 
\begin{multline*} b_{4}^{\pi, \, 4}\,\big\{1+ a_{y}\big\}\, \mathbb{L}^{8} +\bigg(b_{3}^{\pi, \, 4}\,\big\{1+ a_{y}\big\}- b_{4}^{\pi, \, 4}\bigg)\mathbb{L}^{7} + \bigg(b_{2}^{\pi, \, 4}\,\big\{1+ a_{y}\big\}- b_{3}^{\pi, \, 4}\bigg)\mathbb{L}^{6} + \\ \bigg(b_{1}^{\pi, \, 4}\,\big\{1+ a_{y}\big\}- b_{2}^{\pi, \, 4}\bigg)\mathbb{L}^{5}- \bigg(a_{\pi}b^{y, \, 4} + b_{1}^{\pi, \, 4} + 1+ a_{y}\bigg)\mathbb{L}^{4} + \bigg(b^{y, \, 4} + b_{-1}^{\pi, \, 4}\big\{1+ a_{y}\big\} + 1\bigg)\mathbb{L}^{3} + \\ \bigg(b_{-2}^{\pi, \, 4}\,\big\{1+ a_{y}\big\}- b_{-1}^{\pi, \, 4}\bigg)\mathbb{L}^{2} + \bigg(b_{-3}^{\pi, \, 4}\,\big\{1+ a_{y}\big\}- b_{-2}^{\pi, \, 4}\bigg)\mathbb{L}-b_{-3}^{\pi, \, 4}=0\end{multline*}} The lag polynomial analagous to (350) is 
\begin{multline} b_{4}^{\pi, \, 4}\,\big\{1+ a_{y}\big\}\, \mathbb{L}^{8} +\bigg(b_{3}^{\pi, \, 4}\,\big\{1+ a_{y}\big\}- b_{4}^{\pi, \, 4}\bigg)\mathbb{L}^{7} + \bigg(b_{2}^{\pi, \, 4}\,\big\{1+ a_{y}\big\}- b_{3}^{\pi, \, 4}\bigg)\mathbb{L}^{6} + \\ \bigg(b_{1}^{\pi, \, 4}\,\big\{1+ a_{y}\big\}- b_{2}^{\pi, \, 4}\bigg)\mathbb{L}^{5}- \bigg(a_{\pi}\, b^{y, \, 4} + b_{1}^{\pi, \, 4} + 1+ a_{y}\bigg)\mathbb{L}^{4} + \bigg(b^{y, \, 4} + b_{-1}^{\pi, \, 4}\big\{1+ a_{y}\big\} + 1\bigg)\mathbb{L}^{3} \\  + \bigg(b_{-2}^{\pi, \, 4}\,\big\{1+ a_{y}\big\}- b_{-1}^{\pi, \, 4}\bigg)\mathbb{L}^{2} + \bigg(b_{-3}^{\pi, \, 4}\,\big\{1+ a_{y}\big\}- b_{-2}^{\pi, \, 4}\bigg)\mathbb{L}-b_{-3}^{\pi, \, 4}=0\end{multline}
\begin{multline}
\lambda^{8} - \frac{\bigg(\tilde{b}_{-3}^{\pi, \, 4}\,\big\{1+ a_{y}\big\}- \tilde{b}_{-2}^{\pi, \, 4}\bigg)}{\tilde{b}_{-3}^{\pi, \, 4}}\lambda^{7} -\frac{\bigg(\tilde{b}_{-2}^{\pi, \, 4}\,\big\{1+ a_{y}\big\}- \tilde{b}_{-1}^{\pi, \, 4}\bigg)}{\tilde{b}_{-3}^{\pi, \, 4}}\lambda^{6} -\\ \frac{\bigg(\tilde{b}^{y, \, 4} + \tilde{b}_{-1}^{\pi, \, 4}\, \big\{1+ a_{y}\big\} + b^{4}\bigg)}{\tilde{b}_{-3}^{\pi, \, 4}}\lambda^{5} + \frac{\bigg(a_{\pi}\, \tilde{b}^{y, \, 4} + \tilde{b}_{1}^{\pi, \, 4} + b^{4}\, \big\{1+ a_{y}\big\}\bigg)}{\tilde{b}_{-3}^{\pi, \, 4}}\lambda^{4} - \\ 
\frac{\bigg(\tilde{b}_{1}^{\pi, \, 4}\,\big\{1+ a_{y}\big\}-\tilde{b}_{2}^{\pi, \, 4} \bigg)}{\tilde{b}_{-3}^{\pi, \, 4}}\lambda^{3} -
 \frac{\bigg(\tilde{b}_{2}^{\pi, \, 4}\,\big\{1+ a_{y}\big\}-\tilde{b}_{3}^{\pi, \, 4} \bigg)}{\tilde{b}_{-3}^{\pi, \, 4}}\lambda^{2} 
 - \\ \frac{\bigg(\tilde{b}_{3}^{\pi, \, 4}\, \big\{1+ a_{y}\big\}-\tilde{b}_{4}^{\pi, \, 4}\bigg)}{\tilde{b}_{-3}^{\pi, \, 4}}\lambda-\frac{\tilde{b}_{4}^{\pi, \, 4}\,\big\{1+ a_{y}\big\}}{\tilde{b}_{-3}^{\pi, \, 4}}=0\end{multline}
Digging down to primitives, we uncover that, due to (316), 
\begin{multline}b^{4} \equiv  
4\, (4 + \eta)- 
3\, (1+\eta)\, a_{\pi} - \frac{(1+\eta)}{a_{y}}\Bigg\{a_{\pi}\, a_{y}^{2}\, (3+a_{y}) +  \frac{(4\, a_{y}-1)(1+a_{y})^{4} +1}{(1+a_{y})^{4}}\Bigg\} \\ =  
\frac{5281}{648} \approx 8.150
\end{multline}
It is easiest to deduce from (320)-(322), alongside (118)-(119), that \newline 
${\tilde{b}}^{\pi, \, 4}_{4}={\tilde{b}}^{\pi, \, 2}_{2}=5/3$, ${\tilde{b}}^{\pi, \, 4}_{3}={\tilde{b}}^{\pi, \, 2}_{1}=4/9$, ${\tilde{b}}^{\pi, \, 4}_{2}=-1/27$, whilst (323) implies 
\begin{multline} \tilde{b}^{\pi, \, 4}_{1}\equiv 
4\, (1+\eta)\, a_{\pi} -3\, \eta - \frac{a_{\pi}\, (1+\eta)}{a_{y}(1+a_{y})^{4}}\bigg((4\, a_{y}-1)(1+a_{y})^{4} + 1\bigg) + \\ \frac{(1+\eta)}{a_{y}\, (1+a_{y})^{3}}
\bigg((3a_{y}-1)(1+a_{y})^{3} + 1\bigg) = -\frac{2}{81} \approx -0.025
\end{multline}
The expectation terms can be obtained from (325)-(327), specifically, \newline
$\tilde{b}^{\pi, \, 4}_{-1}=-29/2$, $\tilde{b}^{\pi, \, 4}_{-2}=-8$ and $\tilde{b}^{\pi, \, 4}_{-3}=-4$. Finally, the output gap term (328) comes to 
\begin{equation}{\tilde{b}}^{y, \, 4}\equiv
\frac{(1+\eta)}{a_{y}}\Bigg\{\frac{(4\, a_{y}-1)\, (1+a_{y})^{4} +1}{(1+a_{y})^{4}}+ a^{2}_{y}\, (6 + 4\, a_{y} + a^{2}_{y})\Bigg\}=\frac{21125}{648} \approx 32.600
 \end{equation}
putting in these parametric functions and simplifying achieves the targeted \newline result.
\end{proof}
\begin{lemma} Under Construction 1, the limiting eigenvalue configuration is consistent with recursive equilibrium, 
at the preferred parameter selection. \end{lemma}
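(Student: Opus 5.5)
The plan mirrors the Taylor $(2)$ existence lemma. I would first substitute the standard calibration $\eta=4$ and $(a_{\pi},\,a_{y})'=(0.5,\,0.5)'$ into the characteristic polynomial (352) of the preceding lemma. This gives an explicit degree eight polynomial $F(\lambda)$ with rational coefficients. I then need to count its roots inside and outside the unit circle, and compare the count with the state and jump dimensions fixed by Proposition 10 for $M=4$. There the jump block is $(\pi_{t},\,\pi_{t-1},\,\pi_{t-2},\,\hat{y}^{e}_{t})'$ and the remaining lags of inflation are states. Theorem 1, the Blanchard--Kahn form of the SELCKE fixed-point theorem, then converts a matching count into existence of the recursive equilibrium.

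Under Construction 1 the count is carried out in two stages, because the unperturbed Taylor $(4)$ model carries a unit root at ZINSS. This comes from $\hat{\Delta}_{t}=\hat{\Delta}_{t-1}$ in (245), and reappears as a root of $F$ on, or converging to, the unit circle.

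\textbf{Stage one.} Take the limit $1-\beta<<\vert\varepsilon\vert$, so the patient-limit coefficients of Appendix C.2 apply and the stochastic polynomial reduces to its non-stochastic counterpart. I would locate the roots of $F$ bounded away from $\vert\lambda\vert=1$ by sign changes of $F$ on the real axis. I would use the intermediate value theorem on short intervals, exactly as in the Taylor $(2)$ lemma. For any complex pairs, I would use a Rouché or argument-principle count on circles of radius slightly above and below one. Descartes' rule of signs applied to $F(\lambda)$ and $F(-\lambda)$ bounds the number of real roots and pins down how many complex pairs must be handled.

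\textbf{Stage two.} Treat the root(s) on the unit circle. Add the flexible-price fringe $\omega$: it loads only onto the output-gap coefficient $b^{y,\,4}$ and the denominator $b^{4}$, so the characteristic polynomial becomes $F_{\omega}=F+\omega G+O(\omega^{2})$. For a simple root $\lambda_{0}$ with $\vert\lambda_{0}\vert=1$, the implicit function theorem gives
$$\frac{\mathrm{d}\lambda}{\mathrm{d}\omega}\bigg\rvert_{\omega=0}=-\frac{G(\lambda_{0})}{F'(\lambda_{0})}.$$
I would compute the sign of the radial component of this derivative and show it pushes the root to the side that completes the Blanchard--Kahn configuration. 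The ordering $\vert\varepsilon\vert<<\omega$ ensures the stochastic corrections cannot undo this displacement. The ordering $\omega\rightarrow 0$ ensures the roots located in stage one remain on their side of the circle by continuity.

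The main obstacle is stage two. It requires identifying exactly which root of $F$ sits on the unit circle and whether it is simple. It also requires a tractable expression for $G$, which means tracing the fringe through the derivation of (264). If the root at $\lambda_{0}=1$ turns out to be repeated, I would instead expand $F_{\omega}$ in Puiseux series in $\sqrt{\omega}$ and read off the directions from the leading terms. That would be my fallback.
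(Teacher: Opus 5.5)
Your plan has the same structure as the paper's proof: an explicit octic from (352), roots located off the circle, then the flexible fringe to move the unit root. The step you defer in stage two, however, fails once it is computed.

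\textbf{The root configuration.} At $(\eta,a_{\pi},a_{y})=(4,\tfrac{1}{2},\tfrac{1}{2})$ the polynomial is
$F(\lambda)=\lambda^{8}+\tfrac{1}{2}\lambda^{7}+\tfrac{5}{8}\lambda^{6}+\tfrac{19}{4}\lambda^{5}-\tfrac{57}{8}\lambda^{4}-\tfrac{1}{8}\lambda^{2}-\tfrac{1}{4}\lambda+\tfrac{5}{8}$.
The coefficients of (352) sum to zero for every parameter value, so $\lambda_{0}=1$ is always a root. Here it is simple, with $F'(1)=10$. The other roots are:
\begin{itemize}
\item inside the disc: $-0.526$, $0.545\,e^{\pm 1.604 i}$, and a real root in $(0.576,\,0.58)$, which the paper omits from its list but needs for its count of four;
\item outside the disc: $-2.082$ and $1.827\,e^{\pm 1.415 i}$.
\end{itemize}
With four predetermined lags and four jumps, the unit root therefore has to move \emph{outward}.

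\textbf{The sign of the displacement.} The fringe only raises $\tilde{b}^{y,\,4}$. In (354) this adds $G(\lambda)=c\,(\lambda^{5}-a_{\pi}\lambda^{4})$ with $c>0$; this is the paper's $\tfrac{\omega}{1-\omega}\bigl(\lambda^{5}/4-\lambda^{4}/8\bigr)$. Hence $G(1)=c\,(1-a_{\pi})>0$. Your own implicit-function formula then gives $\mathrm{d}\lambda/\mathrm{d}\omega|_{\omega=0}=-G(1)/F'(1)<0$, which equals $-1/80$ in the paper's normalisation.

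So the root enters the disc. That leaves five stable roots against four predetermined variables and three unstable roots against four jumps: indeterminacy, not the Blanchard--Kahn count. The orderings $1-\beta\ll|\varepsilon|\ll\omega$ only guarantee that the fringe governs the displacement; they do not reverse its sign.

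\textbf{Where the paper goes wrong.} The paper reaches the opposite conclusion only by writing $\mathrm{d}\lambda/\mathrm{d}\omega=G_{\omega}/G_{\lambda}$ ``by the chain rule''. This drops exactly the minus sign that your formula carries. Its reported values $G_{\lambda}=3$ and $G_{\omega}=\tfrac{1}{4}$ are also slips for $10$ and $\tfrac{1}{8}$, but both are positive, so only the sign matters.

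Because $G(1)\propto 1-a_{\pi}$, the fringe pushes $\lambda=1$ outward only under an active rule $a_{\pi}>1$. At the passive calibration $a_{\pi}=\tfrac{1}{2}$, the Taylor principle is working against you. A more careful tracing of the fringe will not change this, and your Puiseux fallback is moot because the root is simple. Stage two cannot be completed as proposed. You would need a different perturbation that moves $\lambda=1$ outward, or a different policy calibration.
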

\begin{proof} With numerical values in place the eigenvalue equation works out as
\begin{equation}
\lambda^{8} + \frac{1}{2}\lambda^{7} + \frac{5}{8}\lambda^{6} + \frac{19}{4}\lambda^{5} - \frac{57}{8}\lambda^{4} -\frac{1}{8}\lambda^{2}-\frac{1}{4}\lambda + \frac{5}{8}=0
\end{equation}
The solution set is $\lambda \approx -0.526, \, 0.545\, e^{\pm1.604i},\, 1, \, 1.827\, e^{\pm1.415}, \, -2.082$. \par Concerning the precise root location, relative to the unit circle, I will only display the less familiar arguments concerning complex roots. The procedure is to break the characteristic equation into real and imaginary parts; both are eighth-order polynomials. The fundamental theorem of algebra implies they will both have eight roots and they will coincide. Moreover, they will occur in conjugate pairs. Therefore, it is sufficient to focus on the positive conjugate and show each desired region contains a real zero. For the first complex root $\Re{(0.5445\, e^{1.6035i})}>0$, whilst $\Re{(0.5455\, e^{1.6045i})}<0$. Obversely, for the larger root $\Re{(1.8265\, e^{1.4145i})}<0$ but $\Re{(1.8275\, e^{1.4155i})}>0$, sealing the argument. 
\par This assures us that there are four roots inside the unit circle, matching the four state variables 
$(\pi_{t-1}, \, \pi_{t-2}, \, \pi_{t-3}, \, \pi_{t-4})'$. The only obstacle is the root on the unit circle. This is where the unique aspect of Construction 1 comes into play. 
\par With a share $\omega >0$ of flexible price firms, inflation is the weighted average of Classical and Keynesian determinants
\begin{multline} \pi_{t}= \omega\, (\pi_{t} + \hat{mc}_{t}) + (1-\omega)\bigg(b^{\pi, \, 4}_{4}\, \pi_{t-4}+ \cdots + b^{\pi, \, 4}_{1}\, \pi_{t-1} + b^{y, \, 4}\, \hat{y}^{e}_{t} + \\ b^{\pi, \, 4}_{-1}\, \mathbb{E}_{t}\, \pi_{t+1} + \cdots + b^{\pi,\, 4}_{-3}\, \mathbb{E}_{t}\, \pi_{t+3}  
+\hat{u}^{4}_{t}\bigg) \end{multline}
Upon simplification, it is immediate that among the endogenous variables only the output gap coefficient $b^{y, \, 4}$ will change. Labelling the characteristic function $G(\omega)$, I can deduce from 
(354) that
\begin{multline}
G(\omega) \equiv \lambda^{8} + \frac{1}{2}\lambda^{7} + \frac{5}{8}\lambda^{6} + \bigg\{\frac{19 +\omega/(1-
\omega)}{4}\bigg\}\lambda^{5} - \\ 
\bigg\{ \frac{57 +\omega/(1-\omega)}{8}\bigg\}\lambda^{4} -\frac{1}{8}\lambda^{2}-\frac{1}{4}\lambda + \frac{5}{8}\end{multline}
In response to my refinement
$$\frac{\mathrm{d}G(\omega)}{\mathrm{d}\lambda}\Bigg\rvert_{
\omega=0, \, \lambda=1} \equiv 8\, \lambda^{7}-\frac{7}{2}\lambda^{6} + \frac{15}{4}\lambda^{5} + 
\frac{95}{4}\lambda^{4} -\frac{57}{2}\lambda^{3} -\frac{1}{4}\lambda -\frac{1}{4}=3> 0$$
$$\frac{\mathrm{d}G} {\mathrm{d}\omega}\Bigg\rvert_{\omega=0, \, \lambda =1}\equiv \frac{\lambda^{5}}{4\, (1-
\omega)^{2}}-\frac{\lambda^{4}}{8\, (1-\omega)^{2}}=
\frac{1}{4}>0$$
Thus, by the chain rule $$\frac{\mathrm{d}\lambda} {\mathrm{d}
\omega}\Bigg\rvert_{\omega=0, \, \lambda =1}>0$$
and hence a flexible fringe precipitates the correct eigenvalue constellation. \end{proof}
\begin{remark} The application of the Fundamental Theorem of algebra is valid for any order of polynomial, making the technique generic. \end{remark}
\end{document}